%% file: main.tex
\documentclass[runningheads,a4paper]{llncs}

\def\BibTeX{{\rm B\kern-.05em{\sc i\kern-.025em b}\kern-.08emT\kern-.1667em\lower.7ex\hbox{E}\kern-.125emX}}
    
\usepackage{nicefrac}
\usepackage{siunitx}
\usepackage{array,framed}
\usepackage{booktabs}
\usepackage{
  color,
  float,
  epsfig,
  wrapfig,
  graphics,
  graphicx,
  subcaption
}
\usepackage{textcomp}
\usepackage{setspace}
 \usepackage{amsfonts}
\usepackage{latexsym,fancyhdr,url}
\usepackage{enumerate} 
\usepackage{multicol}                                
\usepackage[utf8]{inputenc}
\usepackage{algorithmicx, algorithm, algpseudocode}
\usepackage{caption} 
\usepackage{graphics}
\usepackage{xparse} 
\usepackage{xspace}
\usepackage{multirow}
\usepackage{csvsimple}
\usepackage{balance}
\usepackage{graphicx}
\usepackage{amsmath}
\usepackage{cuted}
\usepackage[bookmarksdepth=3,hypertexnames=false]{hyperref}
\usepackage{cleveref}
\usepackage{bookmark}
\makeatletter
\renewcommand{\part}[1]{%
  \clearpage%
  \refstepcounter{part}%
  \addcontentsline{toc}{part}{Part~\thepart: #1}%
  {\noindent\Large\bfseries Part~\thepart: #1\par}%
  \nopagebreak\vspace{1ex}%
}
\makeatother
\usepackage{datetime2}
\usepackage{lipsum}
\usepackage{placeins}
\usepackage{float}
\usepackage{paralist}
\usepackage{stmaryrd}
\usepackage{tabularx}
\usepackage{xifthen}
\usepackage[normalem]{ulem}
\newfloat{module}{htbp}{loa}
\floatname{module}{Module}
\crefname{module}{Module}{Modules}
\Crefname{module}{Module}{Modules}

\usepackage{etoolbox}
\usepackage[outerbars,color]{changebar}
\newif\ifcbactive
\BeforeBeginEnvironment{table}{\ifcbactive\cbend\fi}
\AfterEndEnvironment{table}{\ifcbactive\cbstart\fi}
\BeforeBeginEnvironment{module}{\ifcbactive\cbend\fi}
\AfterEndEnvironment{module}{\ifcbactive\cbstart\fi}
\BeforeBeginEnvironment{algorithm}{\ifcbactive\cbend\fi}
\AfterEndEnvironment{algorithm}{\ifcbactive\cbstart\fi}
\BeforeBeginEnvironment{figure}{\ifcbactive\cbend\fi}
\AfterEndEnvironment{figure}{\ifcbactive\cbstart\fi}

\input{defs}

\usepackage{
  tikz,
  pgfplots,
  pgfplotstable
}

\usetikzlibrary{
  shapes.geometric,
  arrows,
  external,
  pgfplots.groupplots,
  matrix
}

\pgfplotsset{compat=1.9}

\usepackage[a4paper,margin=2.5cm]{geometry}
\usepackage{mathtools,}

\DeclareMathAlphabet{\mathcal}{OMS}{cmsy}{m}{n}

\DeclareGraphicsExtensions{%
    .png,.PNG,%
    .pdf,.PDF,%
    .jpg,.mps,.jpeg,.jbig2,.jb2,.JPG,.JPEG,.JBIG2,.JB2}

\usepackage[many]{tcolorbox}
\tcbset{
  myalgorithm/.style={
    colback=white!95!white,
    colframe=black,
    sharp corners,
    boxrule=0.8pt,
    top=2mm,
    bottom=2mm,
    left=2mm,
    right=2mm,
    fonttitle=\bfseries,
    title=#1
  }
}

\begin{document}
\def\thetitle{\cadence: Extreme Pipelining with Multiple Concurrent Proposers}
\title{\thetitle}

\author{Kushal Babel \and Fatima Elsheimy \and Lioba Heimbach \and Mohammad Mussadiq Jalalzai \and Tobias Klenze \and \\Jovan Komatovic \and Jason Milionis \and Mike Setrin \and Victor Shoup \thanks{\textbf{Author order:} alphabetical by last name.}}

\institute{Category Labs}

\maketitle

\pagestyle{fancy}
\fancyhf{}
\fancyhead[R]{}
\fancyfoot[C]{\thepage}
\renewcommand{\headrulewidth}{0pt}
\thispagestyle{fancy}

\thispagestyle{fancy}
\bigskip

\begin{abstract}
We present \cadence, a Byzantine fault-tolerant multi-proposer consensus protocol with arbitrarily low block intervals, optimal resilience, and optimal fast-path latency.
\cadence divides time into equally spaced \emph{slots}, one block per slot, each finalized in its own consensus instance. Blocks do not build directly on their predecessor, which lets these instances run independently, so none waits for an earlier block to finish or even to propagate over the network; we call this \emph{extreme pipelining}, and it decouples the block interval from network latency.
\cadence also removes the single-leader monopoly over transaction inclusion and ordering: under \emph{multiple concurrent proposers} (MCP), several validators propose for each block, and it guarantees that, under synchrony, a transaction a correct proposer includes cannot be censored or deferred to a later block (\emph{short-term censorship resistance}), and that no proposer can craft its proposal in reaction to the others' (\emph{hiding}).

To realize extreme pipelining, we introduce a framework that turns any one-shot consensus meeting our slot-consensus specification into a multi-shot protocol with arbitrarily low block intervals. It is general and of independent interest. We instantiate it for the MCP setting with two protocols of our own: \chorus, an MCP slot consensus whose \fastpath finalizes a block in an optimal three communication rounds, with speculative finality one round earlier, and \conductor, an orchestrator that opens slots at an even cadence in normal operation, and more slowly under asynchrony to keep the number of open slots bounded.
To our knowledge, \cadence is the first MCP protocol to provide short-term censorship resistance and hiding at the fast-path latency of single-leader consensus.
We prove safety, liveness, short-term censorship resistance, and hiding under partial synchrony with optimal resilience ($n = 3f+1$).
Beyond the theory, we address the practical considerations of deploying \cadence and evaluate its latency in simulation: over Monad mainnet's $200$ globally distributed validators with five concurrent proposers per slot, the finalization latency averages $219$\,ms ($167$\,ms to speculative finality), and at a $100$\,ms block interval, a transaction waits on average only $50$\,ms to enter a proposal.
\end{abstract}

\input{introduction}


\input{p1_informal}

\FloatBarrier
\section{Deploying Cadence as Part of a Blockchain}
\label{sec:deploying}
\input{p1_deploying}

\section{Practical Considerations}
\label{sec:practical}
\input{p1_practical_considerations}

\section{Latency Evaluation}
\label{subsection:simulation}

\input{p1_latency_evaluation}


\input{related_work}

\section*{Acknowledgments}
We thank Andrei Constantinescu, Andrea Canidio, and Babak Poorebrahim Gilkalaye for helpful discussions and feedback.

\bibliographystyle{unsrt}
\bibliography{References.bib}

\newpage
\appendix
\crefname{subsection}{Appendix}{Appendices}
\Crefname{subsection}{Appendix}{Appendices}

\section*{Organization of the Appendix}
This appendix gives the formal treatment underlying the main text. It restates the MCP problem precisely (\S\ref{section:formal_problem_definition}), defines our extreme-pipelining framework and proves that it solves MCP given conforming building blocks (\S\ref{section:framework}), and then instantiates the two building blocks: \chorus (\S\ref{section:slot_agreement}) and \conductor (\S\ref{section:conductor-formal}).

\input{p2_problem_definition}

\input{p2_framework}

\input{p2_chorus}


\input{p2_conductor_proofs.tex}

\clearpage

\end{document}


%% file: defs.tex
\newcommand{\rulecmt}[1]{\Statex \textcolor{blue}{$\triangleright$~}\parbox[t]{\dimexpr\linewidth-1em\relax}{\hangindent=1.5em\hangafter=1 \textcolor{blue}{#1}}}
\newcommand{\cadence}{\textsc{Cadence}\xspace}    
\newcommand{\chorus}{\textsc{Chorus}\xspace}      
\newcommand{\conductor}{\textsc{Conductor}\xspace} 

\newcommand{\dtot}{\Delta_{\mathsf{tot}}}

\newcommand{\fastpath}{fast path\xspace}

\algdef{SE}[UPON]{Upon}{EndUpon}[1]{\textbf{upon} #1}{}

\newcommand{\statelabel}[1]{\State \hskip2em\parbox[t]{\dimexpr\linewidth-4em\relax}{\hangindent=1.5em\hangafter=1 #1}}
\newcommand{\substatelabel}[1]{\State \hskip4em\parbox[t]{\dimexpr\linewidth-6em\relax}{\hangindent=1.5em\hangafter=1 #1}}

\algtext*{EndFor}
\algtext*{EndIf}
\algtext*{EndWhile}
\algtext*{EndProcedure}
\algtext*{EndFunction}
\algtext*{EndUpon}

\newcommand{\payload}{\mathit{Payload}}

\newcommand{\roothash}{\textup{\texttt{merkle\_root}}\xspace}

\newcommand{\metablock}{meta-block\xspace}

\newcommand{\execblock}{execution block\xspace}
\newcommand{\keyitem}[1]{\par\smallskip\noindent\textbf{#1}}
\newcommand{\Encode}{\mathsf{Encode}}

\newcommand{\pid}{j}                     
\newcommand{\Entry}{\mathit{Entry}}      
\newcommand{\entries}{\mathsf{entries}}  
\newcommand{\CommitVote}{\textsc{CommitVote}}  
\newcommand{\CommitMsg}{\textsc{Commit}} 
\newcommand{\Ev}{\mathit{Ev}}            
\newcommand{\modcall}[1]{\mbox{\hyperref[#1]{Alg~\ref*{#1}}}}
\newcommand{\recoverProposals}{\ensuremath{\mathsf{recoverProposals}}}
\newcommand{\isDecoded}{\ensuremath{\mathsf{isDecoded}}}
\newcommand{\tryIngestChunk}{\ensuremath{\mathsf{tryIngestChunk}}}
\newcommand{\tryIngestShare}{\ensuremath{\mathsf{tryIngestShare}}}
\newcommand{\status}{\mathsf{status}}    
\newcommand{\Data}{\mathsf{data}}        
\newcommand{\Shares}{\mathsf{shares}}    
\newcommand{\mroot}{\rho}                
\newcommand{\cdata}{d}                   

\newcommand{\true}{\mathsf{true}}
\newcommand{\false}{\mathsf{false}}

\newcommand{\Slot}{\mathsf{Slot}}

\newcommand{\Time}{\mathsf{Time}}

\newcommand{\FastQC}{\mathsf{FastQC}}
\newcommand{\votetag}{\textsc{vote}}
\newcommand{\FallbackQC}{\mathsf{FallbackQC}}
\newcommand{\EquivCert}{\mathsf{EquivCert}}
\newcommand{\FBCert}{\mathsf{FBCert}}
\newcommand{\rhobot}{\rho_{\!\bot}}

\newcommand{\commitQC}{\mathsf{CommitQC}}

\newcommand{\TIBE}{\mathsf{TIBE}}

\newcommand{\Hash}{H}

\newcommand{\Sign}{\mathsf{Sign}}
\newcommand{\Verify}{\mathsf{Verify}}

\newcommand{\Enc}{\mathsf{Enc}}
\newcommand{\Dec}{\mathsf{Dec}}

\newcommand{\mpk}{\mathit{MPK}}
\newcommand{\dsk}{\mathit{DecryptShare}}

\newcommand{\Decode}{\mathsf{Decode}}

\newcommand{\MerkleRoot}{\mathsf{MerkleRoot}}
\newcommand{\MerkleProof}{\mathsf{MerkleProof}}
\newcommand{\VerifyMerkle}{\mathsf{VerifyMerkle}}

\newcommand{\slot}{s}              
\newcommand{\proc}{p}              
\newcommand{\deadline}{D}          
\newcommand{\blockint}{\tau}       
\newcommand{\numprop}{k}           

\algnewcommand{\BlueComment}[1]{\textcolor{blue}{\hfill\(\triangleright\) #1}}
\newcommand{\length}{\mathsf{length}}
\newcommand{\locallog}{\mathsf{log}}

\newcommand{\fproposers}{\mathsf{proposers}}
\newcommand{\fproposer}{\mathsf{proposer}}
\newcommand{\fnumber}{\mathsf{number}}
\newcommand{\fdeadline}{\mathsf{deadline}}
\newcommand{\fpayload}{\mathsf{payload}}
\newcommand{\fslot}{\mathsf{slot}}

\newcommand{\sourcemark}[1]{}

\newif\ifexplain
\explaintrue   
\explainfalse  

\ifexplain
    \newcommand{\explain}[2]{\textcolor{blue!70!black}{{\textsf{[#1-note:}} #2\textsf{]}}}
\else
    \newcommand{\explain}[2]{}
\fi

\newif\ifcomments
\commentstrue
\commentsfalse  

\newcommand{\kw}[1]{\ensuremath{\mathsf{#1}}}

\ifcomments
    \newcommand{\todo}[1]{\textcolor{red}{[ToDo: #1]}}
    
    \newcommand{\new}[1]{\textcolor{purple}{#1}}
    \newcommand{\old}[1]{\textcolor{cyan}{\sout{#1}}}
\else
    \newcommand{\todo}[1]{}
    
    \newcommand{\new}[1]{}
    \newcommand{\old}[1]{}
\fi

\ifcomments
    \newcommand{\kushal}[1]{\textcolor{orange}{[Kushal: #1]}}
    
    \newcommand{\victor}[1]{\textcolor{blue}{[Victor: #1]}}
    
    \newcommand{\mussadiq}[1]{\textcolor{purple}{[MJ: #1]}}
    
    \newcommand{\mike}[1]{\textcolor{olive}{[MS: #1]}}
    
    \newcommand{\tobias}[1]{\textcolor{red}{[TK: #1]}}
    \newcommand{\tobiaschange}[1]{\textcolor{red!30!orange}{#1}}
    \newcommand{\tobiasremove}[1]{\textcolor{red!30!orange}{\sout{#1}}}
    \newcommand{\sd}[1]{\textcolor{magenta}{[SD: #1]}}
    
    \newcommand{\jovan}[1]{\textcolor{teal}{[JK: #1]}}
    
    \newcommand{\lioba}[1]{\textcolor{pink}{[LH: #1]}}
    \newcommand{\fatima}[1]{\textcolor{violet}{[FA: #1]}}
    
    \newcommand{\mchen}[1]{\textcolor{blue!60!purple}{[MC: #1]}}

    \newcommand{\jason}[1]{\textcolor{cyan!70!blue!60!black}{[JM: #1]}}
    
\else
    \newcommand{\jason}[1]{}
    
    \newcommand{\kushal}[1]{}
    
    \newcommand{\victor}[1]{}
    
    \newcommand{\mussadiq}[1]{}
    
    \newcommand{\yd}[1]{}
    
    \newcommand{\tobias}[1]{}
    \newcommand{\tobiasremove}[1]{}
    \newcommand{\tobiaschange}[1]{{{#1}}}
    \newcommand{\sd}[1]{}
    
    \newcommand{\jovan}[1]{}
    
    \newcommand{\fatima}[1]{}
    
    \newcommand{\mchen}[1]{}
    
    \newcommand{\lioba}[1]{}
    \newcommand{\mike}[1]{}

\fi

\newcommand{\tobiaslater}[1]{\textcolor{red}{[TK: #1]}}
\renewcommand{\tobiaslater}[1]{}

\newcommand{\victorlater}[1]{\textcolor{blue}{[Victor: #1]}}
\renewcommand{\victorlater}[1]{}

\newif\ifformalpart
\formalparttrue   

%% file: introduction.tex
\section{Introduction} \label{section:introduction}

Blockchains increasingly host real-time financial markets, and serving them well places three requirements on the consensus layer.
We motivate each, then present our design.

\keyitem{Requirement 1: no proposer monopoly.}
Currently, most deployed protocols use a rotating leader schedule, where there is a uniquely identified entity (the leader) with the short-lived outsized ability to control the inclusion, exclusion and ordering of transactions.\footnote{Because existing blockchain protocols recognize this unique role of the leader, they typically rotate that role across blocks, thereby spreading power among validators over time; nonetheless, even short-lived control can enable rent extraction.}
Recent proposals have emerged to induce competition among the parties that contribute to a block, in what are called \emph{multiple concurrent proposer} (MCP) designs.
There, a proposer schedule rotates the role over time, much as a leader schedule does, but names a more diverse, small \emph{set} of validators per block who are allowed to \emph{simultaneously} make their proposals.
A combination arising from these different simultaneous proposals ends up forming the final block.
To this end, a number of properties form axes along which such designs can be evaluated:
\emph{Short-term censorship resistance}~\cite{mcp-why-and-how} guarantees that a correct proposer's proposal is included in the block it contributes to, not dropped or deferred to a later block; so a transaction that reaches a correct proposer is included without delay, as long as the proposer has space for it.
\emph{Hiding}~\cite{mcp-why-and-how} keeps each proposal concealed from the other proposers until it is too late for them to react with a proposal of their own.
In this way, a proposer cannot condition their proposal's contents on the other ones, and proposals are truly simultaneous.\footnote{DAG-based consensus protocols also have multiple proposers, but these designs provide different properties since there can exist an advantage in conditioning one's proposal on the others'.}

\keyitem{Requirement 2: low latency.}
On-chain applications such as trading demand low latency. A transaction passes through several stages, each adding latency: the time to reach a proposer, the wait to enter a proposal (on average half the block interval), the propagation of that proposal across the network, the consensus voting rounds needed to reach finality (whether speculative or full), and finally the execution that produces the result the application observes.

\keyitem{Requirement 3: short economic ticks.}
An economic tick is the shortest time to inclusion of a transaction in \emph{any} proposal.
Since transactions might provide---among others---price updates, finer granularity in transaction inclusion enables more frequent state updating; for example, it might allow reducing arbitrage profits in decentralized exchanges.~\cite{cexdex}
\tobias{An economic tick is the interval between consecutive updates to the on-chain state, such as price refreshes or liquidations. Because such an update takes effect only once the transaction carrying it is included, more frequent inclusion shortens the tick, letting on-chain applications track external conditions more closely, for example reducing arbitrage profits in decentralized exchanges~\cite{cexdex}.}

\medskip
Most recent MCP designs~\cite{mcp-why-and-how,Constellation} run a separate proposal-and-aggregation phase before the consensus: each proposal is first attested by a quorum, and these attestations limit the discretion of a single leader who assembles the block for an off-the-shelf consensus. This added phase costs two extra communication rounds, sacrificing the desideratum of low latency. Our proposal aims to avoid that cost by incorporating the multiple proposers directly into the consensus.

\paragraph{Our protocol.}
We present \cadence, an MCP Byzantine fault-tolerant protocol that runs among $n = 3f + 1$ validators, up to $f$ of which may be Byzantine, under partial synchrony.
Using synchronized clocks, \cadence divides time into \emph{slots}, each contributing exactly one block to the ledger
 in order of their slots,
and combines two components, illustrated in \Cref{fig:cadence-architecture}: \chorus, a single-shot MCP consensus run for each slot, and \conductor, which schedules the slots, normally at a regular interval~$\tau$ (the block interval).
Within a slot, multiple proposers contribute concurrently, as in any MCP protocol; \cadence additionally produces blocks concurrently across slots, each decided by its own independent single-shot consensus with no chaining between them, so the slots overlap. 
We call the latter \emph{extreme pipelining}.
The protocol also provides short-term censorship resistance and hiding, the latter by threshold-encrypting each proposal and releasing the decryption shares at the deadline.
On the fast path, it finalizes a slot in an optimal three communication rounds, even when some proposers are offline. A slot reaches speculative finality one round earlier, revertible only under provable equivocation.
We target blockchains with asynchronous execution~\cite{yin-sep-exec}, where validators commit to digests of proposals, each a list of transactions; invalid proposals and transactions are discarded deterministically afterward, when the committed data is turned into an \execblock, the ordered list of transactions to be executed.

Together, these features let \cadence meet all three requirements. Its multiple concurrent proposers remove the single-proposer monopoly, and short-term censorship resistance and hiding keep any one proposer from suppressing a transaction or reacting to the others' proposals (Requirement~1). For low latency (Requirement~2), \cadence shortens several of the stages a transaction passes through: having many proposers means one is likely near each user, shortening the hop to a proposer; its arbitrarily low block intervals shorten the wait to enter a proposal; and its fast path reaches (speculative) finality with optimal latency. Execution results are certified separately from the main protocol, as soon as they are available (\Cref{sec:deploy-certifying}). Those same low block intervals keep economic ticks short, since transactions enter proposals frequently (Requirement~3).

\paragraph{Key ideas.}
Five ideas underlie our design.

\keyitem{Idea 1: extreme pipelining.}
We make every slot an independent single-shot consensus instance, scheduled by \conductor using synchronized clocks.
Slots do not chain\footnote{We offer chain certificates as a practical add-on (\Cref{sec:chain-certification}), but they are not required for consensus.}: no certificate or artifact from an earlier slot is needed for a slot to start, and every opened slot finalizes \emph{exactly one} block.
The usual way to shrink the block interval is pipelining, overlapping the consensus of consecutive blocks; but traditional pipelining chains each block to its predecessor, so the next cannot begin until the previous proposal has propagated, leaving the interval floored at the network delay~$\Delta$. Removing this dependency on the previous block lets the block interval drop below $\Delta$, and because block production normally does not wait on any slot to finish, a slow or stalled slot does not hold up the rest. Traditional pipelining complicates protocols, in particular with multiple proposers.\footnote{If each proposal builds on an uncommitted parent, blocks that merge several proposals may build on different parents, so the merged block has no unique parent for the next block to chain to.} Extreme pipelining, however, makes the per-slot consensus \emph{simpler}: each \chorus instance is single-shot rather than multi-shot, reasoning about only one block at a time.

\keyitem{Idea 2: deadlines rather than start times.}
Waiting for all proposers would let the slowest one set the pace. In most major proof-of-stake blockchains, stake concentrates in a few mutually well-connected regions, with a long tail of distant validators; with many concurrent proposers, at least one is likely to be drawn from that tail.
In \cadence, rather than coordinating on when proposers start, validators rely on a synchronized clock to fix, for each slot, a common \emph{deadline} by which its proposals must have arrived: a fixed point in time, the same for every validator and, under steady operation, known in advance.
Each proposer chooses its own broadcast time to meet this deadline, according to its own network conditions: later when the network is fast, earlier when it detects delays.
A shared start time would require a common timeout that accommodates the slowest proposer. A shared deadline instead fixes when proposals must arrive, so that the slowest proposer no longer sets the pace for proposal dissemination.

\keyitem{Idea 3: a three-round fast path that tolerates offline proposers.}
The simplest fast path would finalize a slot only when all of its proposers are online and correct. But in that case, a single missing proposer would make the fast path unavailable.
Instead, in our protocol, validators judge each proposer separately at the deadline, voting yes or no on whether its proposal arrived.
Once a quorum forms for each proposer, one way or the other, validators cast commit votes, and a quorum of commit votes finalizes the slot, two rounds after the deadline. This is the good case, where every proposer is either online and correct, or offline.
Neither voting round uses a timeout, so a slot finalizes as fast as its votes propagate.
The fast path is unavailable if some proposal reaches only some correct validators but not others (\emph{partial dissemination}) or if a proposer sends conflicting proposals (\emph{equivocation}). In that case, \chorus finalizes through a slower \emph{fallback path}, which relies on an off-the-shelf agreement protocol.

\keyitem{Idea 4: agreement on digests, proposal dissemination in parallel.}
Following DispersedSimplex~\cite{DispersedSimplex}, 
 validators agree on Merkle roots of encodings of the proposals, not on the proposals themselves.
Traditionally, consensus disseminates the data before voting on it, which adds latency, in particular for large payloads.
Because asynchronous execution lets validators vote on digests rather than on the payload, dissemination can instead run in parallel with consensus. A committed Merkle root comes with an explicit guarantee: its data can be recovered, or, if invalid, deterministically rejected by all honest validators. 

\keyitem{Idea 5: bounding open slots under asynchrony.}
\conductor opens slots at a regular cadence, one every $\tau$, without waiting to see how the open ones fare; since each slot runs independently, many can be open at once. Under asynchrony, slot production must be throttled when finalization stalls, or the open slots would accumulate without bound; \conductor therefore groups slots into fixed-size \emph{windows} and opens a new window only once enough slots of the earlier windows have finalized, keeping the number of open slots bounded (\emph{boundedness}). Once the network stabilizes, it returns to a steady cadence, with consecutive deadlines spaced $\tau$ apart and known at least $\Delta$ in advance (\emph{recovery}).
\victorlater{The two properties are {\em bounded number of open slots}
and {\em bounded recovery time}. Both properties are ``boundedness''
properties, so called one ``boundedness'' and on ``recovery''
doesn't makse sense, at least to me.}

\input{figures/cadence-architecture-tikz}

\paragraph{A generic framework.}
Underlying \chorus and \conductor is our generic \emph{extreme-pipelining framework} that composes any \emph{slot consensus}, responsible for everything within a slot, with any \emph{orchestrator}, responsible for scheduling the slots. The design is modular: either component can be swapped out without affecting the other. For example, instantiating the orchestrator with one that sets the block interval $\tau$ adaptively does not affect the slot consensus. Conversely, instantiating the slot consensus with a $5f{+}1$ protocol reduces the fast-path latency at the cost of resilience, while leaving the slot scheduling unchanged. The framework is not specific to MCP, though our correctness proof is specific to it: we show that any conforming pair solves the MCP problem\ifformalpart~(\Cref{section:framework})\fi.

\paragraph{In practice.}
We intend \cadence as a deployable protocol, not only a theoretical one: we show how to deploy it as part of a blockchain (\Cref{sec:deploying}) and work through several further practical considerations (\Cref{sec:practical}).
We also evaluate its network latency in simulation, using estimated delays between Monad mainnet's $200$ globally distributed validators, with five proposers per slot: at a block interval of $100$\,ms, the end-to-end latency, the sum of the inclusion and finalization latencies, averages about $269$\,ms, or $217$\,ms to speculative finality (\Cref{subsection:simulation}).

\paragraph{Contributions.}
Our contributions are:
(i) the extreme-pipelining framework, comprising the slot-consensus and orchestrator abstractions and a proof that any pair of conforming instantiations solves the MCP problem;
(ii) \cadence, our instantiation of the framework, combining \chorus, an MCP slot consensus with a three-round fast path at optimal resilience, with \conductor, an orchestrator that schedules the slots at a regular cadence in normal operation and slows block production under asynchrony; and
(iii) the practical considerations for deploying \cadence as a real blockchain rather than a purely theoretical protocol.

\tobiaslater{
\tobias{This paragraph needs some fine-tuning after merging the contributions.} Two of these results are, to the best of our knowledge, the first of their kind.
First, \cadence is the first protocol to combine multiple concurrent proposers with \emph{extreme pipelining}: block intervals driven below the network delay $\Delta$ by running each slot as an independent single-shot consensus.
Second, \cadence is the first MCP protocol that finalizes in the optimal three communication rounds, the good-case latency of plain single-leader consensus, while providing short-term censorship resistance and hiding.
}

\paragraph{Paper organization.}
The rest of this paper is organized as follows.
We present the problem and our design informally: the MCP problem (\Cref{subsection:mcp-overview}), our extreme-pipelining framework (\Cref{subsection:cadence-overview}) and its components in \cadence: \chorus (\Cref{section:chorus-overview}) and \conductor (\Cref{section:conductor-overview}).
We then discuss deploying \cadence as part of a blockchain (\Cref{sec:deploying}) and further practical considerations (\Cref{sec:practical}), evaluate its latency (\Cref{subsection:simulation}), and review related work (\Cref{sec:related}).
The appendix gives the formal treatment: the formal problem definition (\Cref{section:formal_problem_definition}), the \cadence framework and its correctness (\Cref{section:framework}), \chorus (\Cref{section:slot_agreement}), and \conductor (\Cref{section:conductor-formal}).

%% file: figures/cadence-architecture-tikz.tex
\begin{figure}[t]
\centering
\begingroup
\usetikzlibrary{arrows.meta,decorations.pathreplacing}
\def\cadarchTauX{1.35}
\def\cadarchPropT{1.08}
\def\cadarchFastT{1.76}
\def\cadarchFallT{4.32}
\def\cadarchLaneH{0.78}
\def\cadarchLaneM{0.39}

\tikzset{
  >=Stealth,
  cadarchProp/.style={draw=orange!65!black, fill=orange!15, line width=0.6pt, rounded corners=1pt},
  cadarchPartial/.style={draw=orange!65!black, fill=orange!8, line width=0.6pt,
    densely dotted, rounded corners=1pt},
  cadarchFast/.style={draw=green!55!black, fill=green!10, line width=0.8pt, rounded corners=2pt},
  cadarchFallback/.style={draw=orange!55!black, fill=orange!5, line width=0.7pt,
    dashed, rounded corners=2pt},
  cadarchDeadline/.style={red!55!black, line width=1.05pt, line cap=round},
  cadarchGrid/.style={gray!18, line width=0.6pt},
  cadarchMeasure/.style={<->, gray!60!black, line width=0.6pt},
  cadarchEmit/.style={->, blue!25!gray, line width=0.7pt},
  cadarchBlock/.style={draw=blue!25!gray, fill=blue!4, line width=0.75pt,
    rounded corners=1pt, minimum width=0.70cm, minimum height=0.34cm,
    inner sep=1pt, font=\scriptsize\bfseries, text=blue!40!black},
  cadarchNote/.style={font=\footnotesize, text=gray!50!black},
  cadarchSlot/.style={font=\footnotesize\bfseries, text=gray!45!black, anchor=east},
}

\newcommand{\cadarchDeadlineFlag}[2]{%
  \draw[cadarchDeadline] (#1,#2+0.08) -- (#1,#2-\cadarchLaneH-0.06);
  \fill[red!55!black] (#1,#2+0.08) -- (#1+0.17,#2+0.02) --
    (#1,#2-0.04) -- cycle;
}

\newcommand{\cadarchProposalBars}[2]{%
  \draw[cadarchProp] (#1+0.10,#2-0.18) rectangle (#1+\cadarchPropT-0.05,#2-0.06);
  \draw[cadarchProp] (#1+0.18,#2-0.39) rectangle (#1+\cadarchPropT-0.05,#2-0.27);
  \draw[cadarchProp] (#1+0.14,#2-0.60) rectangle (#1+\cadarchPropT-0.05,#2-0.48);
}

\newcommand{\cadarchFastSlot}[4]{%
  \node[cadarchSlot] at (#1-0.18,#2-\cadarchLaneM) {#3};
  \cadarchProposalBars{#1}{#2}
  \cadarchDeadlineFlag{#1+\cadarchPropT}{#2}
  \draw[cadarchFast] (#1+\cadarchPropT+0.12,#2-\cadarchLaneH+0.10)
    rectangle (#1+\cadarchPropT+\cadarchFastT-0.08,#2-0.10);
  \draw[cadarchEmit] (#1+\cadarchPropT+\cadarchFastT-0.03,#2-\cadarchLaneM)
    -- (#1+\cadarchPropT+\cadarchFastT+0.28,#2-\cadarchLaneM);
  \node[cadarchBlock] at (#1+\cadarchPropT+\cadarchFastT+0.62,#2-\cadarchLaneM) {#4};
}

\newcommand{\cadarchFallbackSlot}[4]{%
  \node[cadarchSlot] at (#1-0.18,#2-\cadarchLaneM) {#3};
  \draw[cadarchProp] (#1+0.10,#2-0.18) rectangle (#1+\cadarchPropT-0.05,#2-0.06);
  \draw[cadarchProp] (#1+0.18,#2-0.39) rectangle (#1+\cadarchPropT-0.05,#2-0.27);
  \draw[cadarchFallback] (#1+\cadarchPropT+0.12,#2-\cadarchLaneH+0.10)
    rectangle (#1+\cadarchPropT+\cadarchFallT-0.08,#2-0.10);
  \draw[cadarchPartial] (#1+\cadarchPropT-0.32,#2-0.60)
    rectangle (#1+\cadarchPropT+0.28,#2-0.48);
  \cadarchDeadlineFlag{#1+\cadarchPropT}{#2}
  \draw[cadarchEmit] (#1+\cadarchPropT+\cadarchFallT-0.03,#2-\cadarchLaneM)
    -- (#1+\cadarchPropT+\cadarchFallT+0.28,#2-\cadarchLaneM);
  \node[cadarchBlock] at (#1+\cadarchPropT+\cadarchFallT+0.62,#2-\cadarchLaneM) {#4};
}

\newcommand{\cadarchLegend}[2]{%
  \begin{scope}[shift={(#1,#2)}]
    \draw[cadarchProp] (0,0) rectangle (0.42,0.15);
    \node[cadarchNote,anchor=west] at (0.50,0.075) {concurrent proposals};
    \draw[cadarchDeadline] (3.35,-0.03) -- (3.35,0.19);
    \node[cadarchNote,anchor=west] at (3.49,0.075) {deadline $\deadline_{\slot}$};
    \draw[cadarchFast] (5.30,0) rectangle (5.72,0.15);
    \node[cadarchNote,anchor=west] at (5.83,0.075) {fast finalize};
    \draw[cadarchFallback] (7.55,0) rectangle (7.97,0.15);
    \node[cadarchNote,anchor=west] at (8.08,0.075) {fallback finalize};
    \node[cadarchBlock, minimum width=0.42cm, minimum height=0.17cm,
      font=\tiny\bfseries] at (10.62,0.075) {$B$};
    \node[cadarchNote,anchor=west] at (10.95,0.075) {block};
  \end{scope}
}

\begin{tikzpicture}[font=\normalsize, x=1.28cm, y=0.72cm]
  \path[use as bounding box] (-1.05,-4.55) rectangle (11.05,1.95);

  \foreach \i in {0,1,2,3,4,5} {
    \pgfmathsetmacro{\x}{\cadarchPropT+\i*\cadarchTauX}
    \draw[cadarchGrid] (\x,0.92) -- (\x,-3.70);
  }

  \node[anchor=east, font=\small\bfseries, text=teal!55!black]
    at (0.92,1.25) {\conductor};
  \draw[teal!60!black, line width=1.1pt]
    (\cadarchPropT,1.25) -- (\cadarchPropT+4*\cadarchTauX,1.25);
  \foreach \i in {0,1,2,3,4} {
    \pgfmathsetmacro{\x}{\cadarchPropT+\i*\cadarchTauX}
    \fill[teal!60!black] (\x,1.25) circle (1.5pt);
    \draw[teal!60!black, line width=0.6pt] (\x,1.18) -- (\x,0.98);
  }
  \draw[cadarchMeasure] (\cadarchPropT,0.85) -- (\cadarchPropT+\cadarchTauX,0.85);
  \node[cadarchNote,anchor=north] at (\cadarchPropT+0.675,0.83) {$\tau$};
  \node[cadarchNote,anchor=west] at (3.05,0.78) {deadlines every $\tau$};

  \cadarchFastSlot{0*\cadarchTauX}{0.18}{$\slot$}{$B_{\slot}$}
  \cadarchFallbackSlot{1*\cadarchTauX}{-0.72}{$\slot{+}1$}{$B_{\slot{+}1}$}
  \cadarchFastSlot{2*\cadarchTauX}{-1.62}{$\slot{+}2$}{$B_{\slot{+}2}$}
  \cadarchFastSlot{3*\cadarchTauX}{-2.52}{$\slot{+}3$}{$B_{\slot{+}3}$}

  \draw[decorate, decoration={brace, amplitude=5pt, mirror},
    orange!65!black, line width=0.85pt] (8.03,-3.20) -- (8.03,0.18)
    node[midway, right=7pt, font=\footnotesize\bfseries, text=orange!60!black,
      align=left] {independent\\\chorus instances};

  \cadarchLegend{-0.85}{-4.25}
\end{tikzpicture}
\endgroup
\caption{\cadence architecture.
In normal operation, \conductor schedules consecutive slot deadlines $\tau$ apart.
Each deadline belongs to an independent single-shot \chorus instance. Here every slot
has three concurrent proposers; on finalizing, the slot emits one slot-numbered block
that merges the proposals it includes. In slot $\slot{+}1$ the last proposer disseminates
only partially (dotted), forcing the slower fallback path, so a later fast slot produces
its block first; the ledger order is still determined by slot number.}
\label{fig:cadence-architecture}
\end{figure}

%% file: p1_informal.tex


\section{Multiple Concurrent Proposers (MCP): Problem Definition}
\label{subsection:mcp-overview}
\tobiaslater{I would advocate for this section to stay more high-level: our primary contribution is not the problem definition, and these properties are not novel. Two concrete suggestions. (1) Avoid GST in this informal part. Rather than tying guarantees to ``GST plus a grace period'', frame partial synchrony plainly: ``We assume partial synchrony: the network may delay messages arbitrarily for an unknown initial period, after which it delivers them within a bounded time. The protocol then settles a bounded time later, and the guarantees that depend on timing hold from then on.'' (2) Merge liveness and eventual stability into a single informal progress guarantee, for example: ``Once the network is in synchrony and the protocol has stabilized, blocks are produced at regular intervals and finalized within a bounded time.''}

We begin with a description of the problem we aim to solve.

\subsection{Setting}
We consider a set of $n$ validators, each of which maintains its own append-only ledger: an ordered sequence of blocks, where each block bundles transactions submitted by end users.
The goal is for these ledgers to remain mutually consistent and to keep growing, so that, to the end users, they appear as a single, ever-growing ledger.
We work in the standard Byzantine setting with $n = 3f + 1$ validators, of which up to $f > 0$ may be \emph{faulty} and behave arbitrarily (Byzantine), while the rest are \emph{correct} (or \emph{honest}).
We assume \emph{partial synchrony}: the network may behave asynchronously, delaying messages arbitrarily, up to some unknown moment called the \emph{global stabilization time} ($\mathrm{GST}$), after which it stabilizes and delivers messages within a known bound $\Delta$.
Throughout, $\delta \leq \Delta$ denotes the \emph{actual} network delay after $\mathrm{GST}$.
We further assume that validators have synchronized clocks, giving a global notion of time shared by all validators.
Finally, we assume that every correct validator begins executing at global time $0$.

Each block belongs to a \emph{slot}: slots are numbered by the positive integers, validators process them in increasing order, and each slot contributes exactly one block to the ledger.
A slot also has two associated attributes:
\begin{compactitem}
    \item a fixed set of $k$ \emph{proposers}, the validators entitled to propose that slot's contents;
    
    \item a \emph{deadline} (a point in global time), the cut-off by which proposers must ensure that validators have received sufficient information about their proposals for casting votes; the deadline is not hardcoded, but set by the protocol itself.\footnote{We occasionally refer to a slot's \emph{starting time}, which we define as its deadline minus $\Delta$, where $\Delta$ is the known bound on message delays after $\mathrm{GST}$.}
\end{compactitem}
The defining feature of the MCP problem, and what sets it apart from classical single-leader blockchains, is precisely that a slot may have \emph{several} proposers at once, rather than one: each proposer contributes its own proposal, and the slot's block should reflect the contributions of all of them.
The validators must keep agreeing on, and extending, a ledger of such blocks despite the Byzantine faults and network asynchrony.

\subsection{Guarantees}
Our MCP protocol must provide the following guarantees, closely following those introduced by Garimidi \emph{et al.}~\cite{mcp-why-and-how}.
Two of them --- censorship resistance and eventual stability --- take effect only once the network has stabilized: there is a \emph{grace period} $\mathcal{G}$ such that both hold from time $\mathrm{GST} + \mathcal{G}$ onwards.

\begin{itemize}
    \item \emph{Safety.} The ledgers of any two honest validators are always consistent, i.e., at any point in time, one is a prefix of the other, so the ledgers never fork.

    \item \emph{Liveness.} Every slot eventually contributes a block to every honest validator's ledger, so no slot stalls forever and the ledgers keep growing.

    \item \emph{Short-term censorship resistance.} An honest proposer's proposal cannot be suppressed after the grace period:
    for every slot whose starting time is at least $\mathrm{GST} + \mathcal{G}$, the proposals of all honest proposers become part of that slot's block. Note that this notion is stronger than the usual notion of censorship resistance, which only requires that proposed transactions are eventually included in the ledger.\footnote{Henceforth, we write ``censorship resistance'' to mean this stronger notion when clear from the context.}

    \item \emph{Hiding.} A faulty proposer cannot tailor its proposal to the honest proposers' proposals for the same slot: an honest proposer's proposal stays concealed until it is too late for a faulty proposer to build a new proposal and still have it included in that slot's block.

    \item \emph{Eventual stability.}
    From time $\mathrm{GST} + \mathcal{G}$ onwards, the slots come ``without gaps'': successive slots follow one another promptly, with no undue gaps between their deadlines.
\end{itemize}
\Cref{fig:mcp-overview} illustrates the problem together with these five guarantees.


\begin{figure}[h]
\centering
\scalebox{0.9}{%
\input{figures/overview-tikz}
}
\caption{The MCP problem and its five guarantees.
} 
\label{fig:mcp-overview}
\end{figure}

The first two guarantees, safety and liveness, are the classical guarantees expected of any blockchain protocol: the validators' ledgers must remain consistent, and they must keep growing.
The other two, short-term censorship resistance and hiding, are specific to the economic nature of blockchains.
Recall that the main goal of having several concurrent proposers is economic in the first place: the right to propose is spread across many parties precisely so that no single party controls what enters a block and profits from that control.
This entails that no honest proposer can be silenced even for a single block (otherwise its contribution could simply be dropped).
Furthermore, no faulty proposer can peek at others' proposals before committing to its own (otherwise it could profit by reacting to what it sees).
Censorship resistance guarantees the former, and hiding the latter. 
One timing aspect worth emphasizing: censorship resistance is required to hold not immediately at $\mathrm{GST}$, but only after a grace period $\mathcal{G}$ following it. In other words, slots whose starting time is within the grace period need not enjoy censorship resistance. 

This grace period is not incidental \ifformalpart: our formal definitions 
(\Cref{section:formal_problem_definition}) parametrize the guarantees by its 
length, and\fi\ifformalpart{} the shorter the $\mathcal{G}$ a protocol 
achieves\else the shorter the $\mathcal{G}$ a protocol achieves\fi, the 
sooner its guarantees kick in after network asynchrony --- a direct measure 
of the protocol's quality.


Hiding, in turn, rules out the adverse selection this peeking would otherwise enable: a Byzantine proposer that could see an honest proposal would react to the transactions it carries only when doing so is profitable, capturing the gains for itself and leaving their senders the losses. Since the adversary never sees an honest proposal in time, it cannot react selectively in this way.
We note that our hiding property slightly relaxes that of Garimidi et al.~\cite{mcp-why-and-how} while still capturing its purpose: there, a proposal must remain concealed until the ledger up to and including its slot is irrevocably committed; we require concealment only until it is too late for a faulty proposer to react to the honest proposal.

The last guarantee, eventual stability, concerns the slot schedule and is particularly applicable when performing extreme pipelining across slots.
Intuitively, after the grace period $\mathcal{G}$, the proposals for slots should be created at a steady pace: there should be no undue gaps between slots, which gives the protocol a steady high-frequency ``economic tick''. 
In \cadence, we make this intuition concrete by requiring that the slot deadlines eventually become \emph{$\tau$-spaced}, for a fixed duration $\tau$: from some point on, the deadlines of consecutive slots are separated by exactly $\tau$. 

\section{Our Extreme-Pipelining Framework}
\label{subsection:cadence-overview}

We now describe our \emph{extreme-pipelining framework},\footnote{This framework is not specific to the MCP problem: it supports ``slot-based'' consensus in general and may be of independent interest. We nevertheless present it through the lens of MCP, which is the focus of this work.} which allows for producing blocks at an arbitrarily high rate, no matter how long any individual block takes to propagate or finalize.

The framework rests on splitting the problem into two largely independent concerns, each addressed by its own abstract building block.
The first concern is what happens \emph{within} a single slot: how the proposers of a slot get their proposals into that slot's block, and how all validators come to agree on that block.
The second concern is what happens \emph{across} slots: how fast the slots follow one another (how we set their deadlines), and how many slots validators may have underway at once.
We capture the first concern in a primitive we call \emph{slot consensus}, and the second in a primitive we call the \emph{orchestrator}.
The framework itself is then little more than the glue that wires these two primitives together: it runs a single orchestrator and one instance of slot consensus per slot.
This decomposition keeps the design modular and general: each primitive can be designed, analyzed, and swapped out on its own, and any valid pair of primitives composes into a correct MCP protocol.
\cadence, our concrete MCP protocol, is this framework instantiated with concrete protocols for the two primitives: \chorus as the slot consensus and \conductor as the orchestrator. 

\subsection{Slot Consensus}
\label{subsection:slot-consensus-overview}

At its core, slot consensus is a one-shot consensus primitive, responsible for a single slot: it has all honest validators agree on a single finalized block for the slot. It is also where the MCP-specific work occurs: slot consensus gathers the slot proposers' proposals and enforces the two economic guarantees for that slot.

\subsubsection{Interface \& Guarantees.}
The interface of slot consensus is simple: the slot's proposers \kw{propose} their proposals as input, and each validator may eventually \kw{finalize} a block for the slot as output.
Roughly speaking, slot consensus is required to satisfy the following four guarantees:
\tobiaslater{I don't think this list adds much in the informal part: here we should focus on what the system does and how it works, not decompose its properties into those of the black-box primitives it is composed of.}

\begin{compactitem}
    \item \emph{Agreement.} Honest validators never finalize conflicting blocks for the slot.

    \item \emph{Termination.} Every honest validator eventually finalizes a block for the slot.

    \item \emph{Proposal inclusion.} If the slot's starting time (the slot's deadline minus $\Delta$) is past $\mathrm{GST}$, the finalized block contains the proposals of all honest proposers that proposed by the slot's starting time.

    \item \emph{Hiding.} The proposals stay concealed until it is too late for a faulty proposer to react to them: it is impossible for a faulty proposer to first learn the honest proposals and then, equipped with that knowledge, craft a new proposal of its own that still gets included in the finalized block.
\end{compactitem}
These four guarantees should already look familiar: they mirror, at the level of a single slot, every global MCP guarantee except eventual stability, and they are precisely what we will later rely on to argue that our framework solves the MCP problem. 
\Cref{fig:slot-consensus-overview} collects the interface and the guarantees of slot consensus in one place.

\begin{figure}[h]
\centering
\scalebox{0.85}{%
\begin{tikzpicture}[
  scmain/.style={
    draw=orange!65!black, line width=1.2pt, fill=orange!15,
    rounded corners=6pt, minimum width=3.8cm, minimum height=3.2cm,
    align=center, font=\small\bfseries
  },
  proposer/.style={
    draw=violet!60!black, line width=1pt, fill=violet!8,
    rounded corners=4pt, minimum width=1.8cm, minimum height=0.6cm,
    align=center, font=\scriptsize
  },
  validator/.style={
    draw=teal!60!black, line width=1pt, fill=teal!10,
    rounded corners=4pt, minimum width=1.8cm, minimum height=0.6cm,
    align=center, font=\scriptsize
  },
  proparr/.style={->, >=stealth, violet!60!black, thick},
  finarr/.style={->, >=stealth, green!55!black, thick},
  propscard/.style={
    draw=gray!55, line width=0.9pt, fill=gray!4,
    rounded corners=6pt, align=left, font=\scriptsize, inner sep=9pt
  },
]
  \node[scmain] (SC) at (0,0) {Slot Consensus\\ for slot $s$};

  \node[font=\tiny\itshape, gray] at (-3.1, 2.0) {input: proposals};
  \node[font=\tiny\itshape, gray] at (3.1, 2.0) {output: finalized block};

  \node[proposer] (P1) at (-5.2, 1.3) {proposer $p_1$};
  \node[proposer] (P2) at (-5.2, 0.4) {proposer $p_2$};
  \node[font=\scriptsize, gray] at (-5.2, -0.3) {$\vdots$};
  \node[proposer] (Pk) at (-5.2, -1.3) {proposer $p_k$};

  \draw[proparr] (P1.east) -- ([yshift=1.3cm]SC.west)
    node[midway, above, font=\tiny, violet!60!black] {propose$(\mathit{prop}_1)$};
  \draw[proparr] (P2.east) -- ([yshift=0.4cm]SC.west)
    node[midway, above, font=\tiny, violet!60!black] {propose$(\mathit{prop}_2)$};
  \draw[proparr] (Pk.east) -- ([yshift=-1.3cm]SC.west)
    node[midway, above, font=\tiny, violet!60!black] {propose$(\mathit{prop}_k)$};

  \node[validator] (V1) at (5.2, 1.3) {validator $v_1$};
  \node[validator] (V2) at (5.2, 0.4) {validator $v_2$};
  \node[font=\scriptsize, gray] at (5.2, -0.3) {$\vdots$};
  \node[validator] (Vn) at (5.2, -1.3) {validator $v_n$};

  \draw[finarr] ([yshift=1.3cm]SC.east) -- (V1.west)
    node[midway, above, font=\tiny, green!55!black] {finalize$(B_s)$};
  \draw[finarr] ([yshift=0.4cm]SC.east) -- (V2.west)
    node[midway, above, font=\tiny, green!55!black] {finalize$(B_s)$};
  \draw[finarr] ([yshift=-1.3cm]SC.east) -- (Vn.west)
    node[midway, above, font=\tiny, green!55!black] {finalize$(B_s)$};

  \node[propscard, anchor=north] (G) at (0, -2.3) {%
    \textbf{Guarantees}\\[3pt]
    \textcolor{orange!65!black}{\rule{1.3ex}{1.3ex}}~\textbf{Agreement:} honest validators never finalize conflicting blocks --- all finalize the same $B_s$.\\[2pt]
    \textcolor{orange!65!black}{\rule{1.3ex}{1.3ex}}~\textbf{Termination:} every honest validator eventually finalizes a block for the slot.\\[2pt]
    \textcolor{orange!65!black}{\rule{1.3ex}{1.3ex}}~\textbf{Proposal inclusion:} after $\mathrm{GST}$, an honest proposal submitted by the slot's starting time enters $B_s$.\\[2pt]
    \textcolor{orange!65!black}{\rule{1.3ex}{1.3ex}}~\textbf{Hiding:} proposals stay concealed until it is too late for a faulty proposer to react to them.
  };

\end{tikzpicture}%
}
\caption{The slot consensus primitive for a slot $s$: its interface and its guarantees.
The slot's proposers submit their proposals as input (\textcolor{violet!60!black}{violet arrows}), and every validator eventually finalizes a block for the slot as output (\textcolor{green!55!black}{green arrows}).
}
\label{fig:slot-consensus-overview}
\end{figure}

\subsection{Orchestrator}
\label{subsection:orchestrator-overview}
While slot consensus looks at one slot in isolation, the orchestrator spans all of them.
Its responsibility is to schedule the slots, that is, to determine their deadlines, and thereby the block interval. A smaller block interval means transactions wait less time before a slot is available to carry them, so the orchestrator directly governs how quickly a transaction can be picked up for inclusion. The slots are all the orchestrator knows about: proposals, blocks and other consensus-specific abstractions are entirely outside its view.

\subsubsection{Interface \& Guarantees.}
Like slot consensus before it, the orchestrator exposes a simple interface.
As input, a validator notifies the orchestrator that a slot is \kw{complete}, meaning that the validator's work within that slot is done; in our framework, as we show in \Cref{subsection:composition-overview}, this happens once the slot's block is finalized.
As output, the orchestrator \kw{schedules} new slots: it comes back to the validator with the deadlines of new slots.
With the interface in place, we can state the two properties that the orchestrator must satisfy:

\begin{compactitem}
    \item \emph{Boundedness.} The orchestrator should keep validators from running arbitrarily far ahead: it imposes a fixed limit on how many slots a correct validator may have \emph{underway} at any one time --- slots whose deadlines it has already received but has not yet completed.
    It enforces this limit through its output, withholding the deadlines of new slots and throttling the pace until the validator reports that enough earlier slots are complete.
    Crucially, this limit need not be finite: taking it to be \emph{infinite} imposes no constraint at all, recovering the most general orchestrator, which may let validators run arbitrarily far ahead.
    A finite limit is the stronger guarantee, and is what keeps each validator's resource footprint in check.


    \item \emph{Recovery.} Once the network stabilizes (some grace period after $\mathrm{GST}$), the orchestrator must schedule the slots ``properly'', and this requirement is twofold.
    First, the schedule must be steady: the deadlines of consecutive slots are spaced by exactly the fixed amount $\tau$ (the same $\tau$ with which we capture eventual stability in \Cref{subsection:mcp-overview}).
    Second, the schedule must be known in advance: every correct validator learns each deadline (i.e., receives it as the orchestrator's output) at least $\Delta$ time before the deadline itself.
    The first half is what keeps the slots coming without gaps; the second is what gives proposers room to act: in our framework, as we show in \Cref{subsection:composition-overview}, it ensures that every honest proposer issues its proposal on time, i.e., by the slot's starting time.
\end{compactitem}
\Cref{fig:orchestrator-overview} collects the interface and the guarantees of the orchestrator in one place.

\begin{figure}[h]
\centering
\scalebox{0.85}{%
\begin{tikzpicture}[
  orchmain/.style={
    draw=teal!60!black, line width=1.2pt, fill=teal!12,
    rounded corners=6pt, minimum width=3.8cm, minimum height=3.2cm,
    align=center, font=\small\bfseries
  },
  validator/.style={
    draw=teal!60!black, line width=1pt, fill=teal!10,
    rounded corners=4pt, minimum width=1.8cm, minimum height=0.6cm,
    align=center, font=\scriptsize
  },
  comparr/.style={->, >=stealth, violet!60!black, thick},
  schedarr/.style={->, >=stealth, green!55!black, thick},
  propscard/.style={
    draw=gray!55, line width=0.9pt, fill=gray!4,
    rounded corners=6pt, align=left, font=\scriptsize, inner sep=9pt
  },
]
  \node[orchmain] (O) at (0,0) {Orchestrator};

  \node[font=\tiny\itshape, gray] at (-3.1, 2.0) {input: complete notifications};
  \node[font=\tiny\itshape, gray] at (3.1, 2.0) {output: deadlines of new slots};

  \node[validator] (L1) at (-5.2, 1.3) {validator $v_1$};
  \node[validator] (L2) at (-5.2, 0.4) {validator $v_2$};
  \node[font=\scriptsize, gray] at (-5.2, -0.3) {$\vdots$};
  \node[validator] (Ln) at (-5.2, -1.3) {validator $v_n$};

  \draw[comparr] (L1.east) -- ([yshift=1.3cm]O.west)
    node[midway, above, font=\tiny, violet!60!black] {complete$(s)$};
  \draw[comparr] (L2.east) -- ([yshift=0.4cm]O.west)
    node[midway, above, font=\tiny, violet!60!black] {complete$(s)$};
  \draw[comparr] (Ln.east) -- ([yshift=-1.3cm]O.west)
    node[midway, above, font=\tiny, violet!60!black] {complete$(s)$};

  \node[validator] (R1) at (5.2, 1.3) {validator $v_1$};
  \node[validator] (R2) at (5.2, 0.4) {validator $v_2$};
  \node[font=\scriptsize, gray] at (5.2, -0.3) {$\vdots$};
  \node[validator] (Rn) at (5.2, -1.3) {validator $v_n$};

  \draw[schedarr] ([yshift=1.3cm]O.east) -- (R1.west)
    node[midway, above, font=\tiny, green!55!black] {schedule$(D_{s'})$};
  \draw[schedarr] ([yshift=0.4cm]O.east) -- (R2.west)
    node[midway, above, font=\tiny, green!55!black] {schedule$(D_{s'})$};
  \draw[schedarr] ([yshift=-1.3cm]O.east) -- (Rn.west)
    node[midway, above, font=\tiny, green!55!black] {schedule$(D_{s'})$};

  \node[propscard, anchor=north] (G) at (0, -2.3) {%
    \textbf{Guarantees}\\[3pt]
    \textcolor{teal!60!black}{\rule{1.3ex}{1.3ex}}~\textbf{Boundedness:} at any moment, each honest validator has only a bounded number of slots underway --- scheduled,\\
    \phantom{\rule{1.3ex}{1.3ex}}~but not yet complete.\\[2pt]
    \textcolor{teal!60!black}{\rule{1.3ex}{1.3ex}}~\textbf{Recovery:} after the network stabilizes (some ``short'' grace period after $\mathrm{GST}$), consecutive deadlines are spaced by\\
    \phantom{\rule{1.3ex}{1.3ex}}~exactly $\tau$, and every honest validator learns each deadline at least $\Delta$ time before the deadline itself.
  };

\end{tikzpicture}%
}
\caption{The orchestrator primitive: its interface and its guarantees.
Validators notify the orchestrator that slots are complete as input (\textcolor{violet!60!black}{violet arrows}), and the orchestrator schedules new slots by announcing their deadlines as output (\textcolor{green!55!black}{green arrows}).}
\label{fig:orchestrator-overview}
\end{figure}

The two properties of the orchestrator address two different regimes of the network.
Boundedness is what protects the entire framework during an outage.
While the network is asynchronous, slot consensus instances may be unable to finalize, since finalization requires timely message delivery~\cite{fischer1985impossibility}; in our framework, where a validator reports a slot complete exactly when its block is finalized, this means that the complete notifications dry up.
If the orchestrator nevertheless kept opening fresh slots, the number of underway slots --- scheduled, but not yet complete --- would grow without bound for as long as the outage lasted, and with it the number of concurrently running slot consensus instances and the memory and computation each validator must devote to them.
Boundedness rules this out: deprived of complete notifications, the orchestrator withholds the deadlines of new slots, so that no matter how long the network stays asynchronous, at most a bounded number of slot consensus instances ever run at once.
Why do we care about bounding the number of concurrently running consensus instances at all?

The reason is memory: a bounded number of concurrent slots is necessary --- 
though not sufficient --- for bounded per-validator memory; one must also 
bound the memory each slot consensus instance uses (which in turn requires a 
primitive such as abortable broadcast~\cite{abortable-broadcast})\ifformalpart, 
and we return to it in \Cref{subsection:memory}\fi. This lies beyond our 
scope.


Recovery, in contrast, is what the orchestrator owes once the network has stabilized, and its two halves serve two different ends.
The first half, the steady $\tau$-spaced schedule, is precisely what establishes the MCP eventual stability property: eventual stability concerns the schedule rather than any individual slot, and the schedule is entirely the orchestrator's doing --- by no longer throttling and spacing the deadlines by exactly $\tau$, the orchestrator has the slots advance at the steady, ``gap-free'' cadence that eventual stability demands.
The second half, that every correct validator learns each deadline at least $\Delta$ in advance, plays a quieter but equally important role: it ensures that a proposer knows the deadline already by the slot's starting time, leaving it enough time to submit its proposal --- exactly what the proposal inclusion of slot consensus presupposes.

\subsection{Putting the Pieces Together}
\label{subsection:composition-overview}
Our framework composes these two primitives in the natural way; \Cref{fig:cadence-overview} illustrates the interaction.
Each validator runs its local instance of the orchestrator, which drives the protocol forward by setting each slot's deadline.
At each slot's starting time ($\Delta$ time before its deadline), or as soon as it learns the slot's deadline, if later, the validator begins participating in that slot's consensus instance and, if it is one of the slot's proposers, submits its proposal to that instance.
Whenever a slot consensus instance finalizes a block, the validator records the block, notifies the orchestrator that the slot is complete (which may in turn let the orchestrator schedule further slots), and eventually appends the block to its ledger.
Since slots may be finalized out of order, blocks are buffered and appended in slot-number order, so that each validator's ledger grows as a clean, contiguous sequence with no slot missing.

\begin{figure}[h]
\centering
\scalebox{0.9}{%
\begin{tikzpicture}[
  orch/.style={
    draw=teal!60!black, line width=1.2pt, fill=teal!12,
    rounded corners=6pt, minimum width=2.4cm, minimum height=7.6cm,
    align=center, font=\small\bfseries
  },
  scbox/.style={
    draw=orange!65!black, line width=1pt, fill=orange!15,
    rounded corners=4pt, minimum width=2.8cm, minimum height=0.7cm,
    align=center, font=\scriptsize
  },
  scprog/.style={
    draw=orange!45!black, line width=1pt, fill=orange!7, dashed,
    rounded corners=4pt, minimum width=2.8cm, minimum height=0.7cm,
    align=center, font=\scriptsize
  },
  blockbox/.style={
    draw=green!55!black, line width=1pt, fill=green!10,
    rounded corners=3pt, minimum width=0.9cm, minimum height=0.6cm,
    align=center, font=\scriptsize
  },
  blockgray/.style={
    draw=gray!40, line width=0.8pt, fill=gray!8, dashed,
    rounded corners=3pt, minimum width=0.9cm, minimum height=0.6cm,
    align=center, font=\scriptsize, text=gray!50
  },
  openarr/.style={->, >=stealth, teal!70!black, thick},
  skiparr/.style={->, >=stealth, teal!50!black, thick, dashed},
  complarr/.style={->, >=stealth, orange!80!black, thick},
  finarr/.style={->, >=stealth, green!55!black},
  parentarr/.style={->, >=stealth, gray!60, semithick},
  outer/.style={draw=blue!25!gray, line width=1pt, fill=blue!4, rounded corners=8pt},
]
  \node[outer, minimum width=11.6cm, minimum height=9.2cm]
    at (3.8, -0.8) {};
  \node[font=\small\bfseries, text=blue!40!black] at (3.8, 3.4) {Extreme-Pipelining Framework};

  \node[orch] (O) at (0, -0.8) {Orchestrator};

  \node[scbox]  (S1) at (5.0,  2.3) {\textbf{Slot Consensus} \\ for slot $1$};
  \node[scbox]  (S2) at (5.0,  0.9) {\textbf{Slot Consensus} \\ for slot $2$};
  \node[scprog] (S3) at (5.0, -0.5) {\textbf{Slot Consensus} \\ for slot $3$};
  \node[scbox]  (S4) at (5.0, -1.9) {\textbf{Slot Consensus} \\ for slot $4$};
  \node[scprog] (S5) at (5.0, -3.3) {\textbf{Slot Consensus} \\ for slot $5$};
  \node[font=\scriptsize, gray] at (5.0,-4.2) {$\vdots$};

  \draw[openarr] ([yshift=3pt]O.east |- S1.west) -- ([yshift=3pt]S1.west)
    node[midway, above, font=\tiny, teal!70!black] {deadline $D_1$};
  \draw[openarr] ([yshift=3pt]O.east |- S2.west) -- ([yshift=3pt]S2.west)
    node[midway, above, font=\tiny, teal!70!black] {deadline $D_2$};
  \draw[openarr] ([yshift=3pt]O.east |- S3.west) -- ([yshift=3pt]S3.west)
    node[midway, above, font=\tiny, teal!70!black] {deadline $D_3$};
  \draw[openarr] ([yshift=3pt]O.east |- S4.west) -- ([yshift=3pt]S4.west)
    node[midway, above, font=\tiny, teal!70!black] {deadline $D_4$};
  \draw[openarr] ([yshift=3pt]O.east |- S5.west) -- ([yshift=3pt]S5.west)
    node[midway, above, font=\tiny, teal!70!black] {deadline $D_5$};

  \draw[complarr] ([yshift=-3pt]S1.west) -- ([yshift=-3pt]O.east |- S1.west)
    node[midway, below, font=\tiny, orange!80!black] {complete slot $1$};
  \draw[complarr] ([yshift=-3pt]S2.west) -- ([yshift=-3pt]O.east |- S2.west)
    node[midway, below, font=\tiny, orange!80!black] {complete slot $2$};
  \draw[complarr] ([yshift=-3pt]S4.west) -- ([yshift=-3pt]O.east |- S4.west)
    node[midway, below, font=\tiny, orange!80!black] {complete slot $4$};

  \node[blockbox] (B1) at (8.6,  2.3) {$B_1$};
  \node[blockbox] (B2) at (8.6,  0.9) {$B_2$};
  \node[blockbox] (B4) at (8.6, -1.9) {$B_4$};

  \draw[finarr] (S1.east) -- (B1.west)
    node[midway, above, font=\tiny, green!55!black] {finalize $B_1$};
  \draw[finarr] (S2.east) -- (B2.west)
    node[midway, above, font=\tiny, green!55!black] {finalize $B_2$};
  \draw[finarr] (S4.east) -- (B4.west)
    node[midway, above, font=\tiny, green!55!black] {finalize $B_4$};

  \draw[parentarr] (B2.east) to[out=0, in=0, looseness=0.8] (B1.east);

\end{tikzpicture}%
}
\caption{Our framework as the composition of its two building blocks: a single orchestrator, and one slot consensus instance per slot.
The orchestrator schedules slots by setting their deadlines: it announces the deadline $D_s$ of each slot $s$ (\textcolor{teal!70!black}{teal arrows}), thereby spawning a slot consensus for it.
Each slot consensus independently \kw{finalizes} a block (\textcolor{green!55!black}{green arrows}).
The instances run concurrently and may finalize out of order: here, slot $4$ has already finalized its block $B_4$, even though the earlier slot $3$ (and the later slot $5$) is still in progress (dashed borders).
Upon a finalization, the validator notifies the orchestrator that the slot is \kw{complete} (\textcolor{orange!80!black}{orange arrows}), which may let the orchestrator schedule further slots; the finalized blocks are then appended to the ledger in slot-number order.}
\label{fig:cadence-overview}
\end{figure}

\subsubsection{Extreme Pipelining.}
With the entire framework in place, we can now explain how it achieves the extreme pipelining it is named after: a fixed block interval of $\tau$, regardless of how long consensus on any individual slot takes.
The key lies in both building blocks: the slots and their blocks are independent by construction (one slot consensus instance per slot), and the orchestrator is free to schedule them independently, constrained only by boundedness (it may keep opening new slots as long as the number of underway slots stays within the bound).
To appreciate this freedom, let us explain why it was out of reach for earlier designs.
In traditional BFT protocols, consecutive slots are chained to one another: a slot must wait for its predecessor to make enough progress before it can begin.
In unpipelined protocols, such as PBFT~\cite{Castro:1999:PBF:296806.296824}, basic HotStuff~\cite{HotStuff}, and SBFT~\cite{SBFT}, this chaining is the strictest possible: the next slot begins only after the current slot's block is fully finalized.
Pipelined protocols, such as pipelined HotStuff~\cite{HotStuff}, FastHotStuff~\cite{FastHotStuffTDSC}, and MonadBFT~\cite{jalalzai2026monadbftfastresponsiveforkresistant}, loosen this coupling, but do not remove it: the next slot begins earlier, once the current one produces a quorum certificate, yet it still requires that certificate as input.
In both cases, each slot passes a baton to the next, and the protocol can only advance as fast as the baton travels.
Our framework drops this dependency altogether: under favorable network conditions, a new slot opens every $\tau$ time, dictated solely by the global clock, and nothing from any earlier slot --- no artifact, no certificate, no event --- is needed for it to start.
Indeed, nothing in the orchestrator's interface ties a slot to its predecessor's progress: complete notifications may throttle the pace (boundedness), but they are never required as input for a new slot to open.
The slots no longer wait for one another; they simply follow the schedule.
The block interval thus becomes decoupled from the per-slot consensus latency: slots can be scheduled arbitrarily close together, and hence blocks can be produced at an arbitrarily high rate, no matter how long any individual slot takes to certify or finalize.


\subsubsection{Proof Sketch.}
Let us now explain why our framework solves the MCP problem as defined in \Cref{subsection:mcp-overview}, walking through its guarantees one by one.
Safety follows from the agreement property of slot consensus, applied slot by slot.
Agreement ensures that, for every slot, honest validators never finalize conflicting blocks; and since every validator appends the finalized blocks to its ledger in slot-number order, any two honest ledgers contain the same blocks in the same order, one being a prefix of the other.
In short, if no single slot can produce disagreement, the ledgers have no way to diverge.
Liveness follows from an alternation between the two primitives.
Consider the slots currently underway.
By the termination of slot consensus, every honest validator eventually finalizes a block for each of them and, having done so, reports these slots as complete to the orchestrator.
Once these completions arrive, the orchestrator's boundedness no longer stands in the way: the orchestrator schedules new slots, whose consensus instances again eventually terminate, and so on forever.
Hence, every slot is eventually scheduled and eventually contributes a block to every honest validator's ledger: no slot stalls forever, and the ledgers keep growing.

Hiding requires no work at all: the hiding of each individual slot consensus instance is precisely the global hiding property, read across all slots.
Censorship resistance, in contrast, needs the two primitives to cooperate.
After the grace period $\mathcal{G}$ following $\mathrm{GST}$, the recovery property of the orchestrator guarantees that every honest validator learns each upcoming deadline at least $\Delta$ in advance, that is, by the slot's starting time.
Every honest proposer therefore has enough time to submit its proposal by the starting time, and from there the proposal inclusion property of slot consensus takes over: the on-time proposal makes it into the slot's finalized block, and hence into the ledger.

Eventual stability is exactly the first half of the recovery property: once the network stabilizes (from time $\mathrm{GST} + \mathcal{G}$ onward), the orchestrator spaces consecutive deadlines by exactly $\tau$, so the slots advance at the steady cadence that eventual stability demands.
Finally, one practical concern remains; it is not an MCP guarantee, but, as discussed, a prerequisite for feasibility: the number of slot consensus instances a validator must attend to at any point in time.
This is where boundedness of the orchestrator comes in: at any moment, each honest validator has only a bounded number of slots underway, hence only a bounded number of slot consensus instances to run, keeping its memory and computation in check no matter how the network behaves.

\section{\chorus: Our Slot Consensus}
\label{section:chorus-overview}

We now present \chorus, the slot consensus protocol we use in \cadence.
We describe \chorus through two modes of operation.
The first, the \emph{fast path}, handles the good case: it is essentially a classical single-leader consensus generalized to accommodate several proposers at once, finalizing a block quickly whenever the network is synchronous and the proposers behave.
The second, the \emph{fallback path}, takes over when the fast path fails to finalize, ensuring that the slot reaches agreement nonetheless.
Note that the split into a fast and a fallback path is only expository: the two are interwoven into a single protocol (and not two modes that the protocol toggles between).

\subsection{Preliminaries}
\label{subsection:chorus-preliminaries}

Before presenting the protocol itself, we settle two preliminaries: the cryptographic primitives \chorus builds on, and what its validators actually agree upon (which, as we will see, is not blocks of transactions).

\subsubsection{Cryptography.}
\chorus relies on a handful of standard cryptographic primitives, which we use as follows.
\begin{compactitem}
    \item \emph{Collision-resistant hashing}, used to build \emph{Merkle commitments}: a short root that binds a whole sequence of values, while any single value can later be proven to be among them with a logarithmic-size proof.

    \item \emph{Digital signatures} that can moreover be \emph{aggregated}, so that many signatures on the same message collapse into a single short multi-signature.

    \item \emph{Erasure coding}, which encodes a piece of data into $n$ chunks such that any $f + 1$ of them suffice to reconstruct the original data.

    \item A per-slot \emph{threshold encryption} scheme, which encrypts a message so that it can be decrypted only once $f + 1$ validators release their decryption shares for that slot. The shares are bound to the slot itself, not to any individual ciphertext, so a single set of $f+1$ shares opens every message encrypted for that slot.
\end{compactitem}

\subsubsection{The Object of Agreement.}\label{subsubsection:ObjectofAgreement}
We now turn to the following question: what \chorus's validators actually agree on.
The answer is not a block, but \emph{short digests} of the proposals. 
Much as a single-leader protocol can agree on a hash of the leader's block rather than on the block itself, \chorus has its validators agree on a digest of each proposal rather than on the proposal itself.
Concretely, each proposal goes through three steps before agreement (\Cref{fig:proposal-encoding}): it is first encrypted; then erasure-coded into $n$ chunks; and finally committed to by a single Merkle root, the digest the validators agree on. Recall from \Cref{section:introduction} (Idea 4) that committing to
these digests rather than to the full proposals is exactly what lets
\emph{dissemination} run in parallel with consensus, following
AVID~\cite{AVID}, DispersedSimplex~\cite{DispersedSimplex}, and Deterministic
RaptorCast~\cite{mip10}. Each proposal is disseminated as erasure-coded chunks
committed by a Merkle root, and the deterministic encoding gives
\emph{consistency}, meaning all correct validators reach the same verdict on a
root, either all recovering the same proposal or all rejecting it as invalidly
encoded. Consistency is a property of the encoding alone. \emph{Availability},
that the proposal behind a root can be recovered, is achieved once \chorus
certifies the root, since the certificate guarantees that enough honest
validators hold their chunks for the data to be reconstructible. Any
erasure-coded dissemination scheme providing consistency composes with \chorus
in this way. \chorus builds its object of agreement on Deterministic
RaptorCast~\cite{mip10}, whose deterministic encoding makes a root's verdict
unique (\emph{consistency}), while \chorus's certificates over these roots
establish that the committed data is recoverable (\emph{availability}). Both
are established for \chorus in \Cref{section:slot_agreement}.
With several proposers in a slot, this object of agreement grows into a \emph{vector} of such roots: one entry per proposer, each committing to that proposer's proposal, or to a designated empty value when the proposer contributes nothing.

\begin{figure}[h]
\centering
\scalebox{0.9}{%
\begin{tikzpicture}[
  stage/.style={
    draw=violet!60!black, line width=1pt, fill=violet!8,
    rounded corners=4pt, minimum width=2.3cm, minimum height=1.0cm,
    align=center, font=\small
  },
  cipher/.style={
    draw=teal!60!black, line width=1pt, fill=teal!10,
    rounded corners=4pt, minimum width=2.3cm, minimum height=1.0cm,
    align=center, font=\small
  },
  chunk/.style={
    draw=orange!65!black, line width=0.9pt, fill=orange!12,
    rounded corners=2pt, minimum width=0.78cm, minimum height=0.62cm,
    align=center, font=\scriptsize
  },
  mnode/.style={
    draw=gray!60, line width=0.8pt, fill=gray!6,
    circle, minimum size=0.5cm, inner sep=1pt, font=\scriptsize
  },
  rootbox/.style={
    draw=green!45!black, line width=1.2pt, fill=green!10,
    rounded corners=3pt, minimum width=0.9cm, minimum height=0.62cm,
    align=center, font=\small\bfseries
  },
  steparr/.style={->, >=stealth, gray!60!black, thick},
  treeline/.style={gray!60, line width=0.8pt},
  steplab/.style={font=\scriptsize\itshape, gray!50!black},
]
  \node[stage] (M) at (0, 0) {plaintext\\ proposal $M$};

  \node[cipher] (C) at (4.3, 0) {encrypted\\ proposal $C$};
  \draw[steparr] (M.east) -- (C.west)
    node[midway, above, steplab] {(1) encrypt};

  \node[chunk] (c1) at (8.0, 0) {$c_1$};
  \node[chunk] (c2) at (8.9, 0) {$c_2$};
  \node[chunk] (c3) at (9.8, 0) {$c_3$};
  \node[chunk] (c4) at (10.7, 0) {$c_4$};
  \node[steplab] at (9.35, -0.65) {$n$ chunks, one per validator (any $f+1$ recover $C$)};
  \draw[steparr] (C.east) -- (7.5, 0)
    node[midway, above, steplab] {(2) encode};

  \node[mnode] (h12) at (8.45, 1.05) {$h_{12}$};
  \node[mnode] (h34) at (10.25, 1.05) {$h_{34}$};
  \node[rootbox] (r) at (9.35, 2.15) {$r$};
  \draw[treeline] (c1.north) -- (h12.south);
  \draw[treeline] (c2.north) -- (h12.south);
  \draw[treeline] (c3.north) -- (h34.south);
  \draw[treeline] (c4.north) -- (h34.south);
  \draw[treeline] (h12.north) -- (r.south);
  \draw[treeline] (h34.north) -- (r.south);
  \node[steplab, align=left, anchor=west] at (10.9, 1.6) {(3) build Merkle tree\\ \phantom{(3)} over the chunks};
  \node[font=\scriptsize, green!45!black, anchor=south] at (9.35, 2.55) {Merkle root};

\end{tikzpicture}%
}
\caption{From a plaintext proposal to its Merkle root, step by step.}
\label{fig:proposal-encoding}
\end{figure}

\noindent Crucially, an entry is never agreed upon on its own: it always comes with a \emph{certificate}.
\begin{compactitem}
    \item For a \emph{positive} entry (one carrying a Merkle root), the certificate is a \emph{proof of availability}, attesting that the data behind the root is available: every honest validator can retrieve it and arrive at the same outcome, either reconstructing the committed proposal in full, or deterministically concluding that its encoding is broken (the chunks are inconsistent); crucially, whichever of the two it is, all honest validators conclude the same.
    

    \item For a \emph{negative} entry, the certificate is instead a proof that the proposer's proposal is correctly excluded, for instance because the proposer failed to disseminate it on time.
\end{compactitem}
We call this vector of certified entries a \emph{meta-block} (\Cref{fig:meta-block}): one entry per proposer, each a Merkle root or the empty value $\bot$.
A meta-block \emph{includes} a proposer's proposal when its entry is positive --- the root commits to the proposal --- and \emph{excludes} it when the entry is negative.
From here on, then, \chorus's goal is simply to agree on a meta-block's \emph{entries} --- the Merkle roots and $\bot$ values themselves, not the certificates that justify them. The fast path and the fallback path are two mechanisms for doing exactly that, and we describe both purely as procedures for agreeing on the aforementioned entries.
How the finalized entries are turned into the slot's block is described in \Cref{subsection:meta-to-block}.


\begin{figure}[h]
\centering
\scalebox{0.9}{%
\begin{tikzpicture}[
  entry/.style={
    draw=teal!60!black, line width=1pt, fill=teal!10,
    rounded corners=3pt, text width=2.8cm, minimum height=1.4cm,
    inner sep=4pt, align=center, font=\small
  },
  empty/.style={
    draw=gray!50, line width=1pt, fill=gray!8, dashed,
    rounded corners=3pt, text width=2.8cm, minimum height=1.4cm,
    inner sep=4pt, align=center, font=\small, text=gray!55
  },
  blockbox/.style={
    draw=orange!65!black, line width=1.2pt, fill=orange!12,
    rounded corners=4pt, minimum width=2.6cm, minimum height=1.1cm,
    align=center, font=\small\bfseries
  },
  maparr/.style={->, >=stealth, gray!60!black, thick},
]
  \draw[draw=blue!25!gray, line width=1pt, fill=blue!4, rounded corners=8pt]
    (-2.05, -1.0) rectangle (11.65, 1.45);
  \node[font=\small\bfseries, text=blue!40!black] at (4.8, 1.95) {meta-block};

  \node[font=\scriptsize, gray!50!black] at (0.0, 1.0) {proposer $1$};
  \node[font=\scriptsize, gray!50!black] at (3.2, 1.0) {proposer $2$};
  \node[font=\scriptsize, gray!50!black] at (6.4, 1.0) {proposer $3$};
  \node[font=\scriptsize, gray!50!black] at (9.6, 1.0) {proposer $4$};

  \node[entry, draw=teal!60!black, fill=teal!10] (e1) at (0.0, 0.0) {$r_1$ \\[3pt] {\scriptsize\textcolor{green!55!black}{$\pi_1$: data available}}};
  \node[entry, draw=violet!60!black, fill=violet!10] (e2) at (3.2, 0.0) {$r_2$ \\[3pt] {\scriptsize\textcolor{green!55!black}{$\pi_2$: data available}}};
  \node[empty] (e3) at (6.4, 0.0) {$\bot$ \\[3pt] {\scriptsize $\pi_3$: correctly excluded}};
  \node[entry, draw=brown!70!black, fill=brown!12] (e4) at (9.6, 0.0) {$r_4$ \\[3pt] {\scriptsize\textcolor{green!55!black}{$\pi_4$: data available}}};

  \node[blockbox] (B) at (4.8, -2.7) {block $B$};
  \draw[maparr] (4.8, -1.0) -- (B.north)
    node[midway, right, font=\scriptsize, gray!55!black] {deterministic};

\end{tikzpicture}%
}
\caption{A \emph{meta-block}, the object \chorus agrees on.
Once it is agreed upon, the slot's block follows with no further agreement: each validator independently recovers the proposal behind every positive entry (discarding any invalidly encoded one) and assembles them into the same well-defined block (see \Cref{subsection:meta-to-block}).}
\label{fig:meta-block}
\end{figure}

\subsection{Fast Path}
\label{subsection:fast-path-overview}

The fast path of \chorus follows the familiar structure of single-leader consensus, in which a leader disseminates its proposal and the validators then run two voting rounds before finalizing.
\chorus keeps this template and departs from it in only one respect: a slot has several proposers, so all of them act as leaders at once, each disseminating a proposal of its own.

\subsubsection{Description.}
Before dissemination, each proposer \emph{encrypts} its proposal using the threshold encryption scheme, producing a ciphertext that can be decrypted only once $f+1$ validators release their decryption shares for the slot.
The proposer then disseminates this encrypted proposal using the chunked encoding of \Cref{fig:proposal-encoding}, sending each validator that validator's chunk together with a Merkle proof against the root and the proposer's signature over the slot number, its own identity, and the root. The root commits to the encrypted proposal rather than the plaintext one.
An honest proposer disseminates by the slot's \emph{starting time} $\deadline_\slot - \Delta$, so that under synchrony its chunks reach every validator before the deadline.

The first voting round opens at the slot's deadline, and in it each validator casts one vote per proposer $p_j$: if it has received a valid chunk from $p_j$, carrying $p_j$'s Merkle root $r_j$, it broadcasts a positive vote
\[
    \langle \textsc{yes}, s, p_j, r_j, \sigma \rangle,
\]
where $\sigma$ is its signature on $\langle \textsc{yes}, s, p_j, r_j \rangle$, so that $2f+1$ such votes can later be combined into a single certificate on $r_j$.
Importantly, a validator also attaches to each positive vote the chunk it received \tobiaschange{(a positive vote is accepted only if it carries its sender's assigned chunk)}; it does so to support the fallback path, as we explain in \Cref{subsection:fallback-path-overview}.
If, instead, no valid chunk from $p_j$ arrived by the deadline,\footnote{Validators also ignore chunks that arrive too long before $\deadline_\slot$, so a Byzantine proposer cannot flood them with proposals for far-future slots.} the validator broadcasts a signed negative vote
\[
    \langle \textsc{no}, s, p_j, \sigma \rangle,
\]
where $\sigma$ is now its signature on $\langle \textsc{no}, s, p_j \rangle$.
Crucially, as already noted in \Cref{subsection:chorus-preliminaries}, these shares are tied to the slot itself rather than to any individual proposal, so $f+1$ of them suffice to decrypt \emph{every} proposal of the slot.

Once a validator has gathered, for \emph{every} proposer, a quorum of $2f+1$ matching votes, that is, $2f+1$ \textsc{yes} votes on the same root or $2f+1$ \textsc{no} votes, it assembles a \emph{fast meta-block}: for each proposer it records either the root $r_j$ with its certificate of $2f+1$ \textsc{yes} votes, or, when the quorum is negative, a negative entry with a certificate of $2f+1$ \textsc{no} votes.
(When such a quorum fails to form for some proposer, the fallback path takes over, as we describe in \Cref{subsection:fallback-path-overview}.)
These certificates are exactly the certificates that the entries carry.
A $2f+1$ \textsc{yes} certificate is a proof of availability: at least $f+1$ of its signers are honest, and each signed only after receiving and verifying its chunk, so at least $f+1$ honest validators hold their chunks, enough to recover the data behind $r_j$ and, as discussed above, to either reconstruct the proposal in full or unanimously deem it invalid (see \Cref{subsection:meta-to-block}).
A $2f+1$ \textsc{no} certificate, in turn, justifies excluding the proposer, since at least $f+1$ honest validators saw no valid chunk by the deadline.
Each validator that has assembled a fast meta-block casts a single \emph{fast vote} on it as a whole, broadcasting the meta-block along with the vote. As soon as it collects $2f+1$ fast votes on the same fast meta-block, it finalizes its entries.
\Cref{fig:fast-path} traces this communication pattern for a slot with two proposers.

\begin{figure}[h]
\centering
\scalebox{0.9}{%
\begin{tikzpicture}[
  >=stealth,
  vline/.style={gray!55, line width=0.8pt},
  sep/.style={gray!45, dashed, line width=0.6pt},
  dissarr/.style={->, orange!85!black, line width=0.9pt},
  nochunk/.style={->, gray!55, dashed, line width=0.6pt},
  votearr/.style={->, blue!70, dashed, line width=0.5pt, opacity=0.55},
  optarr/.style={->, green!60!black, line width=0.5pt, opacity=0.7},
  vdot/.style={circle, draw=blue!55, fill=white, line width=0.7pt, inner sep=1.6pt},
  pdot/.style={circle, draw=violet!60!black, fill=violet!35, line width=0.6pt, inner sep=2pt},
  fdot/.style={circle, draw=green!55!black, fill=green!35, line width=1pt, inner sep=2.6pt},
  propdot/.style={circle, draw=orange!75!black, fill=orange!45, line width=0.9pt, inner sep=2.6pt},
  sildot/.style={circle, draw=gray!60, fill=white, line width=0.9pt, inner sep=2.6pt},
  mcell/.style={draw=teal!60!black, line width=1pt, fill=teal!10, rounded corners=3pt, minimum width=2.5cm, minimum height=1.7cm, align=center, font=\small},
  ecell/.style={draw=gray!50, line width=1pt, fill=gray!8, dashed, rounded corners=3pt, minimum width=2.5cm, minimum height=1.7cm, align=center, font=\small, text=gray!55},
  plab/.style={font=\footnotesize, anchor=east, align=right},
  blab/.style={font=\scriptsize, align=center, text width=1.75cm, anchor=north, text=gray!45!black},
  leglab/.style={font=\scriptsize, anchor=west},
  chunkmini/.style={draw=orange!65!black, fill=orange!35, line width=0.6pt, minimum width=0.2cm, minimum height=0.2cm, inner sep=0pt},
  sharemini/.style={draw=teal!60!black, fill=teal!25, line width=0.6pt, minimum width=0.2cm, minimum height=0.2cm, inner sep=0pt, rotate=45},
]
  \foreach \y in {5,3.5,2,0.5}{ \draw[vline] (0.5,\y) -- (8.7,\y); }
  \node[plab, text=orange!70!black] at (0.35,5) {\textbf{Validator 1}\\ proposer $p_1$};
  \node[plab, text=gray!55!black] at (0.35,3.5) {\textbf{Validator 2}\\ proposer $p_2$ (silent)};
  \node[plab] at (0.35,2) {Validator 3};
  \node[plab] at (0.35,0.5) {Validator 4};

  \foreach \x in {3.3, 5.6, 7.9}{ \draw[sep] (\x,-0.55) -- (\x,5.7); }

  \foreach \yb in {5,3.5,2,0.5}{ \draw[dissarr] (1.4,5) -- (3.05,\yb); }
  \foreach \yb in {5,3.5,2,0.5}{ \draw[nochunk] (1.4,3.5) -- (3.05,\yb); }

  \foreach \ya in {5,3.5,2,0.5}{ \foreach \yb in {5,3.5,2,0.5}{ \draw[votearr] (3.3,\ya) -- (5.6,\yb); } }

  \foreach \ya in {5,3.5,2,0.5}{ \foreach \yb in {5,3.5,2,0.5}{ \draw[optarr] (5.6,\ya) -- (7.9,\yb); } }

  \node[propdot] at (1.4,5) {};
  \node[sildot]  at (1.4,3.5) {};
  \foreach \y in {5,3.5,2,0.5}{ \node[vdot] at (3.3,\y) {}; }
  \foreach \y in {5,3.5,2,0.5}{ \node[pdot] at (5.6,\y) {}; }
  \foreach \y in {5,3.5,2,0.5}{ \node[fdot] at (7.9,\y) {}; }

  \foreach \y in {5,3.5,2,0.5}{
    \node[chunkmini] at ({3.62},{\y+0.32}) {};
    \node[sharemini] at ({3.98},{\y+0.32}) {};
  }

  \draw[draw=blue!25!gray, line width=1pt, fill=blue!4, rounded corners=8pt] (2.7,5.65) rectangle (8.5,8.1);
  \node[font=\small\bfseries, text=blue!40!black] at (5.6,8.45) {fast meta-block};
  \node[font=\scriptsize, gray!50!black] at (4.2,7.7) {proposer $1$};
  \node[font=\scriptsize, gray!50!black] at (7.0,7.7) {proposer $2$};
  \node[mcell, draw=teal!60!black, fill=teal!10] (mc1) at (4.2,6.6) {$r_1$ \\[3pt] {\scriptsize\textcolor{green!55!black}{$\pi_1 = 2f{+}1$}} \\ {\scriptsize\textcolor{green!55!black}{signatures on $r_1$}}};
  \node[ecell] (mc2) at (7.0,6.6) {$\bot$ \\[3pt] {\scriptsize $2f{+}1$ \textsc{no} certificate}};
  \draw[gray!55, dashed, line width=0.5pt] (5.6,5.65) -- (5.6,5.05);

  \draw[draw=gray!45, rounded corners=4pt, fill=white, line width=0.7pt] (9.1,0.45) rectangle (15.7,5.1);
  \node[font=\small\bfseries, text=gray!55!black, anchor=west] at (9.3,4.75) {Legend};
  \draw[dissarr] (9.35,4.25) -- (9.95,4.25); \node[leglab] at (10.05,4.25) {chunk dissemination (encrypted proposal)};
  \draw[nochunk] (9.35,3.75) -- (9.95,3.75); \node[leglab] at (10.05,3.75) {silent proposer: no chunk received};
  \draw[votearr, opacity=1] (9.35,3.25) -- (9.95,3.25); \node[leglab] at (10.05,3.25) {round-1 vote (per proposer)};
  \node[chunkmini] at (9.65,2.75) {}; \node[leglab] at (10.05,2.75) {chunk attached to the round-1 vote};
  \node[sharemini] at (9.65,2.25) {}; \node[leglab] at (10.05,2.25) {decryption share attached to the vote};
  \draw[optarr, opacity=1] (9.35,1.75) -- (9.95,1.75); \node[leglab] at (10.05,1.75) {fast vote (round 2)};
  \node[pdot] at (9.65,1.25) {}; \node[leglab] at (10.05,1.25) {build fast meta-block, \emph{speculatively finalize}};
  \node[fdot] at (9.65,0.75) {}; \node[leglab] at (10.05,0.75) {\emph{finalize} on $2f{+}1$ fast votes};

  \node[blab] at (1.4,-0.75) {encrypt, then disseminate (before deadline $D$)};
  \node[blab] at (3.3,-0.75) {deadline $D$: round-1 vote (per proposer)};
  \node[blab] at (5.6,-0.75) {build meta-block; \emph{spec.\ finalize}; fast vote};
  \node[blab] at (7.9,-0.75) {collect $2f{+}1$ fast votes, then finalize};

\end{tikzpicture}%
}
\caption{The \chorus fast path for a slot with two proposers $p_1$ and $p_2$ ($n=4$, $f=1$).
Proposer $p_1$ disseminates its (encrypted) proposal as one chunk per validator (\textcolor{orange!85!black}{orange}), while $p_2$ stays silent (\textcolor{gray!60!black}{gray, dashed}).
At deadline $D$, each validator broadcasts one vote per proposer (\textcolor{blue!70}{blue, dashed}), $\langle\textsc{yes}, r_1\rangle$ for $p_1$ and $\langle\textsc{no}\rangle$ for $p_2$, carrying along the chunk it received (\textcolor{orange!65!black}{small squares}) and its decryption share (\textcolor{teal!60!black}{diamonds}).
On $2f+1$ matching votes per proposer, it builds the \emph{fast meta-block} ($r_1$ with its $2f+1$-\textsc{yes} certificate for $p_1$, $\bot$ with its $2f + 1$-\textsc{no} certificate for $p_2$), casts a fast vote (\textcolor{green!60!black}{green}), and, as we explain later in this subsection, can already \emph{speculatively finalize} the block (\textcolor{violet!60!black}{violet nodes}).
A second round of $2f+1$ fast votes then finalizes the meta-block's entries (\textcolor{green!55!black}{green nodes}).}
\label{fig:fast-path}
\end{figure}

\subsubsection{How Fast is the Fast Path?}
\label{subsubsec:fast-path-speed}
As we emphasized in \Cref{section:introduction}, latency is a first-order concern, so we want the fast path to finalize as quickly as possible.
Under favorable conditions it does: provided the slot occurs after $\mathrm{GST}$, the fast path finalizes in just two voting rounds, by time
\[
D + 2\delta,
\]
where $D$ is the slot's deadline and $\delta \leq \Delta$ is the \emph{actual} message delay.
This is optimal: no protocol can finalize in fewer rounds~\cite{GoodCase-Latency-PODC}.
In addition, \chorus offers a faster but weaker notion of finalization, which applications and users can choose instead of the regular one when they prefer lower latency over the strongest guarantee.
A validator need not wait for both voting rounds: following prior work~\cite{jalalzai2026monadbftfastresponsiveforkresistant}, it may \emph{speculatively finalize} the entries after a single round of voting, as soon as it constructs the fast meta-block. A speculative finalization may ultimately be reverted, but, as we show once the fallback path is in place, only if some validator \emph{equivocated}, disseminating conflicting votes for the same proposer; such deliberate misbehavior is exceedingly rare in practice and can always be detected and punished.

The fast path succeeds whenever the network is synchronous and every proposer is either correct (sending the same Merkle root to all correct validators) or fully silent (sending nothing to any): all correct validators then vote the same way on each proposer (all \textsc{yes}, or all \textsc{no}), a quorum forms, and the slot commits on the fast path. Only partial dissemination or equivocation splits a proposer's votes and forces the fallback. By casting explicit \textsc{no} votes rather than staying silent, validators keep the slot on the fast path even when a proposer is fully offline.

\subsection{Fallback Path}
\label{subsection:fallback-path-overview}

We now turn to the \emph{fallback path}, which validators follow when the fast path fails.

\subsubsection{Description.}
At the end of the first voting round, once the \textsc{yes} and \textsc{no} votes have been disseminated, a correct validator that can assemble a fast meta-block (as in \Cref{subsection:fast-path-overview}) simply issues its fast vote carrying it; as we saw, this happens whenever the network is synchronous and every proposer is correct or silent. A validator that cannot assemble a fast meta-block instead takes the fallback path, in three steps.

\paragraph{First, it announces that it is abandoning the fast path.}
It broadcasts a signed \textsc{no-fast-vote} message, but only after a timeout has elapsed and it has heard first-round votes from at least $2f+1$ validators: by then it has given the fast path enough time and still cannot assemble a fast meta-block, so it times out on it.
It is a signed statement that the validator will not cast a fast vote.
By quorum intersection, a $2f+1$ no-fast-vote quorum and the $2f+1$ fast votes a fast-path commit needs cannot both form; so once $2f+1$ no-fast-votes exist, the fast path can no longer commit, and it is safe to enter the fallback path instead.

\paragraph{Second, it votes on each proposer separately.}
For a proposer $p_j$, it checks whether it has collected at least $f+1$ \textsc{yes} votes on the same Merkle root:
\begin{compactitem}
  \item \emph{If it has}, it can recover $p_j$'s proposal on its own: recall that each \textsc{yes} vote carries its sender's chunk, and any $f+1$ chunks suffice to reconstruct the proposal. (This is exactly why, in the first round, every positive vote carried its chunk.) To be sure the recovered data is genuine, the validator re-encodes it and checks that this reproduces the same Merkle root. If it does, it broadcasts a \textsc{fallback-yes} vote signing that root.
  \item \emph{Otherwise} (there are no $f+1$ \textsc{yes} votes on a common root, or the re-encoding yields a different root), it broadcasts a \textsc{fallback-no} vote for $p_j$.
\end{compactitem}

\paragraph{Third, it finalizes through a black-box agreement run at the end.}
In this second post-deadline round, the validator waits for $2f+1$ messages.
Each is either a fast vote, which carries a fast meta-block, or a bundle of fallback votes, namely the per-proposer \textsc{fallback-yes}/\textsc{fallback-no} votes together with the sender's single \textsc{no-fast-vote}.
Whichever way the round goes, the validator ends it by submitting a single meta-block as its proposal to the slot's \emph{fallback agreement} (see \Cref{fig:fallback-path}).
Here \chorus relies on an off-the-shelf agreement primitive: for each slot, \chorus runs a single instance of such a consensus protocol and treats it purely as a black box.
Any protocol will do, as long as it guarantees agreement (no two correct validators decide differently), termination (every correct validator eventually decides), and external validity (its decision always satisfies a fixed, predetermined predicate).
Which meta-block the validator submits depends on the messages it gathered this round:

\begin{compactitem}
  \item \emph{At least one fast vote is received in the second voting round.} In this case, the validator adopts the fast meta-block that vote carries and proposes it to the fallback agreement.
  (Note that any validator that built a fast meta-block at the end of the first voting round holds its own fast vote, so it always falls into this case and proposes that fast meta-block.)

  \item \emph{All $2f+1$ messages are fallback votes.} The validator instead assembles a \emph{fallback meta-block}, a second kind of meta-block alongside the fast one, and proposes it to the fallback agreement.
  We now describe how a fallback meta-block is built. 
  The validator first aggregates the $2f+1$ received \textsc{no-fast-vote}s into a single certificate; by quorum intersection, the mere existence of this certificate proves that no one could have finalized on the fast path.
  It then fills in one entry per proposer $p_j$:
  \begin{compactitem}
    \item[(i)] if it holds $f+1$ \textsc{fallback-yes} votes on the same root $r$, then $p_j$'s entry is that root $r$, certified by those $f+1$ \textsc{fallback-yes} votes;
    \item[(ii)] if it sees two \textsc{fallback-yes} votes on different roots, then $p_j$'s entry is $\bot$, certified by those two conflicting votes as an \emph{equivocation} proof;\footnote{For this to pin the equivocation on proposer $p_j$, each \textsc{fallback-yes} vote must carry the Merkle root \emph{as signed by $p_j$}; two such votes on different roots then amount to $p_j$'s own signatures on conflicting roots.}
    \item[(iii)] otherwise, $p_j$'s entry is $\bot$, certified by $f+1$ \textsc{fallback-no} votes.
  \end{compactitem}
  In short, the fallback meta-block bundles a single $2f+1$ \textsc{no-fast-vote} certificate with, for each proposer, one of three entries: an $f+1$ \textsc{fallback-yes} certificate on a root, an equivocation proof, or an $f+1$ \textsc{fallback-no} certificate.
\end{compactitem}
The proposed meta-block is either fast or fallback.
The external validity predicate we use in the fallback agreement is therefore that the agreement may decide only a well-formed fast or fallback meta-block for that slot.
The slot is then finalized by whatever meta-block the fallback agreement decides.
\Cref{fig:fallback-path} traces the whole fallback path end to end.

\begin{figure}[h]
\centering
\scalebox{0.78}{%
\begin{tikzpicture}[
  >=stealth,
  vline/.style={gray!55, line width=0.8pt},
  sep/.style={gray!45, dashed, line width=0.6pt},
  fbarr/.style={->, violet!65, dashed, line width=0.5pt, opacity=0.6},
  proparr/.style={->, orange!85!black, line width=0.9pt},
  decarr/.style={->, green!55!black, line width=1.1pt},
  dotE/.style={circle, draw=red!55!black, fill=red!10, line width=0.8pt, inner sep=2pt},
  pdot/.style={circle, draw=violet!60!black, fill=violet!35, line width=0.6pt, inner sep=2pt},
  cbox/.style={draw=blue!45!gray, line width=1.1pt, fill=blue!6, rounded corners=6pt, align=center, font=\small\bfseries, text=blue!40!black},
  decbox/.style={draw=green!55!black, fill=green!12, line width=1pt, rounded corners=4pt, minimum width=1.8cm, minimum height=0.9cm, align=center, font=\small\bfseries, text=green!45!black},
  mbox/.style={draw=blue!25!gray, line width=0.9pt, fill=blue!4, rounded corners=5pt, align=center, font=\scriptsize},
  plab/.style={font=\footnotesize, anchor=east},
  blab/.style={font=\scriptsize, align=center, text width=2.0cm, anchor=north, text=gray!45!black},
  leglab/.style={font=\scriptsize, anchor=west},
]
  \foreach \y in {5,3.5,2,0.5}{ \draw[vline] (0.5,\y) -- (6.25,\y); }
  \node[plab] at (0.35,5) {Validator 1};
  \node[plab] at (0.35,3.5) {Validator 2};
  \node[plab] at (0.35,2) {Validator 3};
  \node[plab] at (0.35,0.5) {Validator 4};

  \foreach \x in {1.9, 4.2}{ \draw[sep] (\x,-0.55) -- (\x,5.7); }

  \foreach \ya in {5,3.5,2,0.5}{ \foreach \yb in {5,3.5,2,0.5}{ \draw[fbarr] (1.9,\ya) -- (4.2,\yb); } }

  \node[cbox, minimum width=2.1cm, minimum height=5.6cm] (CB) at (7.4,2.75) {Black-box\\fallback\\agreement};
  \foreach \y in {5,3.5,2,0.5}{ \draw[proparr] (4.2,\y) -- (6.35,\y); }
  \node[decbox] (DEC) at (9.8,2.75) {FINALIZE};
  \draw[decarr] (CB.east) -- (DEC.west);

  \foreach \y in {5,3.5,2,0.5}{ \node[dotE] at (1.9,\y) {}; }
  \foreach \y in {5,3.5,2,0.5}{ \node[pdot] at (4.2,\y) {}; }

  \draw[draw=gray!45, rounded corners=4pt, fill=white, line width=0.7pt] (11.0,1.45) rectangle (16.9,4.1);
  \node[font=\small\bfseries, text=gray!55!black, anchor=west] at (11.2,3.75) {Legend};
  \node[dotE] at (11.45,3.25) {}; \node[leglab] at (11.8,3.25) {no fast meta-block built};
  \draw[fbarr, opacity=1] (11.25,2.75) -- (11.7,2.75); \node[leglab] at (11.8,2.75) {\textsc{no-fast-vote} $+$ fallback votes};
  \draw[proparr] (11.25,2.25) -- (11.7,2.25); \node[leglab] at (11.8,2.25) {propose to fallback agreement};
  \draw[decarr] (11.25,1.75) -- (11.7,1.75); \node[leglab] at (11.8,1.75) {fallback agreement finalizes};

  \node[blab] at (1.9,-0.75) {end of the first voting round};
  \node[blab] at (4.2,-0.75) {collect $2f{+}1$ second-round votes; obtain a fast or a fallback meta-block};
  \node[blab] at (7.4,-0.75) {propose to fallback agreement};

\end{tikzpicture}%
}
\caption{The \chorus fallback path ($n=4$, $f=1$).
Each validator broadcasts either a fast vote (see \Cref{subsection:fast-path-overview}) or, failing that, a \textsc{no-fast-vote} together with a per-proposer \textsc{fallback-yes}/\textsc{fallback-no} vote (\textcolor{violet!65}{violet, dashed}).
It then proposes a meta-block (fast or fallback; \textcolor{orange!85!black}{orange}) to the slot's single black-box fallback agreement, which finalizes one (\textcolor{green!55!black}{green}) as the slot's.}
\label{fig:fallback-path}
\end{figure}

\subsection{From Meta-Blocks to Proposals (and Hence Blocks)}
\label{subsection:meta-to-block}
We have now seen both paths, but neither actually outputs a block of transactions: each finalizes only the entries of a meta-block, and an entry is just a digest (a Merkle root) of a proposal, not the proposal itself.
Producing the slot's block from these entries therefore takes two steps.
First, every correct validator must recover the proposals that the positive entries point to.
Second, it merges these recovered proposals into the slot's block by a deterministic transformation: discarding invalid proposals, deduplicating transactions, and ordering the rest.
This second step is a purely local computation that yields the same block at every correct validator, so the real work lies in the recovery, on which we now focus.

The recovery must ensure that, from a meta-block $B$ whose entries are finalized, every correct validator eventually obtains the same \emph{proposal vector}, one entry per proposer, each either that proposer's recovered proposal or $\bot$ (\Cref{fig:recovery}), subject to the following two guarantees:
\begin{compactitem}[(\emph{ii})]
\item [(\emph{i})] all correct validators recover the same proposal vector, each entry being
either the same recovered proposal at every validator or discarded at every
validator (the \emph{consistency} of \Cref{subsubsection:ObjectofAgreement});
    \item[(\emph{ii})] an honest proposer's proposal is never lost: if $B$ certifies the proposal $P_i$ of a correct proposer $p_i$ (i.e., $B$ contains the root committing to the encryption of $P_i$), then every correct validator recovers exactly $P_i$ in its proposal vector.
\end{compactitem}
Both follow from how a single positive entry of $B$, committing to a root $r$, is recovered.

\subsubsection{Recovery Procedure.}
If $B$ is a fast meta-block, then such an entry carries a $2f+1$ \textsc{yes} certificate (see \Cref{subsection:fast-path-overview}), and a validator casts a \textsc{yes} vote only while holding the chunk it received for $r$; hence, at least $f+1$ correct validators hold their chunk for $r$.
These honest validators re-disseminate their chunks, so every correct validator eventually collects $f+1$ chunks committed under $r$.
Collecting them does not by itself guarantee a recoverable proposal, since a faulty proposer may have committed chunks that do not form a valid erasure encoding; this shows up in one of two ways.
If the validator cannot decode the chunks at all, the encoding is certainly invalid and it discards the proposal.
Otherwise it decodes a candidate ciphertext $\mathcal{D}$ (the root commits to the ciphertext, not the plaintext), but a successful decode still does not guarantee a consistent encoding: the validator re-encodes $\mathcal{D}$ and checks that this reproduces exactly $r$.
If it does, the encoding is valid and the validator decrypts $\mathcal{D}$ with the slot's $f+1$ decryption shares (released at the slot's deadline) to obtain the plaintext proposal.\footnote{A ciphertext that does not decrypt to a well-formed proposal is discarded in the same way; this check, too, is deterministic, so all correct validators agree.}
If the re-encoding yields any other root, the encoding is invalid and the validator discards the proposal.

If instead $B$ is a fallback meta-block, such an entry carries an $f+1$ \textsc{fallback-yes} certificate (see \Cref{subsection:fallback-path-overview}).
At least one of its signers is honest, and an honest validator casts \textsc{fallback-yes} only after reconstructing the proposal and checking that it re-encodes to $r$; as part of casting that vote (\Cref{subsection:fallback-path-overview}) it sends each validator its assigned chunk, and the validators re-broadcast their chunks, so every correct validator obtains the $f+1$ chunks needed to reconstruct the encrypted proposal $\mathcal{D}$ behind $r$.
Recovery then proceeds exactly as in the fast case: re-encode $\mathcal{D}$ to check it against $r$ and, if it matches, decrypt it into the plaintext proposal, otherwise discarding it as invalid.

In both cases, once the root is certified, the certificate
guarantees that $f+1$ honest validators hold their chunks, so every correct
validator can collect $f+1$ chunks and reconstruct the data. This is the
\emph{availability} of \Cref{subsubsection:ObjectofAgreement}, achieved through
\chorus's certificate over the disseminated root.  Whether $B$ is fast or fallback, recovery provides both guarantees:
\begin{compactitem}[(\emph{ii})]
    \item[(\emph{i})] holds because whether a proposal is discarded depends only on $r$: it is discarded if and only if the chunks committed under $r$ do not form a valid erasure encoding, regardless of which chunks a validator happens to collect.
    When they do form one, every correct validator decodes the same ciphertext and, decryption being deterministic, recovers the same plaintext.
    When they do not, every correct validator discards the proposal: even one whose own chunks were individually valid finds, on re-encoding, that they fail to reproduce $r$, since validity is a property of the whole committed set, not of the chunks one happened to receive.

    \item[(\emph{ii})] holds because a correct proposer encrypts a valid proposal and encodes the ciphertext into a valid erasure encoding; the chunks committed under $r$ therefore re-encode to $r$ at every correct validator, so the proposal is always recovered and never discarded.
\end{compactitem}

\begin{figure}[h]
\centering
\scalebox{0.92}{%
\begin{tikzpicture}[
  >=stealth,
  cell/.style={draw=teal!60!black, line width=1pt, fill=teal!10, rounded corners=2pt,
    minimum width=3.4cm, minimum height=0.95cm, align=center, font=\scriptsize},
  cellno/.style={draw=gray!50, line width=1pt, fill=gray!8, dashed, rounded corners=2pt,
    minimum width=3.4cm, minimum height=0.95cm, align=center, font=\scriptsize, text=gray!55},
  prop/.style={draw=violet!60!black, line width=1pt, fill=violet!10, rounded corners=2pt,
    minimum width=1.3cm, minimum height=0.95cm, align=center, font=\scriptsize},
  propno/.style={draw=gray!50, line width=1pt, fill=gray!8, dashed, rounded corners=2pt,
    minimum width=1.3cm, minimum height=0.95cm, align=center, font=\scriptsize, text=gray!55},
  flow/.style={draw=orange!65!black, line width=1pt, fill=orange!10, rounded corners=3pt,
    minimum width=2.0cm, minimum height=0.95cm, align=center, font=\scriptsize},
  dec/.style={draw=orange!65!black, line width=1pt, fill=orange!6, diamond, aspect=2.3,
    inner sep=1pt, align=center, font=\scriptsize},
  arr/.style={->, gray!60!black, line width=0.9pt},
  ttl/.style={font=\footnotesize\bfseries, text=blue!40!black},
]
  \node[ttl] at (1.95,2.75) {meta-block $B$};
  \node[cell]   (a1) at (1.95,1.95)  {$r_1$ \\[1pt] {\tiny\textcolor{green!55!black}{$2f{+}1$ signatures on $r_1$,}} \\ {\tiny\textcolor{green!55!black}{or $f{+}1$ \textsc{fallback-yes} on $r_1$}}};
  \node[cell]   (a2) at (1.95,0.75)  {$r_2$ \\[1pt] {\tiny\textcolor{green!55!black}{$2f{+}1$ signatures on $r_2$,}} \\ {\tiny\textcolor{green!55!black}{or $f{+}1$ \textsc{fallback-yes} on $r_2$}}};
  \node[cellno] (a3) at (1.95,-0.45) {$\bot$ \\[1pt] {\tiny $2f{+}1$ \textsc{no}, or $f{+}1$ \textsc{fallback-no}}};
  \node[cell]   (a4) at (1.95,-1.65) {$r_4$ \\[1pt] {\tiny\textcolor{green!55!black}{$2f{+}1$ signatures on $r_4$,}} \\ {\tiny\textcolor{green!55!black}{or $f{+}1$ \textsc{fallback-yes} on $r_4$}}};

  \node[ttl] at (6.75,2.75) {recovered proposals};
  \node[prop]   (b1) at (6.75,1.95)  {$P_1$};
  \node[prop]   (b2) at (6.75,0.75)  {$P_2$};
  \node[propno] (b3) at (6.75,-0.45) {$\bot$};
  \node[propno] (b4) at (6.75,-1.65) {$\bot$};

  \draw[arr] (a1.east) -- (b1.west) node[midway, above, font=\tiny, green!50!black] {decodes correctly};
  \draw[arr] (a2.east) -- (b2.west) node[midway, above, font=\tiny, green!50!black] {decodes correctly};
  \draw[arr] (a3.east) -- (b3.west) node[midway, above, font=\tiny, gray!55!black] {excluded};
  \draw[arr] (a4.east) -- (b4.west) node[midway, above, font=\tiny, gray!55!black] {decodes incorrectly};

  \draw[draw=gray!40, line width=0.8pt, rounded corners=5pt, fill=gray!3]
    (-2.1,-4.55) rectangle (10.85,-2.65);
  \node[font=\footnotesize\itshape, gray!55!black, anchor=west] at (-1.95,-2.9)
    {recovering a positive entry with root $r_j$:};
  \node[flow] (s1) at (-0.9,-3.75) {collect $f{+}1$\\ chunks};
  \node[flow] (s2) at (1.7,-3.75)  {decode to\\ ciphertext $\mathcal{D}$};
  \node[flow] (s3) at (4.3,-3.75)  {re-encode $\mathcal{D}$\\ to root $r'$};
  \node[dec]  (s4) at (6.8,-3.75)  {$r' = r_j$?};
  \draw[arr] (s1) -- (s2);
  \draw[arr] (s2) -- (s3);
  \draw[arr] (s3) -- (s4);
  \node[prop, minimum width=2.1cm, minimum height=0.6cm]   (oy) at (9.6,-3.35) {$P_j = \mathrm{dec}(\mathcal{D})$};
  \node[propno, minimum width=2.1cm, minimum height=0.6cm] (on) at (9.6,-4.15) {$\bot$};
  \draw[arr] (s4.east) -- (oy.west);
  \draw[arr] (s4.east) -- (on.west);
  \node[font=\tiny, gray!55!black] at (7.95,-3.38) {yes};
  \node[font=\tiny, gray!55!black] at (7.95,-4.12) {no};
\end{tikzpicture}%
}
\caption{Recovering the proposals from a (fast or fallback) meta-block.\lioba{Some text in the figure is very small}}
\label{fig:recovery}
\end{figure}

\subsubsection{Proposal Recovery Adds No Latency.} \jovan{I again think it is fine to just focus on the fact that the fast path exhibits no additional latency. All of these things we do are extremely standard and I do not think we should be spelling them out (might add just more meat and confusion).}
Proposal recovery runs in parallel with voting, so it adds no latency.
The latency claims of \Cref{subsection:fast-path-overview} (speculative finalization after one round, full finalization after two) are about the fast path and were stated for the meta-block; they carry over to the full block of transactions.
On the fast path, every first-round positive vote already carries the sender's chunk, and every validator bundles its slot decryption share into the same first-round message, so a validator that assembles a fast meta-block at the end of the first round holds, behind each positive entry, $2f+1$ chunks (more than the $f+1$ needed to reconstruct the ciphertext) and $2f+1$ decryption shares (more than the $f+1$ needed to decrypt it), and recovers the proposal with no further communication.
The fallback path adds no extra latency from recovery either. A positive entry certified by $2f+1$ \textsc{yes} votes is recovered exactly as on the fast path. For one certified by $f+1$ \textsc{fallback-yes} votes, a fallback-yes voter sent each validator its assigned chunk when voting (\Cref{subsection:fallback-path-overview}), so every correct validator holds its chunk for $r$; once the fallback meta-block forms, each broadcasts that chunk, and one $\Delta$ later every correct validator can decode the proposal, before the fallback path finalizes.
Hiding adds no latency either: the slot's $f+1$ decryption shares are released at the deadline, bundled with the first-round votes, and open every ciphertext for the slot, so a validator decrypts a proposal as soon as it has decoded it.

\subsection{Proof Sketch}
\label{subsection:chorus-proof}
Finally, we sketch why \chorus satisfies the four properties of slot consensus (see \Cref{fig:slot-consensus-overview}).

\paragraph{Agreement.}
We must show that no two correct validators finalize different blocks. By recovery's consistency guarantee~(\emph{i}), agreeing on the same entries yields the same block, so it suffices to show that no two correct validators finalize different \emph{entries}.
On the fast path this is immediate: each entry carries a certificate of $2f+1$ signatures, so, by quorum intersection, no proposer can have two conflicting certified entries.
On the fallback path it follows from the fallback agreement, whose agreement property makes all correct validators decide the same meta-block, and hence the same entries.
It remains to rule out a split in which one validator finalizes on the fast path and another finalizes different entries on the fallback path.
But if any validator finalizes on the fast path, it holds a quorum of $2f+1$ fast votes, and, by quorum intersection, no $2f+1$ \textsc{no-fast-vote} certificate can then exist; hence no fallback meta-block can be formed, and the fast-path entries are the only ones anyone can finalize.

\paragraph{Termination.}
Every correct validator eventually finalizes, and it suffices to show that every correct validator proposes a meta-block to the fallback agreement.
We distinguish two cases.
If some correct validator assembles a fast meta-block at the end of the first round (and broadcasts its fast vote), then every correct validator eventually receives it, adopts the (unique) fast meta-block, and proposes it.
Otherwise, every correct validator eventually collects $2f+1$ fallback bundles, builds a fallback meta-block, and proposes that.
In either case all correct validators propose to the consensus, so, by its termination, it decides a meta-block, by which the slot is finalized.

\paragraph{Proposal Inclusion.}
Consider an honest proposer that submits its proposal on time ($\Delta$ time before the slot's deadline), while the network is synchronous, and let the proposal commit to root $r$.
Every correct validator then receives a valid chunk by the deadline and votes \textsc{yes} on root $r$, so no correct validator votes \textsc{no} for this proposer.
A fast meta-block can therefore only record $r$ for this proposer, never excluding the proposal.
The fallback path includes it just as well: every correct validator holds $f+1$ \textsc{yes} votes on $r$, reconstructs the proposal, and re-encodes it successfully (an honest proposal is correctly encoded), so it sends a \textsc{fallback-yes} on $r$ and never a \textsc{fallback-no}.
So no $f+1$ \textsc{fallback-no} votes can ever form against it either.
In both cases, then, the finalized entry for this proposer certifies the proposal under root $r$ rather than excluding it; by recovery's guarantee~(\emph{ii}), every correct validator therefore recovers the proposal into the slot's block, which ensures proposal inclusion.

\paragraph{Hiding.}
Proposals are revealed only at the deadline, by which point it is too late for the adversary to craft a competing proposal and still have it included, so it cannot make its own proposal depend on the honest proposers' proposals for the slot.

\paragraph{Safety of speculative finalization.}
We claimed in \Cref{subsection:fast-path-overview} that a speculatively finalized block can be reverted only if some validator equivocated; here is why.
Suppose a correct validator speculatively finalizes the fast meta-block $B$, but the fallback consensus ultimately decides a different meta-block $B'$, so that the two assign some proposer $p_j$ a different proposal in the slot's block.
Since $B$ is a fast meta-block, it backs $p_j$'s entry with a first-round quorum: $2f+1$ \textsc{yes} votes on one root $r$, or $2f+1$ \textsc{no} votes.
There are two ways $B'$ can disagree:
\begin{compactitem}
  \item \emph{Conflicting roots.} Say $B$ records a Merkle root $r$ for $p_j$ while $B'$ records a different root $r'$.
A positive entry always rests on first-round \textsc{yes} votes for its root --- $2f+1$ of them in a fast meta-block, and the $f+1$ that every \textsc{fallback-yes} requires its caster to have collected.
Since $(2f+1) + (f+1) > n = 3f+1$, the \textsc{yes}-voters for $r$ and those for $r'$ must overlap; as a correct validator \textsc{yes}-votes a single root, the validator in the overlap signed both --- an equivocation.

\item \emph{Inclusion vs. exclusion.} Say instead that $B$ includes $p_j$ under a root $r$ while $B'$ leaves $p_j$ out.
Including $p_j$ rests on $2f+1$ first-round \textsc{yes} votes on $r$, since $B$ is a fast meta-block.
Excluding $p_j$ requires a negative entry carrying $f+1$ \textsc{fallback-no} votes, at least one of them from an honest validator --- and an honest \textsc{fallback-no} is cast after its sender has collected $f+1$ first-round \textsc{no} votes.
Since $(2f+1) + (f+1) > n = 3f+1$, the two sets overlap, and the validator in the overlap cast both a \textsc{yes} on $r$ and a \textsc{no} for $p_j$ in the first round --- contradictory votes for the same proposer, once again an equivocation.
(There are two further ways an honest validator casts a \textsc{fallback-no} for $p_j$: it gathers $f+1$ \textsc{yes} votes for a root $r$ whose chunks fail to re-encode to $r$, or it receives conflicting \textsc{yes} votes and never gathers $f+1$ on any single root. Either way the proposer is the culprit --- committing to an invalidly encoded root or disseminating several distinct proposals --- misbehavior all the same.)

\end{compactitem}

\section{\conductor: Our Orchestrator}
\label{section:conductor-overview}

This section introduces \conductor, the orchestrator \cadence employs.
It fixes each window's deadlines in advance, so block production needs no per-slot coordination; its one real decision is when to open the next window, which it makes by having the validators agree, once per window and off the critical path, on the next window's first deadline, from which its remaining deadlines follow $\tau$ apart, where $\tau$ is the block interval.

\subsection{Protocol Description}

In order to bound the number of open slots, \conductor groups slots into windows of $W$ consecutive slots: window $1$ is slots $1, \ldots, W$, window $2$ is slots $W + 1, \ldots, 2W$, and so on.
It schedules these windows one at a time, and the schedule of each depends on the progress of the earlier ones: window $1$ is scheduled at \emph{genesis} (global time $0$), and window $\omega + 1$ is scheduled only once correct validators report that \emph{every} slot of windows $1, \ldots, \omega - 1$ and the first $p$ slots of window $\omega$ are complete. This is what keeps the number of open slots bounded as window $\omega+1$ is withheld until the first $p$ slots of window $\omega$ are complete.
Recall the two parameters of \conductor: the window size $W$ and the threshold $p \in \{0, \dots, W-1\}$.

We now describe the protocol from the perspective of a single correct validator $p_i$; recall that the validators' clocks are synchronized, so they all share a single global timeline.
At genesis, $p_i$ schedules window $1$ by issuing its slots' $\tau$-spaced deadlines: slot $1$ at $\Delta$ (so its starting time is $\Delta - \Delta = 0$), slot $2$ at $\Delta + \tau$, and so on up to slot $W$ at $\Delta + (W-1)\tau$.
From then on, the only input that drives \conductor is the \emph{completion} of slots.
(In \cadence, as discussed in \Cref{subsection:composition-overview}, $p_i$ marks a slot complete once it has finalized that slot's block.)

\paragraph{Scheduling the next window.}
This is the crucial and most technical part of \conductor.
We explain it for the transition from window $1$ to window $2$; every later transition, from a window $\omega$ to $\omega + 1$, works the same way (up to one minor change that we flag below).

As soon as $p_i$ has completed all of the first $p$ slots, it reads the current time $\mathcal{T}$ and computes a \emph{proposed deadline} for the \emph{first} slot of window $2$; the remaining slots of window $2$ are then simply $\tau$-spaced after it, so this single deadline fixes the whole window.
This is only $p_i$'s own proposal. Since correct validators complete the first $p$ slots at different times, they compute different proposed deadlines and then reconcile them into a single agreed deadline through a sub-protocol described below; we first explain how $p_i$ forms its proposal.
Validator $p_i$ forms its proposed deadline as follows (see \Cref{fig:conductor-schedule}):
\begin{compactitem}
  \item If $\mathcal{T}$ is earlier than the deadline of window $1$'s last slot (slot $W$) --- i.e., $p_i$ completed the first $p$ slots before slot $W$'s deadline had even arrived --- then the system is keeping up, and $p_i$ proposes the deadline of slot $W+1$ to be exactly $\tau$ after the deadline of slot $W$.
  In other words, $p_i$ proposes that the global cadence runs on unbroken, with no gap between the two windows.
  \item Otherwise, the first $p$ slots completed only after slot $W$'s deadline had already passed; $p_i$ reads this as the system falling behind, and its proposed deadline for slot $W+1$ is the current time $\mathcal{T}$ --- a proposal that, being later than the cadence point, leaves a gap for the lagging slots to catch up.
\end{compactitem}
These two cases capture exactly what \conductor wants to achieve: it preserves the steady $\tau$-cadence whenever the system is ``healthy'' (\conductor's recovery property, which yields the eventual stability of \cadence), and stretches the schedule when it is not (\conductor's boundedness property, which caps the slots in flight for each validator in \cadence).

\begin{figure}[h]
\centering
\scalebox{0.86}{%
\begin{tikzpicture}[
  >=stealth,
  slot/.style={draw=teal!60!black, line width=1pt, fill=teal!10, rounded corners=2pt,
    minimum width=0.9cm, minimum height=0.55cm, align=center, font=\scriptsize},
  slotp/.style={draw=orange!65!black, line width=1pt, fill=orange!12, rounded corners=2pt,
    minimum width=0.9cm, minimum height=0.55cm, align=center, font=\scriptsize},
  win/.style={draw=blue!25!gray, line width=1pt, fill=blue!4, rounded corners=6pt},
  axis/.style={->, gray!55, line width=0.9pt},
  meas/.style={<->, gray!60!black, line width=0.6pt},
  tline/.style={violet!60!black, dashed, line width=0.7pt},
  wlab/.style={font=\scriptsize\bfseries, text=blue!40!black},
  rlab/.style={font=\footnotesize\bfseries, text=gray!45!black, anchor=east},
  dlab/.style={font=\tiny, gray!55!black},
  note/.style={font=\scriptsize\itshape, gray!50!black},
]
\begin{scope}[yshift=2.7cm]
  \draw[win] (0.35,-0.4) rectangle (4.65,0.4);
  \draw[win] (4.85,-0.4) rectangle (9.15,0.4);
  \node[wlab] at (2.5,0.95) {window $1$};
  \node[wlab] at (7.0,0.95) {window $2$};
  \draw[axis] (0.0,0) -- (9.7,0);
  \node[note,anchor=west] at (9.45,0) {time};
  \node[slotp] (a1) at (1.0,0) {$s_1$};
  \node[slotp] (a2) at (2.5,0) {$s_2$};
  \node[slot]  (a3) at (4.0,0) {$s_3$};
  \node[slot]  (a4) at (5.5,0) {$s_4$};
  \node[slot]  (a5) at (7.0,0) {$s_5$};
  \node[slot]  (a6) at (8.5,0) {$s_6$};
  \node[dlab] at (1.0,-0.6) {$\Delta$};
  \node[dlab] at (2.5,-0.6) {$\Delta{+}\tau$};
  \node[dlab] at (4.0,-0.6) {$\Delta{+}2\tau$};
  \node[dlab] at (5.5,-0.6) {$\Delta{+}3\tau$};
  \node[dlab] at (7.0,-0.6) {$\Delta{+}4\tau$};
  \node[dlab] at (8.5,-0.6) {$\Delta{+}5\tau$};
  \draw[tline] (3.25,0.55) -- (3.25,-0.78);
  \node[note,text=violet!60!black] at (3.25,0.7) {$\mathcal{T}$};
  \draw[meas] (4.0,-0.92) -- (5.5,-0.92);
  \node[note] at (4.75,-1.12) {$\tau$: no gap};
  \node[rlab] at (-0.05,0) {healthy};
\end{scope}
\begin{scope}
  \draw[win] (0.35,-0.4) rectangle (4.65,0.4);
  \draw[win] (5.75,-0.4) rectangle (10.05,0.4);
  \node[wlab] at (2.5,0.95) {window $1$};
  \node[wlab] at (7.9,0.95) {window $2$};
  \draw[axis] (0.0,0) -- (10.6,0);
  \node[note,anchor=west] at (10.35,0) {time};
  \draw[tline] (6.4,0.55) -- (6.4,-0.35);
  \node[slotp] (b1) at (1.0,0) {$s_1$};
  \node[slotp] (b2) at (2.5,0) {$s_2$};
  \node[slot]  (b3) at (4.0,0) {$s_3$};
  \node[slot]  (b4) at (6.4,0) {$s_4$};
  \node[slot]  (b5) at (7.9,0) {$s_5$};
  \node[slot]  (b6) at (9.4,0) {$s_6$};
  \node[dlab] at (1.0,-0.6) {$\Delta$};
  \node[dlab] at (2.5,-0.6) {$\Delta{+}\tau$};
  \node[dlab] at (4.0,-0.6) {$\Delta{+}2\tau$};
  \node[dlab] at (6.4,-0.6) {$\mathcal{T}$};
  \node[dlab] at (7.9,-0.6) {$\mathcal{T}{+}\tau$};
  \node[dlab] at (9.4,-0.6) {$\mathcal{T}{+}2\tau$};
  \draw[meas] (4.0,-0.92) -- (6.4,-0.92);
  \node[note] at (5.2,-1.12) {gap};
  \node[rlab] at (-0.05,0) {lagging};
\end{scope}
\end{tikzpicture}%
}
\caption{How a single validator $p_i$ computes its \emph{proposed deadline} for window $2$'s first slot ($W=3$, $p=2$).
\emph{Healthy} (top): no gap; \emph{lagging} (bottom): gap exists.}
\label{fig:conductor-schedule}
\end{figure}

Note that different correct validators may complete the first $p$ slots at slightly different times, and so arrive at different proposed deadlines for window $2$; they must nonetheless agree on a single one.
To this end, the validators feed their proposed deadlines into this off-the-shelf ACS primitive, which outputs a common vector of $2f+1$ proposed deadlines --- at least $f+1$ of them from honest validators --- and guarantees agreement (all correct validators obtain the same vector) and termination (every correct validator eventually obtains one).
Each validator then takes the \emph{median} of the vector's entries as the agreed deadline of slot $W+1$, and opens window $2$ accordingly.
Taking the median is what tames Byzantine influence: among $2f+1$ values of which at most $f$ are Byzantine, the median is guaranteed to lie between the smallest and the largest honest proposed deadline, so faulty validators can never drag the deadline outside the range the honest ones proposed (\Cref{fig:conductor-acs}).

\begin{figure}[h]
\centering
\scalebox{0.9}{%
\begin{tikzpicture}[
  >=stealth,
  est/.style={draw=teal!60!black, line width=1pt, fill=teal!10, rounded corners=2pt,
    minimum width=1.6cm, minimum height=0.6cm, align=center, font=\scriptsize},
  estb/.style={draw=red!60!black, line width=1pt, fill=red!8, dashed, rounded corners=2pt,
    minimum width=1.6cm, minimum height=0.6cm, align=center, font=\scriptsize, text=red!55!black},
  acs/.style={draw=blue!45!gray, line width=1.1pt, fill=blue!6, rounded corners=6pt,
    minimum width=1.5cm, minimum height=3.5cm, align=center, font=\small\bfseries, text=blue!40!black},
  vcon/.style={draw=blue!25!gray, line width=1pt, fill=blue!4, rounded corners=5pt},
  vc/.style={draw=teal!60!black, line width=1pt, fill=teal!10, rounded corners=2pt,
    minimum width=0.9cm, minimum height=0.55cm, align=center, font=\scriptsize},
  vcb/.style={draw=red!60!black, line width=1pt, fill=red!8, dashed, rounded corners=2pt,
    minimum width=0.9cm, minimum height=0.55cm, align=center, font=\scriptsize, text=red!55!black},
  med/.style={draw=green!55!black, line width=1.2pt, fill=green!12, rounded corners=3pt,
    minimum width=1.7cm, minimum height=0.7cm, align=center, font=\small\bfseries, text=green!45!black},
  arr/.style={->, gray!60!black, line width=0.9pt},
  ttl/.style={font=\footnotesize\bfseries, text=blue!40!black},
  note/.style={font=\scriptsize\itshape, gray!50!black},
]
  \node[ttl,align=center] at (0,3.25) {proposed\\deadlines};
  \node[est]  (e1) at (0,2.55) {$p_1\!:\ e_1$};
  \node[est]  (e2) at (0,1.70) {$p_2\!:\ e_2$};
  \node[est]  (e3) at (0,0.85) {$p_3\!:\ e_3$};
  \node[estb] (e4) at (0,0.00) {$p_4\!:\ e_4$};
  \node[acs] (acs) at (3.3,1.275) {ACS};
  \foreach \i in {1,2,3,4}{ \draw[arr] (e\i.east) -- (acs.west |- e\i); }
  \draw[vcon] (5.35,-0.05) rectangle (6.65,2.6);
  \node[ttl] at (6.0,3.25) {common vector};
  \node[vc]  (o1) at (6.0,2.1)  {$e_1$};
  \node[vc]  (o2) at (6.0,1.275){$e_2$};
  \node[vcb] (o3) at (6.0,0.45) {$e_4$};
  \draw[arr] (acs.east) -- (5.35,1.275);
  \node[med] (D) at (9.0,1.275) {$e_2$};
  \draw[arr] (o2.east) -- (D.west) node[midway,above,note] {median};
  \node[note,align=center,text width=3.6cm] at (9.0,0.15) {agreed deadline of slot $W{+}1$};
\end{tikzpicture}%
}
\caption{How validators agree on the deadline ($n=4$, $f=1$; $p_4$ \textcolor{red!60!black}{Byzantine}).
An ACS instance delivers all correct validators the \emph{same} vector of $2f{+}1$ proposed deadlines, of which each correct validator takes the \emph{median} as the agreed deadline $e_2$.
Since at most $f$ entries are Byzantine, the median always lies between the smallest and largest honest proposed deadline in the common vector.}
\label{fig:conductor-acs}
\end{figure}

Every later window is handled the same way, with the one change flagged above: $p_i$ must \emph{also} wait for all earlier windows to be complete (a requirement that is vacuous for the transition from window $1$ to window $2$). Concretely, once window $\omega$'s prerequisites are met --- the slots of all earlier windows and the first $p$ slots of window $\omega$ are complete --- validator $p_i$ runs the same two-case computation (now against window $\omega$'s last deadline) and agrees on it through ACS and the median.
This finally yields the bound promised above: when window $\omega+1$ opens, only the unfinished part of window $\omega$ and the fresh window $\omega+1$ can be in flight, so at most $(W - p) + W = 2W - p$ slots are underway at any time.

\subsection{Proof Sketch}

We now argue that \conductor achieves the two properties specified in \Cref{subsection:orchestrator-overview}.

\paragraph{Boundedness.}
We already did the counting at the end of the previous subsection: a correct validator opens window $\omega+1$ only once every slot of windows $1, \ldots, \omega-1$ and the first $p$ slots of window $\omega$ are complete, so at most $2W - p$ slots are ever underway at once.

\paragraph{Recovery.}
Recovery asks that, within a bounded grace period after the network stabilizes, the schedule ``heals'': deadlines become $\tau$-spaced again, with no gaps, and every correct validator learns each deadline at least $\Delta$ in advance. We sketch why this happens within bounded time after $\mathrm{GST}$.

Everything hinges on one question: once the network is running smoothly (after $\mathrm{GST}$), does such a ``good'' window leave enough time to schedule its successor before its own cadence ends?
A window lasts $W\tau$, and within it two things must happen (\Cref{fig:conductor-recovery}).
To begin, the first $p$ slots must \emph{complete}.
\chorus finalizes a slot within some latency $\ell_{\textsc{Chorus}}$ after $\mathrm{GST}$, so every correct validator finalizes all of the first $p$ slots by time $\mathcal{T}_p + \ell_{\textsc{Chorus}}$ --- the $p$-th slot's deadline $\mathcal{T}_p$ plus that latency --- and by then \emph{all} of them have proposed to the window's ACS instance.
Second, since every correct validator has proposed, the ACS decides within a further $\ell_{\textsc{acs}}$ time, fixing the next window's first deadline.
We pick the parameters so that this whole pipeline fits inside the window, i.e., $\mathcal{T}_p + \ell_{\textsc{Chorus}} + \ell_{\textsc{acs}} \le W\tau$.
Thus, in a good window, every correct validator opens the first $p$ slots (outputs their deadlines) by $\mathcal{T}_p$ and they all settle the next window's deadline by $\mathcal{T}_p + \ell_{\textsc{Chorus}} + \ell_{\textsc{acs}} \le W\tau$ --- with no gap and well ahead of time. This is exactly what we needed: from the first good window onward, the schedule stays ``smooth'' forever.


\begin{figure}[h]
\centering
\scalebox{0.9}{%
\begin{tikzpicture}[
  >=stealth,
  slot/.style={draw=teal!60!black, line width=1pt, fill=teal!10, rounded corners=2pt,
    minimum width=0.7cm, minimum height=0.5cm},
  slotp/.style={draw=orange!65!black, line width=1pt, fill=orange!14, rounded corners=2pt,
    minimum width=0.7cm, minimum height=0.5cm},
  win/.style={draw=blue!25!gray, line width=1pt, fill=blue!4, rounded corners=6pt},
  axis/.style={->, gray!55, line width=0.9pt},
  cbar/.style={draw=orange!65!black, line width=1pt, fill=orange!16, rounded corners=2pt,
    minimum height=0.5cm, align=center, font=\scriptsize, anchor=west},
  abar/.style={draw=blue!45!gray, line width=1pt, fill=blue!12, rounded corners=2pt,
    minimum height=0.5cm, align=center, font=\scriptsize, anchor=west},
  tline/.style={violet!60!black, dashed, line width=0.7pt},
  meas/.style={<->, gray!60!black, line width=0.6pt},
  note/.style={font=\scriptsize\itshape, gray!50!black},
  wlab/.style={font=\scriptsize\bfseries, text=blue!40!black},
]
  \draw[win] (-0.45,-0.4) rectangle (7.45,0.4);
  \node[wlab] at (3.5,1.25) {window $\omega$};
  \draw[axis] (-0.7,0) -- (10.5,0);
  \node[note,anchor=west] at (10.65,0) {time};
  \foreach \x in {0,1,2,3}{ \node[slotp] at (\x,0) {}; }
  \foreach \x in {4,5,6,7}{ \node[slot]  at (\x,0) {}; }
  \draw[meas] (0,0.6) -- (3,0.6);
  \node[note] at (1.5,0.79) {first $p$ slots};
  \node[slot, fill=teal!22] at (8,0) {};
  \node[wlab] at (8.0,1.25) {window $\omega{+}1$};
  \node[note, text width=1.9cm, align=center] at (9.9,0.95) {gapless: $\tau$-cadence continues};
  \draw[tline] (3,0.4) -- (3,-1.6);
  \node[note,text=violet!60!black,anchor=north] at (3,-1.62) {$\mathcal{T}_p$};
  \node[cbar, minimum width=2cm] (cb) at (3,-1.0) {$\ell_{\textsc{Chorus}}$};
  \node[abar, minimum width=3cm] (ab) at (5,-1.0) {$\ell_{\textsc{acs}}$};
  \draw[tline] (5,-0.75) -- (5,-1.6);
  \node[note,text=violet!60!black,anchor=north,align=center,text width=3.0cm] at (5,-1.62) {all complete the first $p$ slots \\ $\Rightarrow$ all propose to ACS};
  \draw[tline] (8,0.4) -- (8,-1.6);
  \node[note,text=violet!60!black,anchor=north] at (8,-1.62) {next deadline decided};
\end{tikzpicture}%
}
\caption{Scheduling the next window inside a ``good'' (post-$\mathrm{GST}$) window.
By time $\mathcal{T}_p + \ell_{\textsc{Chorus}}$ --- the $p$-th slot's deadline $\mathcal{T}_p$ plus \chorus's latency (\textcolor{orange!70!black}{orange}) --- every correct validator has completed the first $p$ slots and proposed to the window's ACS, which then decides within a further $\ell_{\textsc{acs}}$ (\textcolor{blue!45!gray}{blue}), fixing the next window's first deadline.
Whenever $\mathcal{T}_p + \ell_{\textsc{Chorus}} + \ell_{\textsc{acs}} \le W\tau$ (shown here at the tight extreme), that deadline is fixed before the cadence ends, so window $\omega{+}1$ continues the $\tau$-cadence with no gap.}
\label{fig:conductor-recovery}
\end{figure}

%% file: figures/overview-tikz.tex
\begin{tikzpicture}[
  >=stealth,
  propbox/.style={draw=violet!60!black, line width=1pt, fill=violet!8,
    rounded corners=3pt, minimum width=1.8cm, minimum height=0.85cm,
    align=center, font=\scriptsize},
  byzbox/.style={draw=red!60!black, line width=1pt, fill=red!8,
    rounded corners=3pt, minimum width=1.8cm, minimum height=0.85cm,
    align=center, font=\scriptsize},
  inclbox/.style={draw=violet!60!black, line width=0.8pt, fill=violet!10,
    rounded corners=2pt, minimum width=1.55cm, minimum height=0.45cm,
    align=center, font=\tiny},
  ledgerblock/.style={draw=orange!65!black, line width=0.8pt, fill=orange!10,
    rounded corners=2pt, minimum width=1.1cm, minimum height=0.5cm,
    align=center, font=\scriptsize},
  pendingblock/.style={draw=gray!50, line width=0.8pt, fill=gray!5, dashed,
    rounded corners=2pt, minimum width=1.1cm, minimum height=0.5cm,
    align=center, font=\scriptsize, text=gray!55},
  proparr/.style={->, violet!60!black, thick},
  ptr/.style={gray!55, dashed, line width=0.6pt},
  tick/.style={gray!70!black, line width=0.9pt},
  propscard/.style={draw=gray!55, line width=0.9pt, fill=gray!4,
    rounded corners=6pt, align=left, font=\scriptsize, inner sep=9pt,
    text width=11.9cm},
  lock/.pic={
    \draw[teal!60!black, fill=teal!25, line width=0.9pt] (-0.11,-0.11) rectangle (0.11,0.06);
    \draw[teal!60!black, line width=1pt] (-0.065,0.06) -- (-0.065,0.125) arc (180:0:0.065) -- (0.065,0.06);
  },
]
  \draw[draw=blue!25!gray, line width=1pt, fill=blue!4, rounded corners=8pt]
    (0.6,1.45) rectangle (6.1,5.0);
  \node[font=\small\bfseries, text=blue!40!black, anchor=south west] at (0.75,5.05) {slot $s$};
  \node[propbox] (p1) at (1.85,4.3) {proposer $p_1$};
  \node[propbox] (p2) at (1.85,3.2) {proposer $p_2$};
  \node[byzbox]  (p3) at (1.85,2.1) {proposer $p_3$\\ (Byzantine)};
  \draw[draw=orange!65!black, line width=1.2pt, fill=orange!8, rounded corners=4pt]
    (4.0,2.45) rectangle (5.9,4.1);
  \node[font=\scriptsize\bfseries, text=orange!65!black] at (4.95,3.85) {block $B_s$};
  \node[inclbox] (q1) at (4.95,3.4) {proposal of $p_1$};
  \node[inclbox] (q2) at (4.95,2.8) {proposal of $p_2$};
  \draw[proparr] (p1.east) -- (4.0,3.69);
  \pic at (3.38,3.99) {lock};
  \draw[proparr] (p2.east) -- (4.0,3.28);
  \pic at (3.38,3.24) {lock};
  \draw[proparr, red!60!black] (p3.east) -- (4.0,2.86);
  \pic at (3.38,2.48) {lock};

  \foreach \i/\y in {1/1.95, 2/2.62, 3/3.29, 4/3.96}{
    \node[ledgerblock] at (7.7,\y) {$B_\i$};
  }
  \node[font=\scriptsize, gray!40!black, anchor=north, align=center] at (7.7,1.43) {ledger of\\ validator 1};
  \foreach \i/\y in {1/1.95, 2/2.62, 3/3.29}{
    \node[ledgerblock] at (9.5,\y) {$B_\i$};
  }
  \node[pendingblock] at (9.5,3.96) {$\cdots$};
  \node[font=\scriptsize, gray!40!black, anchor=north, align=center] at (9.5,1.43) {ledger of\\ validator 2};
  \draw[teal!60!black, dashed, line width=0.8pt, rounded corners=4pt]
    (7.0,1.55) rectangle (10.2,3.6);
  \node[font=\scriptsize, teal!50!black, anchor=west, align=left] at (10.3,2.6) {common\\ prefix};
  \draw[->, orange!70!black, line width=0.9pt] (7.7,4.27) -- (7.7,4.9);
  \draw[->, orange!70!black, line width=0.9pt] (9.5,4.27) -- (9.5,4.9);
  \node[font=\scriptsize, gray!45!black] at (8.6,5.1) {one block per slot};

  \draw[->, gray!70!black, line width=1pt] (0.3,-0.6) -- (10.7,-0.6);
  \node[font=\scriptsize, gray!60!black, anchor=west] at (10.75,-0.6) {time};
  \draw[tick] (1.0,-0.74) -- (1.0,-0.46);
  \node[font=\scriptsize, gray!40!black, anchor=north] at (1.0,-0.8) {$\mathrm{GST}$};
  \fill[gray!22] (1.0,-0.7) rectangle (2.8,-0.5);
  \node[font=\scriptsize\itshape, gray!50!black, anchor=south] at (1.9,-0.42) {grace period $\mathcal{G}$};
  \foreach \x/\l in {3.8/$D_{s-1}$, 5.6/$D_s$, 7.4/$D_{s+1}$}{
    \draw[tick] (\x,-0.74) -- (\x,-0.46);
    \node[font=\scriptsize, gray!40!black, anchor=north] at (\x,-0.8) {\l};
  }
  \draw[<->, gray!55, line width=0.7pt] (3.8,-0.18) -- (5.6,-0.18)
    node[midway, above, font=\scriptsize, gray!45!black] {$\tau$};
  \draw[<->, gray!55, line width=0.7pt] (5.6,-0.18) -- (7.4,-0.18)
    node[midway, above, font=\scriptsize, gray!45!black] {$\tau$};
  \draw[ptr] (5.6,-0.44) -- (5.6,1.0) -- (5.0,1.43);

  \node[propscard, anchor=north] at (5.6,-1.55) {%
    \textbf{Guarantees}\\[3pt]
    \textcolor{gray!55!black}{\rule{1.2ex}{1.2ex}}~\textbf{Safety:} the ledgers never fork --- at any time, one is a prefix of the other (dashed box).\\[2pt]
    \textcolor{gray!55!black}{\rule{1.2ex}{1.2ex}}~\textbf{Liveness:} every slot eventually adds its block to every honest ledger (upward arrows).\\[2pt]
    \textcolor{gray!55!black}{\rule{1.2ex}{1.2ex}}~\textbf{Censorship resistance:} for slots past the grace period $\mathcal{G}$, an honest proposal always\\
    \phantom{\rule{1.2ex}{1.2ex}}~enters the block (see $B_s$).\\[2pt]
    \textcolor{gray!55!black}{\rule{1.2ex}{1.2ex}}~\textbf{Hiding:} proposals stay concealed until it is too late to react to them (locks).\\[2pt]
    \textcolor{gray!55!black}{\rule{1.2ex}{1.2ex}}~\textbf{Eventual stability:} after the grace period $\mathcal{G}$, deadlines come ``without gaps''\\
    \phantom{\rule{1.2ex}{1.2ex}}~(in \cadence: consecutive deadlines spaced by exactly $\tau$).%
  };

\end{tikzpicture}%

%% file: p1_deploying.tex

\cadence is a consensus protocol, but in the context of a blockchain it provides a different interface than blockchains usually expect from their consensus layer, departing in three ways at once: consensus commits only a digest of each proposer's encrypted proposal, not the transactions it contains, and without executing them; these digests carry no reference to a parent; and each proposer builds its proposal without seeing the current executed state or the other proposals its own will be merged with. This gives up machinery a blockchain normally relies on: linking blocks into a chain, committing to executed state, and validating transactions before they are included. We show how to recover each, so that the result behaves like an ordinary blockchain: \tobiaschange{we build valid proposals under this uncertainty (\Cref{sec:deploy-building}), turn the committed entries into an execution block (\Cref{sec:deploy-translation}), and certify the resulting history and state (\Cref{sec:deploy-certifying})}. All of this is what it takes to behave like an ordinary blockchain; the further consequences and refinements of running with multiple proposers, which a deployment may want but does not need for that, we collect in \Cref{sec:practical}.

Users submit each transaction to one or more of a slot's proposers rather than to a single leader. We treat which proposers a user submits to, and the fee mechanism that would price this choice, as outside our scope. \fatima{feels like it came out of nowhere/ doesn't flow easily from prior paragraph}

We now follow one slot. A proposer selects transactions and assembles a proposal, validating each against a settled view of the executed state. It encrypts and erasure-codes the proposal and disseminates the chunks. The slot then commits the entries of a \emph{meta-block}, one per proposer: the digest of its proposal if included, or empty if not. In parallel with consensus, the proposals behind the committed digests are recovered from their chunks and decrypted, and a deterministic step we call \emph{translation} turns the committed meta-block into the slot's \execblock, the ordered list of transactions to execute. Execution runs asynchronously, trailing consensus by up to $\xi$ slots, the \emph{execution lag}, and the resulting state, together with the committed prefix, is certified for external verifiers.

\subsection{Proposal Construction and Validity}
\label{sec:deploy-building}
\label{subsection:uncertainty}

A proposer builds its proposal under two kinds of uncertainty. First, execution is asynchronous~\cite{yin-sep-exec,monad-async-exec}: it trails consensus, so the proposer does not know the state its transactions will execute against. Second, it cannot see the proposals of the other proposers in its slot (the \emph{peer} blind spot, from running multiple proposers) and those of recent slots due to extreme pipelining 
(the \emph{temporal} blind spot). 

A traditional blockchain validates each transaction before including it (i.e., synchronous execution), against the exact state the block builds upon.
\cadence has no such state to give a proposer, so an implementation may instead fix a common reference for validity, such as the executed state after slot $s-\xi$, a settled state that all validators share. A transaction is admitted only if it is valid against this reference; one already invalid against it, for instance with a reused nonce, is rejected. Because the reference is shared, the decision is the same for everyone: an honest proposer applies it when building, and translation applies it deterministically (\Cref{sec:deploy-translation}), so a transaction invalid against this reference is kept out of the \execblock, even one a faulty proposer includes. A proposer also validates against its own recent proposals. Others cannot see them yet, in the temporal blind spot, but it can, so it builds on its own not-yet-settled transactions. By the time the slot is translated (\Cref{sec:deploy-translation}), these are revealed, and translation re-applies the same check.

Since the state reference is stale by $\xi$ slots, a transaction admitted against it could be unpayable by the time it executes. To address this, a reserve-balance scheme~\cite{monad-reserve-balance} adapted to the multi-proposer setting can be used.

To prove that it validated its proposal against the right reference, a proposer attaches an \emph{execution certificate} for slot $s-\xi$ to its proposal, its \emph{ticket}. This certificate pins the executed state it validated against (\Cref{sec:deploy-certifying}). A proposer can form it only once $2f+1$ validators certify that they have executed slot $s-\xi$, so it cannot make a valid proposal while execution lags more than $\xi$ slots behind: there is then no ticket to attach. This keeps proposals within $\xi$ slots of certified execution, preventing fresh proposals from running too far ahead of it. This bounds proposal validity, not how many slot instances may be open at once; keeping that number bounded is instead \conductor's role. Because proposals are encrypted \tobiaschange{and encoded, and votes are cast on their digests}, the ticket cannot be checked at voting time: validators may vote on and finalize a proposal whose ticket is missing or invalid, even forcing the slot onto the fallback path. Translation then filters such a proposal out when assembling the \execblock (\Cref{sec:deploy-translation}), so its transactions never execute, and it is itself evidence of misbehavior.

\subsection{Translation: From Meta-Block to Execution Block}
\label{sec:deploy-translation}

The consensus commitment fixes which proposals belong to slot $s$, but not yet a block of transactions to run. \emph{Translation} produces the \execblock, and it must be deterministic: from the same committed data, every validator must compute the same \execblock. Translation runs serially across slots: to translate slot $s$, that slot must be committed, the proposals behind its committed digests must have been recovered and decrypted, and slots $1$ through $s-1$ must already be translated. Although the slots' consensus instances run in parallel, translation therefore builds each \execblock on a fixed prefix. 
Over that prefix, and against the certified executed state at slot $s-\xi$, it merges the slot's proposals into the \execblock. Consensus thus agrees on the entries alone, and the \execblock is what they deterministically yield.

The merge runs in stages. Translation first removes every invalid inclusion, leaving only valid transactions. It then deduplicates them, keeping a single copy of each. Validity already keeps a proposer from re-including a transaction it carried itself or one settled before the blind spot, so any surviving duplication is across proposers: building blind, several may each carry the same transaction, within a slot (the peer blind spot) or across recent slots (the temporal one). It finally orders what remains, for example by priority fee. Execution then runs each \execblock asynchronously, once it is available.

However widely a transaction is duplicated, only one copy executes; the rest cost only bandwidth and block space. A deployment can shrink that residual: keeping the proposer set semi-stable, rotated slowly rather than replaced each slot, lets a proposer skip its own recent inclusions, and a transaction-fee mechanism can charge more for each additional inclusion while letting a user choose which proposers may carry a transaction, so a user pays for more inclusions only when it wants them for censorship resistance.

\subsection{Certifying State and History}
\label{sec:deploy-certifying}
\label{sec:chain-certification}

Validators that follow every slot already hold the committed history and, in time, the executed state. Other parties do not: light clients, bridges, and validators catching up after asynchrony or newly joining need to verify both from a small proof, rather than by replaying the chain. In a traditional blockchain every block carries a pointer to its parent, so the certificate finalizing slot $s$ implicitly certifies the entire chain back to genesis, and each block commits to its executed state. Under extreme pipelining, neither holds. Because slots run concurrently, the proposer of slot $s$, when it proposes, cannot in general know the outcome of slot $s-1$: that slot may not yet have terminated, and even its proposals may not yet have been received. A slot therefore cannot point to its predecessor, and the certificate finalizing it certifies only that slot, carrying no executed state. We recover both with a mechanism that runs off the critical path, so the core protocol never waits on it.

A \emph{prefix certificate} aggregates $2f+1$ validator signatures\footnote{For safety, $f+1$ signatures would suffice, since at least one is then from a correct validator; we use $2f+1$ so that a supermajority signs, keeping the network in sync.} attesting a predicate over a prefix of committed slots; since at least $f+1$ of the signers are correct, the predicate holds. We instantiate it twice. A \emph{chain certificate} attests that slots $1, \dots, s$ are all committed; an \emph{execution certificate} attests that slots $1, \dots, s-\xi$ have all been committed and executed, with its $2f+1$ signers agreeing on the resulting state root at slot $s-\xi$. The execution certificate is a certificate on the Ethereum-style header that execution outputs, which carries the executed-state root and, through its parent hash, the committed prefix up to that slot; either certificate can also certify the recursive hash that links the committed blocks back to genesis, giving recursive verification of the whole prefix. A proposer attaches the slot-$(s-\xi)$ execution certificate to its proposal as its ticket (\Cref{sec:deploy-building}). Either certificate may also be used for other purposes, such as external verification by light clients or bridges. Together, the two recover what a traditional block certificate covers at once: the committed history and the executed state.

To form these prefix certificates, a validator signs the corresponding attestation once it has observed the relevant slots committed (and, for an execution certificate, executed); $2f+1$ such signatures make up the certificate. The proof trails finalization slightly and is fetched separately rather than carried in the slot's commit.

%% file: p1_practical_considerations.tex

This section collects consequences and refinements of \cadence's design that matter for deployment but are not required to follow the core protocol. We discuss how a proposer that breaks the fast path can be held accountable (\S\ref{pc:accountability}), questions of latency, scheduling, and economic ticks (\S\ref{pc:latency}), and transaction privacy (\S\ref{pc:privacy}).

\subsection{Accountability}
\label{pc:accountability}

As we specified in \S\ref{subsection:fast-path-overview}, \chorus enjoys a fast path that finalizes a slot after only two voting rounds.
That same section also tells us \emph{when} this fast path is available: it triggers precisely when every proposer is either \emph{correct}, disseminating its proposal to all correct validators, or \emph{fully silent}, disseminating it to none.
The difficulty lies in the intermediate case.
A proposer that \emph{partially disseminates}, delivering its proposal to some correct validators but not all, splits the validators' views, and such a split can derail the fast path, forcing \chorus onto the slower fallback path (\S\ref{subsection:fallback-path-overview}).\footnote{The fast path can also be derailed by \emph{equivocation}, a proposer sending conflicting proposals to different validators; but two such signed proposals already form an equivocation certificate, so that case carries its own proof. We therefore set it aside and focus on partial dissemination, assuming below that no proposer equivocates.}

If we must pay this latency, we would at least like to know which proposer caused it.
We therefore equip \chorus with an \emph{accountability} mechanism: whenever a proposer forces the fallback path, we single it out and hold it responsible for its partial dissemination, leaving correct validators a transferable proof of which proposer broke the fast path.
This guarantee comes neither for free nor always: it holds only when the number of Byzantine validators is at most $f/2$.\footnote{In practice, this should be ``good enough'': the actual number of faults is typically far below the worst-case bound $f$ the protocol provisions for.}
Under this assumption (and after GST, since detecting a missed fast path is ultimately a matter of liveness), every fallback leaves correct validators holding enough signed evidence to reliably name every proposer that partially disseminated.
Let us see why.

\paragraph{First-round votes \& fast meta-blocks.}
Recall the votes of the first round.
At the deadline, each validator casts, for every proposer $p$, a signed \emph{positive} vote if it received $p$'s proposal in time, and a signed \emph{negative} vote otherwise.
A fast meta-block can be assembled only when, for \emph{every} proposer, these votes converge: $p$ must gather either a quorum of $2f+1$ positive votes, or a quorum of $2f+1$ negative votes.
When neither quorum forms for some proposer, the fast path cannot proceed, and the slot falls back to the slower path.

\paragraph{From a split vote to a blamed proposer.}
Now suppose at most $f/2$ validators are Byzantine, and consider a correct validator that fails to build the fast meta-block.
After GST, under synchrony, this validator hears from every validator except the (at most $f/2$) faulty ones, that is, from $n - f/2 = 2f+1+f/2$ validators, and hence collects at least that many votes for each proposer.
Failing to build the meta-block means that, for some proposer $p$, neither quorum has formed: among the votes it holds for $p$, fewer than $2f+1$ are positive \emph{and} fewer than $2f+1$ are negative.
Thus $p$ drew at most $2f$ positive votes and at most $2f$ negative votes; and since the two counts together sum to at least $2f+1+f/2$, each of them is, in turn, at least $f/2+1$.

This split is precisely the fingerprint of partial dissemination, and, crucially, one we can prove.
Among the $f/2+1$ positive votes, at most $f/2$ can come from Byzantine validators, so at least one comes from a \emph{correct} validator: one that genuinely received $p$'s proposal by the deadline, witnessing that $p$ \emph{did} disseminate to someone.\footnote{Since each positive vote carries the proposal's Merkle root signed by the proposer, a \emph{single} positive vote already proves that the proposer issued a proposal.}
Among the $f/2+1$ negative votes, by the very same counting, at least one comes from a correct validator that genuinely did not receive it, witnessing that $p$ did \emph{not} disseminate to everyone.
Because all votes are signed, these two witnesses together form a transferable certificate: anyone, knowing that at most $f/2$ validators are faulty, can read off from $f/2+1$ signed positive and $f/2+1$ signed negative votes that some correct validator received $p$'s proposal while some other correct validator did not.
That is exactly partial dissemination, so correct validators can hold $p$ accountable.

The split-vote certificate is always a sound, transferable proof that $p$ partially disseminated; what depends on synchrony is only reading a fallback as $p$'s fault, since under asynchrony even a correct proposer can miss the fast path. A single fallback is therefore no cause to act, but a proposer that provably and repeatedly partially disseminates can be penalized.

\subsection{Latency, Scheduling, and Economic Ticks}
\label{pc:latency}

\paragraph{Proposal timing.}
An honest proposer disseminates its chunks early enough that, under synchrony, every chunk reaches every validator before the deadline. The worst-case message delay $\Delta$ is a conservative bound; a well-connected proposer reaches all validators within a smaller delay $\delta < \Delta$ of its own, so it can disseminate as late as $\delta$ before the deadline and still have its chunks arrive in time. Disseminating later lets a proposal include fresher transactions and so improves end-to-end latency for users, so each proposer sets its own dissemination time according to its connectivity. Disseminating too late---while potentially lucrative for rational proposers~\cite{schwarzschilling2023time,alpturer2025timing}---risks reaching only some validators in time, which can push the slot onto the slower fallback path; our accountability mechanism (\S\ref{pc:accountability}) deters this by leaving evidence of partial dissemination. A proposer that is entirely silent, by contrast, forces no fallback and leaves no such split-vote evidence: every validator simply records a negative entry and votes against it. Such a proposer needs no dedicated liveness incentive, since forgoing its proposal reward is penalty enough.

\paragraph{Early voting.}\label{subsection:early_voting}
By default, a validator broadcasts its proposal vote at the slot deadline. As an optional optimization, which we call \emph{early voting}, a validator may vote earlier in slot $\slot$ once every entry for $\slot$ is positive. So that this does not weaken hiding, a validator releases its decryption share only at the deadline, decoupled from its vote, so proposals are revealed no earlier than before. Recovering the proposals does not delay finalization: their chunks are disseminated together with the votes, so reconstruction needs no extra round, and only decryption waits for the shares released at the deadline. In our experiments, gathering the $f+1$ shares completes around the time a slot is speculatively finalized, well before it is fully finalized, so a slot's proposals are recovered in time even when its votes were cast early.

Early voting reduces finalization latency. A slot can finalize as soon as its proposals have arrived and gathered commit votes, rather than waiting for the deadline. How much it saves depends on all of the slot's proposers together: a vote can be cast early only once every entry is positive, so the more conservatively they disseminate, leaving a margin before the deadline, the further ahead the slot can finalize.


\paragraph{Cycles.}\label{subsection:slices}
Some blockchains decouple the on-chain economic tick from the consensus tick with \emph{cycles}: on-chain activity advances in shorter cycles while consensus still runs once per slot, as in Constellation~\cite{Constellation}. Cycles let activity such as auctions and oracle updates tick at the finer granularity, but they do not lower end-to-end latency, since a transaction is still finalized only once per slot. They are thus no substitute for extreme pipelining, which lowers the block interval itself. Their one advantage is message complexity, since consensus does not run more often. \chorus could support cycles as a straightforward extension: each slot carries $K$ cycles, each proposer disseminates one part of its proposal before each cycle's deadline, and a validator casts a single proposal vote per slot, positive for a proposer only if it received that proposer's assigned chunk of every part in time. The positive vote then carries the combined Merkle root over the proposer's $K$ parts, so $2f+1$ such votes certify its whole proposal exactly as without cycles.

\subsection{Transaction Privacy}
\label{pc:privacy}

Hiding conceals each proposal from everyone until the deadline, but a proposer a user submits to sees the transaction in plaintext. So a user trades off how many proposers to submit to: more improve inclusion and censorship resistance, while fewer reduce the chance that one of them leaks or front-runs the transaction. Client-side \emph{transaction encryption}, where the user encrypts the transaction before submitting it, as in encrypted mempools~\cite{agarwal2025weighted,agarwal2026btx,bormet2025beast}, removes this trade-off, since the user can submit to many proposers without revealing the contents to any of them. It does not replace MCP: on its own it gives no short-term censorship resistance~\cite{encrypted-mempool-limits}, since a proposer can censor a ciphertext without reading it. Nor does it replace proposal encryption: in a practical deployment, where proposers compete (say, for transaction fees), hiding protects each proposer's payload from its rivals, which transaction encryption does not. The two can compose. For either scheme, both the timing of share release and the decryption threshold ($f+1$ or more) are protocol choices that trade off latency and availability against the strength of hiding.


%% file: p1_latency_evaluation.tex
We estimate \cadence's fast-path network latency, the common case in which every proposer is correct or fully silent, on the $200$ stake-weighted, globally distributed validators of Monad mainnet. We define end-to-end latency as the time from when a transaction reaches a proposer until it is finalized. It has two parts: the finalization latency, the time from the transaction entering a proposal until it is finalized, and the inclusion latency, the wait until it enters a proposal. The finalization latency is, on average, $219$\,ms to full finality and $167$\,ms to speculative finalization after a single voting round. The inclusion latency is, on average, half a block interval, so, for example, $50$\,ms for a block interval of $\blockint = 100$\,ms. Composing the two, the end-to-end latency averages $269$\,ms to full finality and $217$\,ms to speculative finalization. The early-voting optimization (\Cref{subsection:early_voting}) improves these figures by about $30$\,ms: the finalization latency drops to $191$\,ms (full) and $137$\,ms (speculative), and the end-to-end latency to $241$\,ms and $187$\,ms. In the early-voting case, a slot's decryption shares arrive on average $5$\,ms after its speculative finalization, well before full finalization. The user-to-proposer hop these figures omit tends to be smaller in MCP than in single-leader protocols, since a user is more likely to have a nearby proposer, which improves overall latency.

We obtain these figures from a simulation calibrated to RIPE Atlas round-trip measurements among the 83 (city, ASN) groups in which these validators run as of June 2026, with five concurrent proposers per slot. We set each proposer's lead time, the interval by which it broadcasts ahead of the deadline, so that in $99\%$ of trials at least $90\%$ of validators receive its chunk before the deadline. Across proposers, this lead time averages $104$\,ms.

The simulated propagation delays are the dominant part of this end-to-end latency. The contributions it omits are small: local processing at validators, where execution and most validity checks occur asynchronously (\Cref{subsection:uncertainty}), and the bandwidth-induced queueing delay, which is at most a few milliseconds.\footnote{We assume that, in practice, chunk dissemination is decoupled from the more latency-critical vote dissemination: votes are sent first, before chunk re-dissemination begins.}

%% file: related_work.tex
\section{Related Work}
\label{sec:related}

\subsection{Pipelining}

Motivated by the goal of driving inter-proposal time down, much of the community's effort has centered on pipelining. Protocols such as the HotStuff family~\cite{kang2024hotstuff1linearconsensusonephase,HotStuff,HotStuff-2:2023/397}, MonadBFT~\cite{jalalzai2026monadbftfastresponsiveforkresistant}, and Jolteon~\cite{gelashvili2021jolteon} accomplish this by chaining proposals through quorum certificates (QCs): a leader is free to issue its next proposal as soon as it collects a QC for the previous one, which brings the inter-proposal time down to $2\Delta$.
Pushing this idea further, optimistic designs such as Moonshot~\cite{doidge2024moonshotoptimizingchainbasedrotating} and Hydrangea++~\cite{hydrangeapp} assume consecutive honest leaders and let a leader propose even before the prior proposal has been quorum-certified, bringing the inter-proposal time down to the network delay $\Delta$ itself, though not below it.
We go below this bound with extreme pipelining, which achieves sub-$\Delta$
block intervals by running independent, non-chained single-shot consensus
instances rather than chaining proposals at all.

A similar line of
work reaches sub-$\Delta$ intervals by staggering several multi-shot consensus instances in
time and merging their outputs into one log: Shoal++~\cite{shoalpp} interleaves
three DAG protocols one message delay apart, Raptr~\cite{raptr} generalizes this
to $K$ instances offset by $2/K$ message delays, and, concurrently with our
work, Gatling~\cite{gatling} casts it as \emph{parallel composition} over any
atomic broadcast protocol. Where we run one sequence of independent single-shot
slots, these interleave several instances to the same effect.

\tobiaslater{Compare our protocol with Gatling along these lines: it fixes a constant number $K$ of always-running instances, whereas we run a dynamic number of slot-consensus instances up to a bound $\mathcal{B}$, using only those in flight; and Gatling's instances are multi-shot, whereas ours are single-shot, one per slot.}

Constellation~\cite{Constellation} introduces \emph{cycles}, arbitrarily short economic ticks; complementary to extreme pipelining, they shorten the economic tick, while their underlying consensus produces blocks at a rate bounded by the network delay. We discuss cycles, and how they can be incorporated into \cadence, in \Cref{subsection:slices}.

\subsection{Multiple Concurrent Proposers}

Garimidi et al.~\cite{mcp-why-and-how} set out the goals of multiple concurrent proposers, including censorship resistance and hiding; these goals have since been pursued, through different mechanisms, by Prefix Consensus~\cite{prefix-consensus}, Constellation~\cite{Constellation}, and AMP~\cite{amp}. All of these designs provide some notion of censorship resistance. Only Garimidi et al.~\cite{mcp-why-and-how} incorporate hiding into their protocol design, though AMP notes that hiding is orthogonal to its core protocol and could be added. To our knowledge, \chorus is the first MCP design to offer both short-term censorship resistance (for every block after GST plus a grace period) and an optimal three-round good-case latency.

\paragraph{Prefix Consensus.}
Prefix Consensus~\cite{prefix-consensus}, a leaderless multi-proposer SMR, achieves good-case latency comparable to that of our fast path, while guaranteeing \emph{$f$-censorship-resistance}: after $\mathrm{GST}$, there may be up to $f$ slots for which censorship resistance does not hold.
Crucially, and in contrast to our work, these $f$ slots are adversarially chosen; they need not be the first $f$ slots after $\mathrm{GST}$, but may be placed at any point.
In \chorus, by contrast, censorship resistance holds for \emph{all} slots beyond a bounded interval following $\mathrm{GST}$: only slots falling within this bounded grace period may lack censorship resistance, while every subsequent slot is guaranteed it, and the adversary can do nothing to violate it.


\paragraph{Constellation.}
Constellation~\cite{Constellation} is a multiple-proposer protocol layered on Solana's Alpenglow~\cite{Alpenglow}.
Proposers disseminate their transactions to a layer of attesters whose attestations constrain which proposals the leader must include, giving it \emph{selective censorship resistance}.
Because this inclusion layer sits on top of Alpenglow's consensus, finalization takes two more communication rounds than the consensus alone.

\paragraph{AMP.}
AMP~\cite{amp} is a multiple-proposer layer on the Tendermint consensus algorithm. Like Constellation, it separates dissemination from agreement, which costs two extra communication rounds: proposers broadcast transaction payloads, and validators agree only on their identifiers, carried in vote extensions that constrain which payloads the leader must include.
Its central guarantee is \emph{bounded inclusion}: a payload attested by all correct validators at one height is forced into the block at the next.
AMP inherits Tendermint's safety and liveness and its $f < n/3$ threshold.

\paragraph{DAG-based protocols.}
DAG-based protocols~\cite{narwhal,bullshark,dag-rider,babel2024mysticetireachinglimitslatency} are aimed at high throughput (and, in recent designs, low latency) rather than at the guarantees we require, and although several provide some notion of censorship resistance, they do not target the strict short-term censorship-resistance guarantee we require.
Since every validator proposes each round, they are multiple-concurrent-proposer in a broad sense, but as Garimidi et al.~\cite{mcp-why-and-how} observe, amending them to provide short-term censorship resistance is not straightforward, as it appears to require giving up the optimistic responsiveness, which lets them commit at network speed in the good case (synchrony and an honest leader).

\tobiaslater{\paragraph{Comparison.}
As noted above, no prior MCP design provides both censorship resistance and hiding at \chorus's three-round latency: Constellation and AMP each add two communication rounds to their consensus for the inclusion guarantee; the protocol of Garimidi et al.~\cite{mcp-why-and-how} finalizes in five rounds including the consensus (four with an optimization), the only one that also hides; and the fast DAG-based designs do not provide hiding or the strict short-term censorship-resistance guarantee considered here.}

\tobiaslater{\paragraph{The latency cost of censorship resistance.}
\tobias{This is my best understanding; it should be checked independently. I still find this confusing.}
Abraham et al.~\cite{AbrahamER25} show that guaranteeing a correct client's transaction in the next block, so that no leader can exclude it, costs at least five communication rounds under partial synchrony at optimal resilience, even when the leader is correct, where plain consensus needs three.
\chorus's three-round \fastpath does not contradict this bound, because a single Byzantine proposer can force the fallback path. The bound applies when censorship resistance must hold against an adversary that corrupts proposers up to the threshold consensus safety tolerates. However, there is no a priori reason to tie the two thresholds: for censorship resistance, \chorus needs only one of a slot's $\numprop$ proposers to be honest.}

%% file: p2_problem_definition.tex
\section{Multiple Concurrent Proposers (MCP): Formal Problem Definition} \label{section:formal_problem_definition}

This section gives a precise, formal statement of the MCP problem that \Cref{subsection:mcp-overview} introduced informally.

\subsection{Preliminaries}
\label{subsection:mcp-preliminaries}
The MCP problem involves several abstract entities; we begin by introducing each of them precisely.

\subsubsection{Slots.}
Each \emph{slot} $s$ carries three fields:
\begin{compactitem}
  \item a number $s.\fnumber$, which is a positive integer;
  \item a deadline $s.\fdeadline$, which is a point in global time; and
  \item a set of proposers $s.\fproposers$, which is a subset of the entire set of validators.
\end{compactitem}
We emphasize that all three fields are \emph{read-only}: a slot's number, deadline, and proposers are fixed and never change.
This is a deliberate, minor departure from \Cref{subsection:mcp-overview}, where we described the deadline as something the protocol determines.
Treating deadlines as given --- rather than as a protocol \emph{output} --- keeps the problem statement clean: otherwise the MCP problem definition would have to carry extra guarantees about how deadlines are produced.
The protocol still adapts its timing not by moving deadlines but by skipping slots, as we make precise below.

For each $i \in \mathbb{N}_{\geq 1}$, there exists exactly one slot $s$ with $s.\fnumber = i$.
Moreover, deadlines respect the slot order: for any two slots $s_1, s_2$ with $s_1.\fnumber < s_2.\fnumber$, we have $s_1.\fdeadline \leq s_2.\fdeadline$.
(For our extreme-pipelining framework, we additionally assume that consecutive deadlines are a fixed, known interval $\tau$ apart, with $\tau > 0$: that is, $s.\fdeadline = s'.\fdeadline + \tau$ for every slot $s$ with $s.\fnumber > 1$, where $s'$ is the slot with $s'.\fnumber = s.\fnumber - 1$.)
We sometimes refer to the \emph{starting time} of a slot $s$, by which we mean $s.\fdeadline - \Delta$.
Moreover, we require the first slot (the slot $s$ with $s.\fnumber = 1$) to have a deadline of at least $\Delta$, so its starting time is non-negative; and since deadlines only grow with the slot number, every slot then has a non-negative starting time.
(For our extreme-pipelining framework, we assume that the starting time of the first slot is $0$.)

\subsubsection{Proposals \& Proposal Vectors.}
A \emph{proposal} $P$ carries three fields:
\begin{compactitem}
    \item a slot $P.\fslot$;
    \item a proposer $P.\fproposer$, which must be a proposer of the proposal's slot ($P.\fproposer \in P.\fslot.\fproposers$); and
    \item a payload $P.\fpayload$, an arbitrary string carrying the proposal's content (e.g., its transactions).
\end{compactitem}
We also introduce \emph{proposal vectors}, which bundle one proposal per proposer of a slot.
Concretely, a proposal vector $V$ carries:
\begin{compactitem}
    \item a slot $V.\fslot$; and
    \item a mapping that assigns to each proposer $p \in V.\fslot.\fproposers$ either a proposal or the special value $\bot$ (meaning ``no proposal''); we write $V[p]$ for the proposal (or $\bot$) assigned to $p$.
\end{compactitem}
We impose one constraint on the aforementioned mapping: whenever $V[p] \neq \bot$, the proposal $V[p]$ must be $p$'s own proposal for that slot, i.e., $V[p].\fslot = V.\fslot$ and $V[p].\fproposer = p$.
Two proposal vectors are \emph{identical} if and only if their slots are identical and their mappings are identical: $V_1 = V_2$ exactly when $V_1.\fslot = V_2.\fslot$ and $V_1[p] = V_2[p]$ for every proposer $p \in V_1.\fslot.\fproposers$.
For our purposes a proposal vector is interchangeable with a block: a block can be obtained from a proposal vector by a fixed, deterministic transformation (ordering the proposals' transactions, discarding invalid ones, and so on).
Since this transformation is an application-level concern, irrelevant to consensus, and since proposal vectors make our later definitions cleaner, we phrase everything in terms of proposal vectors.

\subsubsection{Logs.}
A \emph{log} is a (possibly empty) ordered list of proposal vectors with strictly increasing slot numbers: if $V_1$ precedes $V_2$ in the log, then $V_1.\fslot.\fnumber < V_2.\fslot.\fnumber$.
For a log $\mathcal{L}$, we write $\mathcal{L}.\length$ to denote its number of proposal vectors, and $\mathcal{L}[i]$ for its $i$-th proposal vector, where $i \in [1, \mathcal{L}.\length]$.
Notably, a log need not contain a proposal vector for every slot: some slots may be ``skipped'', contributing no proposal vector to the log.
In particular, $\mathcal{L}.\length$ may be strictly less than $\mathcal{L}[\mathcal{L}.\length].\fslot.\fnumber$.
(Permitting such skips is the formal counterpart of the informal view in \Cref{subsection:mcp-overview}, where no slot is skipped and the schedule instead simply spaces consecutive slot deadlines apart in time. These two views --- skipping slots here versus spacing deadlines there --- are equivalent reformulations of the same underlying ledger, and we make the correspondence between them precise in \Cref{subsection:building_blocks_framework}.)
For a proposal vector $V$ and a log $\mathcal{L}$, we write ``$V \in \mathcal{L}$'' to denote that $V$ appears in $\mathcal{L}$.
Furthermore, we say that two logs $\mathcal{L}_1$ and $\mathcal{L}_2$ are \emph{consistent} if and only if $\mathcal{L}_1[i] = \mathcal{L}_2[i]$, for every $i \in [1, \min(\mathcal{L}_1.\length, \mathcal{L}_2.\length)]$; otherwise, they are \emph{inconsistent}.
(Any log is consistent with the empty log.)

\subsection{Interface}

The interface of the MCP problem is simple:
\begin{compactitem}
    \item \emph{Input.} For each slot $s$, every proposer $p \in s.\fproposers$ holds a single proposal $P$ for that slot (with $P.\fslot = s$ and $P.\fproposer = p$).
    \item \emph{Output.} Each validator $\proc_i$ maintains an append-only local log, denoted $\locallog(\proc_i)$. Every correct validator initializes its local log to the empty log $[]$ at time $0$, and we write $\locallog(\proc_i, t)$ for the local log of $\proc_i$ at time $t$.
\end{compactitem}

\subsection{Guarantees}
We now specify the correctness properties that any MCP protocol must satisfy, defining each in turn.
The MCP problem inherits the standard guarantees of consensus --- safety and liveness --- and adds two more on top: censorship resistance and hiding.

\subsubsection{Safety.}
As in any consensus protocol, the foundational property is \emph{safety}: the logs of all correct validators must be mutually consistent at all times.

\begin{definition}[Safety] \label{def:safety}
A protocol satisfies \emph{safety} if and only if, for any two correct validators $\proc_i, \proc_j$ and any two times $t_1, t_2$, logs $\locallog(\proc_i, t_1)$ and $\locallog(\proc_j, t_2)$ are consistent.
\end{definition}

\subsubsection{Liveness.}
As noted in \Cref{subsection:mcp-preliminaries}, logs may inherently skip slots: not every slot need contribute a proposal vector to a validator's local log.
However, liveness requires that this skipping cannot persist indefinitely after $\mathrm{GST}$: there exists a post-$\mathrm{GST}$ grace period after which every slot contributes a proposal vector to the local log of every correct validator.
This precisely corresponds to the eventual stability guarantee we described informally in \Cref{subsection:mcp-overview}.

\begin{definition}[$\ell$-Liveness] \label{def:liveness}
A protocol satisfies \emph{$\ell$-liveness}, for some time duration $\ell$, if and only if, for every slot $s$ with $s.\fdeadline - \Delta \geq \mathrm{GST} + \ell$ and every correct validator $\proc_i$, there exist a proposal vector $V$ and a time $t$ such that (1) $V.\fslot = s$, and (2) $V \in \locallog(\proc_i, t)$.
\end{definition}

\noindent We underline that the parameter $\ell$ is not a fixed universal constant but rather a measure of the quality of a protocol: it captures how quickly a protocol resumes producing proposal vectors after $\mathrm{GST}$, and we naturally prefer protocols that achieve a smaller $\ell$.
Accordingly, $\ell$ need not be a constant and may depend on the parameters of the model and the execution, such as $n$, $f$, $\Delta$, and $\mathrm{GST}$.
For instance, a protocol whose grace period grows the longer the network remains asynchronous (e.g., under a linear or exponential backoff) is captured by an $\ell$ that increases with $\mathrm{GST}$.

\subsubsection{Short-Term Censorship Resistance.}

Liveness ensures that slots are not skipped indefinitely after $\mathrm{GST}$, but says nothing about the \emph{content} of the finalized proposal vectors: a protocol that always finalizes empty proposal vectors (assigning $\bot$ to every proposer) is live yet useless.
We therefore additionally require \emph{censorship resistance}: the proposal of a correct proposer for a slot cannot be suppressed by Byzantine validators, but must be included in the proposal vector finalized for that slot.
Concretely, there exists a grace period $c$ such that, for every slot $s$ whose starting time is at least $c$ after $\mathrm{GST}$, the proposal vector finalized for $s$ includes the proposals of all correct proposers of $s$.

\begin{definition}[$c$-Censorship Resistance] \label{def:censorship-resistance}
A protocol satisfies \emph{$c$-censorship resistance}, for some time duration $c$, if and only if, for every slot $s$ with $s.\fdeadline - \Delta \geq \mathrm{GST} + c$, every correct validator $\proc_i$, and every correct proposer $\proc_j \in s.\fproposers$, there exist a proposal vector $V$ and a time $t$ such that (1) $V.\fslot = s$, (2) $V[\proc_j] = P$, where $P$ is the proposal of proposer $\proc_j$ for slot $s$, and (3) $V \in \locallog(\proc_i, t)$.
\end{definition} 

\noindent As with the parameter $\ell$ of $\ell$-liveness, the parameter $c$ measures the quality of a protocol: it bounds how quickly the proposals of correct proposers are included after $\mathrm{GST}$ (smaller is preferable), and may likewise depend on the model and execution parameters.
We clarify the relationship between liveness and censorship resistance once more.
The two differ in what they assume about the proposers of a slot.
Liveness guarantees that, after the grace period, every slot eventually contributes a proposal vector to the log, regardless of whether its proposers are correct or faulty.
Censorship resistance, in contrast, concerns the content of that proposal vector, and only constrains slots that have at least one correct proposer (since it speaks about proposals held by correct proposers).
The two properties are thus complementary: liveness ensures that a slot contributes a proposal vector, while censorship resistance, for a slot with a correct proposer, ensures that the proposer's proposal is included by that slot.

\subsubsection{Hiding.}
In defining the hiding property, the intuition we want to capture is that faulty proposers cannot submit proposals that depend on the contents of proposals submitted by honest proposers.
However, there are some nuances that need to be addressed:
\begin{compactitem}
    \item At the very least, such a property should imply ``semantic security'',
    which would just say that the adversary does not learn the contents of the honest proposals
    before the deadline for submitting proposals passes.

    \item Ideally, we want a stronger notion of security that captures ``non-malleability'', which would say that the adversary cannot submit a proposal whose contents depend in some non-trivial way on an honest proposal.
    For example, suppose an honest proposal $x$ is encrypted as $y$.
    If the encryption scheme is malleable, an adversary may be able to compute an encryption $y'$ of $x+1$ without ever learning what $x$ is.
    Our definition of security will capture this guarantee.

    \item The most ambitious definition of security would say that each faulty proposer gets to submit (at most) one proposal (which does not depend on the honest proposals), and does not learn the contents of the honest proposals until the set of included proposals is fully decided.
    While such a definition could be realized, it would add significant latency to the protocol.
    For that reason, we propose a somewhat weaker definition that gives the adversary the ability to (1) have faulty proposers effectively submit multiple proposals (which do not depend on honest proposals), and (2) use the contents of the honest proposals
    to influence the decided set of included proposals.
\end{compactitem}
To formalize the above, we make a simulation-based definition.
This definition is in the spirit of Universal Composability (UC)~\cite{DBLP:conf/focs/Canetti01}, but we do not require the full machinery of the UC framework. 
Just as
in the UC framework, we have a ``real world'' and an ``ideal world''.
In the real world, validators interact with the actual protocol.
In the ideal world, there is an ideal functionality $\mathcal{F}$
and a simulator (a.k.a.\ ideal-world adversary) $\mathcal{S}$.
Both worlds are driven by an \emph{environment} $\mathcal{Z}$ --- an arbitrary efficient (i.e., probabilistic polynomial-time) entity that supplies the inputs to the honest validators and observes their outputs.
The environment also plays the role of the adversary: in the real world it directly controls the faulty validators, whereas in the ideal world those faulty validators are instead controlled by the simulator $\mathcal{S}$, with which $\mathcal{Z}$ may interact.
We assume \emph{static} corruption: the set of faulty validators is fixed before the execution begins.
The environment's goal is to tell the two worlds apart.
In the ideal world, all inputs from and outputs to honest validators pass directly between $\mathcal{Z}$ and $\mathcal{F}$, while inputs from and outputs to faulty validators pass between $\mathcal{S}$ and $\mathcal{F}$.
Critically, the interface seen by the environment is the same in both worlds.
It thus suffices to define the ideal functionality $\mathcal{F}$ corresponding to
our hiding property, which we do next.
We describe $\mathcal{F}$ for a single slot; the full functionality applies it to every slot.
\begin{compactitem}
\item
Proposers, both honest and faulty, can input proposals to $\mathcal{F}$,
{\em but only before the slot's deadline}.
\item
Unlike an honest proposer, a faulty proposer may input
several proposals (and not all at the same time).
\item
When an honest proposer $p_i$ inputs a proposal to $\mathcal{F}$,
$\mathcal{F}$ informs $\mathcal{S}$ that $p_i$ has submitted a proposal,
but divulges no information about the proposal to $\mathcal{S}$
beyond the identity of $p_i$ and the length of its proposal.
\item
When the deadline passes, $\mathcal{F}$ gives to $\mathcal{S}$
the contents of all of the honest proposals.
\item
Later, $\mathcal{S}$ chooses the actual set $S$ of proposals
to be included in the proposal vector, and sends this to $\mathcal{F}$.
The set $S$ must be a subset of the proposals that were input to the ideal
functionality, and can include at most one proposal per proposer.
\item
After sending $S$ to $\mathcal{F}$,
$\mathcal{S}$ may then instruct $\mathcal{F}$ to output
the proposal vector determined by $S$ to individual validators (one at a time, in the order of its choosing).
\end{compactitem}
Having specified $\mathcal{F}$, we now state our hiding property: a protocol is hiding precisely when its real-world execution cannot be told apart from this ideal world.

\begin{definition}[Hiding] \label{def:hiding}
A protocol satisfies \emph{hiding} if and only if there exists a simulator $\mathcal{S}$ such that no efficient environment can distinguish the real world from the ideal world, where the ideal world is defined by the ideal functionality $\mathcal{F}$ described above, applied to every slot.
\end{definition}

%% file: p2_framework.tex
\section{Our Extreme-Pipelining Framework: Formal Exposition} \label{section:framework}

We now give the formal account of \emph{the extreme-pipelining framework}, the framework we introduced informally in \Cref{subsection:cadence-overview}.
Recall that \cadence, our concrete MCP protocol, is the specific instantiation of the extreme-pipelining framework obtained by taking \chorus as the slot consensus and \conductor as the orchestrator.

\subsection{Building Blocks: Slot Consensus \& Orchestrator} \label{subsection:building_blocks_framework}



We now formally define the two abstract building blocks on which our extreme-pipelining framework relies: slot consensus and the orchestrator, introduced informally in \Cref{subsection:slot-consensus-overview} and \Cref{subsection:orchestrator-overview}, respectively.

\subsubsection{Slot Consensus.}
The slot consensus primitive is specified formally in \Cref{mod:slotconsensus}; each of its instances is parameterized by a slot $s$.
The interface reflects the two roles a validator plays: a participant in the consensus process and, for designated slots, a proposer.
On the participation side, validators explicitly signal when they start and stop contributing to an instance; between these two events we say that the validator is \emph{actively contributing}.
(We track this explicitly --- though consensus protocols usually leave it implicit --- because our analysis of the extreme-pipelining framework aims to bound how many instances a correct validator actively contributes to at any one time.)
On the proposer side, designated validators additionally submit their proposals.
Together, these participation signals and proposals form a validator's \emph{inputs} to the instance, while its sole \emph{output} is a finalization: a validator may finalize a single proposal vector for slot $s$.


The slot consensus primitive guarantees six correctness properties.
Agreement and termination are the usual safety and liveness: correct validators never finalize conflicting proposal vectors, and if all correct validators start participating, then every correct validator eventually finalizes.
Importantly, agreement constrains each validator individually as well: no correct validator finalizes two different proposal vectors, even on separate occasions.
Slot safety requires that the finalized proposal vector $V$ carry the correct slot identifier $s$.
(Slot safety is a property we omitted from the informal overview in \Cref{subsection:slot-consensus-overview}, as it is a self-evident condition that reveals nothing conceptually interesting about the primitive.)
Proposal inclusion guarantees that every proposal submitted by a correct proposer by the slot's starting time appears in $V$ (under synchrony).
Hiding is the slot-level counterpart of the MCP hiding guarantee (\Cref{def:hiding}), specialized to the instance's slot $s$.
Quiescence, finally, confines a correct validator's communication to its participation window: it sends no protocol message for the instance before it starts participating or after it stops doing so. 
(This property, together with the bounded number of slot instances underway in \cadence at any time --- which we show later --- keeps a validator's work bounded: it stays silent on the slots it has not yet opened or has already completed.)


\begin{module}[h]
\caption{Slot Consensus}
\label{mod:slotconsensus}
\footnotesize
\begin{algorithmic}[1]

\Statex \textbf{Parameters:}
\begin{compactitem}[-]
    \item $s \in \Slot$
\end{compactitem}

\smallskip
\Statex \textbf{Interface:}
\begin{compactitem}[-]
    \item input $\mathsf{participate}()$: a validator starts participating.
    
    \item input $\mathsf{abandon}()$: a validator stops participating.

    \item input $\mathsf{propose}(P \in \mathsf{Proposal})$: a proposer submits its proposal $P$.
    
    \item output $\mathsf{finalize}(V \in \mathsf{PVector})$: a validator finalizes proposal vector $V$.
\end{compactitem}

    



\smallskip
\Statex \textbf{Properties:} 
\begin{compactitem}[-]
    \item \emph{Agreement:} If a correct validator $p_i$ finalizes a proposal vector $V_i$ and a correct validator $p_j$ finalizes a proposal vector $V_j$, then $V_i = V_j$.
    
    \item \emph{Termination:}
    If every correct validator starts participating, then every correct validator eventually finalizes a proposal vector.

    \item \emph{Slot safety:} If a correct validator finalizes a proposal vector $V$, then $V.\fslot = s$.

    \item \emph{Proposal inclusion:} If $s.\fdeadline - \Delta \geq \mathrm{GST}$, a correct proposer $\proc_j \in s.\fproposers$ proposes its proposal $P$ at time $s.\fdeadline - \Delta$, and a correct validator finalizes a proposal vector $V$, then $V[\proc_j] = P$.

    \item \emph{Hiding:} Hiding as in \Cref{def:hiding}, specialized to this instance's single slot $s$.
    
    \item \emph{Quiescence:} No correct validator sends any protocol message before it starts participating or after it stops participating.
\end{compactitem}
\end{algorithmic}
\end{module}


\subsubsection{Orchestrator.}
\Cref{mod:orchestrator_2} provides the orchestrator's full specification.
Validators supply a single input: upon completing their work on a slot $s$, a validator notifies the orchestrator by completing $s$.
The orchestrator, in turn, produces a single output: for an upcoming slot $s'$, it may instruct a validator to open $s'$; there is no separate ``skip'' output.
Instead, we informally say that a slot a validator never opens is \emph{skipped}.
It is worth reconciling this interface with the informal overview (\Cref{subsection:orchestrator-overview}), where we described the orchestrator's job as \emph{determining the deadlines of slots}.
In the formal model, by contrast, slots --- and all information about them, including their deadlines --- are fixed and read-only (\Cref{subsection:mcp-preliminaries}).
The two views are equivalent: instead of \emph{delaying} a slot's deadline, as the informal orchestrator would, the formal orchestrator simply \emph{skips} the intervening slots and opens the slot whose (fixed) deadline already coincides with the desired, later time.
Skipping slots is thus the formal counterpart of pushing deadlines back.
Throughout the rest of the paper, we adopt the $\tau$-spaced deadline structure foreshadowed in \Cref{subsection:mcp-preliminaries}: consecutive slots' deadlines are a fixed, known interval $\tau$ apart.


The orchestrator must satisfy five correctness properties.
The first three establish the correctness baseline.
Totality ensures that an opening propagates to all correct validators: if any correct validator opens a slot $s$, then every correct validator eventually opens $s$ as well.
Integrity ensures that no correct validator opens the same slot more than once, and that no slot is opened before its starting time (though it may be opened after that time).
Monotonicity ensures that validators open slots in strictly increasing order.
Let us underline a stronger consequence of these three properties together: whenever a correct validator $\proc_i$ opens a slot $s$, every correct validator opens exactly the same set of slots with number at most $s.\fnumber$.
To see this, consider any slot $s'$ with $s'.\fnumber \leq s.\fnumber$.
First, note that $s'$ can be opened at most once (due to the integrity property).
If $\proc_i$ opens $s'$, then by totality every correct validator opens $s'$ as well.
Otherwise, $\proc_i$ skips $s'$; that is, it opens $s$ without ever opening $s'$ (here, $s'.\fnumber < s.\fnumber$).
We argue that then no correct validator opens $s'$ either: suppose some correct validator $\proc_j$ did: by totality, $\proc_i$ would open $s'$ too, and by monotonicity --- as $s'.\fnumber < s.\fnumber$ --- it would have to do so before opening $s$, which is impossible, since $\proc_i$ opened $s$ without ever opening $s'$.
In both cases, $\proc_i$ and every other correct validator agree on whether $s'$ is opened.


The remaining two properties are the most involved and the most distinctive: rather than establishing a correctness baseline, as the three above do, they measure the quality of an orchestrator. 

\paragraph{Memory consumption in our framework.}
\label{subsection:memory}
In our extreme-pipelining framework (as we detail in \Cref{subsection:composition_cadence}), a validator opens a fresh slot consensus instance for every slot it does not skip, so the more slots stay open, the more memory it must devote to them; we therefore want to keep each validator's memory bounded.
Formally reasoning about memory, however, is delicate.
To make a memory bound meaningful, one must work in a model with an \emph{unreliable} network: under a reliable network, undelivered messages must be buffered until delivery, which is fundamentally at odds with bounding memory.
Yet adopting an unreliable network would introduce considerable additional machinery (retransmission, garbage collection, checkpointing, and similar mechanisms), whose techniques are, at least in theory, standard and largely orthogonal to our contribution.
As memory management is not where we innovate, we retain the reliable-network assumption throughout, and consequently do not reason about memory consumption directly.
Instead, we capture the same concern through a clean abstraction: the number of slot consensus instances in which a validator is \emph{actively participating} at any given time, namely those it has started but not yet abandoned.
(Recall that, by the quiescence property of slot consensus, a validator sends messages only for the instances it is actively participating in --- so this count captures exactly the instances it is actively communicating for at any given time.)
Bounding this number therefore bounds the ongoing work a validator sustains at once; we adopt it as our proxy for memory.
We call this property \emph{bounded concurrency}. 
Reassuringly, it is not a guarantee we need to establish on its own: a validator participates in a slot's consensus instance exactly while that slot is open but not yet completed, so bounded concurrency follows directly from the $\mathcal{B}$-boundedness property we define next --- and with the very same bound $\mathcal{B}$.

\paragraph{$\mathcal{B}$-Boundedness.}
As discussed in the previous paragraph, bounded concurrency reduces to $\mathcal{B}$-boundedness, which we now define.
This property limits how far ahead a validator may open slots without completing earlier ones: at any point in time, only the last $\mathcal{B}$ slots opened (by slot number) may still be pending completion, while all previously opened slots must already have been completed.
In the context of our extreme-pipelining framework, since completing a slot means its proposal vector has been finalized, $\mathcal{B}$-boundedness implies that every correct validator has already finalized all but the last $\mathcal{B}$ opened slots, yielding a complete log prefix that trails the frontier by at most $\mathcal{B}$ slots.
Here, $\mathcal{B} \in \mathbb{N}_{\geq 0} \cup \{\infty\}$: a finite $\mathcal{B}$ bounds the number of simultaneously open slots, whereas $\mathcal{B} = \infty$ imposes no such bound and thus captures orchestrators that open slots arbitrarily far ahead without completing earlier ones.
We underline once more that $\mathcal{B}$ measures the quality of a protocol (smaller is preferable) and need not be a constant: it may depend on the model and execution parameters, such as $n$, $f$, $\Delta$, and $\mathrm{GST}$.

\paragraph{$\mathcal{R}$-Recovery.}
This property captures what we require once the network stabilizes after a period of asynchrony.
Two demands must be met jointly.
First, validators must stop skipping slots: once the network recovers, all upcoming slots are opened rather than skipped.
Second, and more subtly, validators must open slots ``on time'', namely at exactly their starting time.
Opening a slot on time is a natural expectation: validators may have time-sensitive work to perform upon opening, making the precise timing of this event crucial.
To see why this matters, we illustrate with the concrete example of our extreme-pipelining framework.
There, the proposal-inclusion property of slot consensus guarantees that every proposal submitted by a correct proposer \emph{at the slot's starting time} appears in the finalized proposal vector.
If the slot is opened late, the proposer misses the starting time, proposal inclusion no longer applies, and Byzantine validators may suppress correct proposers' proposals entirely.
This would cause slot consensus to finalize a proposal vector devoid of any correct proposer's proposals --- potentially an empty one --- which, in the blockchain setting we target, leaves the application state unchanged and is therefore no more useful than a skipped slot, yet far costlier to produce.
An orchestrator that merely guarantees eventual opening, without the timing condition, would therefore be insufficient.
Combining both demands, we say the orchestrator satisfies $\mathcal{R}$-recovery if both conditions hold for every slot whose starting time is at least $\mathcal{R}$ after $\mathrm{GST}$.
Unlike $\mathcal{B}$, $\mathcal{R}$ must be finite; but, like $\mathcal{B}$, it captures protocol quality (smaller is preferable) and may be any function of the model and execution parameters.



\begin{module}[h]
\caption{Orchestrator}
\label{mod:orchestrator_2}
\footnotesize
\begin{algorithmic}[1]

    

\Statex \textbf{Interface:}
\begin{compactitem}[-]
    \item input $\mathsf{complete}(s \in \mathsf{Slot})$: a validator completes slot $s$.
    \item output $\mathsf{open}(s \in \mathsf{Slot})$: a validator opens slot $s$.
\end{compactitem}

    

\smallskip
\Statex \textbf{Properties:} 
\begin{compactitem}[-]
    \item \emph{Totality:} If some correct validator opens any slot $s$, then every correct validator eventually opens $s$.

    \item \emph{Integrity:} No correct validator opens the same slot $s$ more than once.
    Moreover, no correct validator opens slot $s$ before time $s.\fdeadline - \Delta$.

    \item \emph{Monotonicity:} If a correct validator opens any two slots $s$ and $s'$ with $s.\fnumber < s'.\fnumber$, then it opens $s$ before opening $s'$.
    
    

    \item \emph{$\mathcal{B}$-Boundedness:}
    Let $\mathcal{B} \in \mathbb{N}_{\geq 0} \cup \{\infty\}$. For every correct validator $\proc_i$ and every time $t$, if $\proc_i$ has opened $k$ slots by time $t$ (ordered by slot number as $s_1, \ldots, s_k$), then every $s_j$ with $j \leq k - \mathcal{B}$ has already been completed by $\proc_i$.
    
    \item \emph{$\mathcal{R}$-Recovery:} Let $\mathcal{R}$ be a time duration.
    For every slot $s$ with $s.\fdeadline - \Delta \geq \mathrm{GST} + \mathcal{R}$, every correct validator opens slot $s$ and it does so at time $s.\fdeadline - \Delta$.

\end{compactitem}


\end{algorithmic}
\end{module}

\subsection{Composing the Building Blocks} \label{subsection:composition_cadence}

We now describe how the slot consensus and the orchestrator compose into the extreme-pipelining framework.
Each correct validator maintains a single orchestrator instance $\mathcal{O}$ and one slot consensus instance $\mathcal{S}[s]$ per slot $s$, and the framework is simply the glue that wires these components together.
Throughout, $\mathcal{O}$ is assumed to achieve $\mathcal{B}$-boundedness and $\mathcal{R}$-recovery for some $\mathcal{B} \in \mathbb{N}_{\geq 0} \cup \{\infty\}$ and some time duration $\mathcal{R}$.
The pseudocode is given in \Cref{algorithm:cadence}, traced from the perspective of a correct validator $\proc_i$, with a complementary visual overview in \Cref{fig:cadence-arch}.

\paragraph{Protocol description.}
Upon starting, $\proc_i$ begins executing $\mathcal{O}$ (line~\ref{line:start-orch}), which drives the protocol forward by deciding, for each slot $s$, whether to open it; recall that the orchestrator has no explicit skip output, so a slot it never opens is implicitly skipped.
When $\mathcal{O}$ opens slot $s$ (line~\ref{line:upon-open}), $\proc_i$ records $s$ as opened and starts participating in $\mathcal{S}[s]$ (lines~\ref{line:opened-update}--\ref{line:participate}); if $\proc_i$ is a designated proposer (i.e., $\proc_i \in s.\fproposers$), it additionally submits its proposal $P_s$ for slot $s$ to $\mathcal{S}[s]$ (lines~\ref{line:proposer-check}--\ref{line:propose}). 
Whereas opening is signaled explicitly by $\mathcal{O}$, skipping is not: when $\proc_i$ opens $s$, it records every smaller-numbered slot it has not opened as skipped (line~\ref{line:implicit-skip}).
For a skipped slot, no consensus instance is spawned and no proposal vector is produced.

Crucially, multiple slot consensus instances may run concurrently and may finalize out of order.
Upon $\mathcal{S}[s]$ finalizing a proposal vector $V$, for some slot $s$ (line~\ref{line:upon-finalize}), $\proc_i$ places $V$ in a pending set (line~\ref{line:pending-add}), notifies $\mathcal{O}$ that slot $s$ has been completed (line~\ref{line:complete}), and stops participating in $\mathcal{S}[s]$ (line~\ref{line:abandon}).
Pending proposal vectors are then appended to $\proc_i$'s local log in slot-number order: a pending proposal vector $V'$ is appended (line~\ref{line:append}) as soon as every slot with a strictly smaller number than $V'.\fslot.\fnumber$ has already been resolved, that is, either recorded as skipped or finalized and appended to $\mathit{log}_i$ (line~\ref{line:func-ready-to-append-return}).
This ordering condition ensures that $\mathit{log}_i$ is indeed a log in the sense of \Cref{subsection:mcp-preliminaries}: proposal vectors appear in strictly increasing order of slot number.


\begin{algorithm}[h]
\caption{Extreme-Pipelining Framework: Pseudocode (for validator $p_i$)}
\label{algorithm:cadence}
\begin{algorithmic}[1]
\footnotesize


\State \textbf{Uses:}
\State \hskip2em Slot consensus, \textbf{instances} $\mathcal{S}[s]$, for every $s \in \Slot$, parameterized by slot $s$ \label{line:uses-sc}
\State \hskip2em Orchestrator ($\mathcal{B}$-boundedness, $\mathcal{R}$-recovery), \textbf{instance} $\mathcal{O}$ \label{line:uses-orch}

\medskip
\State \textbf{Local variables:}
\State \hskip2em $\mathsf{Log}$ $\mathit{log}_i \gets []$ \BlueComment{append-only local log of $\proc_i$} \label{line:var-log}
\State \hskip2em $\mathsf{Set}(\Slot)$ $\mathit{opened}_i \gets \emptyset$ \BlueComment{slots opened by the orchestrator} \label{line:var-opened}
\State \hskip2em $\mathsf{Set}(\Slot)$ $\mathit{skipped}_i \gets \emptyset$ \BlueComment{slots skipped by the orchestrator} \label{line:var-skipped}
\State \hskip2em $\mathsf{Set}(\mathsf{PVector})$ $\mathit{pending}_i \gets \emptyset$ \BlueComment{finalized proposal vectors not yet appended to $\mathit{log}_i$} \label{line:var-pending}


\medskip
\State \textbf{Local functions:}
\State \label{line:func-ready-to-append} \hskip2em \textbf{function} $\mathsf{ready\_to\_append}(V \in \mathsf{PVector}) \to \mathsf{Bool}$:
\State \hskip4em \textbf{return} $\mathsf{true}$ if and only if every slot $s'$ with $s'.\fnumber < V.\fslot.\fnumber$ satisfies one of: \label{line:func-ready-to-append-return}
\Statex \hskip6em (1) $s' \in \mathit{skipped}_i$, or
\Statex \hskip6em (2) some proposal vector $V' \in \mathit{log}_i$ has $V'.\fslot = s'$

\medskip
\State \textbf{upon} starting the protocol: \label{line:startup}
\State \hskip2em start executing $\mathcal{O}$ \label{line:start-orch}

\medskip
\State \textbf{upon} $\mathcal{O}.\mathsf{open}(s \in \Slot)$: \label{line:upon-open}
\State \hskip2em $\mathit{opened}_i \gets \mathit{opened}_i \cup \{ s \}$ \label{line:opened-update}
\State \hskip2em $\mathit{skipped}_i \gets \mathit{skipped}_i \cup \{ s' \in \Slot : s'.\fnumber < s.\fnumber \text{ and } s' \notin \mathit{opened}_i \}$ \BlueComment{implicitly skipped slots} \label{line:implicit-skip}
\State \hskip2em \textbf{invoke} $\mathcal{S}[s].\mathsf{participate()}$ \label{line:participate}
\State \hskip2em \textbf{if} $\proc_i \in s.\fproposers$: \label{line:proposer-check}
\State \hskip4em \textbf{invoke} $\mathcal{S}[s].\mathsf{propose}(P_s)$, where $P_s$ is $p_i$'s proposal for slot $s$ \label{line:propose}


\medskip
\State \textbf{upon} $\mathcal{S}[s].\mathsf{finalize}(V \in \mathsf{PVector})$, for some $s \in \mathit{opened}_i$: \BlueComment{fires once $s$ is opened; ``early'' finalizations buffered} \label{line:upon-finalize}
\State \hskip2em $\mathit{pending}_i \gets \mathit{pending}_i \cup \{ V \}$ \label{line:pending-add}
\State \hskip2em \textbf{invoke} $\mathcal{O}.\mathsf{complete}(V.\fslot)$ \label{line:complete}
\State \hskip2em \textbf{invoke} $\mathcal{S}[s].\mathsf{abandon}()$ \label{line:abandon}

\medskip
\State \textbf{upon} $\mathsf{ready\_to\_append}(V' \in \mathsf{PVector})$, for some proposal vector $V' \in \mathit{pending}_i$: \label{line:upon-ready-to-append}
\State \hskip2em \textbf{append} $V'$ to $\mathit{log}_i$ \label{line:append}
\State \hskip2em $\mathit{pending}_i \gets \mathit{pending}_i \setminus \{ V' \}$ \label{line:pending-remove}
\end{algorithmic}
\end{algorithm}


\begin{figure}[h]
\centering
\begin{tikzpicture}[
  orch/.style={
    draw=teal!60!black, line width=1.2pt, fill=teal!12,
    rounded corners=6pt, minimum width=2.4cm, minimum height=7.6cm,
    align=center, font=\small\bfseries
  },
  scbox/.style={
    draw=orange!65!black, line width=1pt, fill=orange!15,
    rounded corners=4pt, minimum width=2.8cm, minimum height=0.7cm,
    align=center, font=\scriptsize
  },
  scprog/.style={
    draw=orange!45!black, line width=1pt, fill=orange!7, dashed,
    rounded corners=4pt, minimum width=2.8cm, minimum height=0.7cm,
    align=center, font=\scriptsize
  },
  blockbox/.style={
    draw=green!55!black, line width=1pt, fill=green!10,
    rounded corners=3pt, minimum width=0.9cm, minimum height=0.6cm,
    align=center, font=\scriptsize
  },
  blockgray/.style={
    draw=gray!40, line width=0.8pt, fill=gray!8, dashed,
    rounded corners=3pt, minimum width=0.9cm, minimum height=0.6cm,
    align=center, font=\scriptsize, text=gray!50
  },
  openarr/.style={->, >=stealth, teal!70!black, thick},
  skiparr/.style={->, >=stealth, teal!50!black, thick, dashed},
  complarr/.style={->, >=stealth, orange!80!black, thick},
  finarr/.style={->, >=stealth, green!55!black},
  parentarr/.style={->, >=stealth, gray!60, semithick},
  outer/.style={draw=blue!25!gray, line width=1pt, fill=blue!4, rounded corners=8pt},
]
  \node[outer, minimum width=11.6cm, minimum height=9.2cm]
    at (3.8, -0.8) {};
  \node[font=\small\bfseries, text=blue!40!black] at (3.8, 3.4) {Extreme pipelining};

  \node[orch] (O) at (0, -0.8) {Orchestrator \\ $\mathcal{O}$};

  \node[scbox]  (S1) at (5.0,  2.3) {$\mathcal{S}[s_1]$: Slot Consensus};
  \node[scbox]  (S2) at (5.0,  0.9) {$\mathcal{S}[s_2]$: Slot Consensus};
  \coordinate   (skip3) at (0, -0.35);
  \node[blockgray] (B3empty) at (8.6, -0.35) {$\emptyset$};
  \node[scbox]  (S4) at (5.0, -1.6) {$\mathcal{S}[s_4]$: Slot Consensus};
  \node[scprog] (S5) at (5.0, -3.0) {$\mathcal{S}[s_5]$: Slot Consensus};
  \node[font=\scriptsize, gray] at (5.0,-3.9) {$\vdots$};

  \draw[openarr] ([yshift=3pt]O.east |- S1.west) -- ([yshift=3pt]S1.west)
    node[midway, above, font=\tiny, teal!70!black] {$\mathsf{open}(s_1)$};
  \draw[openarr] ([yshift=3pt]O.east |- S2.west) -- ([yshift=3pt]S2.west)
    node[midway, above, font=\tiny, teal!70!black] {$\mathsf{open}(s_2)$};
  \node[font=\tiny\itshape, gray!60, align=center] (skiplbl) at (2.7, -0.35) {$s_3$ skipped:\\no $\mathsf{open}(s_3)$ issued};
  \draw[openarr] ([yshift=3pt]O.east |- S4.west) -- ([yshift=3pt]S4.west)
    node[midway, above, font=\tiny, teal!70!black] {$\mathsf{open}(s_4)$};
  \draw[openarr] ([yshift=3pt]O.east |- S5.west) -- ([yshift=3pt]S5.west)
    node[midway, above, font=\tiny, teal!70!black] {$\mathsf{open}(s_5)$};

  \draw[complarr] ([yshift=-3pt]S1.west) -- ([yshift=-3pt]O.east |- S1.west)
    node[midway, below, font=\tiny, orange!80!black] {$\mathsf{complete}(s_1)$};
  \draw[complarr] ([yshift=-3pt]S2.west) -- ([yshift=-3pt]O.east |- S2.west)
    node[midway, below, font=\tiny, orange!80!black] {$\mathsf{complete}(s_2)$};
  \draw[complarr] ([yshift=-3pt]S4.west) -- ([yshift=-3pt]O.east |- S4.west)
    node[midway, below, font=\tiny, orange!80!black] {$\mathsf{complete}(s_4)$};

  \node[blockbox] (B1) at (8.6,  2.3) {$V_1$};
  \node[blockbox] (B2) at (8.6,  0.9) {$V_2$};
  \node[blockbox] (B4) at (8.6, -1.6) {$V_4$};

  \draw[finarr] (S1.east) -- (B1.west)
    node[midway, above, font=\tiny, green!55!black] {$\mathsf{finalize}(V_1)$};
  \draw[finarr] (S2.east) -- (B2.west)
    node[midway, above, font=\tiny, green!55!black] {$\mathsf{finalize}(V_2)$};
  \draw[finarr] (S4.east) -- (B4.west)
    node[midway, above, font=\tiny, green!55!black] {$\mathsf{finalize}(V_4)$};

  \draw[parentarr] (B4.east) to[out=0, in=0, looseness=0.5] (B2.east);
  \draw[parentarr] (B2.east) to[out=0, in=0, looseness=0.7] (B1.east);

\end{tikzpicture}
\caption{Interaction between the orchestrator $\mathcal{O}$ and slot consensus instances $\mathcal{S}[\cdot]$ in the extreme-pipelining framework (\Cref{algorithm:cadence}).
\textcolor{teal!70!black}{Teal solid arrows}: $\mathcal{O}$ outputs $\mathsf{open}(s)$, upon which the framework invokes $\mathsf{participate}()$ on $\mathcal{S}[s]$ (and $\mathsf{propose}(\cdot)$ if $\proc_i \in s.\fproposers$).
\emph{Implicit skip} (slot $s_3$): the orchestrator has no ``skip'' output, so a slot is skipped simply by never being opened --- no $\mathsf{open}(s_3)$ is issued, no slot consensus instance is spawned, and no proposal vector is produced ($\emptyset$).
\textcolor{orange!80!black}{Orange arrows}: upon $\mathsf{finalize}(V)$, the framework invokes $\mathsf{complete}(s)$ on $\mathcal{O}$ and $\mathsf{abandon}()$ on $\mathcal{S}[s]$.
\textcolor{green!55!black}{Green arrows}: $\mathcal{S}[s]$ outputs $\mathsf{finalize}(V)$; the proposal vector is eventually appended to the local log.
Slot $s_5$ (dashed border) is still in progress.
}
\label{fig:cadence-arch}
\end{figure}

\paragraph{On the generality of the framework.}
We wish to highlight that extreme pipelining is intentionally presented as a generic framework, not merely as a vehicle for our specific instantiation \cadence.
The framework is deliberately minimal: it prescribes nothing more than the interaction between a slot consensus instance and an orchestrator, leaving both components entirely abstract and subject only to the properties stated in \cref{mod:slotconsensus,mod:orchestrator_2}.
As a direct consequence, every property of the framework (including its resilience to Byzantine faults) is inherited from the properties of the underlying components rather than hard-coded into the framework itself.
In this paper, we indeed focus on the standard Byzantine fault model with $n = 3f + 1$ validators.
However, the framework imposes no such constraint: to obtain a different protocol, it suffices to supply a slot consensus and an orchestrator satisfying the respective module specifications, and the correctness of the resulting protocol follows immediately from their properties, with no additional argument needed.
This flexibility extends well beyond the resilience threshold.
For instance, one may trade resilience for another guarantee such as improved performance, simply by instantiating the two components accordingly.
One may also depart from the multiple-concurrent-proposers setting entirely: instantiating the slot consensus with a single designated proposer per slot recovers a classical ``single-leader'' blockchain.
Likewise, one is free to plug in a linear-communication slot consensus, and nothing else changes.
In every case, the framework itself is untouched; only the two plugged-in components change.

\subsection{Proof} \label{subsection:correctness_cadence}

In this subsection, we prove that the extreme-pipelining framework introduced above, composed of the slot consensus and the orchestrator, solves the MCP problem.

\paragraph{Safety.}
We first establish that the safety property is satisfied.

\begin{lemma}[Safety]
\label{lemma:cadence-safety}
\Cref{algorithm:cadence} satisfies safety.
\end{lemma}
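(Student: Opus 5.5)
The plan is to show that any two correct validators' logs agree entry by entry. The key intermediate claim is the following: for every correct validator $\proc_i$ and every time $t$, the log $\locallog(\proc_i,t)$ is exactly the sequence, in increasing slot-number order, of the finalized proposal vectors of all \emph{opened} slots with number at most some frontier. More precisely, every slot with number below the last appended slot is either skipped by $\proc_i$ or appears in the log.

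Once this characterization is in place, consistency reduces to two facts:
\begin{compactitem}
\item correct validators agree on \emph{which} slots are opened;
\item correct validators agree on \emph{what} each opened slot finalizes.
\end{compactitem}

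\textbf{Step 1: the structure of a single log.} Since $\mathit{log}_i$ is append-only and appends happen only through line~\ref{line:upon-ready-to-append}, I would show by induction on appends the following. When $V'$ is appended, every slot $s'$ with $s'.\fnumber<V'.\fslot.\fnumber$ is either in $\mathit{skipped}_i$ or already represented in $\mathit{log}_i$.

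By slot safety, $V'.\fslot=s$ for the instance $\mathcal{S}[s]$ that finalized it. By agreement applied to $\proc_i$ itself, together with the fact that $\mathit{opened}_i$ grows by integrity (each slot is opened at most once), each slot contributes at most one vector. Hence the log has strictly increasing slot numbers.

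Further, skipped slots are never opened later. This follows from monotonicity: a slot is marked skipped only when a larger-numbered slot is opened. It follows that the $k$-th entry of $\mathit{log}_i$ is the finalized vector of the $k$-th opened slot of $\proc_i$ in number order.

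\textbf{Step 2: agreement across validators.} For opened slots, I invoke the consequence of totality, integrity and monotonicity derived after \Cref{mod:orchestrator_2}. If $\proc_i$ opens $s$, then every correct validator opens exactly the same set of slots with number at most $s.\fnumber$.

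Now take logs $\locallog(\proc_i,t_1)$ and $\locallog(\proc_j,t_2)$, and let $m=\min$ of their lengths. Let $s$ be the slot of the $m$-th entry of the longer log's prefix. Both validators have opened their respective $m$-th slots. Applying the consequence above to the larger of the two slot numbers shows that both validators' first $m$ opened slots coincide, slot by slot.

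For each such slot $s_k$, the entries are $\mathcal{S}[s_k]$-finalized vectors at $\proc_i$ and at $\proc_j$. By agreement of slot consensus they are equal. Hence $\locallog(\proc_i,t_1)[k]=\locallog(\proc_j,t_2)[k]$ for all $k\le m$.

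\textbf{Main obstacle.} The delicate point is Step 1's claim that the $k$-th log entry corresponds to the $k$-th opened slot. This requires that no opened slot is bypassed, i.e., that $\mathsf{ready\_to\_append}$ blocks until every smaller opened slot is appended. It also requires that a slot in $\mathit{skipped}_i$ at append time is genuinely never opened later; monotonicity ensures this.

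A related subtlety is that finalizations are only acted on for $s\in\mathit{opened}_i$ (line~\ref{line:upon-finalize}). This guarantees that only opened slots enter the log, which makes the correspondence exact.
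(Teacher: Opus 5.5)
Your proof is correct and uses the same ingredients as the paper. The $\mathsf{ready\_to\_append}$ ordering together with monotonicity shows that a slot in $\mathit{skipped}_i$ is never opened later; totality aligns which slots the two validators open; and slot safety plus agreement of $\mathcal{S}[s]$ align the contents. The difference is packaging: the paper argues by contradiction at the first differing log index (a ``different slots'' case and a ``same slot'' case), whereas you first characterize each log as the ordered finalized vectors of all opened slots up to its frontier and then apply the consequence of totality, integrity and monotonicity noted after \Cref{mod:orchestrator_2}, which spells out the step the paper's first case compresses (a smaller slot with no appended vector must have been skipped, hence is never opened).
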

\begin{proof}
Suppose, for contradiction, that two correct validators $\proc_i, \proc_j$ and two times $t_1, t_2$ exist such that $\locallog(\proc_i, t_1)$ and $\locallog(\proc_j, t_2)$ are inconsistent.
Let $k \in \mathbb{N}_{\geq 1}$ be the smallest index at which the two logs differ, i.e., $\locallog(\proc_i, t_1)[k] \neq \locallog(\proc_j, t_2)[k]$; such an index exists by the definition of inconsistency.
Let $V_1 = \locallog(\proc_i, t_1)[k]$ and $V_2 = \locallog(\proc_j, t_2)[k]$.
We distinguish two possible cases:
\begin{compactitem}
  \item Let $V_1.\fslot \neq V_2.\fslot$.
  The $\mathsf{ready\_to\_append}$ condition (line~\ref{line:func-ready-to-append}) ensures that a validator appends a proposal vector for a slot only after every smaller-numbered slot is either skipped or already covered by a proposal vector in its log; hence the slot at position $k$ is determined by how each validator resolved all preceding slots.
  Since both logs agree at positions $1, \ldots, k-1$, the two validators resolved all those slots identically.
  Assume, without loss of generality, that $V_1.\fslot.\fnumber < V_2.\fslot.\fnumber$.
  Then, $\proc_i$ appended a proposal vector for slot $V_1.\fslot$, and so opened that slot.
  Validator $\proc_j$, in contrast, appended a proposal vector for the higher-numbered slot $V_2.\fslot$ at position $k$ without appending one for $V_1.\fslot$, and so did not open $V_1.\fslot$.
  By the monotonicity property, having already opened the higher-numbered slot $V_2.\fslot$, $\proc_j$ never opens $V_1.\fslot$.
  This contradicts the totality property of the orchestrator $\mathcal{O}$: since the correct validator $\proc_i$ opened $V_1.\fslot$, every correct validator --- including $\proc_j$ --- must eventually open it.

  \item Let $V_1.\fslot = V_2.\fslot$.
  Since $V_1 \neq V_2$ while $V_1.\fslot = V_2.\fslot$, the two proposal vectors must differ in their mappings.
  Note that a validator appends a proposal vector only upon finalizing it (line~\ref{line:upon-finalize}), and by the slot safety property of slot consensus a proposal vector finalized for slot $V_1.\fslot = V_2.\fslot$ comes from the instance $\mathcal{S}[V_1.\fslot = V_2.\fslot]$.
  Thus $\proc_i$ and $\proc_j$ finalized $V_1$ and $V_2$, respectively, from the same instance $\mathcal{S}[V_1.\fslot = V_2.\fslot]$, contradicting its agreement property. \qed
\end{compactitem}
\end{proof}

\paragraph{Liveness.}
Next, we prove that our framework satisfies $\mathcal{R}$-liveness, where $\mathcal{R}$ is the orchestrator's recovery parameter.

\begin{lemma} [Liveness]
\label{lemma:cadence-liveness}
\Cref{algorithm:cadence} satisfies $\mathcal{R}$-liveness.
\end{lemma}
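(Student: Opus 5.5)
Fix a slot $s$ with $s.\fdeadline - \Delta \geq \mathrm{GST} + \mathcal{R}$ and a correct validator $\proc_i$. We must exhibit a proposal vector $V$ with $V.\fslot = s$ that eventually appears in $\locallog(\proc_i)$. The argument has three steps. First, $s$ is opened and finalized by everyone. Second, every earlier slot is eventually \emph{resolved} at $\proc_i$, meaning it is either skipped or has its vector appended. Third, the $\mathsf{ready\_to\_append}$ rule then fires for the vector of $s$.

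\textbf{Step 1 (opening and finalizing $s$).} By $\mathcal{R}$-recovery of $\mathcal{O}$, every correct validator opens $s$, so by line~\ref{line:participate} every correct validator invokes $\mathcal{S}[s].\mathsf{participate}()$. Termination of $\mathcal{S}[s]$ then gives that every correct validator finalizes some $V$, and by slot safety $V.\fslot = s$. We must also check that $\mathsf{abandon}$ cannot interfere. A correct validator abandons $\mathcal{S}[s]$ only after finalizing in it, so termination's hypothesis, that all correct validators start participating, holds.

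\textbf{Step 2 (all earlier slots resolve).} I would argue by strong induction on slot number. The claim is: for every slot $s'$ that $\proc_i$ opens, $\proc_i$ eventually finalizes $s'$; and for every slot $s'$ with $s'.\fnumber < s.\fnumber$, $\proc_i$ eventually places $s'$ in $\mathit{opened}_i \cup \mathit{skipped}_i$.

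The second half is immediate. $\proc_i$ opens $s$, so line~\ref{line:implicit-skip} puts every smaller unopened slot into $\mathit{skipped}_i$ at that moment, and monotonicity ensures no such slot is opened later. Hence each earlier slot is either opened (before $s$) or skipped.

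For each opened earlier slot $s'$, apply the Step~1 argument without recovery. By totality, every correct validator opens $s'$, hence participates in $\mathcal{S}[s']$, hence by termination and slot safety $\proc_i$ finalizes a vector for $s'$ and places it in $\mathit{pending}_i$.

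\textbf{Step 3 (appending in slot order).} Proceed by induction over the finitely many slots numbered at most $s.\fnumber$, in increasing order. Assume all slots below $s'$ are resolved (skipped or appended). If $s'$ is skipped it is resolved. Otherwise its pending vector satisfies $\mathsf{ready\_to\_append}$, so line~\ref{line:upon-ready-to-append} eventually appends it. One subtlety: $\mathit{pending}_i$ might contain another vector for $s'$. This is excluded because the finalize handler fires at most once per instance, since agreement forbids a correct validator from finalizing two different vectors, and after finalizing the validator abandons the instance. At $s' = s$ this yields $V \in \locallog(\proc_i, t)$ for some $t$.

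\textbf{Main obstacle.} The hardest step is Step 2 for \emph{opened} earlier slots that may lie before $\mathrm{GST} + \mathcal{R}$. Recovery says nothing about them, so termination must come from totality alone. The delicate point is making sure termination's hypothesis really holds, namely that every correct validator participates. This needs totality together with the observation that abandonment only follows finalization. A slot opened before $\mathrm{GST}$ is thus still eventually finalized, because termination holds under partial synchrony regardless of when the instance starts.
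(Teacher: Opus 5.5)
Your proof is correct and follows essentially the same route as the paper's: $\mathcal{R}$-recovery makes every correct validator open $s$. Monotonicity together with implicit skipping resolves every earlier slot. Totality, termination and slot safety then give a finalized vector for each opened earlier slot, and an induction in slot-number order discharges $\mathsf{ready\_to\_append}$ up to $s$.

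Your two extra checks are harmless but not needed for liveness. Termination only requires that correct validators start participating, so abandonment cannot interfere. A second pending vector for the same slot could not block appending either.
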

\begin{proof}
Fix any slot $s$ with $s.\fdeadline - \Delta \geq \mathrm{GST} + \mathcal{R}$.
By the $\mathcal{R}$-recovery property of the orchestrator $\mathcal{O}$, every correct validator eventually opens $s$ (line~\ref{line:upon-open}).
By the monotonicity property of $\mathcal{O}$, before opening $s$, every correct validator has already resolved --- opened or (implicitly) skipped --- every slot $s'$ with $s'.\fnumber < s.\fnumber$.
Moreover, by the totality property of $\mathcal{O}$, for every slot $s^{\star}$ with $s^{\star}.\fnumber \in [1, s.\fnumber]$, either (1) every correct validator opens $s^{\star}$ and begins participating in the slot consensus $\mathcal{S}[s^{\star}]$ (line~\ref{line:participate}), or (2) no correct validator opens $s^{\star}$, i.e., every correct validator skips it (line~\ref{line:implicit-skip}).
In case (1), since every correct validator starts participating in $\mathcal{S}[s^{\star}]$, the termination property of $\mathcal{S}[s^{\star}]$ guarantees that every correct validator eventually finalizes a proposal vector for $s^{\star}$ (line~\ref{line:upon-finalize}).
Inductively, starting from slot number $1$, each correct validator resolves slot $s^{\star}$ (by either skipping it or finalizing and appending a proposal vector for it) before proceeding to slot $s^{\star} + 1$: once slot $s^{\star}$ is resolved, the $\mathsf{ready\_to\_append}$ condition (line~\ref{line:func-ready-to-append-return}) for a proposal vector associated with slot $s^{\star} + 1$ is satisfied, and the next proposal vector can be appended (line~\ref{line:append}).
Applying this argument inductively up to slot number $s.\fnumber - 1$, every correct validator eventually resolves all preceding slots, at which point the $\mathsf{ready\_to\_append}$ condition for a proposal vector associated with slot $s$ is satisfied.
Therefore, every correct validator eventually appends a proposal vector $V$ with $V.\fslot = s$ (line~\ref{line:append}), establishing $\mathcal{R}$-liveness. \qed
\end{proof}

\noindent As noted earlier, safety and liveness are standard properties, not specific to the MCP problem.
The fact that the extreme-pipelining framework satisfies both underscores its applicability beyond the MCP setting.
We now turn to the MCP-specific properties.

\paragraph{Censorship resistance.}
We next prove the $\mathcal{R}$-censorship resistance property.

\begin{lemma} [Censorship resistance]
\Cref{algorithm:cadence} satisfies $\mathcal{R}$-censorship resistance.
\end{lemma}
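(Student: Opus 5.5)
The plan is to reduce $\mathcal{R}$-censorship resistance to three ingredients already in hand: the $\mathcal{R}$-recovery property of $\mathcal{O}$, the proposal inclusion property of slot consensus, and the appending argument from the proof of \Cref{lemma:cadence-liveness}.

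Fix a slot $s$ with $s.\fdeadline - \Delta \geq \mathrm{GST} + \mathcal{R}$, a correct validator $\proc_i$, and a correct proposer $\proc_j \in s.\fproposers$ whose proposal for $s$ is $P$.

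\textbf{Step 1: the proposer proposes on time.} By $\mathcal{R}$-recovery, every correct validator opens $s$, and does so exactly at time $s.\fdeadline - \Delta$. This applies in particular to $\proc_j$. When $\proc_j$ opens $s$, lines~\ref{line:participate}--\ref{line:propose} of \Cref{algorithm:cadence} make it participate in $\mathcal{S}[s]$ and invoke $\mathcal{S}[s].\mathsf{propose}(P)$. These are local steps of the same event handler, so $\proc_j$ proposes $P$ at time $s.\fdeadline - \Delta$. I expect this to be the only delicate point: the hypothesis of proposal inclusion asks for the proposal to be made precisely at the starting time. It holds because recovery gives the exact opening time and the handler proposes instantaneously.

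\textbf{Step 2: a vector for $s$ is finalized and includes $P$.} Since every correct validator opens $s$, every correct validator invokes $\mathsf{participate}()$ on $\mathcal{S}[s]$. Termination then gives that $\proc_i$ eventually finalizes some $V$ from $\mathcal{S}[s]$, and slot safety gives $V.\fslot = s$. The assumption $s.\fdeadline - \Delta \geq \mathrm{GST} + \mathcal{R} \geq \mathrm{GST}$ holds because $\mathcal{R}$ is a non-negative duration. Together with Step 1, the proposal inclusion property of $\mathcal{S}[s]$ applies and yields $V[\proc_j] = P$.

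\textbf{Step 3: $V$ enters the log.} When $\proc_i$ finalizes $V$, it places $V$ in $\mathit{pending}_i$. The inductive argument of \Cref{lemma:cadence-liveness} (via totality, monotonicity and termination) shows that every slot with number below $s.\fnumber$ is eventually resolved by $\proc_i$, either skipped or finalized and appended. At that point $\mathsf{ready\_to\_append}(V)$ holds, so $V$ is appended to $\mathit{log}_i$ at some time $t$. By agreement there is no other candidate vector for slot $s$ at $\proc_i$ that could be appended instead. This provides the $V$ and $t$ required by \Cref{def:censorship-resistance}.
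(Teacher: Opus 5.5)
Your proof is correct and matches the paper's proof essentially step for step: recovery makes $\proc_j$ open $s$ and propose at exactly $s.\fdeadline - \Delta$, and proposal inclusion then gives $V[\proc_j] = P$, since $s.\fdeadline - \Delta \geq \mathrm{GST}$. The only difference is presentational: the paper gets the appended vector $V$ with $V.\fslot = s$ by citing \Cref{lemma:cadence-liveness} directly, while you re-derive it from termination, slot safety, and the appending induction.
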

\begin{proof}
Fix any slot $s$ with $s.\fdeadline - \Delta \geq \mathrm{GST} + \mathcal{R}$, any correct validator $\proc_i$, and any correct proposer $\proc_j \in s.\fproposers$; let $P_s$ denote $\proc_j$'s proposal for slot $s$.
To prove the lemma, we must exhibit a proposal vector $V$ and a time $t$ such that $V.\fslot = s$, $V[\proc_j] = P_s$, and $V \in \locallog(\proc_i, t)$.

By the $\mathcal{R}$-liveness of the extreme-pipelining framework (\Cref{lemma:cadence-liveness}), $\proc_i$ eventually finalizes (line~\ref{line:upon-finalize}) and appends (line~\ref{line:append}) a proposal vector $V$ with $V.\fslot = s$ to its local log, at some time $t$.
By the $\mathcal{R}$-recovery property of the orchestrator $\mathcal{O}$, $\proc_j$ opens slot $s$ at time $s.\fdeadline - \Delta$ (line~\ref{line:upon-open}) and, being a correct proposer, submits its proposal $P_s$ to $\mathcal{S}[s]$ at that same time (line~\ref{line:propose}).
Since $s.\fdeadline - \Delta \geq \mathrm{GST}$ and $\proc_j$ proposes by the slot's starting time, the proposal-inclusion property of $\mathcal{S}[s]$ ensures that the proposal vector $V$ finalized by $\proc_i$ satisfies $V[\proc_j] = P_s$, which concludes the proof. \qed
\end{proof}

\paragraph{Hiding.}
Next, we prove hiding, which reduces to the hiding of each slot consensus instance.

\begin{lemma} [Hiding]
\Cref{algorithm:cadence} satisfies hiding.
\end{lemma}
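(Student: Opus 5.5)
The plan is to build the global simulator for the framework as a composition of the per-slot simulators guaranteed by the hiding property of slot consensus (\Cref{mod:slotconsensus}). The ideal functionality of \Cref{def:hiding} is, by construction, the single-slot functionality $\mathcal{F}_s$ applied independently to every slot. So for each slot $s$ we take the simulator $\mathcal{S}_s$ that hiding of $\mathcal{S}[s]$ provides, and let the global simulator $\mathcal{S}^\star$ run all of them in parallel. $\mathcal{S}^\star$ additionally simulates, in the clear, everything outside the slot consensus instances:
\begin{itemize}
\item the orchestrator $\mathcal{O}$;
\item the $\mathsf{open}$/$\mathsf{complete}$ wiring of \Cref{algorithm:cadence};
\item the pending buffer and the slot-ordered appending to the logs.
\end{itemize}
None of these components ever touches proposal contents. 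The orchestrator only consumes $\mathsf{complete}$ events, and these depend only on \emph{when} finalizations occur, which $\mathcal{S}_s$ controls. When $\mathcal{S}_s$ instructs $\mathcal{F}_s$ to deliver the proposal vector to a validator, $\mathcal{S}^\star$ treats this as that validator's finalization of slot $s$. It then feeds $\mathsf{complete}(s)$ to the simulated $\mathcal{O}$ and appends to the log exactly as the framework would.

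The indistinguishability argument proceeds by a hybrid over slots. Since each execution involves only polynomially many slots, we order the slots and define hybrid $H_k$, in which slots $1,\dots,k$ run in the ideal world (through $\mathcal{F}_s$ and $\mathcal{S}_s$) and all later slots run the real $\mathcal{S}[s]$. Given a distinguisher between $H_{k-1}$ and $H_k$, we build an environment $\mathcal{Z}_k$ against the single-slot hiding of $\mathcal{S}[k]$. Internally, $\mathcal{Z}_k$ simulates all other slots (ideal ones via $\mathcal{F}_s,\mathcal{S}_s$, real ones by running the protocol), together with the orchestrator and the glue. It forwards slot $k$'s traffic to its external challenger. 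This is a valid single-slot environment because the other instances share no state with $\mathcal{S}[k]$, apart from the honest inputs $\mathsf{participate}$/$\mathsf{propose}$/$\mathsf{abandon}$ that the environment legitimately supplies, and the $\mathsf{finalize}$ outputs it legitimately observes. The existence of this distinguisher would then contradict hiding of slot $k$. Summing over slots gives a negligible total advantage (after the usual guessing of $k$).

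The main obstacle is justifying that the per-slot interface really isolates each instance. In particular, we must check that the orchestrator's behaviour, and hence which slots are opened and when honest validators propose, is a function only of information $\mathcal{Z}_k$ can compute from outputs it sees. I would argue this from the module interfaces: $\mathcal{O}$'s inputs are exactly $\mathsf{complete}$ calls, and these are triggered solely by $\mathsf{finalize}$ outputs. A second point to address is shared cryptographic setup across slots, such as per-slot threshold keys. Since the abstract specification treats each instance's hiding as holding in any environment, setup shared with other slots must be something the environment can simulate. I would note this as an implicit assumption of the module, made explicit at the framework level, rather than something the framework proof can discharge.
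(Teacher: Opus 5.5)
Your proposal is correct and takes the same route as the paper. The paper argues in one line that a correct proposer's proposal for slot $s$ enters only $\mathcal{S}[s]$ and that logs are built only from $\mathcal{S}[s]$'s finalizations, so hiding follows by composing the per-slot hiding guarantees; your global simulator and hybrid over slots spell out that composition step, and your shared-setup caveat is a point the paper leaves unstated.
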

\begin{proof}
A correct proposer submits its proposal for slot $s$ solely to $\mathcal{S}[s]$ (line~\ref{line:propose}), and a correct validator appends a proposal vector for slot $s$ only upon $\mathcal{S}[s]$ finalizing it (lines~\ref{line:upon-finalize} and~\ref{line:append}). Proposal contents are thus observable only through the per-slot instances, so hiding follows by composing the hiding of each. \qed
\end{proof}

\paragraph{Bounded concurrency.}
Recall from \Cref{subsection:memory} that we do not reason about memory directly, but use the number of slot consensus instances a validator actively participates in as a proxy for its memory consumption. This number stays bounded:

\begin{lemma} [Bounded concurrency]
\label{lemma:cadence-bounded-concurrency}
\Cref{algorithm:cadence} satisfies $\mathcal{B}$-bounded concurrency.
That is, for every correct validator $\proc_i$ and every time $t$, at most $\mathcal{B}$ slot consensus instances $\mathcal{S}[s]$ exist such that $\proc_i$ has started participating in $\mathcal{S}[s]$ by time $t$ but has not yet abandoned $\mathcal{S}[s]$ by time $t$.
\end{lemma}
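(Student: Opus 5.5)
The plan is to reduce bounded concurrency directly to the $\mathcal{B}$-boundedness of the orchestrator $\mathcal{O}$, by identifying the set of instances $\proc_i$ is actively participating in with the set of slots it has opened but not yet completed. If $\mathcal{B} = \infty$ there is nothing to prove, so assume $\mathcal{B}$ is finite. Fix a correct validator $\proc_i$ and a time $t$.

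First I would establish the identification from the pseudocode of \Cref{algorithm:cadence}. Validator $\proc_i$ invokes $\mathcal{S}[s].\mathsf{participate}()$ only at line~\ref{line:participate}, which is executed only upon $\mathcal{O}.\mathsf{open}(s)$. By the integrity property, this happens at most once per slot. Likewise, $\mathcal{S}[s].\mathsf{abandon}()$ is invoked only at line~\ref{line:abandon}, in the same handler (triggered by $\mathsf{finalize}$ for some $s \in \mathit{opened}_i$) that invokes $\mathcal{O}.\mathsf{complete}(s)$ at line~\ref{line:complete}. Moreover, $\mathsf{complete}(s)$ is invoked nowhere else. Since the handler runs atomically, for every slot $s$ the following two statements are equivalent: $\proc_i$ has started but not abandoned $\mathcal{S}[s]$ by time $t$; and $\proc_i$ has opened $s$ but not completed $s$ by time $t$. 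One subtlety to address is a repeated $\mathsf{finalize}$ for the same slot. Agreement constrains each validator individually, so the same $V$ is finalized each time; a repeat firing only re-invokes $\mathsf{complete}$ and $\mathsf{abandon}$, which does not affect the equivalence.

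Next I would apply $\mathcal{B}$-boundedness. Let $s_1, \ldots, s_k$ be the slots $\proc_i$ has opened by time $t$, ordered by slot number. By $\mathcal{B}$-boundedness, every $s_j$ with $j \le k - \mathcal{B}$ has been completed by time $t$. Hence the opened-but-uncompleted slots lie among $s_{\max(1,k-\mathcal{B}+1)}, \ldots, s_k$, of which there are at most $\mathcal{B}$. By the identification above, at most $\mathcal{B}$ instances are active at time $t$.

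The main obstacle is the first step: arguing carefully that opening and participating, and likewise completing and abandoning, are in exact correspondence. The second step is immediate from the definition of $\mathcal{B}$-boundedness.
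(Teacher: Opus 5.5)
Your proposal is correct and follows the paper's route: you identify the instances $\proc_i$ is actively participating in with the slots it has opened but not yet completed, since $\mathsf{participate}()$ is invoked only in the $\mathsf{open}$ handler and $\mathsf{abandon}()$ only alongside $\mathsf{complete}$. You then read the bound directly off the orchestrator's $\mathcal{B}$-boundedness. Your additional care about integrity (one opening per slot), atomicity of the finalize handler, and repeated finalizations just makes explicit what the paper's short proof leaves implicit.
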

\begin{proof}
For any slot $s$, validator $\proc_i$ starts participating in $\mathcal{S}[s]$ upon opening $s$ (line~\ref{line:upon-open}--\ref{line:participate}), and abandons $\mathcal{S}[s]$ immediately after completing $s$ (line~\ref{line:abandon}).
Hence, $\mathcal{S}[s]$ is active for $\proc_i$ at time $t$ if and only if $\proc_i$ has opened $s$ but not yet completed it by time $t$.
The bound of $\mathcal{B}$ therefore follows directly from the $\mathcal{B}$-boundedness property of the orchestrator $\mathcal{O}$.
\hfill$\square$
\end{proof}


%% file: p2_chorus.tex
\section{\chorus: Our Slot Consensus for \cadence}
\label{section:slot_agreement}

This section presents \chorus, the protocol that instantiates the slot-consensus primitive of \cadence.
We introduced \chorus informally, and in some detail, in \Cref{section:chorus-overview}; here we give its full specification.

\subsection{Cryptographic Primitives} \label{appendix:crypto}

\chorus builds on several standard cryptographic primitives, whose notation we fix here.

\paragraph{Hash function.}
We assume a collision-resistant cryptographic hash function
\[
    H : \{0,1\}^* \rightarrow \{0,1\}^\lambda .
\]

\paragraph{Digital signatures.}
Each validator $\proc_i$ holds a signing keypair $(sk_i,pk_i)$. We use the following operations:
\begin{compactitem}
    \item $\Sign_i(m)$: On input message $m$, outputs a digital signature $\sigma$ computed using $sk_i$.
    \item $\Verify_{pk_i}(m,\sigma)$: Outputs $\top$ if $\sigma$ is a valid signature on $m$ under public key $pk_i$, and outputs $\bot$ otherwise.
\end{compactitem}
A set of signatures $\{\sigma_i\}_{i \in \mathcal{I}}$ on a common message $m$ can be aggregated into a short multi-signature $\Sigma$ verifiable against $\{pk_i\}_{i \in \mathcal{I}}$.

\paragraph{Erasure coding.}
The encrypted proposal is erasure-coded to enable distributed availability. 
Throughout the section, we rely on the following operations:
\begin{compactitem}
    \item $\Encode(c)$: On input ciphertext $c$, outputs $n$ fragments $(\cdata_1,\ldots,\cdata_n)$ such that any subset of at least $f+1$ fragments is sufficient to reconstruct $c$.

    \item $\Decode(\{\cdata_i\})$: On input a set of fragments $\{\cdata_i\}$ with $|\{\cdata_i\}| \ge f+1$, outputs the reconstructed ciphertext $c$, or $\bot$ if decoding fails.
\end{compactitem}

\paragraph{Merkle trees.}
Merkle trees are used to commit to encoded payload chunks. We assume the following interface:
\begin{compactitem}
    \item $\MerkleRoot(c_1,\ldots,c_n)$: Outputs the root $\roothash$ of the Merkle tree over leaves $c_1,\ldots,c_n$.
    \item $\MerkleProof(i)$: Outputs a Merkle authentication path $\pi_i$ for chunk $c_i$.
    \item $\VerifyMerkle(\roothash,c_i,\pi_i)$: Outputs $\top$ if $\pi_i$ is a valid proof that $c_i$ is a leaf in the Merkle tree with root $\roothash$, and outputs $\bot$ otherwise.
\end{compactitem}

\paragraph{Random-oracle convention.}
For the simulation argument in~\S\ref{appendix:encryption}, we model several hash functions as distinct random oracles, including the identity hash $H_{\mathsf{id}}$, the pad-derivation hash $H_{\mathsf{pad}}$, and the Merkle hash $H_{\mathsf{Merkle}}$.  The simulator programs $H_{\mathsf{pad}}$; it does not program $H_{\mathsf{Merkle}}$, but it inspects the adversary's Merkle-oracle queries to recover the leaves committed by a submitted Merkle root.  In an implementation, all such functions should be realized through a standard domain-separated hash API, using unambiguous encodings of all inputs.  Untagged calls to a raw hash function, or ad hoc concatenation of values before hashing, should be avoided.  The Merkle hash domain should be separate from the identity-hash and pad-derivation domains; ideally, the Merkle implementation should also distinguish leaf hashing from internal-node hashing.

\subsection{Proposal Encryption and Slot-Key Release} \label{appendix:encryption}

This subsection describes the cryptographic mechanism used to hide proposal contents until the slot's deadline.  The important point is that the threshold operation is not proposal-specific decryption.  Instead, validators release shares of the \emph{slot secret key}: the identity secret key for an identity derived from the slot~$\slot$, with standard domain separation for the protocol instance.  Once this slot key is reconstructed, anyone can open any well-formed proposal ciphertext for that slot.

\paragraph{Identity-based KEM with distributed extraction.}
We use an identity-based key encapsulation mechanism (IB-KEM) whose private-key extraction algorithm is implemented by a distributed private-key generator.  The ordinary IB-KEM syntax is
\[
    (\mathit{mpk},\mathit{msk}) \leftarrow \mathsf{Setup}(1^\lambda),
    \qquad
    \mathit{sk}_{\mathit{id}} \leftarrow \mathsf{Extract}(\mathit{msk},\mathit{id}),
\]
\[
    (Z,C_{\mathsf{kem}}) \leftarrow \mathsf{Encap}(\mathit{mpk},\mathit{id}),
    \qquad
    Z \leftarrow \mathsf{Decap}(\mathit{sk}_{\mathit{id}},\mathit{id},C_{\mathsf{kem}}).
\]
Here, $Z$ is the KEM's raw key and $C_{\mathsf{kem}}$ is the public encapsulation.  The raw key need not itself be a bit string; in the pairing-based instantiation below it is a target-group element.  Bit strings used by the protocol are derived by hashing $Z$ with appropriate domain separation.

The distributed version replaces only $\mathsf{Extract}$.  A $(t,n)$-distributed extraction mechanism consists of algorithms
\[
    (\mathit{mpk},\mathit{vk},\mathit{st}_1,\ldots,\mathit{st}_n)
        \leftarrow \mathsf{DSetup}(1^\lambda,n,t),
\]
\[
    \sigma_i \leftarrow \mathsf{ShareExtract}(\mathit{st}_i,\mathit{id}),
    \qquad
    b \leftarrow \mathsf{VerifyShare}(\mathit{mpk},\mathit{vk},\mathit{id},i,\sigma_i),
\]
\[
    \mathit{sk}_{\mathit{id}} \leftarrow
    \mathsf{CombineExtract}(\mathit{mpk},\mathit{vk},\mathit{id},\{(i,\sigma_i)\}_{i\in I}),
    \qquad |I|\ge t .
\]
After combining extraction shares to obtain $\mathit{sk}_{\mathit{id}}$, decapsulation is the ordinary non-threshold operation.  Thus extraction shares depend only on the identity $\mathit{id}$, not on any particular proposal ciphertext.  Correctness requires that honestly generated and verified shares combine to the same $\mathit{sk}_{\mathit{id}}$ as the ideal extraction algorithm.  Robustness requires that adversarial shares that pass verification contribute correctly, so that any verified qualified set either combines to the unique correct slot key or is rejected.

The security property we need is weaker than full KEM key indistinguishability.  We require \emph{unpredictability}: before enough valid extraction shares have been released for an identity $\mathit{id}$, no efficient adversary corrupting fewer than the threshold number of extraction servers can compute the raw key $Z$ belonging to a fresh encapsulation under $\mathit{id}$, except with negligible probability.  Equivalently, if a later layer derives a pad as a random-oracle value $H(Z,\ldots)$, then any adversary that learns this pad before release must have queried the random oracle at the hidden point involving~$Z$.

A concrete instantiation is obtained from the Boneh--Franklin/BLS pairing construction.  Let $G$ be a generator, let $a$ be the master secret, and let the master public key contain $aG$.  For identity $\mathit{id}$, let $Q_{\mathit{id}}=H_{\mathsf{id}}(\mathit{id})$ and $\mathit{sk}_{\mathit{id}}=aQ_{\mathit{id}}$.  A distributed private-key generator can Shamir-share $a$: server $i$ returns the verified extraction share $\sigma_i=a_iQ_{\mathit{id}}$, and interpolation reconstructs $aQ_{\mathit{id}}$.  Encapsulation chooses $r$ at random, publishes $C_{\mathsf{kem}}=U=rG$, and computes
\[
    Z=e(Q_{\mathit{id}},aG)^r .
\]
Decapsulation with the reconstructed identity secret key computes
\[
    Z=e(\mathit{sk}_{\mathit{id}},U)=e(aQ_{\mathit{id}},rG).
\]
Unpredictability of this raw key follows, in the random-oracle model for the identity hash, from the standard computational-BDH assumption and the usual Boneh--Franklin proof strategy.  The protocol never uses the target-group element $Z$ directly as a symmetric key; it hashes $Z$ with appropriate domain separation.

\paragraph{Encrypting proposals.}
For a slot $\slot$, let $\mathit{id}_\slot$ be the canonical encoding of $\slot$ as an IB-KEM identity.  A proposer $P$ encapsulates to this identity, obtaining
\[
    (Z,C_{\mathsf{kem}}) \leftarrow \mathsf{Encap}(\mathit{mpk},\mathit{id}_\slot).
\]
It then derives a pad from the raw key, the slot, the proposer identity, and any additional public context bound to this encryption:
\[
    K = H_{\mathsf{pad}}(Z,\slot,P,\mathsf{ctx}).
\]
The inputs $Z$, $\slot$, and $P$ are essential.  The value $\mathsf{ctx}$ denotes any remaining public context for this encryption, and may include $C_{\mathsf{kem}}$, protocol-version information, and deployment or chain identifiers.
If the proposal payload is $M$, the encrypted payload is the xor mask
\[
    C_{\mathsf{pay}} = K \oplus M,
\]
where $H_{\mathsf{pad}}$ is viewed as a hash-to-bytes or XOF-style oracle whose output is expanded as needed to match the payload length.  We deliberately use this xor-with-random-oracle form, rather than an authenticated symmetric encryption scheme, because it gives the simulator the equivocation needed in the hiding proof: an honestly generated ciphertext can initially be a random bit string and later be made to open to the ideal-world proposal by programming $H_{\mathsf{pad}}$ at the appropriate point.

The object encoded for data availability contains at least $(C_{\mathsf{kem}},C_{\mathsf{pay}})$ and the necessary public context.  This object is erasure-coded, the fragments are committed by a Merkle tree, and the proposer signs the resulting Merkle root.  Thus the submitted encrypted proposal is fixed by a signed Merkle root, not by a plaintext.  A proposal is considered submitted before the deadline only if at least one honest validator receives, before $\slot.\fdeadline$, a valid signature by the claimed proposer on the corresponding Merkle root.  Signature unforgeability ensures that, except with negligible probability, the adversary cannot create a submitted proposal that appears to come from an honest proposer.

\paragraph{Releasing the slot key.}
At the slot's deadline, validators publish extraction shares
\[
    \sigma_i \leftarrow \mathsf{ShareExtract}(\mathit{st}_i,\mathit{id}_\slot)
\]
for the slot identity.  Once enough verified shares have been collected, anyone computes
\[
    \mathit{sk}_{\mathit{id}_\slot} \leftarrow
    \mathsf{CombineExtract}(\mathit{mpk},\mathit{vk},\mathit{id}_\slot,\{(i,\sigma_i)\}_{i\in I}).
\]
The same reconstructed slot key opens all well-formed proposal encapsulations for slot~$\slot$: for each decoded proposal ciphertext, parties compute
\[
    Z \leftarrow \mathsf{Decap}(\mathit{sk}_{\mathit{id}_\slot},\mathit{id}_\slot,C_{\mathsf{kem}}),
    \qquad
    M = C_{\mathsf{pay}} \oplus H_{\mathsf{pad}}(Z,\slot,P,\mathsf{ctx}).
\]
The released values are therefore extraction shares for the slot identity, not decryption shares tied to individual proposals.

\paragraph{Simulation argument.}
We sketch why this realizes the hiding property (\Cref{def:hiding}).  The proof is best viewed as an ordering argument at the deadline of a fixed slot~$\slot$; no infinitesimal notion of time is needed.  At the deadline transition, corrupt proposals that were already submitted are fixed first.  Only after those corrupt inputs have been determined does the simulator obtain the honest proposals from the ideal functionality.

Before the deadline, the simulator handles honest proposers as follows.  For each honest proposer, it chooses the public KEM encapsulation as in the real protocol, but it chooses the encrypted payload $C_{\mathsf{pay}}$ as a uniformly random bit string of the appropriate length.  It erasure-codes the resulting encrypted object, computes the Merkle root, signs the root as the honest proposer, and delivers fragments as prescribed by the protocol.  At this point the simulator does not yet know the honest plaintext proposal.

At the deadline transition, the simulator first identifies the corrupt submissions.  A corrupt submission is eligible only if some honest validator received, before $\slot.\fdeadline$, a valid signature on a Merkle root under the verification key of the claimed corrupt proposer.  The simulator then walks back from this root through the transcript of queries to the Merkle random oracle $H_{\mathsf{Merkle}}$.  It must recover all leaves under the signed root.  These leaves must be valid erasure-code fragments and must form a complete codeword encoding a well-formed encrypted proposal.  If any leaves are missing, if the leaves do not form a codeword, or if the decoded object is malformed, the simulator does not treat the root as a properly submitted proposal.  The simulator uses the Merkle oracle only in this inspection mode; unlike $H_{\mathsf{pad}}$, it is not programmed.

For each well-formed corrupt ciphertext submitted by a corrupt proposer $P'$, the simulator obtains the released slot key $\mathit{sk}_{\mathit{id}_\slot}$, decapsulates $C_{\mathsf{kem}}$, and obtains the raw key~$Z$.  It then considers the pad-oracle point
\[
    q=(Z,\slot,P',\mathsf{ctx}).
\]
If $H_{\mathsf{pad}}(q)$ has already been defined, the simulator uses that existing value.  If it has not been defined, the simulator programs it to a fresh random string.  In either case it sets
\[
    M' = C_{\mathsf{pay}} \oplus H_{\mathsf{pad}}(q)
\]
and submits this $M'$ to the ideal functionality as the corrupt proposer's input for slot~$\slot$.  This is the only possible behavior consistent with the already fixed ciphertext and the random-oracle transcript: if the adversary queried the point, the answer is fixed; if it did not, a fresh random answer has exactly the distribution of a random oracle answer.

Only after all such corrupt inputs have been submitted does the simulator receive the honest proposals from the ideal functionality.  For each honest proposer $P$, the simulator now knows both the previously fixed random ciphertext $C_{\mathsf{pay}}$ and the desired plaintext $M$.  It programs the corresponding pad-oracle point by setting
\[
    H_{\mathsf{pad}}(Z,\slot,P,\mathsf{ctx})
        := C_{\mathsf{pay}} \oplus M .
\]
This makes the already disseminated encrypted payload decrypt to the honest proposal supplied by the ideal functionality.

The simulation can fail only if the simulator is forced to program an oracle point that was already defined inconsistently, or if the adversary creates an ambiguity in what was submitted.  For corrupt proposals, the simulator explicitly avoids inconsistency by using any existing pad-oracle value.  For honest proposals, unpredictability of the IB-KEM raw key prevents the adversary from querying the honest pad-oracle point before the release transition, except with negligible probability; the deadline transition is ordered so that there is no intervening adversarial oracle query between release of the slot key and the simulator's programming of honest points.  Including the proposer identity in the pad-oracle input separates honest-proposer points from corrupt-proposer points, and signature unforgeability prevents the adversary from submitting roots under honest proposer identities.  Finally, the Merkle/random-oracle and erasure-code checks ensure that a signed root determines a unique well-formed encrypted proposal; otherwise the root is treated as malformed.  Conditioned on the complement of these bad events, the simulated transcript is distributed as in the real protocol, while corrupt proposals are fixed before honest proposal contents are revealed.

\subsection{Protocol} \label{subsection:chorus-protocol-overview}

We now present the \chorus protocol. It is given as several algorithms, split across the subsubsections below only to separate its concerns; they form a single protocol and should be read as one. 
The \chorus protocol proceeds in three phases: proposers disseminate their proposals (Phase~I), validators broadcast a single proposal vote at the slot's deadline (or when they enter the slot, if later) (Phase~II), and a \metablock's entries are committed either via the fast path or via the fallback path (Phase~III). \Cref{alg:proposer-dissemination} implements Phase~I and runs at proposers only. \Cref{alg:voting} implements Phase~II at every validator. Phase~III is implemented by two modules running in parallel: \Cref{alg:fast-path-certification} aggregates proposal votes into $\FastQC$s and runs the two-round fast commit on a fast \metablock, while \Cref{alg:fallback} drives the fallback path when a fast \metablock cannot be formed: it harvests the strongest available evidence per proposer into a fallback \metablock, feeds it to a \emph{multi-valued validated Byzantine agreement} (MVBA, \Cref{mod:mvba}) --- a primitive in which validators agree on a single externally-valid value among those they propose --- and, after one final round of commit votes, commits the entries of the MVBA output. Cutting across all phases, the data-availability module (\Cref{alg:da}) provides chunk validation, ciphertext reconstruction, decryption, and proposal recovery. 

The different pseudocodes communicate through the shared local state declared in their ``\textbf{Local variables}'' blocks (e.g., $\Ev(\pid)$ and $\mathit{pathVote}$ are shared between \Cref{alg:fast-path-certification} and \Cref{alg:fallback}).
Throughout the pseudocode, we say that $\proc_i$ is \emph{actively participating} in $\slot$ if it has invoked the slot-consensus input $\mathsf{participate}()$ but has not yet invoked $\mathsf{abandon}()$.
As a standing convention, \emph{no rule of \chorus's pseudocode that sends a message is activated unless $\proc_i$ is actively participating}; this covers rules that send directly as well as rules that trigger sending by invoking a sub-protocol input (in particular, the rules proposing to $\mathsf{MVBA}[\slot]$). We state the convention once here, and it applies to every such rule of \Cref{alg:proposer-dissemination,alg:voting,alg:fast-path-certification,alg:fallback,alg:da} without being repeated in the individual rules.
Rules that merely process received messages and update local state are exempt: they may fire even while $\proc_i$ is not actively participating (so, e.g., chunks are ingested on arrival).
A message whose rule is blocked by this convention is not lost: it remains in $\proc_i$'s buffer, and the rule may fire once $\proc_i$ starts participating.

\subsubsection{Dissemination \& Voting} \label{subsection:dissemination}

Each slot $s$ is bounded by a deadline $s.\fdeadline$. Each honest proposer in $s.\fproposers$ (1) constructs a proposal, (2) encrypts it, (3) erasure-codes the ciphertext into \emph{chunks} (one per validator), and (4) disseminates them by time $s.\fdeadline - \Delta$ (in the style of AVID~\cite{AVID} and DispersedSimplex~\cite{DispersedSimplex})\footnote{In practice, the protocol instantiates this step using
Deterministic RaptorCast~\cite{mip10}.}. Under synchrony, every honest validator receives a chunk from every honest proposer by the deadline. Validators record which chunks arrived in time and, at the deadline (or when they enter the slot, if later), broadcast a \emph{proposal vote}: a single message containing a signed positive or negative entry per proposer.

\paragraph{Proposal.}
A \emph{proposal} by proposer $\proc_\pid$ in slot $\slot$ is
\[
    P = \langle s, \pid, \payload, \sigma \rangle,
\]
where $\slot$ is the slot, $\pid$ is the proposer index, $\payload$ is the proposal's transaction payload, and $\sigma$ is $\proc_\pid$'s signature over $\langle s, \pid, \Hash(\payload) \rangle$. The triple $\langle s, \pid, \payload \rangle$ is exactly a \emph{proposal} in the sense of the problem definition (\Cref{subsection:mcp-preliminaries}); we simply spell out its fields here rather than carry it as a single object.

To preserve hiding, the proposer does not disseminate $P$ directly: it encrypts $P$ to produce an \emph{encrypted proposal} $EP$, decryptable only once $f+1$ validators have released their decryption shares for the slot (after the deadline). We disseminate $EP$ in place of $P$.

\paragraph{Chunks.}
To distribute the dissemination load, the proposer erasure-codes the encrypted proposal $EP$ into $n$ fragments $\cdata_1, \ldots, \cdata_n$ (one per validator) and builds a Merkle tree over the leaves $\Hash(1, \cdata_1), \ldots, \Hash(n, \cdata_n)$, yielding a root~$\mroot$. The reconstruction threshold is chosen so that any $f+1$ fragments suffice to reconstruct $EP$. The proposer forms a \emph{chunk header} common to all recipients,
\[
    \mathit{ChunkHeader} = \langle s, \pid, \mroot, \sigma \rangle,
\]
where $\sigma$ is $\proc_\pid$'s signature over $(s, \pid, \mroot)$, and sends a \emph{chunk}
\[
    \langle \textsc{Chunk}, \mathit{ChunkHeader}, r, \cdata_r, \pi_r \rangle
\]
to each validator $\proc_r$, where $\pi_r$ is a Merkle proof for $\Hash(r, \cdata_r)$ under root $\mroot$. We refer to this message as the chunk \emph{assigned to} $\proc_r$.

\paragraph{Entries.}
Each validator $\proc_i$ collects incoming chunks until the slot's deadline $s.\fdeadline$. For each proposer $\proc_\pid$, $\proc_i$ records an \emph{entry}
\[
    E = \langle s, \pid, \rhobot \rangle,
\]
where $\rhobot \in \mathsf{MerkleRoot} \cup \{\bot\}$ is the Merkle root of the chunk \emph{assigned to} $\proc_i$ that it received from $\proc_\pid$ (a \emph{positive entry}), or $\bot$ if no valid chunk arrived by the deadline (a \emph{negative entry}). Validators vote on entries by signing them.

\paragraph{Proposal vote.}
At the deadline $s.\fdeadline$ (or when $\proc_i$ enters the slot, if later), each validator $\proc_i$ reports what it received for the slot by broadcasting a single \emph{proposal vote}. This message does three things at once: it carries $\proc_i$'s signed positive/negative entry for each proposer, supplies the chunks backing the positive entries, and releases $\proc_i$'s decryption share for the slot. Concretely:
\[
    \langle \textsc{Vote}, s, \mathit{SignedEntries}, \mathit{Chunks}, \mathit{DecryptShare} \rangle,
\]
where $\mathit{SignedEntries}$ contains a \emph{signed entry} $\langle E, \sigma\rangle$ per proposer in $s.\fproposers$ (with $\proc_i$'s signature $\sigma$ on $\langle\votetag, E\rangle$), $\mathit{Chunks}$ contains the chunk messages $\proc_i$ received and validated for each positive entry, and $\mathit{DecryptShare}$ is $\proc_i$'s share toward decrypting slot-$\slot$ proposals.

\paragraph{$\FastQC$.}
A \emph{fast-path quorum certificate} ($\FastQC$) for an entry $E$ aggregates $2f+1$ signatures over $\langle\votetag, E\rangle$:
\[
    \FastQC = \langle \votetag, E, \Sigma \rangle.
\]
A $\FastQC$ for a positive entry $E = \langle s, \pid, \mroot\rangle$ certifies that $2f+1$ validators received and validated a chunk from $\proc_\pid$ under root $\mroot$, so at least $f+1$ honest validators hold chunks; this is what guarantees data availability for $\proc_\pid$'s proposal.

\input{alg_proposer}
\input{alg_voting}

\subsubsection{Fast Path.} \label{subsection:fast_path}

The fast path commits the slot in two voting rounds after the deadline, without needing a leader to drive progress, provided a $\FastQC$ can be formed for every proposer (a \emph{fast \metablock}). Recall that a $\FastQC$ aggregates $2f+1$ votes on the same entry. If a proposer disseminates early enough, every honest validator receives its assigned chunk by the deadline and records the same positive entry, so $2f+1$ matching positive votes are cast. If a proposer does not propose at all, every honest validator records a negative entry, so $2f+1$ negative votes are cast. If instead a proposer disseminates too late, only some validators receive a chunk by the deadline (\emph{partial dissemination}): these vote positively and the rest negatively, so there may be neither $2f+1$ positive nor $2f+1$ negative votes, and no $\FastQC$ forms. In that case, validators take the fallback path.

\paragraph{Fast \metablock.}
A \emph{fast \metablock} for slot $\slot$ is
\[
    B = \langle CE_{j_1}, \ldots, CE_{j_k} \rangle,
\]
where each $CE_\pid$ is a $\FastQC$ for proposer $\proc_\pid \in s.\fproposers$. We write $\entries(B)$ for the sequence of entries $E$ embedded in the $\FastQC$s.

\paragraph{Fast commit vote and certificate.}
Once a validator $\proc_i$ has a fast \metablock $B$, it broadcasts a \emph{fast commit vote}
\[
    \langle \CommitVote, s, \entries(B), \sigma_i \rangle,
\]
where $\sigma_i$ signs $\langle \CommitMsg, s, \entries(B)\rangle$. A $(2f{+}1)$-aggregate of such votes forms a \emph{fast commit certificate}
\[
    \commitQC = \langle \CommitMsg, s, \entries(B), \Sigma \rangle,
\]
which finalizes the entries $\entries(B)$: any validator that holds the $\commitQC$ recovers the underlying proposals via $\modcall{alg:da}.\recoverProposals(\entries(B))$ and commits them. The commit fixes the entries, not the certificates that $B$ carries: across validators a slot's entries agree, whereas the certificates justifying them may differ.

\paragraph{Speculative commit.}
Once a validator has $\FastQC$s for every proposer, it can speculatively commit the proposal vector before waiting for the $\commitQC$.

\input{alg_fast}

\subsubsection{Fallback Path.} \label{subsection:fallback_path}

If a $\FastQC$ fails to form for some proposer, validators cannot assemble a fast \metablock and the fast path stalls. A $\FastQC$ fails to form when the votes split between positive and negative, so that neither a positive nor a negative quorum forms (partial dissemination). This can result from the proposer disseminating too late or to too few validators, or from network asynchrony. A $\FastQC$ can also fail to form under equivocation, when the proposer sends different proposals so that no single entry gathers a quorum. From time $s.\fdeadline + \Delta$ onwards, each validator that has received at least $2f+1$ proposal votes but has not been able to form a fast \metablock casts a \emph{fallback vote} in place of a fast commit vote; the two are mutually exclusive, so each validator casts at most one. On the fallback path, the validator builds a \emph{fallback \metablock} by collecting, and where necessary contributing to, per-proposer evidence, by decreasing strength: a $\FastQC$ when one exists; otherwise, in adversarial cases, an $\EquivCert$ that pins a proposer to equivocation; otherwise a $\FallbackQC$ (a weaker certificate that the validators help assemble in this phase). It then feeds the fallback \metablock to a Multi-Value Byzantine Agreement (MVBA) primitive. The \metablock output by the MVBA is confirmed in one final round of \emph{fallback commit votes} --- which doubles as a data-availability certificate --- and becomes the \metablock committed by the slot consensus.

\paragraph{Fallback signed entries \& $\FallbackQC$.}
A \emph{fallback signed entry} from validator $\proc_r$ for proposer $\proc_\pid$ is
\[
    \langle E, \sigma_v\rangle \text{ (negative, } E.\rhobot = \bot\text{)} \quad \text{or} \quad \langle E, \sigma_v, \sigma_p\rangle \text{ (positive, } E.\rhobot = \mroot\text{)},
\]
where $\sigma_v$ is $\proc_r$'s signature over $\langle\textsc{fb}, E\rangle$ and, for the positive case, $\sigma_p$ is the proposer's signature on $\langle s, \pid, \mroot\rangle$ carried over from the chunk header. A validator constructs its own fallback signed entry for each proposer for which it holds no evidence yet: it sets $\rhobot \gets \mroot$ if it has accumulated $f+1$ valid fast-path positive votes for the same root $\mroot$ and the data is locally available ($\modcall{alg:da}.\isDecoded(\mroot)$ returns true), and $\rhobot \gets \bot$ otherwise.

A \emph{fallback-path quorum certificate} ($\FallbackQC$) for entry $E$ aggregates $f+1$ signatures over $\langle\textsc{fb}, E\rangle$:
\[
    \FallbackQC = \langle \textsc{fb}, E, \Sigma \rangle.
\]
The $\votetag$/$\textsc{fb}$ tags domain-separate the universal proposal-vote signatures from the fallback-only signatures.

\paragraph{$\EquivCert$.}
An \emph{equivocation certificate} for proposer $\proc_\pid$ is
\[
    \EquivCert = \langle \textsc{equiv}, s, j, \mroot_1, \sigma_{p,1}, \mroot_2, \sigma_{p,2} \rangle,
\]
where $\mroot_1 \neq \mroot_2$ and each $\sigma_{p,k}$ is $\proc_\pid$'s signature over $\langle s, j, \mroot_k\rangle$. The $\EquivCert$ exists for liveness: a Byzantine proposer can split its chunks across multiple Merkle roots so that no single fallback signed entry attracts $f+1$ matching votes, leaving a $\FallbackQC$ unformable; pinning the proposer to two contradictory signed roots still certifies its exclusion.

\paragraph{Fallback vote message.}
Once time has reached $s.\fdeadline + \Delta$ and $\proc_r$ has received at least $2f+1$ proposal votes (the inputs needed to assemble the message), and provided it has not entered the fast commit path, $\proc_r$ broadcasts a \emph{fallback vote}:
\[
    \langle \textsc{FallbackVote}, s, \sigma_r, \Ev_{j_1}, \ldots, \Ev_{j_k} \rangle,
\]
where $\sigma_r = \mathsf{Sign}_r(\langle\textsc{fallback}, s\rangle)$ is $\proc_r$'s commitment to entering the fallback path, and each $\Ev_\pid$ is a $\FastQC$ for $\proc_\pid$ if $\proc_r$ holds one, and $\proc_r$'s own fallback signed entry otherwise.

\paragraph{Fallback certificate.}
Once $2f+1$ fallback votes are collected, their $\sigma_r$ fields aggregate into a \emph{fallback certificate}
\[
    \FBCert_\slot = \langle \textsc{fallback}, s, \Sigma \rangle,
\]
certifying that the fast path can no longer commit slot $\slot$.

\paragraph{Fallback \metablock.}
From $2f+1$ fallback votes, each validator builds a \emph{fallback \metablock}
\[
    B = \langle CE_{j_1}, \ldots, CE_{j_k}, \FBCert_\slot \rangle,
\]
where each $CE_\pid$ is a \emph{certified entry} for $\proc_\pid$, chosen by decreasing strength: a $\FastQC$ when one is available; otherwise an $\EquivCert$, formed from two positive fallback signed entries with conflicting roots; otherwise a $\FallbackQC$, aggregated from $f+1$ matching fallback signed entries. A counting argument over the $2f+1$ votes shows that one of the three always exists: if all evidence for $\proc_\pid$ consists of bare signed entries and no two positive entries conflict, the entries span at most two distinct values --- a single root and $\bot$ --- so one of them gathers $f+1$ matching copies. Extending the convention from \Cref{subsection:fast_path}, $\entries(B)$ contributes the embedded $E$ for a $\FastQC$ or $\FallbackQC$, and the negative entry $\langle s, \pid, \bot\rangle$ for an $\EquivCert$ for $\proc_\pid$ (equivocation excludes the proposer); $\FBCert_\slot$ is not part of $\entries(B)$. We call a \metablock \emph{valid} if it is either a fast \metablock or a fallback \metablock in this sense.

\paragraph{Invocation to the MVBA primitive.}
The validator then proposes its meta-block $B$ to $\mathsf{MVBA}[\slot]$, the slot's multi-valued Byzantine agreement instance, invoking it at most once; $B$ is valid by construction.
Proposing doubles as starting to participate in the instance; a validator stops participating by invoking $\mathsf{abandon}()$, which it does upon finalizing the slot.
The MVBA we use (\Cref{mod:mvba}) is standard: all correct validators decide the same \metablock (agreement), at most once (integrity), any decided \metablock is valid (external validity), every correct validator decides within $\ell_{\mathsf{MVBA}}$ time once all correct validators have proposed, provided none abandons before that time (termination), and, mirroring the slot consensus itself, no correct validator sends any MVBA message outside its participation window (quiescence).

\input{p2_mvba}

\paragraph{Fallback commit.}
The MVBA's decision alone does not yet finalize the slot: for a $\FallbackQC$-backed entry of the decided \metablock $B'$, only \emph{one} honest validator is guaranteed to hold the underlying data. \chorus therefore follows the decision with one final round of commit votes, which doubles as an availability certificate. Upon deciding $B'$, a validator first waits, for every positive $\FallbackQC$-backed entry of $B'$, until it holds its own assigned chunk under that entry's root --- guaranteed to arrive, since an honest contributor to the $\FallbackQC$ has sent every validator its assigned chunk --- re-broadcasts those chunks, and only then broadcasts a signed \emph{fallback commit vote} for $\entries(B')$. An aggregate of $2f+1$ such votes forms the \emph{fallback commit certificate}, the fallback path's transferable commitment proof, mirroring the fast path's $\commitQC$. Its mere existence certifies availability: among its $2f+1$ signers, at least $f+1$ are correct, and each re-broadcast its (distinct) assigned chunk before signing, so the chunks needed to recover every entry of $B'$ are already in flight.
A validator finalizes upon receiving a valid fallback commit certificate, recovering the underlying proposals via $\recoverProposals(\entries(B'))$ and committing them, and re-broadcasts the certificate. Even after invoking the MVBA, a validator still finalizes on the fast path if a $\commitQC$ arrives; whenever it finalizes, by either path, it broadcasts the commitment proof and concludes all activity for the slot, abandoning its MVBA invocation ($\mathsf{MVBA}[\slot].\mathsf{abandon}()$).

\input{alg_fallback}

\subsubsection{Data Availability.} \label{subsection:da}

The data-availability (DA) module is responsible for recovering the proposals originally embedded by each proposer in chunks. 

\paragraph{Chunk reception.}
On receiving a chunk, the DA module verifies the proposer's signature and the chunk's Merkle proof against its root $\mroot$. Two things then follow. First, if the chunk is the one assigned to this validator, the module notifies the voting module, so the validator can record a positive entry for $\mroot$ if the chunk arrives by the deadline (the chunk itself travels onward inside the validator's proposal vote and, on the fallback path, in the commit round). Second, once $f+1$ valid chunks for the same root $\mroot$ accumulate, the module decodes them into a candidate encrypted proposal $c$ and re-encodes $c$, comparing the resulting root against $\mroot$: if they match, $c$ is confirmed; otherwise $\mroot$ is marked invalid, catching a proposer that disseminated inconsistent chunks under one root.

\paragraph{Decryption.}
Because proposals are disseminated encrypted to preserve hiding, reconstructing $c$ does not yet reveal the proposal. Each validator releases a decryption share for the slot together with its vote; once the module holds $f+1$ valid shares, it decrypts $c$ into the proposal, or marks $\mroot$ invalid if decryption fails.

\paragraph{Proposal recovery.}
At commit time (speculative or formal), a validator calls $\recoverProposals(\entries(B))$ on the DA module. The call blocks until every positive entry $\langle s, \pid, \mroot\rangle \in \entries(B)$ has either been recovered to a proposal or marked invalid, then returns the recovered proposals (ordering their transactions into a block is a subsequent, application-level step). Both commit routes guarantee that this call eventually returns: a $\FastQC$ for a positive entry implies that $f+1$ honest validators contributed chunks (enough to decode by erasure coding); a $\FallbackQC$ for a positive entry appears only behind a fallback commit certificate, whose $f+1$ honest signers each re-broadcast their assigned chunk before commit-voting (\modcall{alg:fallback}) --- again enough to decode. These two facts are the formal counterpart of the \emph{availability} of
\S\ref{subsubsection:ObjectofAgreement}. Its companion \emph{consistency} corresponds
to the determinism of the re-encode-and-compare check in \textsc{tryIngestChunk}
(\modcall{alg:da}, line~\ref{line:da-reencode}), where the verdict on a root, whether to accept and decode it
or mark it invalid, depends only on the chunks committed under it and not on
which $f+1$ a validator happens to collect. Every correct validator therefore
reaches the same verdict, and on acceptance decodes the same ciphertext.

\input{alg_da}

\subsection{Proof} \label{subsection:proof_sketches}

This subsection proves that \chorus, when run within \cadence, realizes the slot consensus primitive (\Cref{mod:slotconsensus}); we establish each of its properties --- quiescence, hiding, slot safety, agreement, proposal inclusion, and termination --- in turn.
All properties hold unconditionally, with one exception: termination is proven under a precondition on how validators start participating.
As we show later, this precondition is exactly what \conductor provides when it runs alongside \chorus within \cadence.

\paragraph{Quiescence.}
We begin with the quiescence property, which is immediate from the pseudocode's standing convention.

\begin{lemma}[Quiescence]
\label{lemma:chorus-quiescence}
\chorus (\Cref{alg:proposer-dissemination,alg:voting,alg:fast-path-certification,alg:fallback,alg:da}) satisfies quiescence.
\end{lemma}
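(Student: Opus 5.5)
The plan is to derive quiescence directly from the standing convention of \Cref{subsection:chorus-protocol-overview}: no message-sending rule of \chorus fires unless $\proc_i$ is actively participating, meaning it has invoked $\mathsf{participate}()$ and has not yet invoked $\mathsf{abandon}()$. The proof then reduces to a case analysis over the ways a correct validator can cause a protocol message for slot $\slot$ to be sent. In each case we check that the triggering rule falls under the convention.

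The first step is to enumerate the rules of \Cref{alg:proposer-dissemination,alg:voting,alg:fast-path-certification,alg:fallback,alg:da} that send. They fall into three groups.
\begin{itemize}
\item Rules that send directly: chunk dissemination, the proposal vote together with its decryption share, the fast commit vote, the fallback vote, chunk re-broadcasts, the fallback commit vote, and re-broadcasts of commitment proofs.
\item Rules that send indirectly by invoking a sub-protocol input, most importantly the proposal to $\mathsf{MVBA}[\slot]$.
\item Rules that only process received messages and update local state, such as chunk ingestion, decoding, and evidence accumulation. These send nothing and are exempt.
\end{itemize}
For the first group, the convention forbids activation outside the participation window, so no such message is sent before $\mathsf{participate}()$ or after $\mathsf{abandon}()$. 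Buffered triggers that arrive early only fire once participation has started, and by the convention they never fire after abandonment.

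The second step handles messages that the MVBA sends on $\proc_i$'s behalf. These are not sent by \chorus's own rules, so we must argue separately that they stay inside the window. By the convention, $\proc_i$ proposes to $\mathsf{MVBA}[\slot]$ only while actively participating in the slot, and this proposal is what starts its participation in the MVBA. On finalizing by either path, $\proc_i$ invokes $\mathsf{MVBA}[\slot].\mathsf{abandon}()$. Within \cadence this happens together with the slot-level $\mathsf{abandon}()$ (line~\ref{line:abandon} of \Cref{algorithm:cadence}). The quiescence property of the MVBA (\Cref{mod:mvba}) then confines all MVBA messages to a sub-interval of $\proc_i$'s participation window in $\slot$.

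The main obstacle is this second step, specifically the alignment of the two windows at their ends. I would need to confirm that slot-level abandonment is never reached without also abandoning the MVBA. One path needs care: a validator that never proposed to the MVBA but finalizes through the fast path. It is never an MVBA participant, so it sends nothing there, and the claim holds vacuously. With that case checked, the union of the two groups covers every message, and the lemma follows.
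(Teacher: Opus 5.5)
Your decomposition matches the paper's. \chorus's own sending rules are confined by the standing convention, and messages sent through $\mathsf{MVBA}[\slot]$ are confined by the MVBA's quiescence, between a gated $\mathsf{propose}$ and an $\mathsf{abandon}$. The step that does not go through as written is the right end of the MVBA window, which you flag yourself as unconfirmed. You argue that $\proc_i$ abandons the MVBA upon finalizing, and that \emph{within \cadence} this coincides with the slot-level $\mathsf{abandon}()$, since line~\ref{line:abandon} of \Cref{algorithm:cadence} invokes $\mathsf{abandon}()$ only right after $\mathsf{finalize}$. That makes the argument depend on the composition. The lemma, however, is a statement about \chorus alone, and \Cref{thm:chorus-correctness} lists quiescence among the properties that hold unconditionally. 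Under the bare slot-consensus interface, $\mathsf{abandon}()$ may arrive while the MVBA is still running and before any finalization. In that case your finalization-based argument says nothing about MVBA traffic after the slot window closes. Note also that the pseudocode's finalization handlers (line~\ref{line:fast-finalize} of \Cref{alg:fast-path-certification}, line~\ref{line:fb-finalize} of \Cref{alg:fallback}) do not themselves invoke $\mathsf{MVBA}[\slot].\mathsf{abandon}()$. The ``abandon on finalizing'' behaviour you cite comes from the prose.

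The missing piece is line~\ref{line:fb-abandon} of \Cref{alg:fallback}: upon the slot-level $\mathsf{abandon}()$, \chorus invokes $\mathsf{MVBA}[\slot].\mathsf{abandon}()$ whenever $\mathit{mvbaInvoked}$ holds. Combine this with the gating of $\mathsf{MVBA}[\slot].\mathsf{propose}$ (lines~\ref{line:fb-mvba-propose-fast} and~\ref{line:fb-mvba-propose}) and with the MVBA's quiescence. Together they make the MVBA's participation window a sub-interval of the slot's, with no appeal to \cadence or to finalization. This is exactly how the paper closes the argument. Your fast-path-only case then becomes simply the branch $\mathit{mvbaInvoked} = \false$.
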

\begin{proof}
A correct validator sends protocol messages either from within the rules of \Cref{alg:proposer-dissemination,alg:voting,alg:fast-path-certification,alg:fallback,alg:da} or from within the underlying MVBA instance.
For the former, by the standing convention of \S\ref{subsection:chorus-protocol-overview}, every rule that sends a message is activated only while the validator is actively participating, i.e., after it has invoked $\mathsf{participate}()$ and before it has invoked $\mathsf{abandon}()$; the remaining rules only process received messages and send nothing.
For the latter, the quiescence property of the MVBA (\Cref{mod:mvba}) confines its messages to the window between $\mathsf{MVBA}[\slot].\mathsf{propose}$ --- invoked from rules the standing convention explicitly gates (\Cref{alg:fallback}, lines~\ref{line:fb-mvba-propose-fast} and~\ref{line:fb-mvba-propose}), hence while actively participating --- and $\mathsf{MVBA}[\slot].\mathsf{abandon}$, which \chorus invokes exactly when it receives the slot-consensus $\mathsf{abandon}()$ input (\Cref{alg:fallback}, line~\ref{line:fb-abandon}), i.e., when the validator stops participating.
Hence no correct validator sends any protocol message before it starts participating or after it stops. \qed
\end{proof}

\paragraph{Hiding.}
We continue with the hiding property.

\begin{lemma}[Hiding]
\label{lemma:chorus-hiding}
\chorus (\Cref{alg:proposer-dissemination,alg:voting,alg:fast-path-certification,alg:fallback,alg:da}) satisfies hiding.
\end{lemma}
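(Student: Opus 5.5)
The plan is to reduce the lemma to the simulation argument of \S\ref{appendix:encryption}, now carried out inside the full \chorus transcript. We fix a slot $\slot$ and construct a simulator $\mathcal{S}$ that runs every honest validator's code from \Cref{alg:proposer-dissemination,alg:voting,alg:fast-path-certification,alg:fallback,alg:da} internally. It plays the ideal functionality $\mathcal{F}$ of \Cref{def:hiding} toward the environment. The honest code is run faithfully except in two places: honest proposers' encrypted payloads, and the pad oracle $H_{\mathsf{pad}}$.

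Before the deadline, whenever $\mathcal{F}$ reports that an honest proposer submitted a proposal of a given length, $\mathcal{S}$ builds $EP$ from a real encapsulation $C_{\mathsf{kem}}$ and a uniformly random $C_{\mathsf{pay}}$. It then erasure-codes, Merkle-commits, signs the root as that proposer, and sends chunks exactly as \Cref{alg:proposer-dissemination} prescribes. Every message that honest validators emit before the deadline depends only on roots, chunks and signatures. No honest message depends on plaintext, since extraction shares are released only with the proposal vote at $s.\fdeadline$. So this part of the view is identically distributed, given that $C_{\mathsf{pay}}=K\oplus M$ with $K$ a random-oracle output.

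At the deadline transition, $\mathcal{S}$ first fixes the corrupt inputs. For every root signed by a corrupt proposer and received by an honest validator before $s.\fdeadline$, it extracts the leaves from the adversary's $H_{\mathsf{Merkle}}$ queries and checks the codeword. It reconstructs the slot key from its own simulated honest extraction shares (it knows the honest servers' states), decapsulates, and sets $M'=C_{\mathsf{pay}}\oplus H_{\mathsf{pad}}(q)$, programming $q$ freshly if it is undefined. It inputs $M'$ to $\mathcal{F}$. Only then does it obtain the honest plaintexts and program $H_{\mathsf{pad}}(Z,\slot,P,\mathsf{ctx}) := C_{\mathsf{pay}}\oplus M$. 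After this point it runs the honest code unchanged: fast path, fallback, MVBA and DA recovery.

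Whenever a correct validator would output $\mathsf{finalize}(V)$, $\mathcal{S}$ sends $\mathcal{F}$ the set $S$ of non-$\bot$ entries of $V$. Agreement of \chorus gives a single $V$, and each entry of $V$ is by construction an input that $\mathcal{F}$ already holds. $\mathcal{S}$ then releases outputs in the same order. The entries of $V$ come from recovery (\modcall{alg:da}), so by the consistency guarantee each positive entry decodes to exactly the $M$ or $M'$ that $\mathcal{S}$ fixed. Recovery also guarantees that a root rejected by honest recovery was likewise rejected by $\mathcal{S}$'s inspection, so the two sets match.

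The main obstacle is the bad events and the alignment between the simulator's notion of ``submitted'' and what \chorus can actually finalize. A corrupt root could reach a $\FastQC$ or $\FallbackQC$ only through chunks, or it could be received by honest validators only after the deadline. I would argue that a positive entry in any valid meta-block needs a signed root received by some honest validator before $s.\fdeadline$: a $\FastQC$ has $f+1$ honest \textsc{yes} votes, and a $\FallbackQC$ has an honest signer who saw $f+1$ \textsc{yes} votes. Such a root is therefore among those $\mathcal{S}$ fixed. A root whose leaves were never queried cannot decode consistently except with negligible probability, so honest recovery rejects it, matching $\mathcal{S}$.

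The remaining failures are: an adversarial query to an honest pad point before release, which is ruled out by IB-KEM unpredictability via a reduction that embeds the challenge identity as $\mathit{id}_\slot$; a forged honest signature, ruled out by unforgeability; and Merkle collisions. Each has negligible probability. The per-slot simulators compose across slots because pad points are domain-separated by $\slot$.
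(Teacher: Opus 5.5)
Your proposal is correct and follows the paper's route. The paper proves this lemma only by appeal to the simulation argument of \S\ref{appendix:encryption}, and you reproduce that argument step by step: a random $C_{\mathsf{pay}}$ for honest proposers, Merkle-transcript extraction and pad-oracle programming for corrupt roots at the deadline transition, and only afterwards programming the honest pad points. You also show that every positive entry of a valid meta-block rests on a root that some honest validator received by the deadline, through the honest signers behind a $\FastQC$ or $\FallbackQC$; this makes explicit the match between ``submitted'' and ``finalizable'' roots that the paper leaves implicit. One small wording slip: the simulator interacts with $\mathcal{F}$; it does not play $\mathcal{F}$ toward the environment.
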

\begin{proof}
This is established by the simulation argument of \S\ref{appendix:encryption}.
\end{proof}

\paragraph{Slot safety.}
Next, we prove the slot safety property.

\begin{lemma}[Slot safety]
\label{lemma:chorus-slot-safety}
\chorus (\Cref{alg:proposer-dissemination,alg:voting,alg:fast-path-certification,alg:fallback,alg:da}) satisfies slot safety.
\end{lemma}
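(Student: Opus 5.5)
The plan is to unfold the definition of slot safety from \Cref{mod:slotconsensus}: whenever a correct validator's instance of \chorus for slot $\slot$ outputs $\mathsf{finalize}(V)$, we need $V.\fslot = \slot$. Since the notion could also be read as forbidding two correct validators from finalizing \emph{different} vectors for $\slot$, I will close with a consistency step that addresses that reading too.

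First I enumerate every rule in \Cref{alg:fast-path-certification,alg:fallback} that emits $\mathsf{finalize}$. There are exactly two: receipt of a fast $\commitQC = \langle \CommitMsg, \slot', \entries(B), \Sigma\rangle$, and receipt of a fallback commit certificate for $\entries(B')$ after the $\mathsf{MVBA}[\slot]$ decision. In both cases the output is $\recoverProposals(\entries(\cdot))$, assembled into a proposal vector.

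Second I show that every entry handed to $\recoverProposals$ has the form $\langle \slot, \pid, \rhobot\rangle$ with $\pid \in \slot.\fproposers$, so the assembled vector has slot $\slot$. For the fast route, a correct validator accepts a $\commitQC$ only if its slot field is the instance parameter $\slot$. Each of the $2f+1$ signatures is over $\langle \CommitMsg, \slot, \entries(B)\rangle$, and at least $f+1$ come from correct validators. A correct validator commit-votes only on a fast \metablock whose $\FastQC$s certify entries of the form $\langle \slot, \pid, \cdot\rangle$, one per proposer of $\slot$. Signature unforgeability then rules out a forged slot field. For the fallback route, I use the external validity of $\mathsf{MVBA}[\slot]$ (\Cref{mod:mvba}): the decided $B'$ is a valid \metablock for $\slot$. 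Every $\FastQC$, $\FallbackQC$ and $\EquivCert$ in it is bound to $\slot$, and an $\EquivCert$ contributes exactly $\langle \slot, \pid, \bot\rangle$.

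The remaining piece is the DA module. Recovery of $\langle \slot, \pid, \mroot\rangle$ returns either $\bot$ or a decrypted $P = \langle \slot, \pid, \payload, \sigma\rangle$. I need that such a $P$ has $P.\fslot = \slot$ and $P.\fproposer = \pid$; otherwise it is discarded as malformed, in the same deterministic way as the invalidity checks. This verification step is where I expect the main obstacle, because the constraint of \Cref{subsection:mcp-preliminaries} ($V[p].\fslot = V.\fslot$) must be guaranteed by \Cref{alg:da} rather than merely assumed. My first move is to check that decryption binds the slot: the pad $H_{\mathsf{pad}}(Z,\slot,P,\mathsf{ctx})$ and the identity $\mathit{id}_\slot$ both depend on $\slot$. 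I then treat a header mismatch as a failed decryption, which by the footnote on well-formedness is deterministic at all correct validators.

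Finally, for the stronger reading, I argue that two correct validators cannot finalize different vectors for $\slot$.

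\begin{itemize}
\item Two $\commitQC$s on different entry sequences would require a correct validator to cast two fast commit votes, which it never does.
\item A $\commitQC$ and a fallback commit certificate on differing entries cannot both exist. The fallback certificate needs an MVBA decision, hence a valid \metablock. Either it is fast, and its $\FastQC$s match the committed entries by $(2f{+}1)$-quorum intersection, since a correct validator signs one entry per proposer. Or it carries an $\FBCert_\slot$, whose $2f{+}1$ fallback commitments intersect the $2f{+}1$ fast commit voters in a correct validator, contradicting the mutual exclusion of the two vote kinds.
\item Two fallback certificates agree by MVBA agreement.
\end{itemize}

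Identical entries then yield identical vectors by the consistency of $\recoverProposals$, that is, the determinism of the re-encode-and-compare check.
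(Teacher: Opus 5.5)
Your conclusion is correct, but you reach it by a much heavier route than the paper. The paper's proof is one line. A correct validator only outputs $\mathsf{finalize}(\recoverProposals(\cdot))$, and $\recoverProposals$ (\Cref{alg:da}, line~\ref{line:da-recover-slot}) builds its output explicitly as a proposal vector \emph{for the instance parameter} $\slot$. So $V.\fslot = \slot$ holds by construction, whatever entries are passed in, and without any cryptographic or MVBA assumption.

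You instead derive the slot of $V$ from the slot fields of the certified entries. That forces you through signature unforgeability and MVBA external validity. The result is only a computationally conditioned version of a property the paper obtains deterministically; the paper lists slot safety among the properties that hold unconditionally.

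The step you call the main obstacle, that each non-$\bot$ component $V[\pid]$ is a slot-$\slot$ proposal of $\proc_\pid$, is a well-formedness condition on proposal vectors, not slot safety. It is also guaranteed by explicit checks in \Cref{alg:da}: the decryption rule parses against $\langle \slot, \pid, \payload, \sigma\rangle$, and line~\ref{line:da-proposer-check} checks the proposer. The pad's dependence on $\slot$ does not give you this. It only determines which key decrypts, and says nothing about the header a Byzantine proposer writes inside its plaintext.

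Your fallback step also describes finalization as happening ``after the $\mathsf{MVBA}[\slot]$ decision''. In fact the rule fires on receipt of any valid fallback commit certificate, even if the receiver has not decided. Binding its entries to a decided \metablock therefore has to go through a correct signer of the certificate, not the receiver's own decision.

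Finally, drop the ``stronger reading''. Slot safety in \Cref{mod:slotconsensus} is exactly $V.\fslot = \slot$. What you prove there is the separate agreement property, which the paper establishes in \Cref{lemma:chorus-agreement}.
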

\begin{proof}
Follows directly from line~\ref{line:da-recover-slot} of \Cref{alg:da}: a validator finalizes only the proposal vector returned by $\recoverProposals$, which is a proposal vector for slot $s$.
\end{proof}

\paragraph{Agreement.}
We now prove that \chorus satisfies agreement.
Recall that a validator finalizes only when it holds a \emph{commitment proof} for a \metablock's entries --- a fast commit certificate $\commitQC$ over $\entries(B)$, for a fast \metablock $B$ (\Cref{alg:fast-path-certification}, line~\ref{line:fast-recv-commitqc}), or a fallback commit certificate over $\entries(B')$, for a \metablock $B'$ decided by $\mathsf{MVBA}[\slot]$ (\Cref{alg:fallback}, line~\ref{line:fb-recv-commit}) --- whereupon it triggers $\mathsf{finalize}(\cdot)$: in the fast path on a valid $\commitQC$ (\Cref{alg:fast-path-certification}, line~\ref{line:fast-finalize}), and in the fallback path on a valid fallback commit certificate (\Cref{alg:fallback}, line~\ref{line:fb-finalize}).
We argue in two steps: any two commitment proofs carry the same entries, and the same entries yield the same proposal vector at every correct validator.

\begin{proposition}
\label{prop:agreement-entries}
Consider any two correct validators $\proc_i$ and $\proc_j$ such that $\proc_i$ obtains (i.e., stores in its local memory) a commitment proof for the entries of a meta-block $B_i$ and $\proc_j$ obtains a commitment proof for the entries of a meta-block $B_j$.
Then, $\entries(B_i) = \entries(B_j)$.
\end{proposition}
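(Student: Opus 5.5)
The plan is a case analysis on the kinds of the two commitment proofs: both fast commit certificates, both fallback commit certificates, or one of each. Throughout we use quorum intersection: any two sets of $2f+1$ signers among $n=3f+1$ validators share at least $f+1$ validators, hence at least one correct one. We also use the fact that a correct validator casts at most one fast commit vote and at most one of \{fast commit vote, fallback vote\}, since these are mutually exclusive by the fallback-vote rule.

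\emph{Fast--fast.} Suppose $\proc_i$ holds a $\commitQC$ over $\entries(B_i)$ and $\proc_j$ holds one over $\entries(B_j)$. Their signer sets intersect in a correct validator. That validator signed fast commit votes on both entry sequences. Since a correct validator broadcasts a fast commit vote at most once, $\entries(B_i)=\entries(B_j)$.

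\emph{Fallback--fallback.} A correct validator casts a fallback commit vote only for $\entries(B')$, where $B'$ is its own $\mathsf{MVBA}[\slot]$ decision. Each fallback commit certificate has $2f+1$ signers, so it contains at least one correct signer, and its entries therefore equal the entries of some correct validator's MVBA decision. By MVBA agreement and integrity, all correct validators decide the same $B'$, so the two entry sequences coincide.

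\emph{Fast--fallback (the main step).} Let $\proc_i$ hold a $\commitQC$ over $\entries(B)$. Its $2f+1$ signers cast fast commit votes, so at least $f+1$ correct validators never cast a fallback vote. Any $\FBCert_\slot$ needs $2f+1$ signatures on $\langle\textsc{fallback},\slot\rangle$, so it would have to include one of these validators; hence no $\FBCert_\slot$ exists (up to signature forgery). The MVBA's external validity then says the decided meta-block $B'$ is a valid meta-block; since it cannot be a fallback one, it must be a fast meta-block. It remains to show that any fast meta-block has entries equal to $\entries(B)$. Two fast meta-blocks both carry, for each proposer $\proc_\pid$, a $\FastQC$ on $\langle\votetag,E\rangle$ with $2f+1$ signers. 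These signer sets intersect in a correct validator, and that validator signs exactly one entry per proposer in its single proposal vote. So the two $\FastQC$s for $\proc_\pid$ certify the same entry, and by extension any two fast meta-blocks have identical entries.

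I expect this cross-path case to be the main obstacle. The delicate points are:
\begin{itemize}
\item justifying that a correct validator's fast commit vote and fallback vote really are exclusive in the pseudocode, including the timing after $s.\fdeadline+\Delta$;
\item handling the case where a correct validator proposes a fast meta-block (not a fallback one) to the MVBA, which is exactly why the uniqueness of fast-meta-block entries is needed.
\end{itemize}
The argument also relies on the standard assumption that signatures are unforgeable and that each correct validator signs one proposal vote per slot.
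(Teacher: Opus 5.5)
Your proof is correct and follows the paper's three-case structure. The cross-path case is argued exactly as in the paper: the $f+1$ correct fast-commit voters never cast a fallback vote, so no $\FBCert_\slot$ can form, external validity forces the MVBA decision to be a fast meta-block, and $\FastQC$ intersection pins its entries. The differences are minor. In the fast--fast case you intersect the two $\commitQC$ signer sets and use that a correct validator casts at most one fast commit vote, whereas the paper intersects the per-proposer $\FastQC$s. In the fallback--fallback case you apply MVBA agreement to one correct signer per certificate, whereas the paper intersects the two certificates' signer sets and uses MVBA integrity.
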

\begin{proof}
A commitment proof is either a fast commit certificate $\commitQC$ (\Cref{alg:fast-path-certification}) or a fallback commit certificate (\Cref{alg:fallback}), so we distinguish three cases.

\begin{compactitem}
    \item \emph{Both $\proc_i$ and $\proc_j$ hold a $\commitQC$.}
    Then $B_i$ and $B_j$ are fast meta-blocks (\Cref{alg:fast-path-certification}, line~\ref{line:fast-metablock}), each of whose certified entries is a $\FastQC$ (\Cref{alg:fast-path-certification}, line~\ref{line:fast-formqc}) --- an aggregate of $2f+1$ matching signed entries. Fix a proposer; the two supporting sets of $2f+1$ signed entries (one from $B_i$, one from $B_j$) intersect, among the $n = 3f+1$ validators, in at least $f+1$, hence in a correct validator. Since a correct validator signs a single entry per proposer (\Cref{alg:voting}, line~\ref{line:vote-broadcast}), the two entries for that proposer coincide. As this holds for every proposer, $\entries(B_i) = \entries(B_j)$.

    \item \emph{Both hold a fallback commit certificate.}
    Each certificate aggregates $2f+1$ commit-vote signatures (\Cref{alg:fallback}, line~\ref{line:fb-formcommitqc}); the two supporting sets intersect, among the $n = 3f+1$ validators, in at least $f+1$, hence in a correct validator. A correct validator sends a single fallback commit vote --- one per MVBA decision (\Cref{alg:fallback}, line~\ref{line:fb-commitvote}), and the MVBA decides at most once by its integrity property (\Cref{mod:mvba}).
    Therefore, $\entries(B_i) = \entries(B_j)$.

    \item \emph{One holds a $\commitQC$ and the other a fallback commit certificate.}
    Without loss of generality, assume that $\proc_i$ holds a $\commitQC$ for $B_i$ and $\proc_j$ a fallback commit certificate for $B_j$. A $\commitQC$ aggregates $2f+1$ commit votes (\Cref{alg:fast-path-certification}, line~\ref{line:fast-collect-commit}), so at least $f+1$ correct validators broadcast a commit vote and thereby set $\mathit{pathVote} = \textsc{fast}$ (\Cref{alg:fast-path-certification}, line~\ref{line:fast-pathvote}). A correct validator issues a fallback vote only while $\mathit{pathVote} = \textsc{none}$ (\Cref{alg:fallback}, line~\ref{line:fb-pathvote-guard}), so these $f+1$ correct validators never do; fewer than $2f+1$ fallback votes can form, and hence no fallback meta-block --- which requires a $2f+1$-fallback-vote certificate --- is valid.
    Moreover, $\proc_j$'s certificate carries the signature of at least one correct validator, which commit-votes only for the meta-block it decided from $\mathsf{MVBA}[\slot]$ (\Cref{alg:fallback}, line~\ref{line:fb-commitvote}); by the external validity of the MVBA (\Cref{mod:mvba}), that meta-block is valid. Since no fallback meta-block is valid, the decided meta-block is a fast one, so $B_j$ is a fast meta-block.
    As $B_i$ is a fast meta-block as well, $\entries(B_i) = \entries(B_j)$ due to the quorum intersection. \qed
\end{compactitem}
\end{proof}

\begin{proposition}
\label{prop:recovery-consistency}
For any set of entries $E$, any two correct validators that recover proposal vectors $V$ and $V'$ from $\recoverProposals(E)$ satisfy $V = V'$.
\end{proposition}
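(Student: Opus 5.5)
The argument goes entry by entry. Two recovered vectors are equal exactly when their slots agree and, for every proposer $\proc_\pid \in \slot.\fproposers$, their values at $\proc_\pid$ agree. The slot part is immediate. By line~\ref{line:da-recover-slot} of \Cref{alg:da}, $\recoverProposals(E)$ returns a vector for the slot $s$ named in the entries, so $V.\fslot = V'.\fslot$. So I fix a proposer $\proc_\pid$ and its entry $\langle s, \pid, \rhobot\rangle \in E$, and split into two cases according to whether the entry is negative or positive.

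\textbf{Negative entry.} If $\rhobot = \bot$, then both validators assign $\bot$ to $\proc_\pid$ without consulting any chunks. The value is therefore fixed by $E$ alone.

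\textbf{Positive entry.} If $\rhobot = \mroot$, the value at $\proc_\pid$ is produced by \textsc{tryIngestChunk} followed by decryption. I will show that the DA module's verdict on $\mroot$ is a deterministic function of $\mroot$ alone.

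\begin{compactitem}
\item[\emph{Setup.}] Call the leaves committed under $\mroot$ the \emph{committed codeword}. By collision resistance of $\Hash$ and soundness of $\VerifyMerkle$, every chunk a correct validator accepts under $\mroot$ (the signature and Merkle proof check) is, except with negligible probability, the unique committed leaf $\cdata_r$ at its index $r$.
\item[(a)] \emph{The committed leaves form a valid codeword}, i.e.\ they equal $\Encode(c)$ for some $c$ with $\MerkleRoot$ equal to $\mroot$. Then any $f+1$ of them decode to that $c$, by the erasure-code guarantee. Re-encoding reproduces $\mroot$ (line~\ref{line:da-reencode}). So every correct validator obtains the same ciphertext $c$, whichever $f+1$ chunks it happened to collect.
\item[(b)] \emph{The leaves do not form a valid codeword.} Then no $c$ satisfies $\MerkleRoot(\Encode(c)) = \mroot$; if such a $c$ existed, collision resistance would force its leaves to equal the committed ones. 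Hence every decode outcome either fails or re-encodes to a different root. Every correct validator marks $\mroot$ invalid and assigns $\bot$.
\item[\emph{Decryption.}] In case (a), decryption is deterministic given the slot key. By robustness of distributed extraction, every verified set of $f+1$ shares combines to the unique $\mathit{sk}_{\mathit{id}_\slot}$. Thus $\mathsf{Decap}$ and the $H_{\mathsf{pad}}$ unmasking yield the same plaintext everywhere. The check that this plaintext is a well-formed proposal with a valid signature over $\langle s,\pid,\Hash(\payload)\rangle$ is likewise a pure function of it, so both validators either accept the same proposal or both assign $\bot$.
\end{compactitem}

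The main obstacle is case (b), more precisely ensuring that no correct validator can accept a root whose committed leaves are inconsistent just because the particular $f+1$ chunks it gathered happened to decode to a $c$ whose re-encoding matches $\mroot$. I would handle this with the collision-resistance argument above: a match means $\Encode(c)$ has the same Merkle root as the committed tree, so the two leaf sequences must coincide. That contradicts inconsistency except with negligible probability. A related subtlety is that a Byzantine proposer might sign two roots. This does not matter here, because recovery is keyed by the root in $E$, not by the proposer. Combining the per-entry equalities gives $V = V'$, with the failure probability bounded by that of the hash and signature assumptions.
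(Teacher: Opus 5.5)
Your proof is correct and takes the same route as the paper's. Negative entries yield $\bot$ at every validator. For a positive entry, the re-encode-and-compare check in \textsf{tryIngestChunk}, together with Merkle collision resistance and the determinism of $\mathsf{Decode}$, $\mathsf{Encode}$ and decryption, makes the verdict on $\rho$ and the recovered plaintext a function of the root alone. You are more explicit than the paper on two points: why one validator's successful re-encoding pins down every chunk another correct validator can accept under $\rho$, and why the slot-key combination is robust. Both only fill in steps the paper asserts.
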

\begin{proof}
We show that whenever two correct validators both return from $\recoverProposals(E)$, they return the same vector. 
Negative entries recover to $\bot$ at every validator, so fix a positive entry $\langle s, \pid, \mroot\rangle \in E$; we show that any two correct validators that resolve it recover the same value for $\proc_\pid$.
A correct validator returns a value for $\langle s, \pid, \mroot\rangle$ only once $\status(\mroot) \in \{\langle\mathsf{plain}, \cdot\rangle,\, \mathsf{invalid}\}$ (\Cref{alg:da}, line~\ref{line:da-wait}).
The verdict on $\mroot$ --- accept and decode, or mark invalid --- is fixed by the re-encode-and-compare check of $\tryIngestChunk$ (\Cref{alg:da}, line~\ref{line:da-reencode}): a validator decodes its $f+1$ chunks to a candidate $c$ and accepts $\mroot$ only if $\MerkleRoot(\Encode(c)) = \mroot$. The chunks admissible under $\mroot$ are pinned down by the root itself (collision resistance of the Merkle hash), and $\Decode$ and $\Encode$ are deterministic; hence the verdict depends only on $\mroot$ and not on which $f+1$ chunks a validator happens to collect. Every correct validator that resolves $\mroot$ thus reaches the same verdict and, on acceptance, the same ciphertext $c$. Decryption is likewise deterministic once the slot key is reconstructed, so all such validators recover the same proposal $P$ for $\proc_\pid$, or all mark $\mroot$ invalid (recovering $\bot$).
As this holds for every proposer, the two validators recover the same proposal vector. \qed
\end{proof}

We are ready to prove the agreement property of \chorus.

\begin{lemma}[Agreement]
\label{lemma:chorus-agreement}
\chorus (\Cref{alg:proposer-dissemination,alg:voting,alg:fast-path-certification,alg:fallback,alg:da}) satisfies agreement.
\end{lemma}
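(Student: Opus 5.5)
The plan is to chain the two propositions just established. Take two correct validators $\proc_i$ and $\proc_j$ that finalize proposal vectors $V_i$ and $V_j$ for slot $\slot$; we must show $V_i = V_j$.

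First, I will trace where $\mathsf{finalize}(\cdot)$ is triggered. It is invoked only in two places: on a valid $\commitQC$ (\Cref{alg:fast-path-certification}, line~\ref{line:fast-finalize}) or on a valid fallback commit certificate (\Cref{alg:fallback}, line~\ref{line:fb-finalize}). In either case the finalized vector is the output of $\recoverProposals(\entries(B))$, where $B$ is the meta-block whose entries the certificate commits. Hence $\proc_i$ holds a commitment proof for the entries of some meta-block $B_i$, and $V_i$ is what $\recoverProposals(\entries(B_i))$ returns at $\proc_i$; the same holds for $\proc_j$ with $B_j$.

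Second, \Cref{prop:agreement-entries} gives $\entries(B_i) = \entries(B_j) =: E$. Third, \Cref{prop:recovery-consistency} applied to $E$ gives $V_i = V_j$.

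The same argument with $i=j$ covers the per-validator clause of agreement. A single validator could in principle finalize twice, say once via each path or once per certificate received. Both finalizations would come from commitment proofs it holds, so \Cref{prop:agreement-entries} still gives identical entries, and \Cref{prop:recovery-consistency} gives identical vectors. I will also note that after finalizing, a validator concludes all activity for the slot, so repeats are harmless in any case.

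The main obstacle I anticipate is the first step. I need to check carefully that each finalize rule really passes $\recoverProposals$ exactly the entries of the certificate the validator holds, and does not, for instance, pass its own locally built fast meta-block. I also need to check that a \emph{speculative} commit is not an invocation of $\mathsf{finalize}$, so it falls outside the scope of agreement. If a finalize rule used locally stored entries, I would argue that those entries equal the certificate's entries, since the certificate signs exactly $\entries(B)$.
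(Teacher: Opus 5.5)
Your proposal is correct and follows the paper's proof: every finalization outputs $\recoverProposals$ applied to the entries carried by a commitment proof the validator holds, so \Cref{prop:agreement-entries} gives equal entries and \Cref{prop:recovery-consistency} gives equal proposal vectors. Your explicit handling of the single-validator case ($i=j$) and your check that speculative commits are not invocations of $\mathsf{finalize}$ are sound; the paper leaves both implicit.
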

\begin{proof}
A correct validator finalizes only the proposal vector $\recoverProposals(\entries(B))$ (\Cref{alg:da}, line~\ref{line:da-recover-slot}) for a meta-block $B$ whose entries are backed by a commitment proof it holds. Let $V_i$ and $V_j$ be proposal vectors finalized by correct validators, backed by commitment proofs for the entries of meta-blocks $B_i$ and $B_j$; thus $V_i = \recoverProposals(\entries(B_i))$ and $V_j = \recoverProposals(\entries(B_j))$. By \Cref{prop:agreement-entries}, $\entries(B_i) = \entries(B_j)$, and by \Cref{prop:recovery-consistency}, applying $\recoverProposals$ to these equal entries yields the same proposal vector at both validators. Hence, $V_i = V_j$. \qed
\end{proof}

\paragraph{Proposal inclusion.}
We now prove the proposal inclusion property.
We first show that an on-time honest proposer can never be excluded: in any valid meta-block, its entry is the positive entry carrying its proposal's root.

\begin{proposition}
\label{prop:honest-positive-entry}
Suppose $s.\fdeadline - \Delta \geq \mathrm{GST}$ and a correct proposer $\proc_\pid \in s.\fproposers$ disseminates its proposal $P$, with Merkle root $\mroot_P$, at time $s.\fdeadline - \Delta$.
Then, in every valid meta-block, $\proc_\pid$'s entry is the positive entry $\langle s, \pid, \mroot_P\rangle$.
\end{proposition}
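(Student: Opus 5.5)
Throughout, I fix the correct proposer $\proc_\pid$. The plan is to first pin down what every correct validator says about $\proc_\pid$, and then rule out each kind of certified entry that a valid meta-block could carry for $\proc_\pid$ other than $\langle s, \pid, \mroot_P\rangle$.

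\textbf{Step 1 (correct votes).} Since $s.\fdeadline - \Delta \geq \mathrm{GST}$ and $\proc_\pid$ sends its chunks at $s.\fdeadline - \Delta$, every correct validator receives its assigned, valid chunk under $\mroot_P$ by the deadline. Hence every correct validator's signed entry for $\proc_\pid$ in its proposal vote (\Cref{alg:voting}) is the positive entry $\langle s, \pid, \mroot_P\rangle$. A validator that enters the slot late still holds the chunk, since chunks are ingested on arrival. By signature unforgeability, no entry $\langle s, \pid, \mroot'\rangle$ with $\mroot' \neq \mroot_P$ carries $\proc_\pid$'s chunk-header signature. This holds except with negligible probability.

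\textbf{Step 2 (case analysis on the certified entry $CE_\pid$ of a valid meta-block).}

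(a) Suppose $CE_\pid$ is a $\FastQC$ on some entry $E$. It has $2f+1$ signers, so at least $f+1$ are correct, and by Step~1 each of them signed $\langle s, \pid, \mroot_P\rangle$. Hence $E = \langle s, \pid, \mroot_P\rangle$.

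(b) Suppose $CE_\pid$ is an $\EquivCert$. It needs two valid $\proc_\pid$-signatures on distinct roots. A correct proposer signs a single root, so this is impossible.

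(c) Suppose $CE_\pid$ is a $\FallbackQC$ on $E$. It has $f+1$ signers, so at least one is correct, and it suffices to show that a correct validator's fallback signed entry for $\proc_\pid$ is positive on $\mroot_P$. A correct validator sends a fallback vote only after time $s.\fdeadline + \Delta$ and after receiving $2f+1$ proposal votes. It then falls into one of two situations:
\begin{itemize}
\item If it holds a $\FastQC$ for $\proc_\pid$, that $\FastQC$ is on $\mroot_P$ by (a).
\item Otherwise it builds its own fallback signed entry. It needs $f+1$ positive proposal votes on $\mroot_P$ together with $\isDecoded(\mroot_P)$.
\end{itemize}
It also remains to note that a positive fallback entry on a different root is impossible, because such an entry must carry $\proc_\pid$'s signature $\sigma_p$.

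\textbf{Main obstacle.} The hard part is the second situation in (c): showing that a correct validator with no $\FastQC$ still signs positively rather than $\bot$. I would first try the following argument.
\begin{itemize}
\item Among the $2f+1$ proposal votes it received, at least $f+1$ come from correct validators. By Step~1 these are positive on $\mroot_P$ and each carries a chunk, so the validator has $f+1$ valid chunks under $\mroot_P$.
\item Because $\proc_\pid$ encoded honestly, decoding these chunks and re-encoding reproduces $\mroot_P$ (\Cref{alg:da}, line~\ref{line:da-reencode}). So $\isDecoded(\mroot_P)$ holds by the time the fallback vote is assembled.
\end{itemize}
This depends on the pseudocode ordering: the ingestion of chunks carried in the votes must complete before the entry is constructed. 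If that ordering fails, I would fall back on arguing that the fallback-vote rule waits on $\isDecoded$ for roots with $f+1$ support.

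Combining (a)--(c), every valid meta-block's entry for $\proc_\pid$ is $\langle s, \pid, \mroot_P\rangle$. This holds whether the meta-block is fast (only (a) applies) or fallback (one of (a)--(c) applies).
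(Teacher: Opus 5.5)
Your proof is correct and follows the paper's argument essentially step for step. Every correct validator votes positively on $\mroot_P$. Hence any $\FastQC$ (with $f+1$ correct signers) and any $\FallbackQC$ (with a correct signer, who holds $f+1$ correct votes carrying their chunks and so can decode $\mroot_P$) is positive on $\mroot_P$, while an $\EquivCert$ would require $\proc_\pid$ to sign two roots. The ordering worry you flag is settled by the pseudocode itself: a \textsc{Vote} counts as valid only after $\tryIngestChunk$ succeeds on its chunks, so when the fallback rule fires the $f+1$ correct chunks are already ingested and $\isDecoded(\mroot_P)$ holds, and your fallback plan is not needed.
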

\begin{proof}
Since $s.\fdeadline - \Delta \geq \mathrm{GST}$ and $\proc_\pid$ disseminates at the slot's starting time, every correct validator receives and validates its assigned chunk under $\mroot_P$ by the deadline --- chunk ingestion only processes the received message, so it requires no active participation (\S\ref{subsection:chorus-protocol-overview}) --- setting its entry for $\proc_\pid$ to the positive entry $\langle s, \pid, \mroot_P\rangle$ (\Cref{alg:voting}, line~\ref{line:vote-positive}).
Hence no correct validator ever signs a negative entry $\langle s, \pid, \bot\rangle$, nor an entry on a root $\neq \mroot_P$, for $\proc_\pid$.

A $\FastQC$ for $\proc_\pid$ is an aggregate of $2f+1$ matching signed entries (\Cref{alg:fast-path-certification}, line~\ref{line:fast-formqc}); as at most $f$ validators are Byzantine, no $2f+1$ matching signed entries can carry $\bot$ or a root $\neq \mroot_P$, so any $\FastQC$ for $\proc_\pid$ is the positive entry $\langle s, \pid, \mroot_P\rangle$.

The fallback path yields the same conclusion. A correct validator casts a fallback signed entry for $\proc_\pid$ only after collecting $2f+1$ first-round votes (\Cref{alg:fallback}, line~\ref{line:fb-pathvote-guard}); at least $f+1$ of these come from correct validators, each carrying the positive entry $\langle s, \pid, \mroot_P\rangle$ for $\proc_\pid$ together with its chunk. The validator therefore holds $f+1$ positive entries on $\mroot_P$ and can decode $\mroot_P$, so it casts a \emph{positive} fallback signed entry on $\mroot_P$ --- never a negative one, nor one on another root (\Cref{alg:fallback}, line~\ref{line:fb-positive-entry}). Since a $\FallbackQC$ aggregates $f+1$ matching fallback signed entries (\Cref{alg:fallback}, line~\ref{line:fb-formqc}) and at most $f$ validators are Byzantine, any $\FallbackQC$ for $\proc_\pid$ is likewise positive on $\mroot_P$, and no negative $\FallbackQC$ can form; an $\EquivCert$ would require $\proc_\pid$ to sign two different roots, impossible for a correct proposer.

Thus every piece of evidence ever formed for $\proc_\pid$ --- a $\FastQC$ or a $\FallbackQC$ --- is the positive entry $\langle s, \pid, \mroot_P\rangle$, which is therefore its entry in every valid meta-block. \qed
\end{proof}

With the proposer's entry pinned down by \Cref{prop:honest-positive-entry}, we can now establish proposal inclusion.

\begin{lemma}[Proposal inclusion]
\label{lemma:chorus-proposal-inclusion}
\chorus (\Cref{alg:proposer-dissemination,alg:voting,alg:fast-path-certification,alg:fallback,alg:da}) satisfies proposal inclusion.
\end{lemma}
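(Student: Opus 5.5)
The plan is to combine \Cref{prop:honest-positive-entry} with the structure of the finalization rule and the recovery guarantee. Fix a slot $s$ with $s.\fdeadline - \Delta \geq \mathrm{GST}$ and a correct proposer $\proc_\pid \in s.\fproposers$ that proposes $P$ at time $s.\fdeadline - \Delta$; by \Cref{alg:proposer-dissemination} it encrypts $P$, encodes it, and disseminates chunks under a root $\mroot_P$ at that time. Suppose a correct validator $\proc_i$ finalizes $V$. As recalled before \Cref{prop:agreement-entries}, $\proc_i$ finalizes only $V = \recoverProposals(\entries(B))$ for a meta-block $B$ whose entries are backed by a commitment proof ($\commitQC$ or fallback commit certificate).

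The first step is to argue that $B$ is a valid meta-block, so that \Cref{prop:honest-positive-entry} applies. For a $\commitQC$, $B$ is a fast meta-block assembled from $\FastQC$s, and at least one correct signer of the $\commitQC$ voted for $\entries(B)$ of a fast meta-block it built. For a fallback commit certificate, at least one correct signer commit-voted for the MVBA decision, which is valid by external validity. In either case, $\entries(B)$ coincides with the entries of some valid meta-block, and by \Cref{prop:honest-positive-entry} $\proc_\pid$'s entry in $\entries(B)$ is $\langle s, \pid, \mroot_P\rangle$.

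The second step shows that recovery of this entry yields exactly $P$. The call $\recoverProposals$ returns (this is given, since $\proc_i$ finalized), so $\status(\mroot_P)$ is either decoded plaintext or invalid. I will argue that it cannot be invalid. A correct proposer's chunks under $\mroot_P$ form a valid codeword, and by collision resistance of the Merkle hash the chunks admissible under $\mroot_P$ are exactly those. Hence any $f+1$ of them decode to the honest ciphertext $EP$ and re-encode to $\mroot_P$, as in \Cref{prop:recovery-consistency}. Decryption with the reconstructed slot key, which is correct by robustness of distributed extraction, returns the well-formed $P$ carrying $\proc_\pid$'s signature, so it is not marked invalid. Therefore $V[\proc_\pid] = P$.

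I expect the main obstacle to be the first step: making precise that a commitment proof's entries really are those of a \emph{valid} meta-block, rather than an arbitrary entry list signed by Byzantine validators. To handle it, I will lean on quorum intersection: any $2f+1$ aggregate contains a correct signer, and a correct signer only votes on entries of a meta-block it holds as valid (\Cref{alg:fast-path-certification,alg:fallback}). A secondary subtlety is in the second step, namely excluding a spurious decryption failure. Here I will rely on the correctness and robustness of the IB-KEM extraction.
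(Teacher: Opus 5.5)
Your proposal is correct and follows the paper's proof essentially step for step. You first establish that the finalized entries are those of a valid meta-block (a fast one, or the MVBA decision, which is valid by external validity), then use \Cref{prop:honest-positive-entry} to pin $\proc_\pid$'s entry to $\langle s, \pid, \mroot_P\rangle$, and finally show that recovery of $\mroot_P$ cannot end in $\mathsf{invalid}$: the honest chunks re-encode to $\mroot_P$ and decrypt to $P$, whose embedded proposer is $\proc_\pid$. Two points are slightly more careful than the paper, which simply asserts that $B$ is valid: you appeal to a correct signer of the commitment proof to tie the certified entry list to an actual valid meta-block, and you invoke robustness of share combination for the decryption step.
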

\begin{proof}
Let $\mroot_P$ be the Merkle root of $\proc_\pid$'s encrypted proposal, and let $B$ be the \metablock whose entries the validator finalizes, so that $V = \recoverProposals(\entries(B))$; note that $B$ is valid --- for a $\commitQC$, $B$ is a fast \metablock, and for a fallback commit certificate, $B$ is the \metablock decided by $\mathsf{MVBA}[\slot]$, valid by its external validity (\Cref{mod:mvba}).
By \Cref{prop:honest-positive-entry}, $\proc_\pid$'s entry in $B$ is the positive entry $\langle s, \pid, \mroot_P\rangle$ --- a $\FastQC$ or a $\FallbackQC$ on $\mroot_P$.
Since the validator finalizes $V$, the call $\recoverProposals(\entries(B))$ has returned, so the validator has settled the status of $\mroot_P$ as either a recovered proposal or $\mathsf{invalid}$ (\Cref{alg:da}, line~\ref{line:da-wait}).
Because $\proc_\pid$ is correct, the settlement is necessarily the proposal $P$: the chunks admissible under $\mroot_P$ are pinned down by the root itself (collision resistance of the Merkle hash), so the $f+1$ chunks the validator decoded reconstruct the ciphertext committed by $\mroot_P$, which re-encodes to $\mroot_P$ (\Cref{alg:da}, line~\ref{line:da-reencode}) and, under the slot key, decrypts to $P$ (\S\ref{appendix:encryption}); moreover, the recovered proposal's embedded proposer is $\proc_\pid$, matching the entry's, so the entry is not marked invalid (\Cref{alg:da}, line~\ref{line:da-proposer-check}).
Hence, $\recoverProposals(\cdot)$ recovers $P$ for $\proc_\pid$, i.e., $V[\proc_\pid] = P$. \qed
\end{proof}

\paragraph{Termination.}
We now prove that \chorus satisfies termination.
Unlike the previous properties, termination is not unconditional: it holds when \chorus is run within \cadence, and under the additional guarantee that participation is \emph{$\Delta$-synchronized}.

\begin{definition}[$\Delta$-synchronized participation]
\label{def:delta-synchronized-participation}
Participation is \emph{$\Delta$-synchronized} if and only if the following holds: if a correct validator starts participating at some time $T$, then every correct validator starts participating by time $\max(T, \mathrm{GST}) + \Delta$.
\end{definition}

\noindent Conditioning termination on $\Delta$-synchronized participation places no extra burden on \cadence.
Within \cadence, a validator starts participating in a slot exactly when \conductor opens it, so the condition asks precisely that the openings of each slot be synchronized within $\Delta$ across correct validators --- and this is exactly \conductor's totality guarantee.
As we show later (\Cref{lemma:conductor-totality}), \conductor indeed satisfies this guarantee when run within \cadence; the precondition is therefore always discharged in the composed system, and \chorus's termination holds there.

With the precondition in place, we can turn to the proof itself.
We begin with \emph{totality}: once one correct validator finalizes, every correct validator finalizes within $\Delta$ time (assuming synchrony).
Totality serves a dual purpose --- it is the key step in \chorus's own termination, and \conductor relies on it in its own right --- so we record it as a standalone property.

\begin{proposition}[Totality]
\label{prop:chorus-totality}
Suppose participation is $\Delta$-synchronized.
When run within \cadence, the following holds for \chorus: if a correct validator finalizes a proposal vector at some time $t$, then every correct validator does so by time $\max(t, \mathrm{GST}) + \Delta$.
\end{proposition}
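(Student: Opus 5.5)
The plan is to exploit the fact that every finalization is triggered by receiving a transferable commitment proof, and that the finalizer re-broadcasts it. Whether the proof is a fast $\commitQC$ or a fallback commit certificate, the validator re-broadcasts it when it finalizes (\Cref{alg:fast-path-certification,alg:fallback}). Let $\proc_i$ finalize at time $t$. Every other correct validator $\proc_j$ then receives that certificate by $\max(t,\mathrm{GST})+\Delta$. What remains is to show two things: receiving the certificate lets $\proc_j$ finalize at once, and the recovery step $\recoverProposals(\entries(B))$ returns by that same time.

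First, I would settle participation. A rule that merely processes a received certificate and finalizes does not send a message, so by the standing convention it may fire whether or not $\proc_j$ participates. If finalization must instead wait for participation, $\Delta$-synchronized participation helps. Because $\proc_i$ participated at some time $T\le t$, $\proc_j$ starts participating by $\max(T,\mathrm{GST})+\Delta \le \max(t,\mathrm{GST})+\Delta$. Second, the certificate is valid at $\proc_j$, and by \Cref{prop:agreement-entries} it names the same entries as any other proof.

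The main obstacle is the recovery call, since finalizing requires $\recoverProposals$ to have returned. For each positive entry, $\proc_j$ needs $f+1$ chunks under the root, plus $f+1$ decryption shares. I would split into two cases.

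\emph{$\FastQC$-backed entries.} Here $2f+1$ validators signed proposal votes, and each such vote carries its chunk. So at least $f+1$ correct validators broadcast proposal votes with their chunks and decryption shares before $\proc_i$ formed its $\commitQC$, hence before $t$. These messages reach $\proc_j$ by $\max(t,\mathrm{GST})+\Delta$.

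\emph{$\FallbackQC$-backed entries.} Here the fallback commit certificate has at least $f+1$ correct signers. Each of them re-broadcast its assigned chunk before commit-voting, which happened before $t$. Decryption shares come from the proposal votes of the at least $f+1$ correct senders among the $2f+1$ votes that underlie the fallback votes.

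With these chunks and shares in hand, the DA module resolves each root at $\proc_j$ by that time. One caveat concerns the messages sent before $\mathrm{GST}$: they are only guaranteed to arrive by $\mathrm{GST}+\Delta$, which is why the bound uses the maximum. The step I would check most carefully is that every one of these messages was indeed sent by time $t$, or by the time $\proc_i$ obtained its proof. If some chunk re-broadcasts happen later, for example the re-broadcasts issued upon the fallback meta-block, I would fall back on the chunks already attached to the commit-round messages.
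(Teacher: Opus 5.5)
Your overall route matches the paper's: finalization is triggered by a transferable commitment proof that the finalizer re-broadcasts. Recovery then succeeds because the needed messages were all sent by time $t$. The chunks come from the first-round votes of the $\geq f+1$ correct $\FastQC$ signers, or from the re-broadcasts that the $\geq f+1$ correct fallback commit voters make before voting. The $f+1$ decryption shares come from first-round votes. That half of your argument is sound.

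The gap is in the participation step. Your first branch is false for this protocol. The rules that finalize on a received proof (\Cref{alg:fast-path-certification}, line~\ref{line:fast-recv-commitqc}; \Cref{alg:fallback}, line~\ref{line:fb-recv-commit}) re-broadcast the proof before triggering $\mathsf{finalize}$. Under the standing convention they therefore fire only while $\proc_j$ is actively participating. Your second branch invokes $\Delta$-synchronized participation, but that only guarantees $\proc_j$ has \emph{started} participating by $\max(t,\mathrm{GST})+\Delta$. It does not rule out that $\proc_j$ has already \emph{stopped}. A validator that invoked $\mathsf{abandon}()$ before the proof arrived would never run the gated rule and would never finalize, so the claim would fail.

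The missing fact is exactly where the hypothesis ``when run within \cadence'' enters, and your proof never uses that hypothesis. Within \cadence, a correct validator abandons $\mathcal{S}[s]$ only after finalizing it (\Cref{algorithm:cadence}, line~\ref{line:abandon}). So there are two cases. If $\proc_j$ has already stopped participating, it has already finalized. Otherwise it is participating, or starts by $\max(t,\mathrm{GST})+\Delta$. In that case the proof, which stays buffered while the rule is blocked, is processed no later than that time. Adding this case split, and stating the buffering semantics explicitly, closes your argument.

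A smaller point: anchor the timing of the $\FastQC$ signers' first-round votes to the $\FastQC$ being embedded in the meta-block, which existed before the proof and hence before $t$. Do not anchor it to $\proc_i$ ``forming its $\commitQC$'', since $\proc_i$ may only have received the proof, or the proof may be a fallback commit certificate over a meta-block with $\FastQC$ entries.
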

\begin{proof}
On finalizing a meta-block $B$, a correct validator broadcasts a transferable commitment proof for $\entries(B)$ --- a fast commit certificate $\commitQC$ (\Cref{alg:fast-path-certification}, line~\ref{line:fast-rebroadcast-commitqc}) or a fallback commit certificate (\Cref{alg:fallback}, line~\ref{line:fb-commit-rebroadcast}). A correct validator finalizes upon receiving any valid commitment proof (\Cref{alg:fast-path-certification}, line~\ref{line:fast-recv-commitqc}; \Cref{alg:fallback}, line~\ref{line:fb-recv-commit}), by recovering and committing $\recoverProposals(\entries(B))$ (\Cref{alg:da}, line~\ref{line:da-recover-slot}).
Suppose a correct validator finalizes at time $t$ with such a proof for $\entries(B)$. After $\mathrm{GST}$ its broadcast reaches every correct validator within $\Delta$, so the proof arrives at every correct validator by $\max(t, \mathrm{GST}) + \Delta$.
We first argue that every correct validator also \emph{processes} it by that time, despite the finalization rules --- which re-broadcast the proof --- being gated on active participation.
Consider a correct validator $\proc_j$. If $\proc_j$ has stopped participating, it has already finalized: within \cadence, a validator abandons the instance only after finalizing (\Cref{algorithm:cadence}, line~\ref{line:abandon}).
Otherwise, since the finalizer started participating at some time $\leq t$, $\Delta$-synchronized participation guarantees that $\proc_j$ starts participating by $\max(t, \mathrm{GST}) + \Delta$; any message that arrives earlier stays buffered, and its rule fires as soon as $\proc_j$ participates (\S\ref{subsection:chorus-protocol-overview}).
Either way, $\proc_j$ processes every message that reaches it by $\max(t, \mathrm{GST}) + \Delta$ no later than that time.
It remains to show that recovery succeeds by the same time, i.e., that by $\max(t, \mathrm{GST}) + \Delta$ every correct validator holds $f+1$ valid chunks under each positive entry's root and $f+1$ decryption shares.
For a $\FastQC$ entry, its $2f+1$ positive signed entries include at least $f+1$ from honest validators, each of which included the chunk assigned to it in its first-round vote (\Cref{alg:voting}, line~\ref{line:vote-broadcast}), broadcast to every validator no later than time $t$ (the $\FastQC$ is part of $B$); these chunks reach every correct validator by $\max(t,\mathrm{GST})+\Delta$.
For a $\FallbackQC$ entry, the commitment proof is a fallback commit certificate aggregating $2f+1$ commit votes (\Cref{alg:fallback}, line~\ref{line:fb-formcommitqc}) cast by time $t$; at least $f+1$ of the voters are correct, and each received and re-broadcast its assigned chunk under the entry's root before casting its vote (\Cref{alg:fallback}, line~\ref{line:fb-commit-wait}), hence by time $t$. These $f+1$ distinct chunks reach every correct validator by $\max(t,\mathrm{GST})+\Delta$.
Finally, if $B$ has any positive entry, its certificate rests on $2f+1$ first-round votes (directly for a $\FastQC$; via each fallback caster's $2f+1$ received votes for a $\FallbackQC$, \Cref{alg:fallback}, line~\ref{line:fb-pathvote-guard}), so at least $f+1$ correct validators broadcast their first-round votes --- each carrying a decryption share --- by time $t$, and every correct validator holds $f+1$ valid shares by $\max(t,\mathrm{GST})+\Delta$.
Hence by $\max(t, \mathrm{GST}) + \Delta$ every correct validator holds a commitment proof for $\entries(B)$ together with all the chunks and shares that $\recoverProposals(\entries(B))$ needs, so it recovers the proposal vector (\Cref{alg:da}, line~\ref{line:da-recover-slot}) and finalizes it. \qed
\end{proof}

With totality in hand, we bound when every correct validator finalizes assuming no correct validator stops participating.

\begin{proposition}
\label{prop:chorus-finalization-time}
Suppose 
participation is $\Delta$-synchronized, 
all correct validators start participating by some time $t$, and no correct validator stops participating before time $T = \max(t, \mathrm{GST}) + 4\Delta + \ell_{\mathsf{MVBA}}$. 
When run within \cadence, the following holds for \chorus: every correct validator finalizes a proposal vector by time $T$.
\end{proposition}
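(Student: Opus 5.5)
Let $t_0 = \max(t, \mathrm{GST})$. I will walk a timeline from $t_0$, using the fact that after $\mathrm{GST}$ every message from a correct validator arrives within $\Delta$. Since no correct validator stops participating before $T$, all message-sending rules stay enabled throughout. The goal is to show that some correct validator finalizes by $T - \Delta$. \Cref{prop:chorus-totality} then gives finalization at every correct validator by $T$.

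\textbf{First-round votes.} Every correct validator has started participating by $t_0$. It casts its proposal vote at $\max(s.\fdeadline, \text{entry time})$. Because every correct validator is participating by $t_0$, it has voted by $\max(s.\fdeadline, t_0)$. There is a subtle point when $s.\fdeadline > t_0$. I would handle it by noting that termination is only claimed relative to participation, and by using the fallback trigger $s.\fdeadline + \Delta$. If this case causes trouble, I would check whether the pseudocode lets a validator vote on entering, which bounds the vote time by $t_0$ whenever the deadline has already passed. The main case is $s.\fdeadline \le t_0$. There, all correct proposal votes reach all correct validators by $t_0 + \Delta$. Each correct validator then holds $2f+1$ proposal votes, and since $s.\fdeadline + \Delta \le t_0 + \Delta$, the fallback-vote guard is satisfied by then.

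\textbf{Second round and the case split.} By $t_0 + \Delta$, each correct validator has either built a fast \metablock and cast a fast commit vote (setting $\mathit{pathVote}=\textsc{fast}$), or cast a fallback vote. This is because the two are mutually exclusive and exactly one is enabled. For the fallback vote, the validator also needs its own fallback signed entries. These require only local evidence: $f+1$ positive votes with chunks let $\isDecoded$ succeed, and otherwise it signs $\bot$. So these entries are available at the same time. All second-round messages arrive by $t_0 + 2\Delta$.

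\textbf{Case A:} all $2f+1$ correct validators cast fast votes on the same entries. This holds by the quorum intersection argument of \Cref{prop:agreement-entries}, since all fast \metablocks share their entries. A $\commitQC$ then forms by $t_0 + 2\Delta$ and the validator finalizes.

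\textbf{Case B:} otherwise. Every correct validator by $t_0 + 2\Delta$ either received a fast vote carrying a fast \metablock, or holds $2f+1$ fallback votes. From these it builds a valid fallback \metablock, using the counting argument that some certified entry always exists for each proposer. It proposes to $\mathsf{MVBA}[\slot]$. There is a gap to check: a validator may have neither enough fast votes nor $2f+1$ fallback votes. But it has received $2f+1$ second-round messages, and any fast vote among them lets it propose the carried fast \metablock, so it can always propose. Hence all correct validators propose by $t_0 + 2\Delta$. By MVBA termination (no abandonment before $T$), all decide by $t_0 + 2\Delta + \ell_{\mathsf{MVBA}}$.

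\textbf{Fallback commit.} After deciding $B'$, a validator waits for its assigned chunk under each positive $\FallbackQC$ root. An honest $\FallbackQC$ contributor sent every validator its chunk when voting, by $t_0 + \Delta$, so these chunks are already present. The validator then re-broadcasts the chunks and commit-votes. The commit votes arrive by $t_0 + 3\Delta + \ell_{\mathsf{MVBA}}$, forming a fallback commit certificate. Recovery completes within one more $\Delta$, with the chunk and decryption-share accounting done as in \Cref{prop:chorus-totality}. This yields finalization by $t_0 + 4\Delta + \ell_{\mathsf{MVBA}} = T$, directly or via totality.

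\textbf{Main obstacle.} The hardest part is the accounting of the second round in Case B. It has two pieces. First, I must guarantee that every correct validator really gathers $2f+1$ second-round messages, whether fast or fallback votes, and so proposes to the MVBA even when correct validators split between the two paths. Second, I must show that $\mathsf{MVBA}.\mathsf{propose}$ is gated only on participation and not on any condition that could stall.
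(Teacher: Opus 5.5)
Your timeline is carried out only for $s.\fdeadline \le t_0$, and the case you defer is not a corner case. Within \cadence, \conductor normally opens $s$ exactly at $s.\fdeadline - \Delta$, so $t_0 < s.\fdeadline$ is the typical situation, and none of your suggested fixes covers it. Without a bound on $s.\fdeadline$ in terms of $t$, correct validators need not vote before $s.\fdeadline$, which could lie arbitrarily far beyond $t_0$, and then the bound $T$ would be false. This is exactly why the statement says ``when run within \cadence''.

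The missing ingredient is \conductor's integrity, which holds unconditionally, so there is no circularity: no correct validator opens $s$---and hence starts participating in $\mathcal{S}[s]$---before $s.\fdeadline - \Delta$. Since the correct validators started participating by $t$, this gives $s.\fdeadline \le t + \Delta \le t_0 + \Delta$. So every correct validator casts its first-round vote by $t_0 + \Delta$, not by $t_0$.

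With that bound, every stage of your timeline shifts by one $\Delta$:
\begin{itemize}
\item First-round votes arrive by $t_0 + 2\Delta$, when the fallback guard $s.\fdeadline + \Delta \le t_0 + 2\Delta$ is met.
\item Second-round messages arrive by $t_0 + 3\Delta$, when the time gate $s.\fdeadline + 2\Delta \le t_0 + 3\Delta$ of the fast-case MVBA rule is also met.
\item All correct validators therefore propose to the MVBA by $t_0 + 3\Delta$ and decide by $T - \Delta$.
\item The correct fallback commit votes then arrive by $T$.
\end{itemize}
Your last step, which charges an extra $\Delta$ for recovery after the certificate forms, therefore no longer fits. You need to argue instead that $\recoverProposals$ does not block at $T$. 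The chunks behind $\FastQC$ entries and the $f+1$ decryption shares travel with the first-round votes, arriving by $t_0 + 2\Delta$. For each positive $\FallbackQC$ entry, the $f+1$ correct commit voters re-broadcast their assigned chunks before voting, so those chunks also arrive by $T$.

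A minor point: in the pseudocode a $\CommitVote$ carries only $\entries(B)$. A validator that did not build the fast \metablock itself adopts it from the separate $\textsc{FastBlock}$ broadcast.
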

\begin{proof}
Throughout the proof, let $M = \max(t, \mathrm{GST})$; after time $M$, every message between correct validators is delivered within $\Delta$.
We first record one observation: if some correct validator finalizes a proposal vector by time $T - \Delta$, then totality (\Cref{prop:chorus-totality}) carries every correct validator to finalization within a further $\Delta$, i.e., by $T$.
For the rest of the proof, we therefore assume that no correct validator finalizes by time $T - \Delta$, and we trace the protocol on a single timeline.

\smallskip
\noindent \emph{By $M + \Delta$: first-round votes.} Within \cadence, a correct validator starts participating in $\slot$ no earlier than $s.\fdeadline - \Delta$ (by the integrity of \conductor)\footnote{\conductor satisfies integrity unconditionally, without relying on any guarantee of \chorus; hence, there is no circularity.}, so the deadline $s.\fdeadline$ lies at most $\Delta$ after any correct validator's entry, and hence at most $\Delta$ after time $t$.
Each correct validator issues its first-round vote upon reaching the deadline --- or as soon as it is past, if the validator entered the slot late (\Cref{alg:voting}, line~\ref{line:vote-broadcast}) --- so every correct validator issues it by time $M + \Delta$.
Each such vote carries the voter's signed entries, its assigned chunks for its positive entries, and its decryption share.

\smallskip
\noindent \emph{By $M + 2\Delta$: shares and second-round votes.} The first-round votes of all $n - f \geq 2f+1$ correct validators reach every correct validator by $M + 2\Delta$.
Consequently, by $M + 2\Delta$, every correct validator holds at least $f+1$ valid decryption shares --- enough to reconstruct the slot key --- and casts its second-round vote: a commit vote if it holds a fast \metablock (\Cref{alg:fast-path-certification}, line~\ref{line:fast-commitvote}), and a fallback vote otherwise, the guard at line~\ref{line:fb-pathvote-guard} of \Cref{alg:fallback} being met since $s.\fdeadline + \Delta \leq M + 2\Delta$ and $2f+1$ valid first-round votes have arrived.

\smallskip
\noindent \emph{By $M + 3\Delta$: MVBA proposals and fallback chunks.} The second-round votes reach every correct validator by $M + 3\Delta$; we argue that every correct validator $\proc_i$ proposes to $\mathsf{MVBA}[\slot]$ by then.
If $\proc_i$ holds $2f+1$ fallback votes (each entry of a correct fallback vote is a $\FastQC$ or the sender's own fallback signed entry), the rule at line~\ref{line:fb-mvba-propose} of \Cref{alg:fallback} fires by $M + 3\Delta$; the per-proposer selection it performs (lines~\ref{line:fb-build-entry}--\ref{line:fb-formqc}) always succeeds --- for every proposer, if no $\FastQC$ is available and no two collected positive entries conflict, the $2f+1$ collected entries span at most two distinct values, so $f+1$ of them match and a $\FallbackQC$ forms (if two collected positive entries do conflict, their verified proposer signatures directly yield an $\EquivCert$, line~\ref{line:fb-build-equiv}) --- and the assembled \metablock is valid by construction.
Otherwise, some correct validator cast a commit vote instead of a fallback vote; that validator held a fast \metablock and broadcast it by $M + 2\Delta$ (\Cref{alg:fast-path-certification}, line~\ref{line:fast-metablock}), so $\proc_i$ receives it and adopts a $\FastQC$ for every proposer by $M + 3\Delta$; as local time has passed $s.\fdeadline + 2\Delta \leq M + 3\Delta$, the rule at line~\ref{line:fb-mvba-propose-fast} fires by $M + 3\Delta$.
Moreover, by $M + 3\Delta$, every correct validator holds its assigned chunk under the root of every positive $\FallbackQC$ that ever forms: such a $\FallbackQC$ aggregates $f+1$ fallback signed entries, at least one cast by a correct validator --- by $M + 2\Delta$, as just argued --- which, upon casting it, sent every validator its assigned chunk (\Cref{alg:fallback}, line~\ref{line:fb-redisseminate}).

\smallskip
\noindent \emph{By $T - \Delta = M + 3\Delta + \ell_{\mathsf{MVBA}}$: MVBA decision and commit votes.} Since no correct validator finalizes by $T - \Delta$ and a validator abandons its MVBA invocation only upon finalizing, no correct validator abandons $\mathsf{MVBA}[\slot]$ by $T - \Delta$.
Both premises of the MVBA's $\ell_{\mathsf{MVBA}}$-termination (\Cref{mod:mvba}) thus hold --- all correct validators propose by $M + 3\Delta$, and none abandons before $M + 3\Delta + \ell_{\mathsf{MVBA}}$ --- so every correct validator decides by $T - \Delta$, and, by the MVBA's agreement and external validity, they all decide the same valid \metablock $B'$.
Upon deciding, each correct validator immediately casts its fallback commit vote (\Cref{alg:fallback}, line~\ref{line:fb-commitvote}): the wait at line~\ref{line:fb-commit-wait} has already cleared, as each holds its assigned chunk for every positive $\FallbackQC$ of $B'$ since $M + 3\Delta$, and casting re-broadcasts those chunks.

\smallskip
\noindent \emph{By $T$: certificates and finalization.} The $n - f \geq 2f+1$ correct commit votes --- all carrying the same $\entries(B')$ --- reach every correct validator by $T$, so each assembles the fallback commit certificate (\Cref{alg:fallback}, line~\ref{line:fb-formcommitqc}), broadcasts it, and finalizes upon its receipt (lines~\ref{line:fb-recv-commit}--\ref{line:fb-finalize}).
The recovery performed at finalization (\Cref{alg:da}, line~\ref{line:da-recover-slot}) returns immediately: for every $\FastQC$ entry of $B'$, the chunks of its $\geq f+1$ correct signers arrived with their first-round votes by $M + 2\Delta$; for every positive $\FallbackQC$ entry, the $f+1$ correct commit voters' re-broadcast chunks arrive by $T$; and the $f+1$ decryption shares arrived by $M + 2\Delta$.
Hence $\recoverProposals(\entries(B'))$ does not block (\Cref{alg:da}, line~\ref{line:da-wait}), and every correct validator finalizes a proposal vector for $s$ by time $T$. \qed
\end{proof}

We are finally ready to establish \chorus's termination.

\begin{lemma}[Termination]
\label{lemma:chorus-termination}
When run within \cadence, and provided participation is $\Delta$-synchronized, \chorus (\Cref{alg:proposer-dissemination,alg:voting,alg:fast-path-certification,alg:fallback,alg:da}) satisfies $\ell$-termination with $\ell = 5\Delta + \ell_{\mathsf{MVBA}}$: if all correct validators start participating by some time $t$, then every correct validator finalizes a proposal vector by time $\max(t, \mathrm{GST}) + 5\Delta + \ell_{\mathsf{MVBA}}$.
\end{lemma}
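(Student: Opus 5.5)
The plan is to derive termination from \Cref{prop:chorus-finalization-time} and \Cref{prop:chorus-totality}. The only gap is that \Cref{prop:chorus-finalization-time} assumes no correct validator stops participating before $T' = \max(t,\mathrm{GST}) + 4\Delta + \ell_{\mathsf{MVBA}}$, whereas the termination statement makes no such assumption. Fix $t$ with all correct validators participating by $t$, and set $M = \max(t,\mathrm{GST})$. The argument splits on whether some correct validator stops participating before $T'$.

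\emph{Case 1: no correct validator stops participating before $T'$.} Then \Cref{prop:chorus-finalization-time} applies verbatim. Every correct validator finalizes by $T' = M + 4\Delta + \ell_{\mathsf{MVBA}}$, which is within the claimed bound $M + 5\Delta + \ell_{\mathsf{MVBA}}$.

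\emph{Case 2: some correct validator $\proc_k$ stops participating at a time $t' < T'$.} Within \cadence, a validator invokes $\mathsf{abandon}()$ only immediately after finalizing (\Cref{algorithm:cadence}, line~\ref{line:abandon}). Hence $\proc_k$ finalized at some time $t'' \le t' < T'$. \Cref{prop:chorus-totality}, which needs only $\Delta$-synchronized participation, then gives that every correct validator finalizes by
\[
\max(t'', \mathrm{GST}) + \Delta \le \max(T', \mathrm{GST}) + \Delta = M + 5\Delta + \ell_{\mathsf{MVBA}},
\]
using $T' \ge \mathrm{GST}$. This extra $\Delta$ is exactly why the bound is $5\Delta$ rather than $4\Delta$.

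The main obstacle is justifying Case 2 cleanly. Abandonment in \chorus is not an arbitrary external event: the framework only issues it after $\mathsf{finalize}$ fires for that instance. I would cite that line directly. I would also check that the MVBA-level abandonment (\Cref{alg:fallback}) is likewise triggered only by finalization, so that no earlier "stop" can break the MVBA termination premise. Everything else is a direct invocation of the two propositions.
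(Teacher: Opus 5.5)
Your proof is correct and matches the paper's. Both split into two cases and combine \chorus's totality (\Cref{prop:chorus-totality}) with the finalization-time bound (\Cref{prop:chorus-finalization-time}), using that within \cadence a validator abandons only after finalizing; the paper splits on whether some correct validator has \emph{finalized} by $\max(t,\mathrm{GST}) + 4\Delta + \ell_{\mathsf{MVBA}}$ rather than on whether one has \emph{abandoned} before that time, which is equivalent here.
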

\begin{proof}
Within \cadence a correct validator invokes $\mathsf{abandon}()$ only after it has finalized (\Cref{algorithm:cadence}, line~\ref{line:abandon}); equivalently, once a correct validator starts participating it keeps participating until it finalizes.
Let $T_0 = \max(t, \mathrm{GST}) + 4\Delta + \ell_{\mathsf{MVBA}}$ and distinguish two cases.
\begin{compactitem}
    \item \emph{Some correct validator finalizes by $T_0$.} By totality (\Cref{prop:chorus-totality}), every correct validator then finalizes within a further $\Delta$, hence by $T_0 + \Delta = \max(t, \mathrm{GST}) + 5\Delta + \ell_{\mathsf{MVBA}}$.

    \item \emph{No correct validator finalizes by $T_0$.} Then no correct validator has abandoned its instance by $T_0$, so all correct validators remain participating throughout $[t, T_0]$; by \Cref{prop:chorus-finalization-time}, every correct validator finalizes by $T_0 \leq \max(t, \mathrm{GST}) + 5\Delta + \ell_{\mathsf{MVBA}}$.
\end{compactitem}
In either case, every correct validator finalizes by $\max(t, \mathrm{GST}) + 5\Delta + \ell_{\mathsf{MVBA}}$. \qed
\end{proof}

\paragraph{Epilogue.}
We conclude by consolidating the results of this subsection into a single statement.

\begin{theorem}[Correctness of \chorus]
\label{thm:chorus-correctness}
When run within \cadence, and provided participation is $\Delta$-synchronized, \chorus (\Cref{alg:proposer-dissemination,alg:voting,alg:fast-path-certification,alg:fallback,alg:da}) is a correct implementation of the slot consensus primitive.
\end{theorem}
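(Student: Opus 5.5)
The theorem is an epilogue, so I would prove it by collecting the property-by-property lemmas already established in this subsection. I would also check that each lemma is stated under hypotheses no stronger than those of \Cref{thm:chorus-correctness}, namely running within \cadence with $\Delta$-synchronized participation. I would go through the six properties of \Cref{mod:slotconsensus} in turn:
\begin{compactitem}
\item quiescence follows from \Cref{lemma:chorus-quiescence};
\item hiding follows from \Cref{lemma:chorus-hiding}, resting on the simulation argument of \S\ref{appendix:encryption};
\item slot safety follows from \Cref{lemma:chorus-slot-safety};
\item agreement follows from \Cref{lemma:chorus-agreement}, combining \Cref{prop:agreement-entries} with \Cref{prop:recovery-consistency};
\item proposal inclusion follows from \Cref{lemma:chorus-proposal-inclusion}, via \Cref{prop:honest-positive-entry};
\item termination follows from \Cref{lemma:chorus-termination}.
\end{compactitem}
The first five hold unconditionally, so they transfer directly.

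The one step needing care is matching the forms of termination. The module asks only that if every correct validator starts participating, then every correct validator eventually finalizes. \Cref{lemma:chorus-termination} gives the quantitative $\ell$-termination with $\ell = 5\Delta + \ell_{\mathsf{MVBA}}$. To bridge the two, I would let $t$ be the (finite) latest time at which a correct validator starts participating. The lemma then yields finalization by $\max(t,\mathrm{GST}) + 5\Delta + \ell_{\mathsf{MVBA}}$. This is finite because $\mathrm{GST}$ is finite and the MVBA latency $\ell_{\mathsf{MVBA}}$ is finite, so eventual finalization follows.

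I expect the main obstacle to be hygiene of hypotheses rather than new mathematics. Two points need checking. First, the integrity of \conductor, used inside \Cref{prop:chorus-finalization-time}, must be available without circularity; the footnote there already notes that integrity holds unconditionally. Second, the "within \cadence" assumption is used only to guarantee that $\mathsf{abandon}()$ is invoked after finalization, and this is exactly line~\ref{line:abandon} of \Cref{algorithm:cadence}. Once both are confirmed, the theorem follows by conjunction of the six lemmas.
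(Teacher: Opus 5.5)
Your proposal is correct and follows the paper's proof: the theorem is the conjunction of the five lemmas that hold unconditionally (quiescence, hiding, slot safety, agreement, proposal inclusion) with the termination lemma, whose hypotheses are exactly the theorem's, namely running within \cadence{} with $\Delta$-synchronized participation. You also make explicit a step the paper leaves implicit, converting the quantitative $(5\Delta+\ell_{\mathsf{MVBA}})$-termination into the module's eventual termination by taking $t$ as the last time a correct validator starts participating, and that step is sound.
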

\begin{proof}
Quiescence, hiding, slot safety, agreement, and proposal inclusion are satisfied unconditionally (\Cref{lemma:chorus-quiescence,lemma:chorus-hiding,lemma:chorus-slot-safety,lemma:chorus-agreement,lemma:chorus-proposal-inclusion}, respectively), whereas termination is satisfied exactly under the two assumptions of the theorem --- \chorus running within \cadence, with $\Delta$-synchronized participation (\Cref{lemma:chorus-termination}). \qed
\end{proof}



%% file: alg_proposer.tex
\begin{algorithm}[tbp]
\caption{Proposer Dissemination for Slot $\slot$ (for validator $\proc_i$)}
\label{alg:proposer-dissemination}
\begin{algorithmic}[1]
\footnotesize

\State \textbf{Rule} (for \Cref{alg:proposer-dissemination,alg:voting,alg:fast-path-certification,alg:fallback,alg:da}):
\statelabel{- A message-sending rule --- including one that invokes $\mathsf{MVBA}[\slot].\mathsf{propose}$ --- is activated only while $\proc_i$ is actively participating; rules that only process received messages may fire at any time (\S\ref{subsection:chorus-protocol-overview}).}

\medskip
\State \textbf{upon} $\mathsf{propose}(\mathit{proposal} \in \mathsf{Proposal})$:
\State \hskip2em parse $\mathit{proposal} = \langle \slot, i, \payload\rangle$ \BlueComment{$\slot = \mathit{proposal}.\fslot$, $i = \mathit{proposal}.\fproposer$, $\payload = \mathit{proposal}.\fpayload$}
\State \hskip2em \textbf{if} local time $= s.\fdeadline - \Delta$:
\State \hskip4em Let $P \gets \langle \slot, i, \payload, \mathsf{Sign}_i(\langle\slot, i, \Hash(\payload)\rangle) \rangle$
\State \hskip4em Let $EP \gets \mathsf{TIBE}.\Enc(\mpk, \slot, P)$ \BlueComment{hide proposal contents}
\State \hskip4em Let $(\cdata_1,\ldots,\cdata_n) \gets \Encode(EP)$ \BlueComment{erasure-code into $n$ chunks; any $f+1$ suffice to decode}
\State \hskip4em Let $\mroot \gets \MerkleRoot(\Hash(1, \cdata_1),\ldots,\Hash(n, \cdata_n))$ \BlueComment{bind each chunk to its position}
\State \hskip4em Let $\mathit{ChunkHeader} \gets \langle \slot, i, \mroot, \mathsf{Sign}_i(\langle\slot, i, \mroot\rangle) \rangle$
\State \hskip4em \textbf{for each} validator $\proc_r$:
\State \hskip6em \textbf{send} $\langle \textsc{Chunk}, \mathit{ChunkHeader}, r, \cdata_r, \pi_r \rangle$ to $\proc_r$

\end{algorithmic}
\end{algorithm}

%% file: alg_voting.tex
\begin{algorithm}[tbp]
\caption{Voting for Slot $\slot$ (for validator $\proc_i$)}
\label{alg:voting}
\begin{algorithmic}[1]
\footnotesize

\State \textbf{Local variables:}
\statelabel{$\mathsf{Map}(\mathsf{Proposer} \to \mathsf{Entry})$ $\Entry(\pid)$ for each proposer $\proc_\pid \in s.\fproposers$, initially $\bot$; the entry $\proc_i$ will sign for $\proc_\pid$ (positive iff $\rhobot \neq \bot$).}

\medskip
\State \textbf{function} \textsf{onChunkValidated}($\pid, \mroot$): \BlueComment{invoked by \modcall{alg:da} when a valid chunk has been received from $\proc_\pid$}
\State \hskip2em \textbf{if} local time $\leq s.\fdeadline$ and $\Entry(\pid) = \bot$:
\State \hskip4em $\Entry(\pid) \gets \langle s, \pid, \mroot\rangle$ \label{line:vote-positive} \BlueComment{positive entry; chunk received by the deadline}

\medskip
\State \textbf{upon} local time $\geq s.\fdeadline$ for the first time:
\State \hskip2em \textbf{for each} proposer $\proc_\pid \in s.\fproposers$ with $\Entry(\pid)=\bot$:
\State \hskip4em $\Entry(\pid) \gets \langle s, \pid, \bot\rangle$ \BlueComment{negative entry: no valid chunk by the deadline}
\State \hskip2em Let $\mathit{SignedEntries} \gets \{\langle \Entry(\pid), \sigma_\pid\rangle\}_{\proc_\pid \in s.\fproposers}$ where $\sigma_\pid = \mathsf{Sign}_i(\langle\votetag, \Entry(\pid)\rangle)$
\State \hskip2em Let $\mathit{Chunks} \gets \emptyset$
\State \hskip2em \textbf{for each} $\proc_\pid \in s.\fproposers$ with $\Entry(\pid)$ positive:
\State \hskip4em add to $\mathit{Chunks}$ the chunk assigned to $\proc_i$ that it received and validated from $\proc_\pid$
\State \hskip2em Let $\mathit{DecryptShare} \gets \TIBE.\mathsf{KeyShare}(s, \mathit{MSK}_i)$
\State \hskip2em \textbf{broadcast} $\langle \textsc{Vote}, s, \mathit{SignedEntries}, \mathit{Chunks}, \mathit{DecryptShare}\rangle$ \label{line:vote-broadcast}

\end{algorithmic}
\end{algorithm}

%% file: alg_fast.tex
\begin{algorithm}[tbp]
\caption{Fast-Path for Slot $\slot$ (for validator $\proc_i$)}
\label{alg:fast-path-certification}
\begin{algorithmic}[1]
\footnotesize

\State \textbf{Local variables:}
\statelabel{$\mathsf{Map}(\mathsf{Proposer} \times (\mathsf{MerkleRoot} \cup \{\bot\}) \to \mathsf{Set}(\mathsf{SignedEntry}))$ $\mathit{Votes}(\pid, \rhobot)$ for each $\proc_\pid \in s.\fproposers$ and $\rhobot \in \mathsf{MerkleRoot} \cup \{\bot\}$, initially $\emptyset$; collects the fast-path signed entries received for $(\proc_\pid, \rhobot)$.}
\statelabel{$\mathsf{Map}(\mathsf{Proposer} \to \mathsf{Evidence})$ $\Ev(\pid)$ for each proposer $\proc_\pid \in s.\fproposers$, initially $\bot$; the strongest evidence about $\proc_\pid$, shared with \modcall{alg:fallback}. Evolves monotonically along $\bot \to$ fallback signed entry $\to \FastQC$.}
\statelabel{$\mathsf{Enum}$ $\mathit{pathVote} \in \{\textsc{none}, \textsc{fast}, \textsc{fallback}\}$, initially $\textsc{none}$; $\proc_i$'s second-phase vote (fast-commit or fallback) for slot $\slot$, shared with \modcall{alg:fallback}.}

\medskip
\State \textbf{upon} receiving $\langle \textsc{Vote}, s, \mathit{SignedEntries}, \mathit{Chunks}, \mathit{DecryptShare}\rangle$ from validator $\proc_r$:
\State \hskip2em \textbf{for each} proposer $\proc_\pid \in s.\fproposers$:
\State \hskip4em \textbf{if} $\mathit{SignedEntries}$ has no entry for $\proc_\pid$, or its signature fails verification under $\proc_r$:
\State \hskip6em \textbf{return}
\State \hskip4em Let $\langle\langle s, \pid, \rhobot\rangle, \sigma\rangle$ be the signed entry for $\proc_\pid$
\State \hskip4em \textbf{if} $\rhobot \neq \bot$, and either no $m \in \mathit{Chunks}$ has \tobiaschange{(chunk index $r$ and} matching $\rhobot$\tobiaschange{),} or $\modcall{alg:da}.\tryIngestChunk(m) = \false$:
\State \hskip6em \textbf{return}
\State \hskip2em \textbf{if} $\modcall{alg:da}.\tryIngestShare(r, \mathit{DecryptShare}) = \false$:
\State \hskip4em \textbf{return}
\State \hskip2em \textbf{for each} signed entry $\mathit{pv} = \langle E, \sigma\rangle \in \mathit{SignedEntries}$ with $E = \langle s, \pid, \rhobot\rangle$:
\State \hskip4em $\mathit{Votes}(\pid, \rhobot) \gets \mathit{Votes}(\pid, \rhobot) \cup \{\mathit{pv}\}$

\medskip
\State \textbf{upon} $|\mathit{Votes}(\pid, \rhobot)| \geq 2f+1$ for some $\proc_\pid$ and $\rhobot$:
\State \hskip2em Let $\Sigma$ be the aggregate of the signatures
\State \hskip2em $\Ev(\pid) \gets \langle \votetag, \langle s, \pid, \rhobot\rangle, \Sigma\rangle$ \label{line:fast-formqc} \BlueComment{$\FastQC$}

\medskip
\State \textbf{upon} first time $\Ev(\pid)$ is a $\FastQC$ for every proposer $\proc_\pid \in s.\fproposers$:
\State \hskip2em Let $B \gets \{\Ev(\pid)\}_{\proc_\pid \in s.\fproposers}$ \label{line:fast-metablock} \BlueComment{fast \metablock}
\State \hskip2em \textbf{broadcast} $\langle \textsc{FastBlock}, s, B\rangle$ \BlueComment{disseminate so peers can adopt; broadcast unconditionally}
\State \hskip2em \textbf{if} $\mathit{pathVote} = \textsc{none}$:
\State \hskip4em Let $\sigma_i \gets \mathsf{Sign}_i(\langle \CommitMsg, s, \entries(B)\rangle)$
\State \hskip4em \textbf{broadcast} commit vote $\langle \CommitVote, s, \entries(B), \sigma_i\rangle$ \label{line:fast-commitvote}
\State \hskip4em $\mathit{pathVote} \gets \textsc{fast}$ \label{line:fast-pathvote} \BlueComment{commit vote and fallback vote are mutually exclusive}
\State \hskip2em Let $V \gets \modcall{alg:da}.\recoverProposals(\entries(B))$
\State \hskip2em \textbf{speculatively commit} $V$

\medskip
\State \textbf{upon} receiving $\langle \textsc{FastBlock}, s, B\rangle$ where $B$ is a valid fast \metablock:
\State \hskip2em \textbf{for each} proposer $\proc_\pid \in s.\fproposers$:
\State \hskip4em $\Ev(\pid) \gets B(\pid)$

\medskip
\State \textbf{upon} $\proc_i$ has collected $2f+1$ fast commit votes carrying the same entries $\entries(B)$: \label{line:fast-collect-commit}
\State \hskip2em Let $\Sigma$ be the aggregate of the $2f+1$ signatures, and let $\commitQC \gets \langle \CommitMsg, s, \entries(B), \Sigma\rangle$
\State \hskip2em \textbf{broadcast} $\commitQC$ \label{line:fast-broadcast-commitqc}

\medskip
\State \textbf{upon} receiving a valid fast $\commitQC = \langle \CommitMsg, s, \entries(B), \Sigma\rangle$: \label{line:fast-recv-commitqc}
\State \hskip2em \textbf{broadcast} $\commitQC$ \label{line:fast-rebroadcast-commitqc}
\State \hskip2em \textbf{trigger} $\mathsf{finalize}(\modcall{alg:da}.\recoverProposals(\entries(B)))$ \label{line:fast-finalize}

\end{algorithmic}
\end{algorithm}

%% file: p2_mvba.tex
\begin{module}[h]
\caption{Multi-Valued Byzantine Agreement (MVBA)}
\label{mod:mvba}
\footnotesize
\begin{algorithmic}[1]

\Statex \textbf{Parameters:}
\begin{compactitem}[-]
    \item $\mathit{id}$: a unique identifier.
\end{compactitem}

\smallskip
\Statex \textbf{Interface:}
\begin{compactitem}[-]
    \item input $\mathsf{propose}(B)$: a validator proposes a valid meta-block $B$, thereby starting to participate.
    \item input $\mathsf{abandon}()$: a validator stops participating.
    \item output $\mathsf{decide}(B)$: a validator decides a meta-block $B$.
\end{compactitem}

\smallskip
\Statex \textbf{Properties:}
\begin{compactitem}[-]
    \item \emph{Agreement:} If a correct validator decides a meta-block $B$ and a correct validator decides a meta-block $B'$, then $B = B'$.

    \item \emph{Integrity:} Every correct validator decides at most once.

    \item \emph{External validity:} If a correct validator decides a meta-block $B$, then $B$ is a valid meta-block.

    \item \emph{$\ell_{\mathsf{MVBA}}$-Termination:} If all correct validators propose by time $t$ and no correct validator abandons before time $\max(t, \mathrm{GST}) + \ell_{\mathsf{MVBA}}$, then every correct validator decides by time $\max(t, \mathrm{GST}) + \ell_{\mathsf{MVBA}}$.

    \item \emph{Quiescence:} No correct validator sends any protocol message before it proposes or after it abandons.
\end{compactitem}
\end{algorithmic}
\end{module}

%% file: alg_fallback.tex
\begin{algorithm}[tbp]
\caption{Transition to Fallback Path for Slot $\slot$ (for validator $\proc_i$)}
\label{alg:fallback}
\begin{algorithmic}[1]
\footnotesize
\State \textbf{Local variables:}
\statelabel{$\mathsf{Map}(\mathsf{Proposer} \to \mathsf{Evidence})$ $\Ev(\pid)$ for each $\proc_\pid \in s.\fproposers$, shared with \modcall{alg:fast-path-certification}.}
\statelabel{$\mathsf{Set}(\mathsf{FallbackVote})$ $M_i$, initially $\emptyset$; the \textsc{FallbackVote} messages received for slot $\slot$.}
\statelabel{$\mathsf{Enum}$ $\mathit{pathVote}$, shared with \modcall{alg:fast-path-certification}.}
\statelabel{$\mathsf{Map}(\mathsf{Proposer} \times (\mathsf{MerkleRoot} \cup \{\bot\}) \to \mathsf{Set}(\mathsf{SignedEntry}))$ $\mathit{Votes}(\pid,\rhobot)$, shared with \modcall{alg:fast-path-certification}.}
\statelabel{$\mathsf{Bool}$ $\mathit{mvbaInvoked}$, initially $\false$.}

\smallskip
\rulecmt{If no fast \metablock can be formed one timeout after the deadline, we enter the fallback path.}
\State \textbf{upon} local time reaches $s.\fdeadline+\Delta$, at least $2f+1$ valid \textsc{Vote} messages have been received, and $\mathit{pathVote} = \textsc{none}$: \label{line:fb-pathvote-guard}
\State \hskip2em \textbf{for each} proposer $\proc_\pid \in s.\fproposers$ with $\Ev(\pid)=\bot$: \label{line:fb-cast-entry}
\State \hskip4em \textbf{if} collected $f+1$ valid positive votes for $(\proc_\pid, \mroot)$ for some $\mroot$ and $\modcall{alg:da}.\isDecoded(\mroot)$:
\State \hskip6em Let $E \gets \langle s, \pid, \mroot\rangle$, and let $\sigma_p$ be the proposer's signature on $E$ from any chunk validated under~$\mroot$
\State \hskip6em $\Ev(\pid) \gets \langle E, \mathsf{Sign}_i(\langle\textsc{fb}, E\rangle), \sigma_p\rangle$ \label{line:fb-positive-entry} \BlueComment{positive fallback signed entry}
\State \hskip6em re-encode proposal; \textbf{send} each validator its assigned chunk for $\mroot$ \label{line:fb-redisseminate} 
\State \hskip4em \textbf{else}: $\Ev(\pid) \gets \langle E, \mathsf{Sign}_i(\langle\textsc{fb}, E\rangle)\rangle$ where $E = \langle s, \pid, \bot\rangle$ \BlueComment{negative fallback signed entry}
\State \hskip2em Let $\sigma_i \gets \mathsf{Sign}_i(\langle \textsc{fallback}, s\rangle)$ \BlueComment{$\proc_i$'s fallback vote for slot $s$}
\State \hskip2em \textbf{broadcast} $\langle \textsc{FallbackVote}, s, \sigma_i, \{\Ev(\pid)\}_{\proc_\pid \in s.\fproposers}\rangle$
\State \hskip2em $\mathit{pathVote} \gets \textsc{fallback}$

\smallskip
\State \textbf{upon} receiving the first $m = \langle \textsc{FallbackVote}, \slot, \sigma, \{R_\pid\}_{\proc_\pid \in s.\fproposers} \rangle$ from a validator $\proc_r$ where $\sigma$ verifies as $\proc_r$'s signature over $\langle \textsc{fallback}, s\rangle$:
\State \hskip2em \textbf{if} some $R_\pid$ is neither a valid $\FastQC$ for $(\slot, \pid)$ nor a valid fallback signed entry from $\proc_r$ for $(\slot, \pid)$: \textbf{return} \label{line:fb-accept}
\State \hskip2em $M_i \gets M_i \cup \{m\}$
\State \hskip2em \textbf{for each} proposer $\proc_\pid$ with $R_\pid$ a $\FastQC$: $\Ev(\pid) \gets R_\pid$ \label{line:fb-harvest} \BlueComment{harvest $\FastQC$s}

\smallskip
\rulecmt{MVBA proposal, case 1: a fast \metablock is held; delayed to $s.\fdeadline + 2\Delta$ so the fast path has a chance to commit first.}
\State \textbf{upon} first time $\Ev(\pid)$ is a $\FastQC$ for every proposer $\proc_\pid \in s.\fproposers$, local time reaches $s.\fdeadline + 2\Delta$, and $\mathit{mvbaInvoked} = \false$:
\State \hskip2em $\mathit{mvbaInvoked} \gets \true$
\State \hskip2em \textbf{invoke} $\mathsf{MVBA}[\slot].\mathsf{propose}(\{\Ev(\pid)\}_{\proc_\pid \in s.\fproposers})$ \label{line:fb-mvba-propose-fast} \BlueComment{fast \metablock}

\smallskip
\rulecmt{MVBA proposal, case 2: $2f+1$ fallback votes; build the \metablock from the strongest per-proposer evidence in $M_i$.}
\State \textbf{upon} first time $|M_i| \geq 2f+1$ and $\mathit{mvbaInvoked} = \false$:
\State \hskip2em $\mathit{mvbaInvoked} \gets \true$
\State \hskip2em \textbf{for each} proposer $\proc_\pid \in s.\fproposers$: \label{line:fb-build-entry} \BlueComment{one of the three cases always applies, by counting}
\State \hskip4em \textbf{if} $\Ev(\pid)$ is a $\FastQC$: $B[\pid] \gets \Ev(\pid)$ \label{line:fb-build-fast}
\State \hskip4em \textbf{else if} two messages in $M_i$ carry positive fallback signed entries for $\proc_\pid$ with roots $\mroot_1 \neq \mroot_2$ and proposer signatures $\sigma_{p,1}, \sigma_{p,2}$:
\State \hskip6em $B[\pid] \gets$ $\langle \textsc{equiv}, s, \pid, \mroot_1, \sigma_{p,1}, \mroot_2, \sigma_{p,2}\rangle$ built from the two entries \label{line:fb-build-equiv} \BlueComment{$\EquivCert$}
\State \hskip4em \textbf{else}: $B[\pid] \gets$ a $\FallbackQC$ for $\proc_\pid$: $\langle \textsc{fb}, E, \Sigma_\pid\rangle$ with $\Sigma_\pid$ aggregating $f+1$ matching fallback signed entries $\langle E, \cdot\rangle$ in $M_i$ with $E.\pid = \pid$ \label{line:fb-formqc}
\State \hskip2em Let $B \gets \{B[\pid]\}_{\proc_\pid \in s.\fproposers}$
\State \hskip2em \textbf{if} some $B[\pid]$ is not a $\FastQC$:
\State \hskip4em Let $\Sigma_{\textsc{fb}}$ be the aggregate of $2f+1$ fallback votes (the $\sigma$ fields of messages in $M_i$)
\State \hskip4em Let $\FBCert_\slot \gets \langle \textsc{fallback}, s, \Sigma_{\textsc{fb}} \rangle$
\State \hskip4em append $\FBCert_\slot$ to $B$ \BlueComment{fallback \metablock}
\State \hskip2em \textbf{invoke} $\mathsf{MVBA}[\slot].\mathsf{propose}(B)$ \label{line:fb-mvba-propose}

\smallskip
\State \textbf{upon} $\mathsf{MVBA}[\slot].\mathsf{decide}(B')$: \label{line:fb-mvba-decide}
\State \hskip2em \textbf{for each} $\FallbackQC$ in $B'$ with a positive entry $\langle s, \pid, \mroot\rangle$: \label{line:fb-commit-foreach}
\State \hskip4em \textbf{wait until} $\proc_i$ has received and validated its assigned chunk for $\mroot$, then \textbf{broadcast} that chunk \label{line:fb-commit-wait}
\State \hskip2em Let $\sigma_i \gets \mathsf{Sign}_i(\langle \textsc{FallbackCommit}, s, \entries(B')\rangle)$
\State \hskip2em \textbf{broadcast} $\langle \textsc{FallbackCommitVote}, s, \entries(B'), \sigma_i\rangle$ \label{line:fb-commitvote}

\smallskip
\State \textbf{upon} $\proc_i$ has collected $2f+1$ \textsc{FallbackCommitVote} messages carrying the same entries $E$: \label{line:fb-collect-commit}
\State \hskip2em Let $\Sigma$ be the aggregate of the $2f+1$ signatures, and let $\mathit{fbCommitQC} \gets \langle \textsc{FallbackCommit}, s, E, \Sigma\rangle$ \label{line:fb-formcommitqc}
\State \hskip2em \textbf{broadcast} $\mathit{fbCommitQC}$ \label{line:fb-commit-broadcast} \BlueComment{transferable commitment proof}

\smallskip
\State \textbf{upon} receiving a valid $\mathit{fbCommitQC} = \langle \textsc{FallbackCommit}, s, E, \Sigma\rangle$: \label{line:fb-recv-commit}
\State \hskip2em \textbf{broadcast} $\mathit{fbCommitQC}$ \label{line:fb-commit-rebroadcast}
\State \hskip2em \textbf{trigger} $\mathsf{finalize}(\modcall{alg:da}.\recoverProposals(E))$ \label{line:fb-finalize}

\smallskip
\State \textbf{upon} $\mathsf{abandon}()$: \textbf{if} $\mathit{mvbaInvoked}$, \textbf{invoke} $\mathsf{MVBA}[\slot].\mathsf{abandon}()$ \label{line:fb-abandon} \BlueComment{forward the abandonment to the MVBA}

\end{algorithmic}
\end{algorithm}

%% file: alg_da.tex
\begin{algorithm}[hp]
\caption{Data Availability for Slot $\slot$ (for validator $\proc_i$)}
\label{alg:da}
\begin{algorithmic}[1]
\footnotesize

\State \textbf{Local variables:}
\statelabel{$\mathsf{Map}(\mathsf{MerkleRoot} \to \mathsf{Status})$ $\status(\mroot)$ for each $\mroot \in \mathsf{MerkleRoot}$, initially $\bot$, with possible values:}
\substatelabel{$\bot$: pending; insufficient chunks received for decoding.}
\substatelabel{$\langle\mathsf{cipher}, c\rangle$ for $c \in \mathsf{Ciphertext}$: ciphertext reconstructed and Merkle-verified, awaiting decryption key.}
\substatelabel{$\langle\mathsf{plain}, P\rangle$ for a decrypted proposal $P = \langle \slot, \pid, \payload, \sigma\rangle$: the recovered proposal, retaining its embedded proposer identity $\pid$ and signature $\sigma$.}
\substatelabel{$\mathsf{invalid}$: settled as not recoverable.}
\statelabel{$\mathsf{Map}(\mathsf{MerkleRoot} \to \mathsf{Set}(\mathsf{Symbol}))$ $\Data(\mroot)$ for each $\mroot \in \mathsf{MerkleRoot}$, initially maps to $\emptyset$; a set of validated erasure-code symbols $\cdata_r$.}
\statelabel{$\mathsf{Set}$ $\Shares$ initially $\emptyset$; a set of (validator-index, decryption-share) pairs.}

\medskip
\State \textbf{function} \textsf{isDecoded}($\mroot$):
\State \hskip2em \textbf{return} $(\status(\mroot) \in \{\langle\mathsf{cipher}, \cdot\rangle,\, \langle\mathsf{plain}, \cdot\rangle\})$

\medskip
\State \textbf{function} \textsf{recoverProposals}($\mathit{entries}$):
\State \hskip2em wait until, for every positive entry $\langle s, \pid, \rhobot\rangle \in \mathit{entries}$, $\status(\rhobot) \in \{\langle\mathsf{plain}, \cdot\rangle,\, \mathsf{invalid}\}$ \label{line:da-wait}
\State \hskip2em the entry is invalid unless $\status(\rhobot) = \langle\mathsf{plain}, P\rangle$ with $P$'s proposer equal to $\pid$ \label{line:da-proposer-check}
\State \hskip2em \textbf{return} the proposal vector $V$ for slot $s$ assigning each proposer its recovered proposal (or $\bot$) \label{line:da-recover-slot} 

\medskip
\State \textbf{function} \textsf{tryIngestChunk}($\mathit{chunkMsg}$):
\State \hskip2em parse $\mathit{chunkMsg} = \langle\textsc{Chunk}, \mathit{ChunkHeader}, r, \cdata_r, \pi_r\rangle$ with $\mathit{ChunkHeader} = \langle\slot, j, \mroot, \sigma\rangle$
\State \hskip2em \textbf{if} $j \notin s.\fproposers$, $\sigma$ does not verify, or $\VerifyMerkle(\mroot, \cdata_r, \pi_r) \neq \top$:
\State \hskip4em \textbf{return} $\false$
\State \hskip2em \textbf{if} $r = i$:
\State \hskip4em call $\modcall{alg:voting}.\textsc{onChunkValidated}(j, \mroot)$ \BlueComment{records a positive entry if received by the deadline}
\State \hskip2em $\Data(\mroot) \gets \Data(\mroot) \cup \{\cdata_r\}$
\State \hskip2em \textbf{if} $\status(\mroot) = \bot$ and $|\Data(\mroot)| \geq f+1$:
\State \hskip4em Let $c \gets \Decode(\Data(\mroot))$
\State \hskip4em \textbf{if} $c = \bot$ or $\MerkleRoot(\Encode(c)) \neq \mroot$: \label{line:da-reencode}
\State \hskip6em $\status(\mroot) \gets \mathsf{invalid}$
\State \hskip4em \textbf{else}:
\State \hskip6em $\status(\mroot) \gets \langle\mathsf{cipher}, c\rangle$
\State \hskip2em \textbf{return} $\true$

\medskip
\State \textbf{function} \textsf{tryIngestShare}($r, \dsk_r$):
\State \hskip2em \textbf{if} $\TIBE.\mathsf{VerifyShare}(\mathit{MPK}, \slot, r, \dsk_r) \neq \top$:
\State \hskip4em \textbf{return} $\false$
\State \hskip2em $\Shares \gets \Shares \cup \{(r, \dsk_r)\}$
\State \hskip2em \textbf{return} $\true$

\medskip
\State \textbf{upon} receiving $\mathit{chunkMsg} = \langle \textsc{Chunk}, \ldots\rangle$:
\State \hskip2em \tryIngestChunk$(\mathit{chunkMsg})$

\medskip
\State \textbf{upon} $\exists\, \mroot$ such that $\status(\mroot) = \langle\mathsf{cipher},\, c\rangle$ for some $c$, and $|\Shares| \geq f+1$:
\State \hskip2em Let $m \gets \TIBE.\Dec(\mathit{MPK}, c, \slot, \Shares)$
\State \hskip2em \textbf{if} $m = \bot$, or $m$ does not parse as a proposal $P = \langle \slot, \pid, \payload, \sigma\rangle$ whose signature $\sigma$ verifies under $\proc_\pid$:
\State \hskip4em $\status(\mroot) \gets \mathsf{invalid}$
\State \hskip2em \textbf{else}:
\State \hskip4em $\status(\mroot) \gets \langle\mathsf{plain},\, P\rangle$

\end{algorithmic}
\end{algorithm}

%% file: p2_conductor_proofs.tex
\section{\conductor: Our Window-Based Orchestrator for \cadence}\label{section:conductor-formal}

In this section, we present \conductor, a protocol that serves as the orchestrator primitive within \cadence.
Recall that we described \conductor informally in \Cref{section:conductor-overview}.

\subsection{Protocol}

Here, we introduce \conductor (\Cref{algorithm:conductor}).
Before presenting it, we describe the building block it relies on, \emph{agreement on a core set} (ACS).

\subsubsection{Agreement on a Core Set (ACS).}
The full specification of ACS is given in \Cref{mod:acs}.
In a nutshell, each of the $n$ validators submits a proposal, and ACS outputs a single set, agreed upon by all correct validators, that contains at least $n - f$ of the submitted proposals.
Proposing doubles as starting to participate in the instance, a validator stops participating by invoking $\mathsf{abandon}()$, and the instance is quiescent outside this window.
We emphasize that only the two timing properties --- $\ell$-termination and $\Delta$-totality --- are conditioned on the module's two assumptions; all other properties hold unconditionally.

\begin{module}[h]
\caption{Agreement on a Core Set (ACS)}
\label{mod:acs}
\footnotesize
\begin{algorithmic}[1]

\Statex \textbf{Parameters:}
\begin{compactitem}[-]
    \item $\mathit{id} \in \mathbb{N}_{\geq 1}$: a unique identifier.
\end{compactitem}

\smallskip
\Statex \textbf{Interface:}
\begin{compactitem}[-]
    \item input $\mathsf{propose}(s \in \mathsf{Slot})$: a validator proposes slot $s$, thereby starting to participate.
    \item input $\mathsf{abandon}()$: a validator stops participating.
    \item output $\mathsf{decide}(\mathit{set} \in \mathsf{Set}(\mathsf{Validator} \times \mathsf{Slot}))$: a validator decides a set $\mathit{set}$ of validator-slot pairs.
\end{compactitem}

\smallskip
\Statex \textbf{Assumptions:}
\begin{compactitem}[-]
    \item \emph{$\Delta$-synchronized proposals:} If a correct validator proposes at some time $t$, then every correct validator proposes by time $\max(t, \mathrm{GST}) + \Delta$.
    \item \emph{No premature abandonment:} If a correct validator proposes, it does not abandon before deciding.
\end{compactitem}

\smallskip
\Statex \textbf{Properties:}
\begin{compactitem}[-]
    \item \emph{Agreement:} No two correct validators decide different sets.

    \item \emph{$\ell$-Termination:} Under the above assumptions: if all correct validators propose by some time $t$, then every correct validator decides by time $\max(t, \mathrm{GST}) + \ell$.

    \item \emph{Validity:} If a correct validator decides a set $\mathit{set}$, then $|\mathit{set}| \geq 2f + 1$, and for every validator-slot pair $(p_i, s_i) \in \mathit{set}$ such that $p_i$ is a correct validator, $p_i$ proposed slot $s_i$.

    \item \emph{Integrity:} If a correct validator decides, then it has previously proposed.

    \item \emph{$\Delta$-Totality:} Under the above assumptions: if a correct validator decides at some time $t$, then all correct validators decide by time $\max(t, \mathrm{GST}) + \Delta$.

    \item \emph{Quiescence:} No correct validator sends any protocol message before it proposes or after it abandons.
\end{compactitem}
\end{algorithmic}
\end{module}

\subsubsection{Protocol Description.}
\Cref{algorithm:conductor} gives \conductor's pseudocode, written from the perspective of a single correct validator $\proc_i$; recall that validators' clocks are synchronized, so they share one global timeline.
As described informally in \Cref{section:conductor-overview}, \conductor groups slots into \emph{windows} of $W$ consecutive slots and schedules them one at a time, opening a window only once enough of the earlier ones have completed.
Window $1$ consists of slots $1, \ldots, W$, and each subsequent window again spans $W$ consecutive slots.
Unlike in the informal description, however, consecutive windows need not be adjacent: a window's first slot is chosen dynamically and may lie strictly beyond the previous window's last slot, leaving a gap of intervening slots that are simply never opened.
(This stems from the slight difference in the problem definition between the two parts: here, slots have fixed, read-only deadlines, whereas in the informal part the protocol determines them. The orchestrator therefore cannot freely place the next deadline; it can only select an existing slot whose fixed deadline matches, skipping the intervening ones.)
Throughout, $\proc_i$ tracks the window it is currently scheduling in $\mathit{current\_window}_i$ (initially $1$, line~\ref{line:current_window_init}), the slots it has opened in $\mathit{opened}_i$, the slots reported complete in $\mathit{completed}_i$, the last (highest-numbered) slot of each window in $\mathit{last}_i$, and the windows for which it has already submitted an ACS proposal in $\mathit{proposed}_i$.

\smallskip
\noindent\emph{Parameters \& constants.}
\conductor takes two tunable parameters: the window size $W \in \mathbb{N}_{\geq 1}$ and the readiness threshold $p \in \{0, \dots, W-1\}$, which controls how much of a window must complete before the next one is opened.
Its timing further involves several constants: $\ell$, the termination latency of the underlying ACS primitive (line~\ref{line:acs-instances}); $\ell_{\textsc{chorus}}$, the termination latency of \chorus, the slot-consensus protocol whose finalizations complete \conductor's slots within \cadence; $\dtot = \Delta$ (\Cref{prop:chorus-totality}), the \emph{totality latency} of \chorus --- the time within which one correct validator's completion of a slot propagates to all correct validators; $\Delta$, the post-$\mathrm{GST}$ bound on message delays; and $\tau$, the spacing between consecutive slot deadlines.
From these we derive the \emph{open-to-complete delay} $\Phi_{oc} = \ell_{\textsc{chorus}} + \dtot$, which bounds the time between a correct validator opening a slot and completing it (\Cref{prop:conductor-open-to-complete}).
For \conductor to achieve its recovery property, $W$ and $p$ must be chosen so that the following four assumptions hold (lines~\ref{line:assumption-one}--\ref{line:assumption-four}):
\begin{compactitem}
    \item $(p - 1)\tau + \Phi_{oc} + \ell \leq W\tau$;
    \item $(p - 1)\tau + \Phi_{oc} \leq (W - 1)\tau$;
    \item $\Delta < \ell$; and
    \item $\dtot + \ell \leq (p - 1)\tau$.
\end{compactitem}
Intuitively, they ensure that all the work tied to a window --- proposing to the next ACS instance, deciding, and completing the window's slots --- fits within its time span of $W\tau$, so that, once the network is synchronous, one window follows the next with no gaps.
These assumptions are crucial for recovery, as we show in \Cref{subsection:conductor-proof}.

\smallskip
\noindent\emph{Startup.}
Upon starting (line~\ref{line:conductor-startup}), $\proc_i$ enters window $1$ (line~\ref{line:enter_window_1}) and schedules the opening of its $W$ slots in increasing order (lines~\ref{line:startup-foreach}--\ref{line:startup-opened-update}), recording slot $W$ as that window's last slot (line~\ref{line:startup-last}).
Opening is asynchronous: $\mathsf{schedule\_opening}(s)$ waits until the slot's starting time $s.\fdeadline - \Delta$ --- or fires at once if it has already passed --- and then outputs $\mathsf{open}(s)$ (lines~\ref{line:conductor-wait-for-open}--\ref{line:trigger-open}).
Since deadlines are $\tau$-spaced, window $1$'s slots thus open at times $0, \tau, \ldots, (W-1)\tau$.

\smallskip
\noindent\emph{Completions.}
After startup, the only input to \conductor is the completion of slots: upon $\mathsf{completed}(s)$ (line~\ref{line:upon-completed}), $\proc_i$ adds $s$ to $\mathit{completed}_i$ (line~\ref{line:completed-update}).
Within \cadence, $\proc_i$ completes a slot precisely when it finalizes that slot's proposal vector.

\smallskip
\noindent\emph{Estimating the next window's deadline.}
\conductor opens window $\mathit{current\_window}_i + 1$ only once enough of the current schedule has completed, as captured by $\mathsf{ready\_for\_next\_window}()$ (line~\ref{line:ready-check}): writing $s_1 < \cdots < s_k$ for the opened slots, it returns $\mathsf{true}$ once all but at most the last $W - p$ of them are complete --- equivalently, once every slot of the earlier windows and the first $p$ slots of the current window are done.
When this first holds for the next window (line~\ref{line:ready}), $\proc_i$ estimates that window's first slot: it reads the current time $t_{\mathit{cur}}$ (line~\ref{line:ready-time}) and lets $s^\star$ be the earliest slot whose starting time has not yet passed (line~\ref{line:sstar-compute}); if that slot still falls within the current window, it advances $s^\star$ to the first slot beyond it (lines~\ref{line:sstar-guard}--\ref{line:sstar-update}).
These two cases are exactly the estimate of \Cref{fig:conductor-schedule}: when the system is keeping up, $s^\star$ is the slot immediately after the current window, so the cadence continues with no gap; when it is lagging, $s^\star$ is pushed to $t_{\mathit{cur}}$, leaving a gap for the backlog to drain.
Validator $\proc_i$ then proposes $s^\star$ to the next window's agreement instance $\mathcal{ACS}[\mathit{current\_window}_i + 1]$ (line~\ref{line:acs-propose}) and records that it has done so (line~\ref{line:proposed-update}), so that it proposes at most once per window.

\smallskip
\noindent\emph{Agreeing on and opening the next window.}
Different correct validators may complete the first $p$ slots at different times and thus estimate different first slots for the next window, so they agree on one through ACS.
Once $\mathcal{ACS}[\mathit{current\_window}_i + 1]$ decides and $\proc_i$ is ready (line~\ref{line:acs-decide}), $\proc_i$ abandons the instance, which has served its purpose (line~\ref{line:acs-abandon}); as abandonment thus never precedes decision, the no-premature-abandonment assumption of \Cref{mod:acs} is satisfied by construction (we record this formally in \Cref{prop:acs-no-premature-abandonment}).
Validator $\proc_i$ then advances to the next window (lines~\ref{line:window-increment}--\ref{line:enter_window_omega}) and takes its first slot $s^\star$ to be the \emph{median} of the decided estimates (line~\ref{line:median-compute}); since at most $f$ of the $2f+1$ decided values are faulty, the median lies between the smallest and largest honest estimate, bounding Byzantine influence.
It then schedules the opening of the window's $W$ slots $s^\star, \ldots, s^\star + W - 1$ in increasing order (lines~\ref{line:open-foreach}--\ref{line:acs-opened-update}) and records $s^\star + W - 1$ as the window's last slot (line~\ref{line:last-update}).

\begin{algorithm}[p]
\caption{\conductor: Pseudocode (for validator $p_i$)}
\label{algorithm:conductor}
\begin{algorithmic}[1]
\footnotesize

\State \textbf{Parameters:}
\State \hskip2em $\mathsf{Integer}$ $W \in \mathbb{N}_{\geq 1}$ \BlueComment{window size}
\State \hskip2em $\mathsf{Integer}$ $p \in \{0, \dots, W-1\}$ \BlueComment{readiness threshold}
\State \hskip2em $\Time$ $\Phi_{oc} = \ell_{\textsc{chorus}} + \dtot$ \BlueComment{open-to-complete delay}
\State \hskip2em $\Time$ $\ell$ \BlueComment{latency of the utilized ACS}

\medskip
\State \textbf{Assumptions on the parameters:}

\State \hskip2em (1) $(p - 1)\tau + \Phi_{oc} + \ell \leq W\tau$ \label{line:assumption-one}
\State \hskip2em (2) $(p - 1)\tau + \Phi_{oc} \leq (W - 1)\tau$ \label{line:assumption-two}
\State \hskip2em (3) $\Delta < \ell$ \label{line:assumption-three}
\State \hskip2em (4) $\dtot + \ell \leq (p - 1)\tau$ \label{line:assumption-four}

\medskip
\State \textbf{Uses:}
\State \hskip2em ACS ($\ell$-termination), \textbf{instances} $\mathcal{ACS}[i]$, for every $i \in \mathbb{N}_{\geq 2}$, parameterized by $i$ \label{line:acs-instances}

\medskip
\State \textbf{Local variables:}
\State \hskip2em $\mathsf{Integer}$ $\mathit{current\_window}_i \gets 1$ \label{line:current_window_init}
\State \hskip2em $\mathsf{Set}(\mathsf{Slot})$ $\mathit{opened}_i \gets \emptyset$
\State \hskip2em $\mathsf{Set}(\mathsf{Slot})$ $\mathit{completed}_i \gets \emptyset$
\State \hskip2em $\mathsf{Map}(\mathsf{Integer} \to \mathsf{Slot})$ $\mathit{last}_i$, with $\mathit{last}_i[j] \gets \bot$ for all $j \in \mathbb{N}_{\geq 1}$
\State \hskip2em $\mathsf{Set}(\mathsf{Integer})$ $\mathit{proposed}_i \gets \emptyset$ \BlueComment{windows for which an ACS proposal has been submitted}

\medskip
\State \textbf{Local functions:}
\State \hskip2em \textbf{function} $\mathsf{ready\_for\_next\_window}() \to \mathsf{Bool}$:
\State \hskip4em Let $s_1 < s_2 < \cdots < s_k$ be the slots in $\mathit{opened}_i$
\State \hskip4em Let $j$ be the largest index such that $s_1, \ldots, s_j \in \mathit{completed}_i$ \hspace{0.5em} ($j = 0$ if none)
\State \hskip4em \textbf{return} $k - j \leq W - p$ \label{line:ready-check}

\smallskip
\State \hskip2em \textcolor{blue}{$\triangleright$ asynchronous: returns immediately; opening is triggered in the background at the appropriate time}
\State \hskip2em \textbf{function} $\mathsf{schedule\_opening}(s \in \mathsf{Slot}) \to \mathsf{Void}$:
\State \hskip4em \textcolor{blue}{$\triangleright$ open at $s.\mathsf{deadline} - \Delta$, or immediately if that time has already passed}
\State \hskip4em \textbf{wait until} the first time $\geq \max(\text{current local time},\ s.\mathsf{deadline} - \Delta)$ \label{line:conductor-wait-for-open}
\State \hskip4em \textbf{trigger} $\mathsf{open}(s)$ \BlueComment{output of \conductor} \label{line:trigger-open}

\medskip
\State \textbf{upon} starting the protocol: \label{line:conductor-startup}
\State \hskip2em \textbf{enter} window $1$ \BlueComment{marked for the analysis} \label{line:enter_window_1}
\State \hskip2em \textbf{for each} $s \in \mathsf{Slot}$ such that $s.\mathsf{number} \in [1, W]$: \BlueComment{process slots in increasing order} \label{line:startup-foreach}
\State \hskip4em \textbf{invoke} $\mathsf{schedule\_opening}(s)$ \label{line:startup-open}
\State \hskip4em $\mathit{opened}_i \gets \mathit{opened}_i \cup \{ s \}$ \label{line:startup-opened-update}
\State \hskip2em $\mathit{last}_i[1] \gets $ the slot whose number is $W$ \label{line:startup-last}

\medskip
\State \textbf{upon} $\mathsf{completed}(s \in \mathsf{Slot})$: \label{line:upon-completed}
\State \hskip2em $\mathit{completed}_i \gets \mathit{completed}_i \cup \{ s \}$ \label{line:completed-update}

\medskip
\State \textbf{upon} $\mathsf{ready\_for\_next\_window}() = \true$ \textbf{and} $(\mathit{current\_window}_i + 1) \notin \mathit{proposed}_i$: \label{line:ready}
\State \hskip2em Let $t_{\mathit{cur}}$ be the current local time \label{line:ready-time}
\State \hskip2em Let $s^{\star}$ be the slot with the smallest number whose starting time (i.e., $s^{\star}.\mathsf{deadline} - \Delta$) is $\geq t_{\mathit{cur}}$ \label{line:sstar-compute}
\State \hskip2em \textbf{if} $s^{\star}.\mathsf{number} \leq \mathit{last}_i[\mathit{current\_window}_i].\mathsf{number}$: \label{line:sstar-guard}
\State \hskip4em $s^{\star} \gets$ the slot whose number is $\mathit{last}_i[\mathit{current\_window}_i].\mathsf{number} + 1$ \label{line:sstar-update}
\State \hskip2em \textbf{invoke} $\mathcal{ACS}[\mathit{current\_window}_i + 1].\mathsf{propose}(s^{\star})$ \label{line:acs-propose}
\State \hskip2em $\mathit{proposed}_i \gets \mathit{proposed}_i \cup \{ \mathit{current\_window}_i + 1 \}$ \label{line:proposed-update}

\medskip
\State \textbf{upon} $\mathcal{ACS}[\mathit{current\_window}_i + 1].\mathsf{decide}(\mathit{set} \in \mathsf{Set}(\mathsf{Validator} \times \mathsf{Slot}))$ \textbf{and} $\mathsf{ready\_for\_next\_window}() = \mathsf{true}$: \label{line:acs-decide}
\State \hskip2em \textbf{invoke} $\mathcal{ACS}[\mathit{current\_window}_i + 1].\mathsf{abandon}()$ \BlueComment{decided; the instance is no longer needed} \label{line:acs-abandon}
\State \hskip2em $\mathit{current\_window}_i \gets \mathit{current\_window}_i + 1$ \label{line:window-increment}
\State \hskip2em \textbf{enter} window $\mathit{current\_window}_i$ \BlueComment{marked for the analysis} \label{line:enter_window_omega}
\State \hskip2em Let $s^{\star}$ be the slot with $s^{\star}.\mathsf{number} = \mathsf{median}(\{ s.\mathsf{number} : (\cdot,\, s) \in \mathit{set} \})$ \label{line:median-compute}

\State \hskip2em \textbf{for each} $s \in \mathsf{Slot}$ with $s.\mathsf{number} \in [s^{\star}.\mathsf{number},\, s^{\star}.\mathsf{number} + W)$: \BlueComment{process slots in increasing order} \label{line:open-foreach}
\State \hskip4em \textbf{invoke} $\mathsf{schedule\_opening}(s)$ \label{line:acs-decide-open}
\State \hskip4em $\mathit{opened}_i \gets \mathit{opened}_i \cup \{ s \}$ \label{line:acs-opened-update}

\State \hskip2em $\mathit{last}_i[\mathit{current\_window}_i] \gets $ the slot whose number is $s^{\star}.\mathsf{number} + W - 1$ \label{line:last-update}

\end{algorithmic}
\end{algorithm}

\subsection{Proof}
\label{subsection:conductor-proof}

This section proves that, when run within \cadence, \conductor satisfies the orchestrator specification of \Cref{mod:orchestrator_2}.
Before proving the individual properties, we first establish a number of structural facts about \conductor's windows; these underpin the entire subsequent analysis of the specific orchestrator properties.

\paragraph{Structural facts about windows.}

We first establish that windows are entered at most once and in strictly increasing order.

\begin{proposition}
\label{lemma:window-entry}
For every correct validator $\proc_i$ and every window $\omega \in \mathbb{N}_{\geq 1}$, $\proc_i$ enters window $\omega$ at most once.
Moreover, if $\omega \geq 2$, then $\proc_i$ enters window $\omega$ only after having previously entered window $\omega - 1$.
\end{proposition}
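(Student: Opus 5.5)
The plan is to argue directly from the pseudocode of \Cref{algorithm:conductor}, since window entry is an explicitly marked event that occurs in exactly two places. Window $1$ is entered at line~\ref{line:enter_window_1}, and window $\mathit{current\_window}_i$ is entered at line~\ref{line:enter_window_omega}, immediately after the increment at line~\ref{line:window-increment}. The key invariant is this: at every point in $\proc_i$'s execution, the most recently entered window equals the current value of $\mathit{current\_window}_i$.

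\textbf{Step 1: the invariant and monotonicity.} I would first establish that $\mathit{current\_window}_i$ starts at $1$ (line~\ref{line:current_window_init}), is modified only at line~\ref{line:window-increment}, and there increases by exactly one. Hence its value is non-decreasing over time and takes every integer value from $1$ up to its current value, each in turn.

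Next I would check that every ``enter'' event fires with the window number equal to the current value of $\mathit{current\_window}_i$.
\begin{compactitem}
\item For the startup rule, the value is still $1$. This rule fires exactly once, since ``upon starting the protocol'' happens once, and it fires before any ACS decision; no ACS rule can fire earlier, because proposing requires opened slots, which are created at startup.
\item For the decide rule, the enter event at line~\ref{line:enter_window_omega} uses the freshly incremented value. I would treat each handler as executing atomically, so that no other handler interleaves between the increment and the enter.
\end{compactitem}

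\textbf{Step 2: the two claims.} Both follow from Step 1.
\begin{compactitem}
\item \emph{At most once.} Each execution of the decide handler enters a window strictly larger than every window entered before it, because the counter strictly increases and the startup entry uses the minimum value $1$. So no window number repeats.
\item \emph{Predecessor ordering.} When window $\omega\ge 2$ is entered at line~\ref{line:enter_window_omega}, the counter just before the increment equaled $\omega-1$. By the invariant, window $\omega-1$ was the most recently entered window, so it was entered earlier. For $\omega-1=1$ that entry is the startup entry; otherwise it is a previous decide-handler execution.
\end{compactitem}

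\textbf{Main obstacle.} The delicate point is atomicity. Step 1 relies on the startup handler preceding all others and on handlers not interleaving between line~\ref{line:window-increment} and line~\ref{line:enter_window_omega}. I would state the standard event-driven convention explicitly: each \textbf{upon} block runs to completion before another begins, and the startup block runs first. If this convention is not available, I would instead fold the increment and the enter into a single step, as the pseudocode ordering suggests, and argue only about the sequence of counter values.
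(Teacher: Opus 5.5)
Your proposal is correct and follows essentially the paper's argument: $\mathit{current\_window}_i$ changes only by a unit increment at line~\ref{line:window-increment}, so each entry is tied to a distinct counter value, with initialization identified with the startup entry of window~$1$ exactly as in the paper; and window~$\omega$ is entered only after the handler that raised the counter to $\omega-1$ has already entered window~$\omega-1$. One side remark needs correcting: proposing does not require opened slots, because $\mathsf{ready\_for\_next\_window}()$ is vacuously true when $\mathit{opened}_i = \emptyset$ (then $k - j = 0 \leq W - p$), so the startup handler running first rests solely on the convention that no rule fires before the protocol starts, which you also invoke.
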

\begin{proof}
Let us first prove that $\proc_i$ enters window $\omega$ at most once.
First, note that the variable $\mathit{current\_window}_i$ of a correct validator never decreases: it is modified only at line~\ref{line:window-increment}, where it is incremented by one.
Moreover, any two successive enterings are separated by such an increment.
Indeed, only the first entering occurs at startup (line~\ref{line:enter_window_1}), and this happens exactly once since the startup handler fires only once; every later entering occurs at line~\ref{line:enter_window_omega}, immediately after $\mathit{current\_window}_i$ is incremented at line~\ref{line:window-increment}.
Hence each entering corresponds to a distinct, strictly increasing value of $\mathit{current\_window}_i$, so no window is entered more than once.

We now turn to the second claim.
The handler that enters window $\omega \geq 2$ (line~\ref{line:acs-decide}) fires only when $\mathit{current\_window}_i$ equals $\omega - 1$; it then increments $\mathit{current\_window}_i$ to $\omega$ (line~\ref{line:window-increment}) and enters window $\omega$ (line~\ref{line:enter_window_omega}).
If $\omega - 1 = 1$, then $\mathit{current\_window}_i = 1$ was set at initialization (line~\ref{line:current_window_init}), which coincides with $\proc_i$ entering window $1$ at startup (line~\ref{line:enter_window_1}).
If $\omega - 1 \geq 2$, then $\mathit{current\_window}_i$ reached $\omega - 1$ exclusively by being incremented at line~\ref{line:window-increment}; this increment executes in the same handler that enters window $\omega - 1$ (line~\ref{line:enter_window_omega}).
In both cases, $\proc_i$ must have previously entered window $\omega - 1$. \qed
\end{proof}

For every correct validator $\proc_i$ and every window $\omega \in \mathbb{N}_{\geq 1}$ that $\proc_i$ enters, we define the following set of slots:
\[
\mathsf{slots}_i(\omega) \equiv \{ s \in \Slot : \proc_i \text{ schedules the opening of } s \text{ upon entering window } \omega \}.
\]
Concretely, $\mathsf{slots}_i(1)$ consists of the $W$ slots scheduled to be opened at startup (lines~\ref{line:startup-foreach}--\ref{line:startup-opened-update}).
For $\omega \geq 2$, $\mathsf{slots}_i(\omega)$ consists of the $W$ slots scheduled to be opened in the loop at lines~\ref{line:open-foreach}--\ref{line:acs-opened-update}.

Throughout the rest of the proof, we adopt the following notational convention.
Recall that, in \conductor, each validator invokes the ACS primitive to decide a vector of validator-slot pairs (line~\ref{line:acs-decide}) and then extracts the slot whose number is the median over all slot numbers in the vector (line~\ref{line:median-compute}).
For brevity, we say that the ACS primitive \emph{decides slot $s$} to refer to the outcome of this combined decision and extraction step, where $s$ is the resulting slot.
Crucially, the slot number of $s$ lies between the minimum and maximum slot numbers proposed by correct validators (since the decided vector contains at least $f + 1$ pairs contributed by correct validators).
We next establish that the slot sets of distinct windows are pairwise disjoint.
To this end, we first establish that the ACS primitives never decide overlapping slots.

\begin{proposition}
\label{prop:acs-nonoverlap}
For every correct validator $p_i$ and every window $\omega \in \mathbb{N}_{\geq 2}$, the following holds:
\begin{compactitem}
    \item If $\omega = 2$ and $p_i$ decides slot $s$ from $\mathcal{ACS}[2]$, then $s.\mathsf{number} > W$.
    \item If $\omega > 2$ and $p_i$ decides slot $s$ from $\mathcal{ACS}[\omega]$ and slot $s'$ from $\mathcal{ACS}[\omega - 1]$, then $s.\mathsf{number} \geq s'.\mathsf{number} + W$.
\end{compactitem}
\end{proposition}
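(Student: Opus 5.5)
The plan is to reduce both bullets to one claim about what correct validators propose, and then use the median property. As noted just before the statement, the decided slot's number lies between the minimum and maximum slot numbers proposed by correct validators to that instance. This holds because, by ACS validity, the decided set has at least $2f+1$ pairs, at most $f$ of them faulty, and each correct pair is the slot its owner actually proposed. It therefore suffices to show two things. First, every correct validator that proposes to $\mathcal{ACS}[2]$ proposes a slot numbered $>W$. Second, for $\omega>2$, every correct validator that proposes to $\mathcal{ACS}[\omega]$ proposes a slot numbered $\geq s'.\mathsf{number}+W$, where $s'$ is the slot decided from $\mathcal{ACS}[\omega-1]$. For the second claim, ACS agreement lets us treat $s'$ as common to all correct validators.

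The first step is to pin down the value $\mathit{last}_i[\mathit{current\_window}_i]$ at the moment a correct $\proc_i$ proposes to $\mathcal{ACS}[\omega]$ at line~\ref{line:acs-propose}. That proposal happens while $\mathit{current\_window}_i=\omega-1$. By \Cref{lemma:window-entry}, $\proc_i$ has entered window $\omega-1$ exactly once and has not yet left it. If $\omega-1=1$, then $\mathit{last}_i[1]$ is slot $W$ by line~\ref{line:startup-last}. If $\omega-1\geq 2$, then upon entering window $\omega-1$ the validator decided some slot $\hat s$ from $\mathcal{ACS}[\omega-1]$ and set $\mathit{last}_i[\omega-1]$ to $\hat s.\mathsf{number}+W-1$ at line~\ref{line:last-update}. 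By ACS agreement, $\hat s=s'$. One care point must be checked here: the handler at line~\ref{line:ready} could in principle fire for window $\omega$ before $\proc_i$ has entered window $\omega-1$. It cannot, because the guard references $\mathit{current\_window}_i+1$, which equals $\omega$ only once $\mathit{current\_window}_i=\omega-1$.

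The second step applies the guard at lines~\ref{line:sstar-guard}--\ref{line:sstar-update}. Whatever $s^\star$ line~\ref{line:sstar-compute} produces, the guard forces the proposed number to exceed $\mathit{last}_i[\omega-1].\mathsf{number}$. In the case $\omega=2$, this means the proposed number is $>W$. In the case $\omega>2$, it means the proposed number is $\geq s'.\mathsf{number}+W$. Combining this with the median bound gives both bullets.

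The main obstacle is the median argument. It requires that every correct pair in the decided set corresponds to a proposal made under exactly this guard. ACS validity gives that each such pair is the slot its owner proposed. The issue is whether that owner's $\mathit{last}$ value matches $s'$, which needs that every correct proposer to $\mathcal{ACS}[\omega]$ had itself decided from $\mathcal{ACS}[\omega-1]$. This follows from the first step, since proposing to $\mathcal{ACS}[\omega]$ requires $\mathit{current\_window}_i=\omega-1$, together with agreement of $\mathcal{ACS}[\omega-1]$. A final point: the median is taken over at least $2f+1$ values with at most $f$ faulty ones. So if all correct values are $\geq x$, then more than half of the values are $\geq x$, and the median is $\geq x$ as well.
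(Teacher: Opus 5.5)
Your proof is correct and follows the paper's argument essentially step for step. Like the paper, you show that every correct proposal to $\mathcal{ACS}[\omega]$ exceeds $\mathit{last}[\omega-1]$, which is slot $W$ when $\omega=2$ and, via the earlier decision from $\mathcal{ACS}[\omega-1]$ plus ACS agreement, slot $s'.\mathsf{number}+W-1$ otherwise; you then apply the reassignment guard and conclude because the median cannot fall below the smallest correct proposal. The extra checks you spell out are left implicit in the paper but are exactly what its proof relies on: that the proposal rule for $\mathcal{ACS}[\omega]$ fires only once $\mathit{current\_window}_i=\omega-1$, and the majority count behind the median bound.
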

\begin{proof}
We first address the case $\omega = 2$.
No correct validator proposes to $\mathcal{ACS}[2]$ a slot with number $\leq W$. 
To see this, observe that a correct validator $p_k$ proposes to $\mathcal{ACS}[2]$ only via the rule at line~\ref{line:ready}, which fires when $\mathsf{ready\_for\_next\_window()}$ returns $\mathsf{true}$.
At that point, $s^{\star}$ is computed as the first slot whose starting time is at least the current local time (line~\ref{line:sstar-compute}).
If $s^{\star}.\mathsf{number} \leq \mathit{last}_k[1].\mathsf{number}$ (line~\ref{line:sstar-guard}), the guard at line~\ref{line:sstar-update} reassigns $s^{\star}$ to the slot with number $\mathit{last}_k[1].\mathsf{number} + 1$.
Crucially, $\mathit{last}_k[1].\mathsf{number} = W$ always holds, since $\mathit{last}_k[1]$ is set exactly once (upon entering window $1$), at line~\ref{line:startup-last}, to the slot with number $W$.
Hence, every correct proposal to $\mathcal{ACS}[2]$ carries a slot number of at least $W + 1$.
Since the decided slot number is at least the minimum slot number among all correct proposals, we obtain $s.\mathsf{number} \geq W + 1$, establishing the first point.

We now address the case $\omega > 2$.
First, we establish that no correct validator proposes to $\mathcal{ACS}[\omega]$ a slot with number $< s'.\mathsf{number} + W$.
A correct validator $p_k$ proposes to $\mathcal{ACS}[\omega]$ only via the rule at line~\ref{line:ready}, which fires when $\mathsf{ready\_for\_next\_window()}$ returns $\mathsf{true}$.
At this point, $\mathit{current\_window}_k + 1 = \omega$ (line~\ref{line:acs-propose}), so $\mathit{current\_window}_k = \omega - 1 \geq 2$.
Since $\mathit{current\_window}_k$ is incremented only at line~\ref{line:window-increment}, $p_k$ must have previously decided from $\mathcal{ACS}[\omega - 1]$ (line~\ref{line:acs-decide}).
By the agreement property of $\mathcal{ACS}[\omega-1]$, every correct validator that decides from $\mathcal{ACS}[\omega - 1]$ does so with slot $s'$; hence, line~\ref{line:last-update} sets $\mathit{last}_k[\omega-1]$ to the slot with number $s'.\mathsf{number} + W - 1$.
Thus, when $p_k$ computes $s^{\star}$ at line~\ref{line:sstar-compute}, if $s^{\star}.\mathsf{number} \leq \mathit{last}_k[\omega - 1].\mathsf{number} = s'.\mathsf{number} + W - 1$ (line~\ref{line:sstar-guard}), line~\ref{line:sstar-update} reassigns $s^{\star}$ to the slot with number $s'.\mathsf{number} + W$.
Hence, every correct proposal to $\mathcal{ACS}[\omega]$ carries a slot number of at least $s'.\mathsf{number} + W$.
Since the decided slot number is at least the minimum slot number among all correct proposals, we obtain $s.\mathsf{number} \geq s'.\mathsf{number} + W$, establishing the second point. \qed
\end{proof}

We now precisely characterize, for each window, the exact set of slots that a correct validator opens upon entering it, and establish that consecutive windows cover non-overlapping ranges of $W$ slots.

\begin{proposition}
\label{prop:acs-fate-range}
For every correct validator $\proc_i$ and every window $\omega \in \mathbb{N}_{\geq 1}$ that $\proc_i$ enters, the following holds:
\begin{compactitem}
    \item If $\omega = 1$, then $\mathsf{slots}_i(1) = \{ s \in \Slot : s.\fnumber \leq W \}$.

    \item If $\omega \geq 2$, let $s_\omega$ denote the slot decided by $\proc_i$ from $\mathcal{ACS}[\omega]$, and let $s_{\omega-1}$ denote the slot with the largest number in $\mathsf{slots}_i(\omega - 1)$.
    Then, $s_\omega.\fnumber > s_{\omega-1}.\fnumber$, and moreover:
    \[
    \mathsf{slots}_i(\omega) = \{ s \in \Slot : s.\fnumber \in [s_\omega.\fnumber,\ s_\omega.\fnumber + W - 1] \}.
    \]
\end{compactitem}
\end{proposition}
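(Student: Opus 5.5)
The plan is to read both claims directly off the pseudocode of \Cref{algorithm:conductor}, using \Cref{lemma:window-entry} to ensure each window is entered once and \Cref{prop:acs-nonoverlap} to supply the ordering.

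For $\omega = 1$, I will observe that window $1$ is entered only at startup (line~\ref{line:enter_window_1}), exactly once by \Cref{lemma:window-entry}. The loop at lines~\ref{line:startup-foreach}--\ref{line:startup-opened-update} schedules the opening of precisely the slots with numbers in $[1, W]$. By definition of $\mathsf{slots}_i(1)$, this gives $\mathsf{slots}_i(1) = \{ s : s.\fnumber \leq W\}$.

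For $\omega \geq 2$, the only place window $\omega$ is entered is the handler at line~\ref{line:acs-decide}, fired with $\mathit{current\_window}_i = \omega - 1$, so on a decision of $\mathcal{ACS}[\omega]$. By the ACS integrity property and line-by-line reading, I will make three observations.
\begin{compactitem}
\item The slot $s^\star$ computed at line~\ref{line:median-compute} is exactly $s_\omega$, the slot $\proc_i$ decides from $\mathcal{ACS}[\omega]$ under the paper's convention.
\item The loop at lines~\ref{line:open-foreach}--\ref{line:acs-opened-update} schedules exactly the slots with numbers in $[s_\omega.\fnumber, s_\omega.\fnumber + W - 1]$.
\item Since the handler fires once per window (by \Cref{lemma:window-entry}), no other scheduling is attributed to window $\omega$.
\end{compactitem}
This establishes the displayed equality.

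For the inequality $s_\omega.\fnumber > s_{\omega-1}.\fnumber$, I will identify $s_{\omega-1}$, the largest slot of $\mathsf{slots}_i(\omega-1)$, and split on $\omega$.
\begin{compactitem}
\item If $\omega = 2$, the first bullet gives $s_{1}.\fnumber = W$. The first point of \Cref{prop:acs-nonoverlap} then yields $s_2.\fnumber > W$.
\item If $\omega > 2$, \Cref{lemma:window-entry} says $\proc_i$ entered window $\omega-1$ before window $\omega$. It therefore decided some slot $s'$ from $\mathcal{ACS}[\omega-1]$, and by induction on $\omega$ (the claim already proved for $\omega - 1$) its largest slot has number $s'.\fnumber + W - 1$. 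The second point of \Cref{prop:acs-nonoverlap} gives $s_\omega.\fnumber \geq s'.\fnumber + W > s'.\fnumber + W - 1 = s_{\omega-1}.\fnumber$.
\end{compactitem}

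The main obstacle is bookkeeping rather than mathematics. I must make sure that the decision used in the handler at line~\ref{line:acs-decide} for window $\omega-1$ is the same ``decided slot'' that \Cref{prop:acs-nonoverlap} refers to. This follows because ACS integrity and the single firing per window guarantee that $\proc_i$ decides once from each $\mathcal{ACS}[\omega]$. I would state this explicitly before invoking \Cref{prop:acs-nonoverlap}.
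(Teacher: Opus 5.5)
Your proposal is correct and follows essentially the paper's route: both read $\mathsf{slots}_i(\omega)$ directly off the startup loop and the loop at lines~\ref{line:open-foreach}--\ref{line:acs-opened-update}, and both obtain $s_\omega.\fnumber > s_{\omega-1}.\fnumber$ from \Cref{prop:acs-nonoverlap} after identifying the largest slot of window $\omega-1$; the paper does this via $\mathit{last}_i[\omega-1]$, and you do it via the equality for $\omega-1$, which needs no real induction since that part is read directly off the pseudocode. One small correction: the integrity property of \Cref{mod:acs} only says that a correct validator decides after proposing, not that it decides at most once, so the uniqueness of the decision you rely on should instead be attributed to the handler at line~\ref{line:acs-decide} firing at most once for $\mathcal{ACS}[\omega]$, because it immediately advances $\mathit{current\_window}_i$ (\Cref{lemma:window-entry}).
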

\begin{proof}
The first point follows directly from the startup handler (lines~\ref{line:startup-foreach}--\ref{line:startup-open}), which schedules the opening of exactly the slots with number in $[1, W]$.

We now prove the second point.
Fix any window $\omega \geq 2$ that $\proc_i$ enters.
We first observe that $\mathit{last}_i[\omega - 1]$ equals $s_{\omega - 1}$, the largest slot in $\mathsf{slots}_i(\omega - 1)$.
Indeed, if $\omega - 1 = 1$, then line~\ref{line:startup-last} sets $\mathit{last}_i[1]$ to the slot with number $W$, which is exactly the largest slot in $\mathsf{slots}_i(1) = \{ s : s.\fnumber \leq W \}$.
If $\omega - 1 \geq 2$, then line~\ref{line:last-update} sets $\mathit{last}_i[\omega - 1]$ to the largest of the $W$ slots opened upon entering window $\omega - 1$ (lines~\ref{line:open-foreach}--\ref{line:acs-opened-update}), which is exactly the largest slot in $\mathsf{slots}_i(\omega - 1)$.
Upon activating line~\ref{line:acs-decide} for $\mathcal{ACS}[\omega]$, $\proc_i$ opens every slot with number in $[s_\omega.\fnumber, s_\omega.\fnumber + W - 1]$ (line~\ref{line:open-foreach}).
By \Cref{prop:acs-nonoverlap}, $s_\omega.\fnumber > s_{\omega-1}.\fnumber$, establishing the ordering claim. \qed
\end{proof}

For each correct validator $\proc_i$ and each slot $s$, we define $\mathsf{window}_i(s)$ to be the window $\omega$ to which $s$ belongs, that is, the window $\omega$ with
\[
s \in \mathsf{slots}_i(\omega).
\]
If no such window exists, we set $\mathsf{window}_i(s) = \bot$.
By \Cref{prop:acs-fate-range}, whenever such a window exists, it is unique; hence, $\mathsf{window}_i(s)$ is well-defined.
We now establish that no two correct validators disagree on the window to which a slot belongs.

\begin{proposition}
\label{prop:window-agreement}
For every two correct validators $p_i$ and $p_j$ and every slot $s$, if $\mathsf{window}_i(s) \neq \bot$ and $\mathsf{window}_j(s) \neq \bot$, then $\mathsf{window}_i(s) = \mathsf{window}_j(s)$.
\end{proposition}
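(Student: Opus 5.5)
The plan is to first show that both validators compute identical slot ranges for every window they both enter. Once that holds, the claim follows from the pairwise disjointness of the ranges in \Cref{prop:acs-fate-range}. Concretely, I will prove by induction on $\omega$ that if $\proc_i$ and $\proc_j$ both enter window $\omega$, then $\mathsf{slots}_i(\omega) = \mathsf{slots}_j(\omega)$.

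For the base case $\omega = 1$, the first point of \Cref{prop:acs-fate-range} gives $\mathsf{slots}_i(1) = \mathsf{slots}_j(1) = \{s : s.\fnumber \le W\}$ directly. For $\omega \ge 2$, \Cref{prop:acs-fate-range} shows that $\mathsf{slots}_i(\omega)$ is the range of $W$ consecutive slots starting at the slot $\proc_i$ decides from $\mathcal{ACS}[\omega]$, and likewise for $\proc_j$. The agreement property of ACS (\Cref{mod:acs}) makes the decided sets equal, so the medians computed at line~\ref{line:median-compute} are equal and the two ranges coincide. This step does not use the induction hypothesis. The only care needed is to note that the window index is the ACS instance index, since entering window $\omega$ happens exactly in the handler for $\mathcal{ACS}[\omega]$ (\Cref{lemma:window-entry}).

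Next, I will show that the windows of a single validator occupy strictly increasing, pairwise disjoint ranges. By the second point of \Cref{prop:acs-fate-range}, the smallest slot of $\mathsf{slots}_i(\omega)$ exceeds the largest slot of $\mathsf{slots}_i(\omega-1)$. Each window's slots form a contiguous range, so chaining this gives that for $\omega < \omega'$ every slot of $\mathsf{slots}_i(\omega)$ has a smaller number than every slot of $\mathsf{slots}_i(\omega')$.

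To conclude, suppose $\mathsf{window}_i(s) = \omega$ and $\mathsf{window}_j(s) = \omega'$ with, without loss of generality, $\omega < \omega'$. By \Cref{lemma:window-entry}, $\proc_j$ entered every window up to $\omega'$, in particular $\omega$. By the claim above, $\mathsf{slots}_j(\omega) = \mathsf{slots}_i(\omega) \ni s$. But $s \in \mathsf{slots}_j(\omega')$ as well, contradicting the disjointness of $\proc_j$'s windows. The main obstacle is this last step: I must invoke \Cref{lemma:window-entry} to guarantee that $\proc_j$ actually entered the smaller window, which is what makes the comparison of slot sets meaningful.
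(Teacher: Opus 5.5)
Your proof is correct and follows essentially the same route as the paper's: ACS agreement, via the median, makes the two validators' slot sets coincide for each window, and the strictly increasing, disjoint ranges from \Cref{prop:acs-fate-range}, combined with \Cref{lemma:window-entry}, rule out $s$ landing in different windows. The paper compresses this into a few lines and treats $s.\fnumber \le W$ separately. The induction you set up is unnecessary, since, as you note yourself, the hypothesis is never used.
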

\begin{proof}
If $s.\mathsf{number} \leq W$, both windows equal $1$, so the claim holds trivially.
Thus, assume $s.\mathsf{number} > W$.
By \Cref{lemma:window-entry}, correct validators enter windows in increasing order.
Therefore, validator $p_i$ (resp., $p_j$) activates the rule at line~\ref{line:acs-decide} for $\mathcal{ACS}[2], \ldots, \mathcal{ACS}[\mathsf{window}_i(s)]$ (resp., $\mathcal{ACS}[2], \ldots, \mathcal{ACS}[\mathsf{window}_j(s)]$) in increasing order.
The agreement property of each ACS primitive together with \Cref{prop:acs-fate-range} then gives $\mathsf{window}_i(s) = \mathsf{window}_j(s)$. \qed
\end{proof}

\paragraph{Integrity.}
We now prove the integrity property specified in \Cref{mod:orchestrator_2}, which ensures that each correct validator opens each slot at most once, and that no slot is opened before its starting time.
We emphasize that integrity is unconditional: it holds for \conductor in isolation.

\begin{lemma} [Integrity]
\label{lemma:conductor-integrity}
\conductor (\Cref{algorithm:conductor}) satisfies integrity.
\end{lemma}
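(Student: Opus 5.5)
The plan is to prove the two halves of integrity separately. Both rest on one structural fact: every $\mathsf{open}(s)$ output of \conductor is triggered at line~\ref{line:trigger-open}, inside $\mathsf{schedule\_opening}(s)$. That function is invoked only from the startup loop (line~\ref{line:startup-open}) or from the window-entry loop (line~\ref{line:acs-decide-open}). Consequently, a slot is opened by $\proc_i$ only if it lies in $\mathsf{slots}_i(\omega)$ for some window $\omega$ that $\proc_i$ enters, and each such membership accounts for exactly one invocation of $\mathsf{schedule\_opening}(s)$, hence at most one output $\mathsf{open}(s)$.

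For the ``at most once'' claim, I first use \Cref{lemma:window-entry}: each window is entered at most once, so each loop runs at most once per window. Within a single loop every slot number appears only once, so a slot can be scheduled twice only if it belongs to $\mathsf{slots}_i(\omega)$ and $\mathsf{slots}_i(\omega')$ for some $\omega < \omega'$. I will rule this out with \Cref{prop:acs-fate-range}. That proposition says each $\mathsf{slots}_i(\omega)$ is a contiguous range of $W$ slot numbers, and that the first slot of window $\omega+1$ exceeds the last slot of window $\omega$. By induction on $\omega$, the ranges are strictly increasing and pairwise disjoint. Disjointness is the same fact that makes $\mathsf{window}_i(s)$ well defined. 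It follows that $\mathsf{schedule\_opening}(s)$ is invoked at most once, so $\mathsf{open}(s)$ is triggered at most once.

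For the timing claim, I read it directly off line~\ref{line:conductor-wait-for-open}. The function waits until the first time that is at least $\max(\text{current local time}, s.\fdeadline - \Delta)$ before triggering $\mathsf{open}(s)$. Because clocks are synchronized, local time equals global time, so the output happens no earlier than $s.\fdeadline - \Delta$.

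The step I expect to need the most care is the disjointness argument. \Cref{prop:acs-fate-range} relates window $\omega$ only to window $\omega - 1$, so I have to chain it carefully to cover non-adjacent windows. I will do this by showing that the maximum slot number of $\mathsf{slots}_i(\omega)$ is strictly increasing in $\omega$, and that every window's minimum exceeds the previous window's maximum. Neither half of the argument invokes any property of \chorus or of network timing, which matches the paper's remark that integrity holds unconditionally, for \conductor in isolation.
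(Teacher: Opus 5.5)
Your proposal is correct and follows essentially the same route as the paper: the timing half is read off the wait inside $\mathsf{schedule\_opening}$, and the at-most-once half combines \Cref{lemma:window-entry} with the uniqueness of a slot's window given by \Cref{prop:acs-fate-range}. The only difference is that you derive the disjointness of windows explicitly, by chaining \Cref{prop:acs-fate-range} across windows that are entered in order, whereas the paper packages that same fact as the well-definedness of $\mathsf{window}_i(s)$.
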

\begin{proof}
Fix a correct validator $\proc_i$ and a slot $s$.
By construction of the $\mathsf{schedule\_opening}$ function (line~\ref{line:conductor-wait-for-open}), $\proc_i$ does not open $s$ before time $s.\fdeadline - \Delta$.
It remains to show that $\proc_i$ opens $s$ at most once.
We distinguish two cases:
\begin{compactitem}
    \item If $\mathsf{window}_i(s) = \bot$, then $\proc_i$ never opens $s$, and the claim holds trivially.
    \item Otherwise, let $\omega = \mathsf{window}_i(s)$, so that $s \in \mathsf{slots}_i(\omega)$. By \Cref{lemma:window-entry}, $\proc_i$ enters window $\omega$ at most once.
    Hence, $p_i$ opens $s$ exactly once in this case. \qed
\end{compactitem}
\end{proof}

\paragraph{Monotonicity.}
We now establish that each correct validator schedules the opening of slots in strictly increasing order of slot number, which we leverage to prove monotonicity.
The following proposition states this formally: whenever a correct validator schedules the openings of two slots, it schedules the smaller-numbered one first.

\begin{proposition}
\label{prop:fate-order}
For every correct validator $\proc_i$ and every two slots $s, s'$ with $s.\fnumber < s'.\fnumber$, if $\proc_i$ schedules the opening of both $s$ and $s'$, then it schedules the opening of $s'$ only after having previously scheduled the opening of $s$.
\end{proposition}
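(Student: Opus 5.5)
The plan is to reduce the claim to the window structure already established. Every scheduling of an opening by a correct validator $\proc_i$ happens inside one of two loops: the startup loop (lines~\ref{line:startup-foreach}--\ref{line:startup-opened-update}), which schedules $\mathsf{slots}_i(1)$, or the loop at lines~\ref{line:open-foreach}--\ref{line:acs-opened-update}, executed upon entering some window $\omega \geq 2$, which schedules $\mathsf{slots}_i(\omega)$. So if $\proc_i$ schedules both $s$ and $s'$, there are windows $\omega = \mathsf{window}_i(s)$ and $\omega' = \mathsf{window}_i(s')$, both different from $\bot$. These windows are unique by \Cref{prop:acs-fate-range}, and by \Cref{lemma:window-entry} each is entered at most once. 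I would then split into the cases $\omega = \omega'$ and $\omega \neq \omega'$.

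\emph{Case $\omega = \omega'$.} Both slots are scheduled within a single execution of one loop. Each loop explicitly iterates over its slot range in increasing order of slot number (``process slots in increasing order''), and the handler is executed atomically. Since $s.\fnumber < s'.\fnumber$, the slot $s$ is scheduled first.

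\emph{Case $\omega \neq \omega'$.} First I would show that $\omega < \omega'$. By \Cref{prop:acs-fate-range}, the ranges of consecutive windows are contiguous intervals of $W$ slots with strictly increasing starting points, namely $s_\omega.\fnumber > s_{\omega-1}.\fnumber$ where $s_{\omega-1}$ is the largest slot of window $\omega-1$. By induction on the window index, every slot of window $\omega$ then has a larger number than every slot of any earlier window. Hence $\omega' < \omega$ would force $s'.\fnumber < s.\fnumber$, a contradiction. Given $\omega < \omega'$, \Cref{lemma:window-entry} (applied inductively to its second claim) shows that $\proc_i$ enters window $\omega$ before window $\omega'$. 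The slots of a window are scheduled in the same handler that enters it, so $s$ is scheduled before $s'$.

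I expect the main obstacle to be the monotonicity step across windows, which requires turning the pairwise ordering of \Cref{prop:acs-fate-range} into a global ordering over non-consecutive windows. This should be a short induction: take the maximum slot of window $\omega$ and chain the inequalities up through windows $\omega+1, \ldots, \omega'$. A second point needs care: I must justify that entering a window and scheduling its slots happen in one atomic handler execution, so that the order of window entries fixes the order of schedulings.
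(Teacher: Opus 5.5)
Your proposal is correct and follows the paper's proof essentially step for step. Both split on whether $\mathsf{window}_i(s) = \mathsf{window}_i(s')$. If so, the in-loop increasing order settles it; otherwise \Cref{prop:acs-fate-range} gives $\mathsf{window}_i(s) < \mathsf{window}_i(s')$ and \Cref{lemma:window-entry} orders the entries. The induction you spell out, chaining the consecutive-window ordering across non-adjacent windows, is a detail the paper leaves implicit when it cites \Cref{prop:acs-fate-range} directly.
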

\begin{proof}
Since $\proc_i$ schedules the opening of both $s$ and $s'$, we have $\mathsf{window}_i(s) \neq \bot$ and $\mathsf{window}_i(s') \neq \bot$.
Moreover, since $s.\fnumber < s'.\fnumber$, \Cref{prop:acs-fate-range} gives $\mathsf{window}_i(s) \leq \mathsf{window}_i(s')$.
We distinguish two cases:
\begin{compactitem}
    \item $\mathsf{window}_i(s) = \mathsf{window}_i(s')$.
    Both slots are scheduled to be opened upon entering this common window, within the loop that processes slots in strictly increasing order of slot number (line~\ref{line:startup-foreach} for window $1$, line~\ref{line:open-foreach} otherwise); since $s.\fnumber < s'.\fnumber$, $\proc_i$ schedules the opening of $s$ before that of $s'$.

    \item $\mathsf{window}_i(s) < \mathsf{window}_i(s')$.
    By \Cref{lemma:window-entry}, $\proc_i$ enters window $\mathsf{window}_i(s)$ before window $\mathsf{window}_i(s')$; as $s$ is scheduled upon entering the former and $s'$ upon entering the latter, $\proc_i$ schedules the opening of $s$ before that of $s'$. \qed
\end{compactitem}

\end{proof}

We are now ready to prove that \conductor satisfies monotonicity (see \Cref{mod:orchestrator_2}).
We underline that, like integrity, \conductor satisfies monotonicity unconditionally.

\begin{lemma} [Monotonicity]
\label{lemma:conductor-monotonicity}
\conductor (\Cref{algorithm:conductor}) satisfies monotonicity.
\end{lemma}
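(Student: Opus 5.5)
The plan is to reduce monotonicity to the order in which \conductor \emph{schedules} openings, which \Cref{prop:fate-order} already controls, and then argue that the asynchronous $\mathsf{schedule\_opening}$ mechanism preserves that order when it actually triggers $\mathsf{open}(\cdot)$. Fix a correct validator $\proc_i$ and two slots $s, s'$ with $s.\fnumber < s'.\fnumber$ that $\proc_i$ opens. A slot is opened only through line~\ref{line:trigger-open}, which is reached only from an invocation of $\mathsf{schedule\_opening}$. Hence $\proc_i$ schedules the opening of both slots, and by \Cref{prop:fate-order} it schedules $s$ strictly before $s'$.

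The key step, and the main obstacle, is passing from scheduling order to opening order. This is needed because $\mathsf{schedule\_opening}$ returns immediately and fires later, at $\max(\text{invocation time}, \text{starting time})$. Let $a$ and $a'$ be the invocation times of $\mathsf{schedule\_opening}(s)$ and $\mathsf{schedule\_opening}(s')$, so $a \le a'$. The opening times are then $o = \max(a, s.\fdeadline - \Delta)$ and $o' = \max(a', s'.\fdeadline - \Delta)$. Because deadlines are $\tau$-spaced with $\tau > 0$, we have $s.\fdeadline - \Delta < s'.\fdeadline - \Delta$. Both terms of the max are monotone, so $o \le o'$.

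It remains to exclude a tie $o = o'$, or at least to argue that ties are resolved in the right order. If $o = o'$, then since $s.\fdeadline - \Delta < s'.\fdeadline - \Delta \le o'$, both openings are triggered at a time at least as late as their invocations. I would treat the waits as being processed in invocation order: pending $\mathsf{schedule\_opening}$ calls whose wait conditions hold at the same instant fire in the order they were invoked. This fits the FIFO reading of the wait at line~\ref{line:conductor-wait-for-open} and the explicit comment that slots are processed in increasing order. Under that reading, $s$ is opened before $s'$.

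If I wanted to avoid relying on the tie-break convention, I would instead observe the following. Within a single window, the loop invokes both calls in the same atomic handler in increasing order. Across windows, $a < a'$ holds strictly by \Cref{lemma:window-entry}. So whenever the triggers coincide, they were enqueued in increasing slot order. Together these cases give monotonicity, with no assumption on the network or on \chorus, matching the unconditional character of \Cref{lemma:conductor-integrity}.
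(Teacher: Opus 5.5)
Your proof is correct and follows essentially the paper's route. You reduce to the scheduling order given by \Cref{prop:fate-order}, then use the opening rule at line~\ref{line:conductor-wait-for-open} together with the strict increase of starting times. The paper does the same by splitting on whether $s$ opens immediately upon scheduling or at $s.\fdeadline - \Delta$, which is your $\max$ argument in case form, and it handles your tie case ($s$ opens immediately and $a = a'$) by treating an immediate opening as happening at the scheduling event itself, which is the invocation-order convention you adopt. Your last paragraph does not actually remove that convention, since it only re-derives the invocation order already supplied by \Cref{prop:fate-order}, but nothing depends on it.
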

\begin{proof}
Fix any correct validator $p_i$ and any two slots $s$ and $s'$ that $p_i$ opens.
Without loss of generality, let $s'.\fnumber > s.\fnumber$.
Since $s'.\mathsf{number} > s.\mathsf{number}$, the definition of slots implies $s'.\mathsf{deadline} > s.\mathsf{deadline}$ (see \Cref{section:formal_problem_definition}).
By \Cref{prop:fate-order}, $p_i$ schedules the opening of $s$ prior to scheduling the opening of $s'$.
We distinguish two cases.
\begin{compactitem}
    \item Suppose $p_i$ opens $s$ at the moment of scheduling its opening (i.e., it does not wait for $s$'s starting time).
    In this case, when opening $s'$ (which cannot occur before the scheduling of its own opening), $p_i$ has already opened $s$.

    \item Suppose $p_i$ opens $s$ after scheduling its opening.
    Hence, $p_i$ opens $s$ at time $s.\mathsf{deadline} - \Delta$ (line~\ref{line:conductor-wait-for-open}).
    In this case, when opening $s'$, which cannot occur before time $s'.\mathsf{deadline} - \Delta > s.\mathsf{deadline} - \Delta$ (line~\ref{line:conductor-wait-for-open}), $p_i$ has already opened $s$. \qed
\end{compactitem}
\end{proof}

\paragraph{Boundedness.}
We say that a correct validator $\proc_i$ is \emph{in window $\omega \in \mathbb{N}_{\geq 1}$} from the moment it enters window $\omega$ until the moment it enters the next window (which is $\omega + 1$, by \Cref{lemma:window-entry}).
The following proposition relates the window in which a correct validator currently resides to the number of slots it has accumulated in its local variable $\mathit{opened}_i$.

\begin{proposition}
\label{prop:open-count-window}
For every correct validator $\proc_i$ and every window $\omega \in \mathbb{N}_{\geq 1}$, while $\proc_i$ is in window $\omega$, the local variable $\mathit{opened}_i$ contains exactly $\omega \cdot W$ slots.
\end{proposition}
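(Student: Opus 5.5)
We would prove \Cref{prop:open-count-window} by induction on the window $\omega$, tracking how $\mathit{opened}_i$ evolves. The key observation is that $\mathit{opened}_i$ is modified in exactly two places. The first is the startup loop (lines~\ref{line:startup-foreach}--\ref{line:startup-opened-update}), which executes when $\proc_i$ enters window $1$. The second is the loop at lines~\ref{line:open-foreach}--\ref{line:acs-opened-update}, which executes within the same handler that enters a window $\omega \geq 2$ (line~\ref{line:enter_window_omega}). Both loops are triggered atomically with the corresponding ``enter'' step. So between entering window $\omega$ and entering window $\omega+1$, no other rule touches $\mathit{opened}_i$. Note that the asynchronous $\mathsf{schedule\_opening}$ only outputs $\mathsf{open}(s)$ and does not modify $\mathit{opened}_i$, since the insertion happens synchronously in the loop. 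Hence $|\mathit{opened}_i|$ is constant while $\proc_i$ is in a window.

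\emph{Base case ($\omega = 1$).} By \Cref{prop:acs-fate-range}, the startup loop inserts exactly $\mathsf{slots}_i(1) = \{s : s.\fnumber \leq W\}$, which has $W$ elements. Since $\mathit{opened}_i$ starts empty, it holds exactly $W$ slots throughout window $1$.

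\emph{Inductive step.} Assume $|\mathit{opened}_i| = (\omega-1)W$ while $\proc_i$ is in window $\omega-1$. By \Cref{lemma:window-entry}, window $\omega$ is entered only from window $\omega-1$ and at most once. Upon entering it, the loop adds $\mathsf{slots}_i(\omega)$, which by \Cref{prop:acs-fate-range} is a set of $W$ consecutive slot numbers $[s_\omega.\fnumber, s_\omega.\fnumber + W - 1]$.

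The main obstacle is showing that these $W$ slots are all new, i.e., disjoint from the current contents of $\mathit{opened}_i$, which is the union of $\mathsf{slots}_i(1), \ldots, \mathsf{slots}_i(\omega-1)$. We handle this with a strengthened invariant carried through the induction: every slot in $\mathit{opened}_i$ while in window $\omega-1$ has number at most $s_{\omega-1}.\fnumber$, the largest slot of $\mathsf{slots}_i(\omega-1)$. This holds because the windows' ranges are strictly increasing: each range starts above the previous window's maximum by the ordering claim of \Cref{prop:acs-fate-range} (ultimately \Cref{prop:acs-nonoverlap}). Since $s_\omega.\fnumber > s_{\omega-1}.\fnumber$, every newly added slot exceeds all existing ones, so the union is disjoint and $|\mathit{opened}_i| = (\omega-1)W + W = \omega W$. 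The strengthened invariant then updates to the new maximum $s_\omega.\fnumber + W - 1$, which closes the induction.
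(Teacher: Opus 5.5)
Your proof is correct and follows the same induction as the paper. In both, $\mathit{opened}_i$ changes only in the two loops that run atomically with entering a window, so its size is constant within a window, and each entry adds $W$ slots.

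The one place you go beyond the paper is in showing that these $W$ slots are actually new. You do this with the ordering $s_\omega.\fnumber > s_{\omega-1}.\fnumber$ from \Cref{prop:acs-fate-range}. The paper simply asserts that the loop adds ``exactly $W$ fresh slots'', so your strengthened invariant makes explicit a step it leaves unstated.
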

\begin{proof}
We proceed by induction on $\omega$.

\smallskip
\noindent \emph{Base case} ($\omega = 1$).
Validator $\proc_i$ enters window $1$ at startup, upon which the startup handler adds exactly the slots with number in $[1, W]$ to $\mathit{opened}_i$ (lines~\ref{line:startup-foreach}--\ref{line:startup-opened-update}).
As $\mathit{opened}_i$ is modified only upon entering a new window, the count remains $1 \cdot W$ throughout the time $\proc_i$ is in window $1$.

\smallskip
\noindent \emph{Inductive step} ($\omega \geq 2$).
By the inductive hypothesis, $|\mathit{opened}_i| = (\omega - 1) \cdot W$ while $\proc_i$ is in window $\omega - 1$.
Validator $\proc_i$ enters window $\omega$ upon activating the rule at line~\ref{line:acs-decide}, where the loop at lines~\ref{line:open-foreach}--\ref{line:acs-opened-update} adds exactly $W$ fresh slots to $\mathit{opened}_i$, yielding $|\mathit{opened}_i| = (\omega - 1) \cdot W + W = \omega \cdot W$.
Since $\mathit{opened}_i$ is modified only upon entering a new window, the count remains $\omega \cdot W$ throughout the time $\proc_i$ is in window $\omega$. \qed
\end{proof}

We are now ready to prove that \conductor satisfies $(2W - p)$-boundedness (see \Cref{mod:orchestrator_2}).
As with the previous two properties, \conductor satisfies the boundedness property unconditionally.

\begin{lemma}[Boundedness]
\label{lem:boundedness}
\conductor (\Cref{algorithm:conductor}) satisfies $(2W - p)$-boundedness.
\end{lemma}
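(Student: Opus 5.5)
We need to show: at any time $t$, if $\proc_i$ has opened $k$ slots $s_1<\dots<s_k$ by time $t$, then every $s_j$ with $j\le k-(2W-p)$ is completed. One subtlety first: "opened" in the orchestrator specification means the output $\mathsf{open}(s)$ has been triggered, whereas $\mathit{opened}_i$ records slots whose opening has merely been \emph{scheduled}. Since a slot is opened only after it is scheduled (line~\ref{line:conductor-wait-for-open}), and scheduling proceeds in increasing slot order (\Cref{prop:fate-order}), the opened slots form a prefix $s_1,\dots,s_k$ of the sorted elements of $\mathit{opened}_i$. So it suffices to prove the stronger claim with $k$ replaced by $K=|\mathit{opened}_i|$ at time $t$, i.e.\ the $K-(2W-p)$ smallest slots of $\mathit{opened}_i$ are completed. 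Indeed, if $j\le k-(2W-p)\le K-(2W-p)$, then $s_j$ is also among the first $K-(2W-p)$ slots of $\mathit{opened}_i$.

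Fix $t$ and let $\omega$ be the window $\proc_i$ is in at time $t$. By \Cref{prop:open-count-window}, $K=\omega W$. If $\omega=1$, then $K-(2W-p)=p-W<0$ and there is nothing to prove. For $\omega\ge2$, $\proc_i$ entered window $\omega$ through the rule at line~\ref{line:acs-decide}, whose guard requires $\mathsf{ready\_for\_next\_window}()=\mathsf{true}$ at that moment. At that moment $\mathit{opened}_i$ held $(\omega-1)W$ slots, again by \Cref{prop:open-count-window}. By line~\ref{line:ready-check}, the longest completed prefix $s_1,\dots,s_j$ of $\mathit{opened}_i$ (sorted) satisfied $(\omega-1)W-j\le W-p$, i.e.\ $j\ge(\omega-1)W-(W-p)=\omega W-(2W-p)$.

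We then need this prefix to remain the first $j$ slots of $\mathit{opened}_i$ and to remain completed up to time $t$. First, $\mathit{completed}_i$ only grows (line~\ref{line:completed-update}). Second, the $W$ slots added upon entering window $\omega$ all have numbers exceeding every previously present slot, by \Cref{prop:acs-fate-range}, which gives $s_\omega.\fnumber>s_{\omega-1}.\fnumber$. Hence the sorted order of the old slots is preserved as a prefix. Between entering window $\omega$ and time $t$, $\mathit{opened}_i$ does not change, because it is modified only upon entering a window. Therefore at time $t$ the first $\omega W-(2W-p)=K-(2W-p)$ slots are completed, which proves the claim.

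The main obstacle is the bookkeeping step: reconciling the specification's notion of opened slots (the triggered outputs) with $\mathit{opened}_i$ (scheduled slots), and confirming that new windows only append larger slot numbers. Both points are handled by \Cref{prop:fate-order} and \Cref{prop:acs-fate-range}. The remaining counting is routine.
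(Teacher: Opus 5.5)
Your proof is correct and follows essentially the same route as the paper: reduce to the slots recorded in $\mathit{opened}_i$, apply the readiness guard at the moment of entering window $\omega$ together with \Cref{prop:open-count-window}, and count $(W-p)+W$ uncompleted slots. You are slightly more careful than the paper on two bookkeeping points it leaves implicit. First, the slots actually opened by time $t$ form a prefix of the sorted $\mathit{opened}_i$. Second, the slots added on entering a window all exceed the earlier ones, so the completed prefix remains a prefix.
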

\begin{proof}
Fix a correct validator $\proc_i$ and a time $t$.
Let $s_1 < s_2 < \cdots < s_k$ be the slots in $\mathit{opened}_i$ at time $t$, ordered by slot number, and let $j^{\star}$ be the largest index such that $\proc_i$ has completed each of $s_1, \ldots, s_{j^{\star}}$ by time $t$ (with $j^{\star} = 0$ if no such index exists).
It suffices to show that $k - j^{\star} \leq 2W - p$, i.e., that at most $2W - p$ of the slots $s_1, \ldots, s_k$ remain uncompleted by $\proc_i$ at time $t$ (since every slot $\proc_i$ has opened has been previously recorded in $\mathit{opened}_i$).

Let $\omega = \mathsf{window}_i(s_k)$ be the window of slot $s_k$.
We know that $\proc_i$ is in window $\omega$ at time $t$.
Let us now distinguish two cases:
\begin{compactitem}
    \item Suppose $\omega = 1$.
    Then, $s_k.\fnumber \leq W$, so $k \leq W$, and the bound $k - j^{\star} \leq k \leq W \leq 2W - p$ holds immediately.

    \item Suppose $\omega \geq 2$.
    Validator $\proc_i$ enters window $\omega$ upon activating the rule at line~\ref{line:acs-decide} for $\mathcal{ACS}[\omega]$, which fires only when $\mathsf{ready\_for\_next\_window}()$ returns $\mathsf{true}$.
    By the readiness condition (line~\ref{line:ready-check}), at the moment $\proc_i$ enters window $\omega$, at most the last $W - p$ slots in $\mathit{opened}_i$ (by slot number) remain uncompleted.
    Just before entering window $\omega$, validator $\proc_i$ was in window $\omega - 1$, so $|\mathit{opened}_i| = (\omega - 1) W$ by \Cref{prop:open-count-window}.
    Upon entering window $\omega$, validator $\proc_i$ adds the $W$ slots of window $\omega$ to $\mathit{opened}_i$ (lines~\ref{line:open-foreach}--\ref{line:acs-opened-update}).
    Hence, while $\proc_i$ is in window $\omega$, at most $(W - p) + W = 2W - p$ of the slots in $\mathit{opened}_i$ remain uncompleted, and therefore $k - j^{\star} \leq 2W - p$, establishing the lemma in this case. \qed
\end{compactitem}
\end{proof}

\paragraph{Totality.}
The previous three properties were established for \conductor in isolation: each constrains only how \conductor schedules and opens slots, and therefore holds regardless of how --- or even whether --- slots are subsequently completed.
The remaining two, totality and recovery, are of a different nature.
Both hinge on slots actually completing, and on doing so in a timely, coordinated fashion --- something \conductor only reacts to, but does not itself bring about.
We therefore establish totality and recovery only for \conductor run \emph{within \cadence}, alongside \chorus (\Cref{algorithm:cadence}), whose finalization supplies precisely the completion guarantees that these two properties rely on.

We begin with totality.
In fact, we establish a stronger, quantitative form of it --- \emph{$\dtot$-totality}: if a correct validator opens a slot $s$ at some time $t$, then every correct validator opens $s$ by time $\max(t, \mathrm{GST}) + \dtot$.
Here, $\dtot = \Delta$ is \chorus's totality latency (by \Cref{prop:chorus-totality}), which enters the argument because completions within \cadence are \chorus finalizations.
This is more than the orchestrator's totality property demands --- that property asks only that an opening \emph{eventually} reach every correct validator --- but we prove the sharper $\dtot$ bound because the recovery proof later relies on it, not merely on eventual agreement about which slots are opened.

Before the formal argument, we sketch its shape.
The difficulty is a feedback loop between opening and completing, tightened by the fact that \chorus's totality is \emph{conditional}: it guarantees synchronized finalizations for a slot only if participation in it is already $\Delta$-synchronized (\Cref{prop:chorus-totality}) --- and, within \cadence, starting to participate in a slot is exactly opening it.
A correct validator opens the slots of window $\omega$ only once enough of the earlier windows have completed, so the openings of window $\omega$ are synchronized only if the completions of the earlier windows are; and completions, by the above, are synchronized only for slots whose openings already were.
The ACS instances are conditional in the same way: their timing guarantees hold only if the proposals of correct validators are $\Delta$-synchronized (\Cref{mod:acs}) --- and those proposals are triggered by completions, closing a second loop of the same shape.
No window can thus be treated in isolation: the synchrony that \chorus's totality and the ACS's guarantees require at window $\omega$ is produced by the very guarantees we are proving for windows $1, \ldots, \omega - 1$.
We untangle this by a single strong induction over the windows, whose invariant bundles four synchronization guarantees per window --- for its entry, its openings, its completions, and its proposals to the next window's agreement instance.
The induction closes precisely because \chorus's totality latency does not exceed the synchronization tolerance its condition grants (both equal $\Delta = \dtot$): each window's openings are synchronized as tightly as the earlier windows' completions, and its completions as tightly as its own openings, so the invariant ratchets across the windows undegraded.

Let $\mathcal{W}$ denote the set of windows entered by at least one correct validator.
By \Cref{prop:acs-fate-range} and the agreement property of the underlying ACS instances, any two correct validators that both enter a window $\omega$ agree on its slot set, i.e., $\mathsf{slots}_i(\omega) = \mathsf{slots}_j(\omega)$.
We may therefore drop the subscript and write $\mathsf{slots}(\omega)$ for this common set throughout the rest of the proof.

We name the per-window invariant explicitly.

\begin{definition}[Synchronized window]
\label{def:window-synchronized}
A window $\omega \in \mathcal{W}$ is \emph{synchronized} if and only if the following four conditions hold:
\begin{compactenum}
    \item \emph{Entry:} if a correct validator enters $\omega$ at some time $t$, then every correct validator enters $\omega$ by time $\max(t, \mathrm{GST}) + \dtot$;
    \item \emph{Openings:} for every slot $s \in \mathsf{slots}(\omega)$, if a correct validator opens $s$ at some time $t$, then every correct validator opens $s$ by time $\max(t, \mathrm{GST}) + \dtot$;
    \item \emph{Completions:} for every slot $s \in \mathsf{slots}(\omega)$, if a correct validator completes $s$ at some time $t$, then every correct validator completes $s$ by time $\max(t, \mathrm{GST}) + \dtot$;
    \item \emph{Proposals:} if a correct validator proposes to $\mathcal{ACS}[\omega + 1]$ at some time $t$, then every correct validator proposes to $\mathcal{ACS}[\omega + 1]$ by time $\max(t, \mathrm{GST}) + \dtot$.
\end{compactenum}
\end{definition}

Recall that the timing guarantees of the ACS module (\Cref{mod:acs}) hold only under its two assumptions.
We first record that the no-premature-abandonment assumption is always satisfied.

\begin{proposition}[No premature abandonment]
\label{prop:acs-no-premature-abandonment}
No correct validator abandons an ACS instance before deciding from it.
\end{proposition}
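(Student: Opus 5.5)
The plan is to argue directly from the pseudocode of \Cref{algorithm:conductor}, by inspecting the only place where a correct validator abandons an ACS instance. The sole invocation of $\mathsf{abandon}()$ on any $\mathcal{ACS}[\omega]$ is at line~\ref{line:acs-abandon}, inside the handler at line~\ref{line:acs-decide}. That handler fires only upon $\mathcal{ACS}[\mathit{current\_window}_i + 1].\mathsf{decide}(\cdot)$, together with the readiness guard. So the first step is to observe that every abandonment of $\mathcal{ACS}[\omega]$ by $\proc_i$ is immediately preceded, within the same atomic handler, by $\proc_i$ having obtained the decision output of $\mathcal{ACS}[\omega]$, where $\omega = \mathit{current\_window}_i + 1$ at that moment.

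The second step is to rule out any other path to abandonment of an ACS instance. The only other $\mathsf{abandon}$ invocations in \cadence target slot-consensus instances (\Cref{algorithm:cadence}, line~\ref{line:abandon}) or the MVBA inside \chorus (\Cref{alg:fallback}, line~\ref{line:fb-abandon}). Neither of these touches $\mathcal{ACS}[\cdot]$.

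The third step handles indexing. I need that when the handler fires for $\mathcal{ACS}[\omega]$, the decision it reacts to is indeed from that same instance and was produced before the abandon. This holds because the handler is keyed on $\mathcal{ACS}[\mathit{current\_window}_i+1]$ and abandons exactly that instance before incrementing $\mathit{current\_window}_i$ at line~\ref{line:window-increment}.

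Since $\mathit{current\_window}_i$ only increases, by \Cref{lemma:window-entry}, the handler fires at most once per instance. Hence no instance is abandoned twice, and none is abandoned without a prior decision. A decide output that arrives while the readiness guard is false is simply buffered; nothing abandons the instance in the meantime.

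The main (mild) obstacle is this last subtlety, namely that a decision may occur while $\mathsf{ready\_for\_next\_window}()$ is false, so decision and abandonment are not simultaneous. This is harmless, because the proposition requires only that abandonment not \emph{precede} deciding. Consequently the no-premature-abandonment assumption of \Cref{mod:acs} holds for every ACS instance used by \conductor.
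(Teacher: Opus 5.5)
Your proposal is correct and follows the paper's argument: the only abandonment of an ACS instance is at line~\ref{line:acs-abandon}, inside the handler at line~\ref{line:acs-decide}, which fires only after the validator has decided from that very instance. Your additional points are sound elaborations of that same observation, not a different route: no other $\mathsf{abandon}()$ call targets $\mathcal{ACS}[\cdot]$, the indexing via $\mathit{current\_window}_i$ is consistent, and a decision arriving while readiness is false simply waits.
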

\begin{proof}
A correct validator abandons an ACS instance only at line~\ref{line:acs-abandon} of \Cref{algorithm:conductor}, within the rule at line~\ref{line:acs-decide}, which activates only once the validator has decided from that very instance. \qed
\end{proof}

The timing guarantees of the ACS instances thus hinge only on their remaining assumption, the $\Delta$-synchronization of proposals.
The following proposition is the heart of the totality argument: it establishes that every window is synchronized.

\begin{proposition}[Window synchronization]
\label{prop:window-synchronization}
When \conductor is run within \cadence, every window $\omega \in \mathcal{W}$ is synchronized.
\end{proposition}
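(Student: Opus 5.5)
We argue by strong induction on $\omega \in \mathcal{W}$, in the order the invariant is named: entry, then openings, then completions, then proposals. For each $\omega$ we assume every earlier window $\omega' < \omega$ is synchronized. Windows are entered in order (\Cref{lemma:window-entry}), so every window below one in $\mathcal{W}$ is itself in $\mathcal{W}$. We also use throughout that all correct validators agree on $\mathsf{slots}(\omega)$.

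\emph{Entry.} For $\omega = 1$, every correct validator enters at time $0$, so entry holds trivially. For $\omega \geq 2$, a correct validator enters $\omega$ at line~\ref{line:acs-decide}, and needs two things: (a) a decision from $\mathcal{ACS}[\omega]$, and (b) $\mathsf{ready\_for\_next\_window}() = \mathsf{true}$.

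For (a), window $\omega - 1$ is synchronized, so its proposals clause gives $\Delta$-synchronized proposals to $\mathcal{ACS}[\omega]$. No premature abandonment holds by \Cref{prop:acs-no-premature-abandonment}. Both ACS assumptions are therefore met, and $\Delta$-totality of ACS applies: if some correct validator decides at $t' \leq t$, all decide by $\max(t', \mathrm{GST}) + \Delta$. By ACS integrity, that requires each of them to have proposed, which is fine.

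For (b), readiness depends only on which slots of windows $< \omega$ are opened and completed. By \Cref{prop:open-count-window} and the agreement on slot sets, the relevant slots are the same at every correct validator. The slots the entering validator has completed by $t$ are, by the completions clause of earlier windows, completed everywhere by $\max(t, \mathrm{GST}) + \dtot$. One subtlety: readiness at another validator $\proc_j$ is evaluated while $\proc_j$ is still in window $\omega - 1$, with $\mathit{opened}_j$ of size $(\omega - 1)W$. This is the same opened set the entering validator had just before entry. So the predicate ``all but the last $W - p$ are completed'' transfers verbatim.

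Both (a) and (b) then hold at every correct validator by $\max(t, \mathrm{GST}) + \dtot$, and the rule fires there. The window counter is exactly $\omega - 1$, by induction and \Cref{lemma:window-entry}.

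\emph{Openings.} Each slot $s \in \mathsf{slots}(\omega)$ is opened at $\max(\text{entry time}, s.\fdeadline - \Delta)$. Suppose $\proc_i$ opens $s$ at time $t$. Then its entry time is at most $t$, and $s.\fdeadline - \Delta \leq t$. Every other validator enters by $\max(t, \mathrm{GST}) + \dtot$, so it opens $s$ by $\max(t, \mathrm{GST}) + \dtot$, since the deadline term is also at most $t$.

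\emph{Completions.} Completion is \chorus finalization, and participation in $s$ is exactly opening $s$. The openings clause just proved makes participation in $s$ $\Delta$-synchronized, because $\dtot = \Delta$. \Cref{prop:chorus-totality} then yields the completions clause directly. Finalization of a not-yet-opened slot is buffered, per line~\ref{line:upon-finalize}, and delivered at opening; the $\Delta$ slack of the openings clause absorbs this.

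\emph{Proposals.} A validator proposes to $\mathcal{ACS}[\omega + 1]$ when it is in window $\omega$ and ready. This is the same readiness predicate as in the entry step, now over $\mathsf{slots}$ of windows $\leq \omega$. So the proposals clause follows by combining the entry clause for $\omega$ with the completions clauses for windows $\leq \omega$, taking the maximum of the two delays. Each delay is at most $\max(t, \mathrm{GST}) + \dtot$ measured from a time $\leq t$, so the bound stays $\dtot$.

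The main obstacle I expect is the entry step. There I must argue carefully that readiness is a function only of data already synchronized by the inductive hypothesis, and that the maxima of several $\max(\cdot, \mathrm{GST}) + \dtot$ bounds do not accumulate. The fix is that every triggering event at the fast validator happened at some time $\leq t$, so each bound is at most $\max(t, \mathrm{GST}) + \dtot$, and no error accumulates across the induction.
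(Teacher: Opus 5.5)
Your proof is correct and follows the paper's own argument. The paper also uses strong induction over windows, proving the four clauses in the same order. For entry, it invokes ACS $\Delta$-totality (justified by the proposals clause of window $\omega-1$ and no premature abandonment), gets readiness from the completions clauses of earlier windows, and puts $p_j$ in window $\omega-1$ via that window's entry clause. It derives openings from entry, completions from \chorus's conditional totality, and proposals from entry plus completions. The differences are cosmetic: you fold the base case into the uniform step, you justify the guard on the finalization handler by buffering plus the openings bound (the paper instead notes that finalization already requires participation), and you leave implicit the paper's ``first time readiness holds, hence not yet proposed'' detail in the proposals clause.
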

\begin{proof}
We proceed by strong induction on $\omega$.

\smallskip
\noindent \emph{Base case} ($\omega = 1$).
Every correct validator enters window $1$ at time $0$, upon starting the protocol (line~\ref{line:enter_window_1}), and thereupon schedules the opening of each slot $s \in \mathsf{slots}(1)$ (line~\ref{line:startup-open}); hence all correct validators open $s$ at the same time $s.\fdeadline - \Delta$ (line~\ref{line:conductor-wait-for-open}).
The entry and opening conditions thus hold with perfect synchrony.

For the completion condition, consider a slot $s \in \mathsf{slots}(1)$ and a correct validator $\proc_i$ that completes $s$ at some time $t$.
Within \cadence, a validator records a completion only upon finalizing the corresponding slot consensus instance, and the handler that does so fires only for slots it has already opened (line~\ref{line:upon-finalize} of \Cref{algorithm:cadence}); hence $\proc_i$ finalizes $\mathcal{S}[s]$ at time $t$.
As all correct validators open $s$ --- and thus start participating in $\mathcal{S}[s]$ (line~\ref{line:participate} of \Cref{algorithm:cadence}) --- at the same time, participation in $\mathcal{S}[s]$ is trivially $\Delta$-synchronized, so the totality of \chorus (\Cref{prop:chorus-totality}) applies: every correct validator finalizes $\mathcal{S}[s]$ by time $\max(t, \mathrm{GST}) + \dtot$.
Since each has by then already opened $s$, the guard on \cadence's finalization handler (line~\ref{line:upon-finalize} of \Cref{algorithm:cadence}) is satisfied, so each completes $s$ by time $\max(t, \mathrm{GST}) + \dtot$.

Lastly, for the proposal condition, let $\proc_i$ be a correct validator that proposes to $\mathcal{ACS}[2]$ at some time $t$ (line~\ref{line:acs-propose}), write $T^{\star} = \max(t, \mathrm{GST}) + \dtot$, and consider any correct validator $\proc_j$.
When proposing, $\mathsf{ready\_for\_next\_window}()$ returns $\mathsf{true}$ at $\proc_i$ (line~\ref{line:ready}): of the $W$ slots in $\mathit{opened}_i$, $\proc_i$ completed the first $p$ by time $t$ (line~\ref{line:ready-check}).
By the completion condition just established, $\proc_j$ completes those $p$ slots by time $T^{\star}$; as $\mathit{opened}_j$ consists of the same $W$ slots while $\proc_j$ is in window $1$, and as $\proc_j$ can enter window $2$ only at a moment when readiness already holds (line~\ref{line:acs-decide}), there is a first time $t_r \leq T^{\star}$ at which $\proc_j$ is in window $1$ and $\mathsf{ready\_for\_next\_window}()$ returns $\mathsf{true}$ at $\proc_j$.
Since $\proc_j$ proposes to $\mathcal{ACS}[2]$ only under those same two conditions (line~\ref{line:ready}), it has not proposed before time $t_r$; hence $2 \notin \mathit{proposed}_j$ at time $t_r$, so the rule at line~\ref{line:ready} fires and $\proc_j$ proposes to $\mathcal{ACS}[2]$ at time $t_r \leq T^{\star}$ (line~\ref{line:acs-propose}).

\smallskip
\noindent \emph{Inductive step} ($\omega > 1$).
We prove that $\omega$ is synchronized assuming that every window $\omega' < \omega$ is, establishing the four conditions of \Cref{def:window-synchronized} in order; each rests on the previous ones.

\smallskip
\noindent \emph{Entry.}
Let $\proc_i$ be a correct validator that enters $\omega$ at some time $t$, and write $T^{\star} = \max(t, \mathrm{GST}) + \dtot$.
Consider any correct validator $\proc_j$.
A validator enters $\omega$ upon activating the rule at line~\ref{line:acs-decide} for $\mathcal{ACS}[\omega]$, which requires that it is in window $\omega - 1$, has decided from $\mathcal{ACS}[\omega]$, and $\mathsf{ready\_for\_next\_window}()$ returns $\mathsf{true}$.
We first collect three facts.
First, $\proc_i$ entered window $\omega - 1$ at some time $t^- \leq t$ (\Cref{lemma:window-entry}), so, by the entry condition of the induction hypothesis for $\omega - 1$, $\proc_j$ enters $\omega - 1$ by time $\max(t^-, \mathrm{GST}) + \dtot \leq T^{\star}$.
Second, $\proc_i$ decided from $\mathcal{ACS}[\omega]$ by time $t$, so, by the $\Delta$-totality of $\mathcal{ACS}[\omega]$ (\Cref{mod:acs}) --- whose $\Delta$-synchronized-proposals assumption is exactly the proposal condition of the induction hypothesis for window $\omega - 1$, and whose no-premature-abandonment assumption holds by \Cref{prop:acs-no-premature-abandonment} --- $\proc_j$ decides from $\mathcal{ACS}[\omega]$ by time $\max(t, \mathrm{GST}) + \Delta \leq T^{\star}$.
Third, when $\proc_i$ entered $\omega$, the readiness condition held at $\proc_i$ (line~\ref{line:ready-check}): of the $(\omega - 1)W$ slots in $\mathit{opened}_i$ (\Cref{prop:open-count-window}), all but at most the last $W - p$ --- that is, the first $(\omega - 2)W + p$, ordered by slot number --- were completed by $\proc_i$ by time $t$.
These slots lie in windows $1, \ldots, \omega - 1$ (\Cref{prop:acs-fate-range}), so, by the completion conditions of the induction hypothesis, $\proc_j$ completes each of them by time $\max(t, \mathrm{GST}) + \dtot = T^{\star}$; since $\mathit{opened}_j$ consists of exactly those $(\omega - 1)W$ slots while $\proc_j$ is in window $\omega - 1$ (\Cref{prop:acs-fate-range}), $\mathsf{ready\_for\_next\_window}()$ then returns $\mathsf{true}$ at $\proc_j$.
Now, as $\proc_j$ leaves window $\omega - 1$ only by entering $\omega$ --- correct validators enter windows in increasing order (\Cref{lemma:window-entry}) --- and as entering $\omega$ requires precisely the three requirements above, the three facts guarantee a first time $t_e \leq T^{\star}$ at which all three requirements hold simultaneously at $\proc_j$.
At that moment, the rule at line~\ref{line:acs-decide} for $\mathcal{ACS}[\omega]$ fires, and $\proc_j$ enters $\omega$ at time $t_e \leq T^{\star}$ (line~\ref{line:enter_window_omega}).

\smallskip
\noindent \emph{Openings.}
Let $s \in \mathsf{slots}(\omega)$, and let $\proc_i$ be a correct validator that opens $s$ at some time $t$; note that $t \geq s.\fdeadline - \Delta$ (line~\ref{line:conductor-wait-for-open}).
Validator $\proc_i$ scheduled the opening of $s$ upon entering $\omega$, at some time $t' \leq t$.
By the entry condition just established, every correct validator $\proc_j$ enters $\omega$ --- and thereby schedules the opening of $s$ (line~\ref{line:acs-decide-open}) --- at some time $t^{\star} \leq \max(t', \mathrm{GST}) + \dtot \leq \max(t, \mathrm{GST}) + \dtot$.
If $t^{\star} \geq s.\fdeadline - \Delta$, then $\proc_j$ opens $s$ at time $t^{\star} \leq \max(t, \mathrm{GST}) + \dtot$; otherwise, $\proc_j$ opens $s$ at time $s.\fdeadline - \Delta \leq t$ (line~\ref{line:conductor-wait-for-open}).
Either way, $\proc_j$ opens $s$ by time $\max(t, \mathrm{GST}) + \dtot$.

\smallskip
\noindent \emph{Completions.}
Let $s \in \mathsf{slots}(\omega)$, and let $\proc_i$ be a correct validator that completes $s$ at some time $t$.
Within \cadence, a validator records a completion only upon finalizing the corresponding slot consensus instance, and the handler that does so fires only for slots it has already opened (line~\ref{line:upon-finalize} of \Cref{algorithm:cadence}); hence $\proc_i$ finalizes $\mathcal{S}[s]$ at time $t$, having opened $s$ at some time $t_o \leq t$.
By the opening condition just established, every correct validator opens $s$ --- and thus starts participating in $\mathcal{S}[s]$ (line~\ref{line:participate} of \Cref{algorithm:cadence}) --- by time $\max(t_o, \mathrm{GST}) + \dtot \leq \max(t, \mathrm{GST}) + \dtot$.
As $\dtot = \Delta$, this says precisely that participation in $\mathcal{S}[s]$ is $\Delta$-synchronized, so the totality of \chorus (\Cref{prop:chorus-totality}) applies: every correct validator finalizes $\mathcal{S}[s]$ by time $\max(t, \mathrm{GST}) + \dtot$.
Since each has by then already opened $s$, the guard on \cadence's finalization handler (line~\ref{line:upon-finalize} of \Cref{algorithm:cadence}) is satisfied, so each completes $s$ by time $\max(t, \mathrm{GST}) + \dtot$.

\smallskip
\noindent \emph{Proposals.}
Let $\proc_i$ be a correct validator that proposes to $\mathcal{ACS}[\omega + 1]$ at some time $t$ (line~\ref{line:acs-propose}), write $T^{\star} = \max(t, \mathrm{GST}) + \dtot$, and consider any correct validator $\proc_j$.
When proposing, $\proc_i$ is in window $\omega$ and $\mathsf{ready\_for\_next\_window}()$ returns $\mathsf{true}$ (line~\ref{line:ready}): $\proc_i$ entered $\omega$ at some time $t^- \leq t$, and, of the $\omega W$ slots in $\mathit{opened}_i$ (\Cref{prop:open-count-window}), it completed all but at most the last $W - p$ --- that is, the first $(\omega - 1)W + p$, ordered by slot number --- by time $t$.
By the entry condition established above, $\proc_j$ enters $\omega$ by time $\max(t^-, \mathrm{GST}) + \dtot \leq T^{\star}$.
The completed slots lie in windows $1, \ldots, \omega$ (\Cref{prop:acs-fate-range}), so, by the completion conditions of the induction hypothesis for the windows $\omega' < \omega$ and the completion condition just established for $\omega$ itself, $\proc_j$ completes each of them by time $T^{\star}$; since $\mathit{opened}_j$ consists of the same $\omega W$ slots while $\proc_j$ is in window $\omega$ (\Cref{prop:acs-fate-range}), $\mathsf{ready\_for\_next\_window}()$ then returns $\mathsf{true}$ at $\proc_j$.
As $\proc_j$ can enter window $\omega + 1$ only at a moment when readiness already holds (line~\ref{line:acs-decide}), there is thus a first time $t_r \leq T^{\star}$ at which $\proc_j$ is in window $\omega$ and $\mathsf{ready\_for\_next\_window}()$ returns $\mathsf{true}$ at $\proc_j$.
Since $\proc_j$ proposes to $\mathcal{ACS}[\omega + 1]$ only under those same two conditions (line~\ref{line:ready}), it has not proposed before time $t_r$; hence $\omega + 1 \notin \mathit{proposed}_j$ at time $t_r$, so the rule at line~\ref{line:ready} fires and $\proc_j$ proposes to $\mathcal{ACS}[\omega + 1]$ at time $t_r \leq T^{\star}$ (line~\ref{line:acs-propose}). \qed
\end{proof}

The remainder of the analysis consumes \Cref{prop:window-synchronization} through four immediate consequences, one per condition.
We record the entry condition first, as the recovery proof relies on it directly.

\begin{corollary}[Entry synchronization]
\label{cor:entry-synchronization}
When \conductor is run within \cadence, the following holds for every window $\omega \in \mathcal{W}$: if a correct validator enters $\omega$ at some time $t$, then every correct validator enters $\omega$ by time $\max(t, \mathrm{GST}) + \dtot$.
\end{corollary}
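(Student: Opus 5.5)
This corollary is the entry condition of \Cref{def:window-synchronized} taken in isolation, so the plan is simply to specialize \Cref{prop:window-synchronization}. Fix any window $\omega \in \mathcal{W}$. That proposition applies to every window entered by at least one correct validator when \conductor runs within \cadence, so $\omega$ is synchronized. The first condition of \Cref{def:window-synchronized} then states that if a correct validator enters $\omega$ at time $t$, every correct validator enters $\omega$ by time $\max(t, \mathrm{GST}) + \dtot$. That is exactly the claim.

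The one point to check is the quantifier over $\omega$. The corollary ranges over $\mathcal{W}$, and so does \Cref{prop:window-synchronization}. A window that no correct validator enters lies outside $\mathcal{W}$ and makes the corollary's hypothesis vacuous, so no case is lost. The standing assumption that \conductor runs within \cadence is inherited verbatim.

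The only real work lies upstream, in the strong induction of \Cref{prop:window-synchronization}. In particular, its entry step combines the $\Delta$-totality of $\mathcal{ACS}[\omega]$ with the completion synchronization of earlier windows. Since that result may be assumed here, I expect no genuine obstacle; the proof is a one-line instantiation.
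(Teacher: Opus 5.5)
Your proposal is correct and is exactly the paper's argument: the corollary is read off directly from the entry condition of \Cref{prop:window-synchronization}, applied to the given window $\omega \in \mathcal{W}$. Your check that the corollary and the proposition quantify over the same $\mathcal{W}$ is correct.
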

\begin{proof}
Immediate from the entry condition of \Cref{prop:window-synchronization}. \qed
\end{proof}

The proposal condition, in turn, discharges the assumption of the ACS module (\Cref{mod:acs}) once and for all: within \cadence, the proposals of correct validators to every ACS instance are $\Delta$-synchronized.

\begin{corollary}[Proposal synchronization]
\label{cor:proposal-synchronization}
When \conductor is run within \cadence, the following holds for every window $\omega \geq 2$: if a correct validator proposes to $\mathcal{ACS}[\omega]$ at some time $t$, then every correct validator proposes to $\mathcal{ACS}[\omega]$ by time $\max(t, \mathrm{GST}) + \dtot$.
\end{corollary}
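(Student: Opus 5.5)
The plan is to read the corollary off the proposal condition of \Cref{prop:window-synchronization}, applied to the preceding window, after a short index shift. Fix $\omega \geq 2$ and a correct validator $\proc_i$ that proposes to $\mathcal{ACS}[\omega]$ at time $t$. The proposal condition of \Cref{def:window-synchronized} is stated for window $\omega - 1$, and it concerns proposals to $\mathcal{ACS}[(\omega-1)+1] = \mathcal{ACS}[\omega]$. To use it, we must show that $\omega - 1 \in \mathcal{W}$, so that \Cref{prop:window-synchronization} covers that window.

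The first step is to establish membership. A correct validator proposes to $\mathcal{ACS}[\omega]$ only through the rule at line~\ref{line:ready}, which invokes $\mathcal{ACS}[\mathit{current\_window}_i + 1]$. Therefore, at time $t$, $\mathit{current\_window}_i = \omega - 1$. If $\omega - 1 = 1$, then $\proc_i$ entered window $1$ at startup (line~\ref{line:enter_window_1}). Otherwise, the value $\omega - 1$ was reached by the increment at line~\ref{line:window-increment}, which is immediately followed by entering window $\omega - 1$ (line~\ref{line:enter_window_omega}); \Cref{lemma:window-entry} also gives this. In either case a correct validator has entered $\omega - 1$, so $\omega - 1 \in \mathcal{W}$.

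The second step invokes \Cref{prop:window-synchronization} for window $\omega - 1$. Its proposal condition states that every correct validator proposes to $\mathcal{ACS}[\omega]$ by time $\max(t, \mathrm{GST}) + \dtot$, which is exactly the claim. The only subtle point is the index bookkeeping: it must be checked that the invoked ACS index is the current window plus one. Beyond that there is no real obstacle, since all the substantive work was carried out in the inductive proof of \Cref{prop:window-synchronization}. Finally, because $\dtot = \Delta$, the corollary also discharges the $\Delta$-synchronized-proposals assumption of \Cref{mod:acs}.
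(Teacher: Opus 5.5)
Your proposal is correct and follows the paper's proof: a correct validator proposes to $\mathcal{ACS}[\omega]$ only while in window $\omega-1$, so $\omega-1 \in \mathcal{W}$, and the claim is then the proposal condition of \Cref{prop:window-synchronization} applied to window $\omega-1$. Your explicit tracking of how $\mathit{current\_window}_i$ came to equal $\omega-1$ (at startup or via the increment) spells out a step the paper leaves implicit.
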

\begin{proof}
A correct validator proposes to $\mathcal{ACS}[\omega]$ only from window $\omega - 1$ (line~\ref{line:ready}), so $\omega - 1 \in \mathcal{W}$, and the claim is the proposal condition of \Cref{prop:window-synchronization} applied to window $\omega - 1$. \qed
\end{proof}

We are now ready to prove that \conductor satisfies $\dtot$-totality when run within \cadence.

\begin{lemma} [Totality]
\label{lemma:conductor-totality}
When run within \cadence, \conductor (\Cref{algorithm:conductor}) satisfies totality.
More specifically, for every slot $s$, if a correct validator opens $s$ at some time $t$, then every correct validator opens $s$ by time $\max(t, \mathrm{GST}) + \dtot$.
\end{lemma}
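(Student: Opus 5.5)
The plan is to reduce the lemma directly to the opening condition of \Cref{prop:window-synchronization}. Fix a slot $s$ and a correct validator $\proc_i$ that opens $s$ at some time $t$. Every opening happens through $\mathsf{schedule\_opening}$, and that function is invoked only upon entering a window: at startup (line~\ref{line:startup-open}) or in the loop at line~\ref{line:acs-decide-open}. Hence $s \in \mathsf{slots}_i(\omega)$ for $\omega = \mathsf{window}_i(s) \neq \bot$. Because $\proc_i$ entered $\omega$, we have $\omega \in \mathcal{W}$.

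Next we identify this window with the common slot set $\mathsf{slots}(\omega)$. Any two correct validators that enter $\omega$ agree on its slot set, by \Cref{prop:acs-fate-range} together with the agreement of the ACS instances $\mathcal{ACS}[2],\ldots,\mathcal{ACS}[\omega]$ (\Cref{prop:window-agreement}). So $s \in \mathsf{slots}(\omega)$.

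Now apply the opening condition of \Cref{prop:window-synchronization} to window $\omega$ and slot $s$. It states directly that every correct validator opens $s$ by time $\max(t,\mathrm{GST}) + \dtot$. This gives the quantitative claim, and since $\dtot = \Delta$ is finite, it also gives the plain totality property of \Cref{mod:orchestrator_2}, which only asks that openings eventually reach every correct validator.

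The only subtle step is justifying that the opening condition refers to the same slot set that $\proc_i$ used. That condition quantifies over $\mathsf{slots}(\omega)$, which is well defined only because of the agreement argument above. I would spell this step out and not just cite it. Window $1$ is fixed by the startup handler. For $\omega \ge 2$, the slot set is determined by the slot decided from $\mathcal{ACS}[\omega]$, and ACS agreement makes that slot the same at every correct validator. All the remaining difficulty, namely the feedback between openings and completions and the $\Delta$-synchronization of ACS proposals, has already been handled inside \Cref{prop:window-synchronization}, so the lemma itself should be essentially immediate.
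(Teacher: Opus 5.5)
Your proposal is correct and follows the paper's proof: you place $s$ in $\mathsf{slots}(\omega)$ for $\omega = \mathsf{window}_i(s) \in \mathcal{W}$ and then read the claim directly off the opening condition of \Cref{prop:window-synchronization}. Your extra justification that $\mathsf{slots}(\omega)$ is common to all correct validators, via \Cref{prop:acs-fate-range} and ACS agreement, is the same remark the paper makes just before \Cref{def:window-synchronized}.
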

\begin{proof}
Consider any slot $s$ that a correct validator $\proc_i$ opens; then $s \in \mathsf{slots}(\omega)$ for the window $\omega = \mathsf{window}_i(s) \in \mathcal{W}$.
By \Cref{prop:window-synchronization}, $\omega$ is synchronized, and its opening condition (\Cref{def:window-synchronized}) is precisely the claim. \qed

\end{proof}

The completions inherit the same bound.

\begin{corollary}[Completion synchronization]
\label{cor:completion-totality}
When run within \cadence, \conductor satisfies $\dtot$-totality of completions: for every slot $s$, if a correct validator completes $s$ at some time $t$, then every correct validator completes $s$ by time $\max(t, \mathrm{GST}) + \dtot$.
\end{corollary}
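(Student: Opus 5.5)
The claim follows almost verbatim from the completion condition of \Cref{prop:window-synchronization}. I would structure it exactly as the proof of \Cref{lemma:conductor-totality}.

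First, fix a slot $s$ and a correct validator $\proc_i$ that completes $s$ at time $t$. Within \cadence, a completion is recorded only in the finalization handler (line~\ref{line:upon-finalize} of \Cref{algorithm:cadence}). That handler fires only for slots in $\mathit{opened}_i$, so $\proc_i$ has opened $s$. Hence $\mathsf{window}_i(s) = \omega \neq \bot$ for some $\omega$, and $\omega \in \mathcal{W}$ because $\proc_i$ entered it. Thus $s \in \mathsf{slots}(\omega)$, where the subscript is dropped as justified before \Cref{def:window-synchronized}.

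Second, I would invoke \Cref{prop:window-synchronization}, which says that $\omega$ is synchronized. Condition~3 of \Cref{def:window-synchronized} (Completions) then states that every correct validator completes $s$ by time $\max(t, \mathrm{GST}) + \dtot$. That is precisely the claim.

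There is essentially no obstacle here. The only point deserving care is the first step: completions cannot occur for unopened slots, so every completed slot lies in some entered window. The paper's handler guard settles this directly, and all the real work was done in the strong induction of \Cref{prop:window-synchronization}.
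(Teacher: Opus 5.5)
Your proposal is correct and follows the paper's proof essentially verbatim. You use the guard on \cadence's finalization handler to place every completed slot in $\mathsf{slots}(\omega)$ for some $\omega \in \mathcal{W}$, and then read off the bound from the completion condition of \Cref{prop:window-synchronization}.
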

\begin{proof}
A correct validator completes a slot only after opening it (the finalization handler fires only for opened slots; line~\ref{line:upon-finalize} of \Cref{algorithm:cadence}), so a completed slot $s$ satisfies $s \in \mathsf{slots}(\omega)$ for some window $\omega \in \mathcal{W}$.
By \Cref{prop:window-synchronization}, $\omega$ is synchronized, and its completion condition (\Cref{def:window-synchronized}) is precisely the claim. \qed
\end{proof}

\paragraph{Recovery.}
Finally, we turn to \conductor's recovery property, which we also show \conductor satisfies when run within \cadence.
To this end, we first bound how long a correct validator takes to complete a slot it opens, in terms of \chorus's termination latency $\ell_{\textsc{chorus}}$ --- the parameter for which, if all correct validators start participating by some time $t'$, then all finalize by time $\max(t', \mathrm{GST}) + \ell_{\textsc{chorus}}$.

\begin{proposition}
\label{prop:conductor-open-to-complete}
When run within \cadence, the following holds for every slot $s$: if a correct validator opens $s$ at some time $t$, then it completes $s$ by time $\max(t, \mathrm{GST}) + \dtot + \ell_{\textsc{chorus}}$.
\end{proposition}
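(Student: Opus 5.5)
The plan is to combine \Cref{lemma:conductor-totality} (opening totality with latency $\dtot$) with \chorus's $\ell_{\textsc{chorus}}$-termination (\Cref{lemma:chorus-termination}, with $\ell_{\textsc{chorus}} = 5\Delta + \ell_{\mathsf{MVBA}}$). Fix a slot $s$ and a correct validator $\proc_i$ that opens $s$ at time $t$.

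First, by \Cref{lemma:conductor-totality}, every correct validator opens $s$ by time $t' = \max(t, \mathrm{GST}) + \dtot$. Within \cadence, opening $s$ is exactly starting to participate in $\mathcal{S}[s]$ (\Cref{algorithm:cadence}, line~\ref{line:participate}), so all correct validators start participating by $t'$. The same totality statement, read with $\dtot = \Delta$, shows that participation in $\mathcal{S}[s]$ is $\Delta$-synchronized in the sense of \Cref{def:delta-synchronized-participation}. This discharges the precondition of \chorus's termination.

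Next, apply \Cref{lemma:chorus-termination} with start time $t'$. Every correct validator, in particular $\proc_i$, finalizes a proposal vector for $s$ by
\[
\max(t', \mathrm{GST}) + \ell_{\textsc{chorus}} = \max(t, \mathrm{GST}) + \dtot + \ell_{\textsc{chorus}},
\]
since $t' \geq \mathrm{GST}$. Because $\proc_i$ has already opened $s$, the guard of the finalization handler (\Cref{algorithm:cadence}, line~\ref{line:upon-finalize}) is met. That handler invokes $\mathcal{O}.\mathsf{complete}(s)$ at the same moment, so $\proc_i$ completes $s$ within the stated bound.

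The main obstacle is checking that the hypotheses of \Cref{lemma:chorus-termination} genuinely hold. That lemma assumes validators do not abandon prematurely, which holds in \cadence because abandonment happens only after finalization. It also requires that the $\Delta$-synchronization be established for this particular slot, with no circular dependence on the completion bound being proved. \Cref{lemma:conductor-totality} rests on \Cref{prop:window-synchronization}, which uses only \chorus's totality and not \Cref{prop:conductor-open-to-complete}, so no circularity arises. One subtlety is that \chorus might terminate at some validator before others have even opened $s$. That case is harmless: termination bounds finalization time for every correct validator, and any finalization output before opening is buffered until $s$ is opened, which happens by $t' \le \max(t,\mathrm{GST})+\dtot$, still within the bound.
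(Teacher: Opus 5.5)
Your proof is correct and follows the paper's argument essentially step for step: \conductor's $\dtot$-totality (\Cref{lemma:conductor-totality}) gets every correct validator participating in $\mathcal{S}[s]$ by $\max(t,\mathrm{GST})+\dtot$ and also discharges the $\Delta$-synchronized-participation precondition, after which \chorus's termination (\Cref{lemma:chorus-termination}) and the finalization handler of \Cref{algorithm:cadence} give completion within the stated bound. Your remarks on non-circularity (via \Cref{prop:window-synchronization}) and on abandonment occurring only after finalization are accurate and make explicit what the paper leaves implicit.
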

\begin{proof}
Let $\proc_i$ be a correct validator that opens slot $s$ at time $t$; upon doing so, it starts participating in $\mathcal{S}[s]$ (line~\ref{line:participate} of \Cref{algorithm:cadence}).
By the $\dtot$-totality of \conductor (\Cref{lemma:conductor-totality}), every correct validator opens $s$ --- and hence starts participating in $\mathcal{S}[s]$ --- by time $\max(t, \mathrm{GST}) + \dtot$.
The $\ell_{\textsc{chorus}}$-termination of \chorus (\Cref{lemma:chorus-termination}) --- whose $\Delta$-synchronized-participation condition holds by the $\dtot$-totality of \conductor (\Cref{lemma:conductor-totality}), as opening a slot is exactly starting to participate in its instance and $\dtot = \Delta$ --- therefore guarantees that every correct validator finalizes $\mathcal{S}[s]$ --- and thus completes $s$ (line~\ref{line:complete} of \Cref{algorithm:cadence}) --- by time $\max(t, \mathrm{GST}) + \dtot + \ell_{\textsc{chorus}}$. \qed
\end{proof}

Recall that $\Phi_{oc} = \ell_{\textsc{chorus}} + \dtot$, which is exactly the open-to-complete bound established just above (\Cref{prop:conductor-open-to-complete}).
Throughout the remainder of this part of the proof, we focus exclusively on \conductor when run within \cadence.
To avoid clutter, we leave this qualification implicit rather than restating it at every step.
We now prove that every correct validator eventually enters every window.

\begin{proposition}
\label{prop:enters-every-window}
Every correct validator eventually enters every window $\omega \in \mathbb{N}_{\geq 1}$.
\end{proposition}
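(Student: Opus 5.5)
We argue by induction on $\omega$, proving the slightly stronger claim that every correct validator enters $\omega$, and that it also eventually completes every slot of $\mathsf{slots}(\omega)$.

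\emph{Base case.} Window $1$ is entered by every correct validator at startup (line~\ref{line:enter_window_1}). Its $W$ slots are scheduled at that moment, so each is opened at its starting time (line~\ref{line:conductor-wait-for-open}). By \Cref{prop:conductor-open-to-complete}, each such slot is completed by time $\max(t,\mathrm{GST})+\Phi_{oc}$, where $t$ is its opening time.

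\emph{Inductive step.} Assume every correct validator has entered windows $1,\ldots,\omega$ and eventually completes all their slots. We show every correct validator enters $\omega+1$.

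First, readiness eventually holds. While a correct validator $\proc_j$ is in window $\omega$, $\mathit{opened}_j$ consists exactly of the $\omega W$ slots of windows $1,\ldots,\omega$ (\Cref{prop:open-count-window,prop:acs-fate-range}). All of these are eventually completed, so $\mathsf{ready\_for\_next\_window}()$ eventually returns $\mathsf{true}$; it needs only the first $(\omega-1)W+p$ of them. Readiness is also stable. $\mathit{completed}_j$ only grows, and $\mathit{opened}_j$ changes only upon entering a new window, so once readiness holds it keeps holding while $\proc_j$ remains in $\omega$.

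Second, the ACS instance decides. Since $\omega+1\notin\mathit{proposed}_j$ before the first firing, the rule at line~\ref{line:ready} fires and $\proc_j$ proposes to $\mathcal{ACS}[\omega+1]$. So all correct validators propose, say by some time $t_p$. The proposals are $\Delta$-synchronized (\Cref{cor:proposal-synchronization}), and no correct validator abandons prematurely (\Cref{prop:acs-no-premature-abandonment}). The $\ell$-termination of ACS (\Cref{mod:acs}) therefore gives every correct validator a decision by $\max(t_p,\mathrm{GST})+\ell$.

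Third, the entry rule fires. The rule at line~\ref{line:acs-decide} requires a decision together with readiness, and both now hold persistently. So $\proc_j$ enters $\omega+1$ and schedules its $W$ slots (line~\ref{line:open-foreach}). Each slot is then opened, at the latest at its starting time, and completed within $\Phi_{oc}$ after $\max(\text{opening},\mathrm{GST})$ by \Cref{prop:conductor-open-to-complete}. This re-establishes the strengthened hypothesis for $\omega+1$.

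The main obstacle is the bookkeeping around the persistence of readiness and the firing conditions, for two reasons. First, the decision of $\mathcal{ACS}[\omega+1]$ may arrive before readiness holds. We must confirm that the rule at line~\ref{line:acs-decide} is a condition-based guard, so it still fires once readiness later becomes true, and the decision is not lost. Second, the open-to-complete bound (\Cref{prop:conductor-open-to-complete}) rests on \conductor's totality and \chorus's termination. Invoking it only for slots already known to be opened keeps the argument non-circular: we use it to obtain completions, never to obtain openings.
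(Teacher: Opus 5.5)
Your argument follows the paper's route. You induct on windows, obtain readiness from \Cref{prop:conductor-open-to-complete}, and keep it stable because $\mathit{opened}_j$ is frozen while a validator stays in a window. You discharge the ACS assumptions via \Cref{cor:proposal-synchronization} and \Cref{prop:acs-no-premature-abandonment}, and conclude with the entry rule at line~\ref{line:acs-decide}. Your strengthened hypothesis (that all slots of earlier windows are eventually completed) is harmless but unnecessary. The paper gets those completions directly by applying \Cref{prop:conductor-open-to-complete} to each slot in $\mathit{opened}_j$.

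One step needs an extra sentence. From ``$\proc_j$ proposes once ready while in window $\omega$'' you conclude ``so all correct validators propose''. That is the premise $\ell$-termination of $\mathcal{ACS}[\omega+1]$ requires, but you have only covered validators still in window $\omega$. A correct validator that has already entered $\omega+1$ satisfies the claim, yet it must still count as a proposer for the termination premise.

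The fix is the Integrity property of \Cref{mod:acs}: such a validator decided from $\mathcal{ACS}[\omega+1]$, so it proposed. The paper avoids this with a case split instead. If some correct validator decides, the $\Delta$-totality of ACS delivers the decision to everyone. Otherwise no correct validator ever leaves the window, so all of them propose.

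Integrity also removes your first ``obstacle''. A correct validator decides only after proposing, and it proposes to $\mathcal{ACS}[\omega+1]$ only while in window $\omega$ and ready. So at that validator a decision never precedes readiness.

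Finally, one misstatement. A scheduled slot is opened at the \emph{later} of its scheduling time and its starting time (line~\ref{line:conductor-wait-for-open}), not ``at the latest at its starting time''. Only eventual opening matters here, so nothing breaks.
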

\begin{proof}
We prove the proposition by induction on the window number $\omega$.

\smallskip
\noindent \emph{Base case} ($\omega = 1$).
This holds trivially: every correct validator enters window $1$ upon starting the protocol (line~\ref{line:enter_window_1}).

\smallskip
\noindent \emph{Inductive step} ($\omega - 1 \to \omega$).
By the inductive hypothesis, every correct validator eventually enters window $\omega - 1$.
We first show that every correct validator eventually decides from $\mathcal{ACS}[\omega]$, distinguishing two cases:
\begin{compactitem}
    \item \emph{Some correct validator decides from $\mathcal{ACS}[\omega]$.}
    By the totality property of $\mathcal{ACS}[\omega]$ (\Cref{mod:acs}; its assumptions hold by \Cref{cor:proposal-synchronization} and \Cref{prop:acs-no-premature-abandonment}), every correct validator then decides from $\mathcal{ACS}[\omega]$.

    \item \emph{No correct validator decides from $\mathcal{ACS}[\omega]$.}
    Then no correct validator ever enters a window greater than $\omega - 1$: entering window $\omega$ requires activating the rule at line~\ref{line:acs-decide} for $\mathcal{ACS}[\omega]$, which cannot fire if no correct validator decides from $\mathcal{ACS}[\omega]$; and, by \Cref{lemma:window-entry}, entering any larger window requires first entering $\omega$.
    Fix any correct validator $\proc_j$.
    By the inductive hypothesis it enters window $\omega - 1$, and as it never advances beyond it, $\mathit{current\_window}_j = \omega - 1$ from then on; by \Cref{prop:open-count-window}, $\mathit{opened}_j$ then contains exactly $(\omega - 1) W$ slots.
    Each such slot was opened by $\proc_j$ (line~\ref{line:startup-open} or line~\ref{line:acs-decide-open}), so by \Cref{prop:conductor-open-to-complete}, $\proc_j$ eventually completes it.
    In particular, $\proc_j$ eventually completes the first $(\omega - 2)W + p$ slots of $\mathit{opened}_j$ (ordered by slot number), at which point $\mathsf{ready\_for\_next\_window}()$ returns $\mathsf{true}$ (line~\ref{line:ready-check}).
    Once it does, $\proc_j$ proposes to $\mathcal{ACS}[\omega]$ (lines~\ref{line:ready}--\ref{line:acs-propose}), if it has not already.
    Hence every correct validator proposes to $\mathcal{ACS}[\omega]$, so by the termination property of $\mathcal{ACS}[\omega]$ (\Cref{mod:acs}; its assumptions hold by \Cref{cor:proposal-synchronization} and \Cref{prop:acs-no-premature-abandonment}) every correct validator decides from it.
\end{compactitem}
In either case, every correct validator eventually decides from $\mathcal{ACS}[\omega]$.

It remains to show that every correct validator $\proc_i$ eventually enters window $\omega$.
While $\proc_i$ is in window $\omega - 1$ (i.e., $\mathit{current\_window}_i = \omega - 1$), it enters window $\omega$ upon activating the rule at line~\ref{line:acs-decide} for $\mathcal{ACS}[\omega]$ (line~\ref{line:enter_window_omega}), which fires once its two conditions hold simultaneously:
(1) $\proc_i$ decides from $\mathcal{ACS}[\omega]$, which it eventually does, as just shown; and
(2) $\mathsf{ready\_for\_next\_window}()$ returns $\mathsf{true}$ (line~\ref{line:ready-check}), which holds eventually --- by the argument of the second case above, via \Cref{prop:conductor-open-to-complete} --- and remains so, since $\mathit{opened}_i$ does not change while $\proc_i$ is in window $\omega - 1$ and completed slots only accumulate.
Therefore, both conditions eventually hold at $\proc_i$, so $\proc_i$ activates the rule at line~\ref{line:acs-decide} for $\mathcal{ACS}[\omega]$ and enters window $\omega$ (line~\ref{line:enter_window_omega}). \qed
\end{proof}

Having established that every correct validator enters every window, we now turn to timing: we show that, once the network stabilizes, correct validators enter each window at a precise time.
These timing guarantees rely on the protocol parameters being configured to satisfy the following constraints (lines~\ref{line:assumption-one}--\ref{line:assumption-four}):
\begin{compactitem}
    \item $(p - 1)\tau + \Phi_{oc} + \ell \leq W\tau$ (line~\ref{line:assumption-one});
    \item $(p - 1)\tau + \Phi_{oc} \leq (W - 1)\tau$ (line~\ref{line:assumption-two});
    \item $\Delta < \ell$ (line~\ref{line:assumption-three}); and
    \item $\dtot + \ell \leq (p - 1)\tau$ (line~\ref{line:assumption-four}).
\end{compactitem}
Recall that $p \in \{0, \dots, W-1\}$ is the readiness threshold, $W \in \mathbb{N}_{\geq 1}$ the window size, $\ell$ the latency of the ACS primitive, $\Phi_{oc} = \ell_{\textsc{chorus}} + \dtot$ the open-to-complete delay, and $\Delta$ the known upper bound on message delays after $\mathrm{GST}$.
Intuitively, these constraints ensure that all the work associated with a window, namely proposing to the next ACS instance, deciding, and completing its slots, fits within the window's time span of $W\tau$, so that one window follows the next without gaps once the network is synchronous.

Before proving the timing guarantee, we set up notation for the slots and starting times associated with each window.
Note that, by \Cref{prop:enters-every-window}, every correct validator enters every window $\omega \in \mathbb{N}_{\geq 1}$, so $\mathsf{slots}_i(\omega)$ is defined for every correct validator $\proc_i$ and every window $\omega$.
By \Cref{prop:acs-fate-range}, for every window $\omega$, the common set $\mathsf{slots}(\omega)$ consists of exactly $W$ slots with consecutive slot numbers.
For every window $\omega$ and index $x \in [1, W]$, we write $\mathsf{slot}(\omega, x)$ for the $x$-th slot of $\mathsf{slots}(\omega)$, ordered by slot number.
Then, for every window $\omega$ and index $x \in [1, W]$, we denote by $\mathcal{T}_x(\omega)$ the starting time of $\mathsf{slot}(\omega, x)$, that is,
\[
    \mathcal{T}_x(\omega) \;=\; s.\fdeadline - \Delta, \quad \text{where } s = \mathsf{slot}(\omega, x).
\]
Since consecutive slots are separated by exactly $\tau$ time units, $\mathcal{T}_x(\omega) = \mathcal{T}_1(\omega) + (x - 1)\tau$, for every $x \in [1, W]$.
Finally, we call a window $\omega$ a \emph{post-$\mathrm{GST}$ window} if and only if $\mathcal{T}_1(\omega) \geq \mathrm{GST}$, and a \emph{pre-$\mathrm{GST}$ window} otherwise.

We now show that every correct validator enters each window within $\dtot + \ell$ of its scheduled start (or of $\mathrm{GST}$, if that is later).

\begin{proposition}
\label{prop:window-open-time}
Let $\omega$ be any window.
Then, every correct validator enters window $\omega$ by time
\[
\max(\mathcal{T}_1(\omega), \mathrm{GST}) + \dtot + \ell.
\]
\end{proposition}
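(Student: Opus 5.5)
The plan is to avoid reasoning about who is lagging. Instead, anchor everything to the \emph{first} correct validator that proposes to $\mathcal{ACS}[\omega]$, and show that this proposal happens no later than $\mathcal{T}_1(\omega)$. The case $\omega = 1$ is immediate, since every correct validator enters window $1$ at time $0 \le \max(\mathcal{T}_1(1), \mathrm{GST}) + \dtot + \ell$ (line~\ref{line:enter_window_1}). So fix $\omega \geq 2$.

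By \Cref{prop:enters-every-window}, every correct validator enters $\omega$, hence decides from $\mathcal{ACS}[\omega]$. By the integrity of ACS (\Cref{mod:acs}), some correct validator has proposed to it. Let $t_0$ be the earliest time at which a correct validator, say $\proc_k$, proposes to $\mathcal{ACS}[\omega]$.

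\textbf{Step 1: $t_0 \le \mathcal{T}_1(\omega)$.} First observe that every correct proposal $s^\star$ computed at line~\ref{line:sstar-compute} at local time $t_{\mathit{cur}}$ has starting time at least $t_{\mathit{cur}}$. This is direct in the unguarded case. In the guarded case (line~\ref{line:sstar-update}), $s^\star$ is replaced by a slot with a larger number, whose starting time is larger still, since deadlines increase with slot number. Hence each correct proposal has starting time at least its proposer's proposal time, which is at least $t_0$. By the validity of ACS, at least $f+1$ of the $2f+1$ decided pairs come from correct validators, so the median slot number is at least the smallest correct proposed number. Therefore the decided first slot of window $\omega$, $\mathsf{slot}(\omega,1)$, has starting time $\mathcal{T}_1(\omega) \ge t_0$.

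\textbf{Step 2: all three entry conditions hold by $T := \max(t_0,\mathrm{GST}) + \dtot + \ell$.} Entering $\omega$ via line~\ref{line:acs-decide} requires three things at $\proc_j$: being in window $\omega-1$, having decided from $\mathcal{ACS}[\omega]$, and readiness. I check each.
\begin{compactenum}
\item At time $t_0$, validator $\proc_k$ is in window $\omega-1$, so it entered $\omega-1$ by $t_0$. By \Cref{cor:entry-synchronization}, every correct $\proc_j$ enters $\omega-1$ by $\max(t_0,\mathrm{GST})+\dtot$.
\item By \Cref{cor:proposal-synchronization}, every correct validator proposes to $\mathcal{ACS}[\omega]$ by $\max(t_0,\mathrm{GST})+\dtot$. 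The $\ell$-termination of ACS then applies; its other assumption holds by \Cref{prop:acs-no-premature-abandonment}. Using $\max(\max(t_0,\mathrm{GST})+\dtot,\mathrm{GST}) = \max(t_0,\mathrm{GST})+\dtot$, every correct validator decides by $T$.
\item Readiness held at $\proc_k$ at time $t_0$: the first $(\omega-2)W+p$ opened slots were completed. By \Cref{cor:completion-totality}, $\proc_j$ completes these same slots (same sets by \Cref{prop:acs-fate-range}) by $\max(t_0,\mathrm{GST})+\dtot$. Readiness then persists while $\proc_j$ is in $\omega-1$, because $\mathit{opened}_j$ is frozen and completions only accumulate.
\end{compactenum}
This is exactly the argument pattern of the entry step of \Cref{prop:window-synchronization}. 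The rule at line~\ref{line:acs-decide} therefore fires at $\proc_j$ by $T$, unless $\proc_j$ has already entered $\omega$ earlier.

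\textbf{Step 3: conclude.} Combining Steps 1 and 2 gives $T \le \max(\mathcal{T}_1(\omega),\mathrm{GST}) + \dtot + \ell$, which is the claim.

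The step I expect to be the crux is Step~1, namely tying the decided window start to the time of the earliest correct proposal. The argument depends on the monotonicity of the $s^\star$ computation, including the guard branch, and on the median being bounded below by honest values. Everything else is a reuse of the synchronization corollaries. A minor point to verify is that $\proc_k$'s first proposal to $\mathcal{ACS}[\omega]$ is indeed made while it is in window $\omega-1$. This holds because line~\ref{line:acs-propose} uses $\mathit{current\_window}_k + 1 = \omega$.
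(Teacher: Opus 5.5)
Your proof is correct, and it reaches the bound by a somewhat different decomposition from the paper's.

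The paper splits into two cases:
\begin{itemize}
\item If some correct validator enters $\omega$ by $\max(\mathcal{T}_1(\omega),\mathrm{GST})+\Delta$, it concludes via \Cref{cor:entry-synchronization} together with assumption~(3), $\Delta<\ell$.
\item Otherwise it picks a correct proposer whose proposed slot is at most the median, and asserts that this proposer acted by $\mathcal{T}_1(\omega)$. That assertion is the starting-time-versus-proposal-time monotonicity you prove explicitly in Step~1, which the paper leaves implicit. It then propagates readiness and entry into window $\omega-1$. Finally, it uses the case hypothesis (no correct validator has yet entered $\omega$) to argue that the rule at line~\ref{line:ready} fires everywhere, so all correct validators propose by $\max(\mathcal{T}_1(\omega),\mathrm{GST})+\dtot$.
\end{itemize}

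You instead anchor at the earliest correct proposer and obtain universal proposing directly from \Cref{cor:proposal-synchronization}. This makes the case split unnecessary, and as a result your argument never consumes assumption~(3). You also check explicitly all three firing conditions of line~\ref{line:acs-decide}, which the paper leaves implicit once it has bounded the ACS decision time.

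The paper's route stays closer to the raw pseudocode rules. Yours is shorter and slightly more general, at the cost of leaning on the proposal-synchronization corollary. The paper invokes that corollary in this proof anyway, to discharge the ACS assumptions, so you rely on nothing extra.
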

\begin{proof}
If $\omega = 1$, every correct validator enters window $1$ at time $0 = \mathcal{T}_1(1)$, so the proposition holds trivially.
Assume henceforth that $\omega > 1$.

Suppose first that some correct validator enters window $\omega$ by time $\max(\mathcal{T}_1(\omega), \mathrm{GST}) + \Delta$.
By \Cref{cor:entry-synchronization}, the entry of one correct validator into window $\omega$ propagates to every correct validator within $\dtot$ time.
Hence, every correct validator enters window $\omega$ by time $\max(\mathcal{T}_1(\omega), \mathrm{GST}) + \Delta + \dtot$, which is earlier than $\max(\mathcal{T}_1(\omega), \mathrm{GST}) + \dtot + \ell$ (as $\Delta < \ell$, line~\ref{line:assumption-three}); the claim follows.
We may therefore assume that no correct validator enters window $\omega$ by time $\max(\mathcal{T}_1(\omega), \mathrm{GST}) + \Delta$; by \Cref{lemma:window-entry}, no correct validator enters any window $\geq \omega$ by that time.

Due to \Cref{prop:enters-every-window}, every correct validator enters window $\omega$, and hence decides from $\mathcal{ACS}[\omega]$ (line~\ref{line:acs-decide}); by \Cref{prop:acs-fate-range}, the decided slot is $\mathsf{slot}(\omega, 1)$, whose starting time is $\mathcal{T}_1(\omega)$.
By the validity property of $\mathcal{ACS}[\omega]$, at least $f + 1$ correct validators proposed to $\mathcal{ACS}[\omega]$, with median proposal $\mathsf{slot}(\omega, 1)$; in particular, some correct validator $\proc_k$ proposed a slot $s'$ with $s'.\fnumber \leq \mathsf{slot}(\omega, 1).\fnumber$ by time $\mathcal{T}_1(\omega)$ (line~\ref{line:acs-propose}).

A correct validator proposes to $\mathcal{ACS}[\omega]$ only when $\mathsf{ready\_for\_next\_window}()$ returns $\mathsf{true}$ (line~\ref{line:ready}).
Since $\mathit{current\_window}_k = \omega - 1$ when $\proc_k$ proposes, $\mathit{opened}_k$ contains exactly $(\omega - 1)W$ slots (\Cref{prop:open-count-window}), so $\proc_k$ has completed the first $(\omega - 2)W + p$ of them by time $\mathcal{T}_1(\omega)$.
Due to \Cref{cor:completion-totality}, every correct validator completes each of these $(\omega - 2)W + p$ slots by time $\max(\mathcal{T}_1(\omega), \mathrm{GST}) + \dtot$.
Moreover, every correct validator has entered window $\omega - 1$ by that time: $\proc_k$ proposes to $\mathcal{ACS}[\omega]$ only from window $\omega - 1$ (line~\ref{line:ready}), so it entered $\omega - 1$ by time $\mathcal{T}_1(\omega)$, and \Cref{cor:entry-synchronization} carries every correct validator into $\omega - 1$ by $\max(\mathcal{T}_1(\omega), \mathrm{GST}) + \dtot$.
Therefore, $\mathsf{ready\_for\_next\_window}()$ returns $\mathsf{true}$ at every correct validator, and each proposes to $\mathcal{ACS}[\omega]$ by time $\max(\mathcal{T}_1(\omega), \mathrm{GST}) + \dtot$.
By the termination property of $\mathcal{ACS}[\omega]$ (its assumptions holding by \Cref{cor:proposal-synchronization} and \Cref{prop:acs-no-premature-abandonment}), every correct validator then decides from $\mathcal{ACS}[\omega]$ by time $\max(\mathcal{T}_1(\omega), \mathrm{GST}) + \dtot + \ell$, which proves the proposition. \qed

\end{proof}

We now establish the key timing invariant underlying recovery: if every correct validator enters a post-$\mathrm{GST}$ window $\omega$ by $\mathcal{T}_p(\omega)$, then every correct validator opens window $\omega + 1$ ``on time'' and there are no ``gaps'' between the two windows.

\begin{proposition}
\label{prop:window-progression}
Fix any post-$\mathrm{GST}$ window $\omega$.
If every correct validator enters window $\omega$ by time $\mathcal{T}_p(\omega)$, then:
\begin{compactitem}
    \item $\mathsf{slot}(\omega + 1, 1).\mathsf{number} = \mathsf{slot}(\omega, W).\mathsf{number} + 1$; and
    \item every correct validator enters window $\omega + 1$ by time $\mathcal{T}_{1}(\omega + 1)$.
\end{compactitem}
\end{proposition}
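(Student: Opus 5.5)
The plan is a direct timeline argument inside the post-$\mathrm{GST}$ window $\omega$. First we show that by time $\mathcal{T}_p(\omega) + \Phi_{oc}$ every correct validator is ready and has proposed to $\mathcal{ACS}[\omega+1]$. Next we show that every such proposal is the slot immediately after $\mathsf{slot}(\omega, W)$, which forces the median. Finally, the ACS termination bound together with assumption~(1) delivers the entry time.

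\emph{Step 1 (readiness by $\mathcal{T}_p(\omega) + \Phi_{oc}$).} Fix a correct validator $\proc_j$; it enters $\omega$ at some time $\le \mathcal{T}_p(\omega)$. We first bound when it opens the slots in question.
\begin{compactitem}
\item Every slot of windows $1,\ldots,\omega-1$ is scheduled no later than $\proc_j$'s entry into $\omega$ (\Cref{lemma:window-entry}). Its starting time is below $\mathcal{T}_1(\omega)$, since it has a smaller number (\Cref{prop:acs-fate-range}). So by line~\ref{line:conductor-wait-for-open}, $\proc_j$ opens it by $\max(\text{entry}, \text{start}) \le \mathcal{T}_p(\omega)$.
\item Each $\mathsf{slot}(\omega, x)$ with $x \le p$ is opened by $\max(\text{entry}, \mathcal{T}_x(\omega)) \le \mathcal{T}_p(\omega)$.
\end{compactitem}
Since $\omega$ is post-$\mathrm{GST}$, $\max(\mathcal{T}_p(\omega), \mathrm{GST}) = \mathcal{T}_p(\omega)$. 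Hence \Cref{prop:conductor-open-to-complete} shows that all these slots, i.e.\ the first $(\omega-1)W + p$ slots of $\mathit{opened}_j$ (\Cref{prop:open-count-window}), are completed by $\mathcal{T}_p(\omega) + \Phi_{oc}$.

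It follows that $\mathsf{ready\_for\_next\_window}()$ holds at $\proc_j$ by then. Since entering $\omega+1$ itself requires readiness, either $\proc_j$ has already proposed to $\mathcal{ACS}[\omega+1]$, or the rule at line~\ref{line:ready} fires. Thus $\proc_j$ proposes to $\mathcal{ACS}[\omega+1]$ at some time $t_{\mathit{cur}} \le \mathcal{T}_1(\omega) + (p-1)\tau + \Phi_{oc}$. By assumption~(2) this is at most $\mathcal{T}_1(\omega) + (W-1)\tau = \mathcal{T}_W(\omega)$.

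\emph{Step 2 (the proposed and decided slot).} Because $t_{\mathit{cur}} \le \mathcal{T}_W(\omega)$, the earliest slot whose starting time is $\ge t_{\mathit{cur}}$ (line~\ref{line:sstar-compute}) has number at most $\mathsf{slot}(\omega, W).\fnumber = \mathit{last}_j[\omega].\fnumber$. The guard at lines~\ref{line:sstar-guard}--\ref{line:sstar-update} therefore resets it, so every correct validator proposes exactly the slot numbered $\mathsf{slot}(\omega,W).\fnumber + 1$.

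By ACS validity, the decided set has size $m \ge 2f+1$. Every pair in it from a correct validator carries this common value, and at most $f$ pairs are Byzantine. So strictly more than $m/2$ entries equal the common value, and the median at line~\ref{line:median-compute} must equal it. Combined with \Cref{prop:acs-fate-range}, this gives the first bullet.

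\emph{Step 3 (entry time).} All correct validators propose to $\mathcal{ACS}[\omega+1]$ by $\mathcal{T}_p(\omega) + \Phi_{oc} \ge \mathrm{GST}$. The ACS assumptions hold by \Cref{cor:proposal-synchronization} and \Cref{prop:acs-no-premature-abandonment}, so $\ell$-termination makes every correct validator decide by $\mathcal{T}_1(\omega) + (p-1)\tau + \Phi_{oc} + \ell$. By assumption~(1) this is at most $\mathcal{T}_1(\omega) + W\tau$, which equals $\mathcal{T}_1(\omega+1)$ by the first bullet and $\tau$-spacing. Readiness already holds (Step~1) and persists while in window $\omega$, so the rule at line~\ref{line:acs-decide} fires upon decision. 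If a validator had already entered $\omega+1$ earlier, the bound holds trivially.

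\emph{Main obstacle.} The delicate point is Step~1's bound on the opening times of all slots the readiness predicate depends on, including those of earlier, possibly pre-$\mathrm{GST}$, windows. The argument resolves this through scheduling order and the fact that their starting times precede $\mathcal{T}_1(\omega)$, so that \Cref{prop:conductor-open-to-complete} applies uniformly with $\max(\cdot, \mathrm{GST}) = \mathcal{T}_p(\omega)$.
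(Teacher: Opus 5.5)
Your proof is correct and follows essentially the same route as the paper: readiness (hence an ACS proposal) by $\mathcal{T}_p(\omega)+\Phi_{oc}$ via the open-to-complete bound, then assumption~(2) forcing every correct proposal to be the slot right after $\mathsf{slot}(\omega,W)$ and hence fixing the median, and finally $\ell$-termination of $\mathcal{ACS}[\omega+1]$ combined with assumption~(1). Your version is slightly more careful than the paper's in two places: you explicitly bound the opening times of the earlier-window slots that the readiness predicate depends on, whereas the paper only treats the first $p$ slots of window $\omega$; and your observation that every correct validator has proposed by $\mathcal{T}_p(\omega)+\Phi_{oc}$ lets you apply ACS termination directly, without the paper's case split on whether some correct validator decides early.
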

\begin{proof}
We prove each point in turn.

\smallskip
\noindent \emph{Point 1.}
We establish the first point by showing that every correct validator that proposes to $\mathcal{ACS}[\omega + 1]$ does so with a slot $s^{\star}$ with $s^{\star}.\mathsf{number} = \mathsf{slot}(\omega, W).\mathsf{number} + 1$.
Suppose, for contradiction, that some correct validator $p_i$ proposes a slot $s \neq s^{\star}$.
Note that validator $p_i$ opens each of the first $p$ slots of window $\omega$ by time $\mathcal{T}_p(\omega)$ (at line~\ref{line:trigger-open}, as each slot is opened either when scheduled or at the slot's starting time).
For $p_i$ to propose to $\mathcal{ACS}[\omega + 1]$ at line~\ref{line:acs-propose}, $\mathsf{ready\_for\_next\_window()}$ must return $\mathsf{true}$ (line~\ref{line:ready}).
By \Cref{prop:conductor-open-to-complete}, $p_i$ completes the first $(\omega - 1)W + p$ open slots by time $\mathcal{T}_p(\omega) + \Phi_{oc}$, at which point $\mathsf{ready\_for\_next\_window()}$ returns $\mathsf{true}$ (line~\ref{line:ready-check}), triggering the rule at line~\ref{line:ready}. 
We now verify that the trigger fires before $\mathsf{slot}(\omega, W).\mathsf{deadline} - \Delta$.
By definition, $\mathcal{T}_p(\omega) = \mathcal{T}_1(\omega) + (p - 1)\tau$ and $\mathsf{slot}(\omega, W).\mathsf{deadline} - \Delta = \mathcal{T}_1(\omega) + (W - 1)\tau$.
Hence, $\mathcal{T}_p(\omega) + \Phi_{oc} = \mathcal{T}_1(\omega) + (p - 1)\tau + \Phi_{oc}$.
By assumption at line~\ref{line:assumption-two}, $(p - 1)\tau + \Phi_{oc} \leq (W - 1)\tau$, so:
\[
\mathcal{T}_p(\omega) + \Phi_{oc}
= \mathcal{T}_1(\omega) + (p-1)\tau + \Phi_{oc}
\leq \mathcal{T}_1(\omega) + (W-1)\tau
= \mathcal{T}_W(\omega)
= \mathsf{slot}(\omega, W).\mathsf{deadline} - \Delta.
\]
Consequently, $p_i$ selects $s^{\star}$ with $s^{\star}.\mathsf{number} = \mathsf{slot}(\omega, W).\mathsf{number} + 1$ (lines~\ref{line:sstar-compute}--\ref{line:sstar-update}) and invokes $\mathcal{ACS}[\omega + 1].\mathsf{propose}(s^{\star})$ (line~\ref{line:acs-propose}), contradicting our assumption.

To conclude the first point, by the validity property of $\mathcal{ACS}[\omega + 1]$, at least $f + 1$ correct proposals appear in the decided set.
Hence, the median slot number (line~\ref{line:median-compute}) equals $\mathsf{slot}(\omega, W).\mathsf{number} + 1$, establishing $\mathsf{slot}(\omega + 1, 1).\mathsf{number} = \mathsf{slot}(\omega, W).\mathsf{number} + 1$.

\smallskip
\noindent \emph{Point 2.}
We first show that every correct validator decides from $\mathcal{ACS}[\omega + 1]$ by time $\mathcal{T}_p(\omega) + \Phi_{oc} + \ell$.
Since $\mathcal{T}_p(\omega) = \mathcal{T}_1(\omega) + (p-1)\tau$ and $(p-1)\tau + \Phi_{oc} + \ell \leq W\tau$ (line~\ref{line:assumption-one}), we have $\mathcal{T}_p(\omega) + \Phi_{oc} + \ell \leq \mathcal{T}_1(\omega) + W\tau = \mathcal{T}_1(\omega + 1)$, so this suffices to establish the claim.
(Recall that every correct validator completes the first $(\omega - 1)W + p$ open slots by time $\mathcal{T}_p(\omega) + \Phi_{oc}$; hence $\mathsf{ready\_for\_next\_window}()$ returns $\mathsf{true}$ strictly before time $\mathcal{T}_p(\omega) + \Phi_{oc} + \ell$.)
We distinguish two cases.
\begin{compactitem}
    \item Suppose some correct validator decides from $\mathcal{ACS}[\omega + 1]$ by time $\mathcal{T}_p(\omega) + \Phi_{oc}$.
    By the totality property of $\mathcal{ACS}[\omega + 1]$ (its assumptions holding by \Cref{cor:proposal-synchronization} and \Cref{prop:acs-no-premature-abandonment}), every correct validator then decides by time $\mathcal{T}_p(\omega) + \Phi_{oc} + \Delta < \mathcal{T}_p(\omega) + \Phi_{oc} + \ell$, and the claim holds.

    \item Otherwise, no correct validator decides from $\mathcal{ACS}[\omega + 1]$ by time $\mathcal{T}_p(\omega) + \Phi_{oc}$.
    In particular, no correct validator has activated line~\ref{line:acs-decide} for $\mathcal{ACS}[\omega + 1]$ by that time, so no correct validator has entered any window larger than $\omega$ (line~\ref{line:enter_window_omega}).
    By \Cref{prop:conductor-open-to-complete}, every correct validator completes all $(\omega - 1)W + p$ open slots by time $\mathcal{T}_p(\omega) + \Phi_{oc}$, at which point $\mathsf{ready\_for\_next\_window()}$ returns $\mathsf{true}$ (line~\ref{line:ready-check}), triggering each correct validator to propose to $\mathcal{ACS}[\omega + 1]$ (lines~\ref{line:ready}--\ref{line:acs-propose}).
    Since $\mathcal{ACS}[\omega + 1]$ has latency $\ell$ (its assumptions again holding by \Cref{cor:proposal-synchronization} and \Cref{prop:acs-no-premature-abandonment}), every correct validator decides by time $\mathcal{T}_p(\omega) + \Phi_{oc} + \ell$, and the claim holds. \qed
\end{compactitem}
\end{proof}

Next, we prove that from the second post-$\mathrm{GST}$ window, the protocol ``runs smoothly''.

\begin{proposition}
\label{prop:smooth-windows}
Let $\omega^{\star}$ denote the smallest post-$\mathrm{GST}$ window.
Then, for every window $\omega > \omega^{\star}$, the following holds:
\begin{compactitem}
    \item $\mathsf{slot}(\omega, 1).\mathsf{number} = \mathsf{slot}(\omega - 1, W).\mathsf{number} + 1$; and
    
    \item every correct validator opens every slot $s \in \mathsf{slots}(\omega)$ at time $s.\fdeadline - \Delta$.
\end{compactitem}
\end{proposition}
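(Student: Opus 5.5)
We prove the proposition by induction on $\omega > \omega^{\star}$, using \Cref{prop:window-progression} as the engine. The auxiliary invariant is: every correct validator enters window $\omega$ by time $\mathcal{T}_1(\omega)$. This is stronger than the hypothesis of \Cref{prop:window-progression}, which asks only for entry by $\mathcal{T}_p(\omega)$, since $\mathcal{T}_1(\omega) \leq \mathcal{T}_p(\omega)$ whenever $p \geq 1$. Assumption~(4) forces $(p-1)\tau \geq \dtot + \ell > 0$, hence $p \geq 2$, so the comparison is valid. Once the invariant holds for some post-$\mathrm{GST}$ window $\omega$, \Cref{prop:window-progression} yields both the gapless numbering $\mathsf{slot}(\omega+1,1).\fnumber = \mathsf{slot}(\omega,W).\fnumber + 1$ and entry into $\omega+1$ by $\mathcal{T}_1(\omega+1)$. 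The window $\omega+1$ is again post-$\mathrm{GST}$ because $\mathcal{T}_1(\omega+1) > \mathcal{T}_1(\omega) \geq \mathrm{GST}$, so the invariant propagates.

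The base case is the main obstacle. We must show that the hypothesis of \Cref{prop:window-progression} holds at $\omega^{\star}$ itself, namely that every correct validator enters $\omega^{\star}$ by $\mathcal{T}_p(\omega^{\star})$. We apply \Cref{prop:window-open-time}: since $\mathcal{T}_1(\omega^{\star}) \geq \mathrm{GST}$, every correct validator enters $\omega^{\star}$ by $\mathcal{T}_1(\omega^{\star}) + \dtot + \ell$. By assumption~(4), $\dtot + \ell \leq (p-1)\tau$, so this bound is at most $\mathcal{T}_1(\omega^{\star}) + (p-1)\tau = \mathcal{T}_p(\omega^{\star})$. \Cref{prop:window-progression} then gives the first bullet for $\omega = \omega^{\star}+1$ and entry into $\omega^{\star}+1$ by $\mathcal{T}_1(\omega^{\star}+1)$, which starts the invariant at $\omega^{\star}+1$. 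The only delicate point is that $\omega^{\star}$ exists at all. We take this from \Cref{prop:enters-every-window} together with the strictly increasing slot numbers of successive windows (\Cref{prop:acs-fate-range}): the starting times $\mathcal{T}_1(\omega)$ grow without bound, so some window has $\mathcal{T}_1(\omega) \geq \mathrm{GST}$.

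For the second bullet, fix $\omega > \omega^{\star}$ and $s \in \mathsf{slots}(\omega)$. By the invariant, every correct validator enters $\omega$, and thereby invokes $\mathsf{schedule\_opening}(s)$ (line~\ref{line:acs-decide-open}), by time $\mathcal{T}_1(\omega) \leq s.\fdeadline - \Delta$. The rule at line~\ref{line:conductor-wait-for-open} then waits until exactly $s.\fdeadline - \Delta$ and opens $s$ at that instant, because the scheduling time did not exceed the starting time. \Cref{lemma:conductor-integrity} rules out opening $s$ earlier or more than once. This closes the induction; the first bullet for each $\omega$ comes from the application of \Cref{prop:window-progression} at $\omega - 1$.
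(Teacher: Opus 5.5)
Your proof is correct and follows the paper's argument. The base case at $\omega^{\star}+1$ comes from \Cref{prop:window-open-time} together with assumption~(4), which feed \Cref{prop:window-progression}, and the inductive step reapplies \Cref{prop:window-progression} using entry by $\mathcal{T}_1(\omega) \leq \mathcal{T}_p(\omega)$. Your extra checks (that $p \geq 2$, that each successor window is again post-$\mathrm{GST}$, and that $\omega^{\star}$ exists) spell out details the paper's one-line inductive step leaves implicit.
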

\begin{proof}
We proceed by induction on $\omega$, with base case $\omega = \omega^{\star} + 1$.
By \Cref{prop:window-open-time}, every correct validator enters window $\omega^{\star}$ by time $\mathcal{T}_1(\omega^{\star}) + \dtot + \ell$.
Since $\dtot + \ell \leq (p - 1)\tau$, every correct validator enters window $\omega^{\star}$ by time $\mathcal{T}_p(\omega^{\star})$.
Therefore, \Cref{prop:window-progression} gives us the following for window $\omega = \omega^{\star} + 1$: (1) $\mathsf{slot}(\omega^{\star} + 1, 1).\fnumber = \mathsf{slot}(\omega^{\star}, W).\fnumber + 1$, and (2) every correct validator enters window $\omega^{\star} + 1$ by time $\mathcal{T}_1(\omega^{\star} + 1)$, which indeed implies that every correct validator opens every slot $s \in \mathsf{slots}(\omega^{\star} + 1)$ at its starting time $s.\fdeadline - \Delta$ (line~\ref{line:conductor-wait-for-open}).
The inductive step is identical: applying \Cref{prop:window-progression} at each subsequent window yields the claim for all $\omega > \omega^{\star} + 1$. \qed
\end{proof}

\Cref{prop:smooth-windows} tells us that the protocol runs smoothly from the second post-$\mathrm{GST}$ window onward: there are no gaps between consecutive windows, and every slot is opened exactly at its starting time.
It remains to understand when this smooth regime begins, i.e., when the second post-$\mathrm{GST}$ window can start.
To this end, we bound when the first post-$\mathrm{GST}$ window can arise.

\begin{proposition}
\label{prop:first-post-gst-window-time}
Let $\omega$ denote the smallest post-$\mathrm{GST}$ window.
Then:
\[
\mathcal{T}_1(\omega) - \mathrm{GST} \leq W\tau.
\]
\end{proposition}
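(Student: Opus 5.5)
We argue directly, by upper-bounding every slot that a correct validator proposes to $\mathcal{ACS}[\omega]$, where $\omega$ is the smallest post-$\mathrm{GST}$ window.

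\emph{Trivial case and reduction to the median.} If $\omega = 1$, then $\mathcal{T}_1(1) = 0$, and the claim holds because $\mathrm{GST} \geq 0$. Assume henceforth $\omega \geq 2$, so that window $\omega - 1$ is pre-$\mathrm{GST}$, i.e., $\mathcal{T}_1(\omega - 1) < \mathrm{GST}$. By \Cref{prop:acs-fate-range}, $\mathsf{slot}(\omega,1)$ is the slot decided from $\mathcal{ACS}[\omega]$, namely the median of the decided set. By validity, this set has at least $2f+1$ entries, of which at most $f$ come from faulty validators. Hence the median slot number is at most the largest slot number proposed by a correct validator. It therefore suffices to show that every slot a correct validator proposes to $\mathcal{ACS}[\omega]$ has starting time at most $\mathrm{GST} + W\tau$.

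\emph{Step 1: bound the time $R$ at which readiness holds.} Fix a correct validator $\proc_j$. By \Cref{prop:window-open-time}, applied to window $\omega - 1$, $\proc_j$ enters $\omega - 1$ by $\mathrm{GST} + \dtot + \ell$. Every slot in $\mathit{opened}_j$ at that point belongs to a window at most $\omega - 1$. Consider any such slot among the first $(\omega-2)W + p$ by slot number, i.e., any slot of an earlier window or one of the first $p$ slots of $\omega - 1$. It is opened by time $\max(\text{its starting time},\ \mathrm{GST} + \dtot + \ell)$. Its starting time is at most $\mathcal{T}_p(\omega-1) = \mathcal{T}_1(\omega-1) + (p-1)\tau < \mathrm{GST} + (p-1)\tau$. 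By assumption (4), $\dtot + \ell \leq (p-1)\tau$, so every such slot is opened before $\mathrm{GST} + (p-1)\tau$. By \Cref{prop:conductor-open-to-complete}, each of these slots is then completed before $R := \mathrm{GST} + (p-1)\tau + \Phi_{oc}$.

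At time $R$ (or earlier), $\proc_j$ is in window $\omega - 1$ and $\mathsf{ready\_for\_next\_window}()$ holds. Therefore $\proc_j$ proposes to $\mathcal{ACS}[\omega]$ at some local time $t_{\mathit{cur}} \leq R$, possibly earlier. This step is the main obstacle: it has to handle an entry into window $\omega - 1$ that occurs only after $\mathrm{GST}$. I resolve it by combining \Cref{prop:window-open-time} with assumption (4), so that the entry time is absorbed into $\mathrm{GST} + (p-1)\tau$.

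\emph{Step 2: bound each correct proposal.} The slot $s^\star$ that $\proc_j$ proposes is chosen at lines~\ref{line:sstar-compute}--\ref{line:sstar-update}, giving two cases.
\begin{compactitem}
\item If the guard at line~\ref{line:sstar-guard} fires, then $s^\star$ is the slot following $\mathsf{slot}(\omega-1,W)$. Its starting time is $\mathcal{T}_1(\omega-1) + W\tau < \mathrm{GST} + W\tau$.
\item Otherwise, $s^\star$ is the earliest slot whose starting time is at least $t_{\mathit{cur}}$. Because deadlines are $\tau$-spaced, its starting time is less than $t_{\mathit{cur}} + \tau \leq \mathrm{GST} + p\tau + \Phi_{oc}$. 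By assumption (2), $(p-1)\tau + \Phi_{oc} \leq (W-1)\tau$, so this is at most $\mathrm{GST} + W\tau$.
\end{compactitem}
In both cases, every correct proposal has starting time at most $\mathrm{GST} + W\tau$. By the median argument above, $\mathcal{T}_1(\omega) \leq \mathrm{GST} + W\tau$, which is the claim.
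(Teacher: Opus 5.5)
Your proposal is correct and follows the paper's proof: you reduce to bounding every correct proposal to $\mathcal{ACS}[\omega]$ via the median rule, and you show that each correct validator proposes by $\mathrm{GST} + (p-1)\tau + \Phi_{oc}$ by combining \Cref{prop:window-open-time} for window $\omega-1$ with assumption~(4) and \Cref{prop:conductor-open-to-complete}. You then split on the guard at line~\ref{line:sstar-guard}, using assumption~(2) in the unguarded case. Your version is slightly more explicit than the paper's, since it covers the slots of earlier windows and says why the median is bounded by the largest correct proposal; one small fix is that ``opened before'' and ``completed before'' should read ``by'', because assumption~(4) is non-strict.
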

\begin{proof}
If $\omega = 1$, then $\mathcal{T}_1(\omega) = 0 = \mathrm{GST}$, so the proposition holds trivially.

Hence, assume that $\omega > 1$.
Since $\omega$ is the first (i.e., the smallest) post-$\mathrm{GST}$ window and $\omega > 1$ and correct validators enter window $1$ at time $0$, we have that there exists a pre-$\mathrm{GST}$ window.
Let $\omega^{\star}$ denote the greatest pre-$\mathrm{GST}$ window.
Since $\omega^{\star}$ is the greatest pre-$\mathrm{GST}$ window and $\omega$ is the first post-$\mathrm{GST}$ window, we have $\omega = \omega^{\star} + 1$.

We show that every correct validator $p_i$ that proposes to $\mathcal{ACS}[\omega]$ (line~\ref{line:acs-propose}) does so with a slot whose starting time is at most $\mathrm{GST} + W\tau$.
To propose to $\mathcal{ACS}[\omega]$, $p_i$ must have previously entered window $\omega - 1 = \omega^{\star}$ (line~\ref{line:ready}).
By \Cref{prop:window-open-time}, $p_i$ enters window $\omega^{\star}$ by time $\mathrm{GST} + \dtot + \ell$.
Since $\omega^{\star}$ is a pre-$\mathrm{GST}$ window, its first slot starts before $\mathrm{GST}$; as consecutive slots are $\tau$ apart, the starting times of the first $p$ slots of $\omega^{\star}$ all lie before $\mathrm{GST} + (p-1)\tau$, and $p_i$'s entry into $\omega^{\star}$ occurs by $\mathrm{GST} + \dtot + \ell \leq \mathrm{GST} + (p-1)\tau$ as well (line~\ref{line:assumption-four}).
Since a slot is opened at the later of its scheduling and its starting time (line~\ref{line:conductor-wait-for-open}), $p_i$ opens the first $p$ slots of window $\omega^{\star}$ by time $\mathrm{GST} + (p-1)\tau$ (line~\ref{line:trigger-open}).
By \Cref{prop:conductor-open-to-complete}, $p_i$ completes the first $(\omega^{\star} - 1)W + p$ slots by time $\mathrm{GST} + (p-1)\tau + \Phi_{oc}$, which implies that $p_i$ proposes to $\mathcal{ACS}[\omega]$ by time $\mathrm{GST} + (p-1)\tau + \Phi_{oc}$.
The proposed slot $s^{\star}$ is determined at lines~\ref{line:sstar-compute}--\ref{line:sstar-update}; we show its starting time is strictly less than $\mathrm{GST} + W\tau$.
We distinguish two cases based on whether the guard at line~\ref{line:sstar-guard} holds.
If it does not hold, $p_i$ keeps $s^{\star}$ as the first slot whose starting time meets or exceeds the current local time (line~\ref{line:sstar-compute}); since $p_i$ proposes by time $\mathrm{GST} + (p-1)\tau + \Phi_{oc}$, the starting time of $s^{\star}$ is strictly less than $\mathrm{GST} + (p-1)\tau + \Phi_{oc} + \tau = \mathrm{GST} + p\tau + \Phi_{oc}$, which is at most $\mathrm{GST} + W\tau$ by assumption~(2) (line~\ref{line:assumption-two}), since $(p-1)\tau + \Phi_{oc} \leq (W-1)\tau$ implies $p\tau + \Phi_{oc} \leq W\tau$.
If the guard holds, $p_i$ reassigns $s^{\star}$ to the slot immediately after the last slot of window $\omega^{\star}$ (line~\ref{line:sstar-update}); since $\omega^{\star}$ is a pre-$\mathrm{GST}$ window, its last slot has starting time strictly less than $\mathrm{GST} + (W-1)\tau$, and hence $s^{\star}$ has starting time strictly less than $\mathrm{GST} + W\tau$.
Hence, by the median rule at line~\ref{line:median-compute}, $\mathcal{T}_1(\omega) \leq \mathrm{GST} + W\tau$, which proves the proposition. \qed
\end{proof}

We are finally ready to prove \conductor's recovery.

\begin{lemma} [Recovery]
\label{lemma:conductor-recovery}
When run within \cadence (\Cref{algorithm:cadence}), \conductor (\Cref{algorithm:conductor}) satisfies $(2W\tau)$-recovery.
\end{lemma}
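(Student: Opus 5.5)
The recovery lemma follows by combining \Cref{prop:first-post-gst-window-time} with \Cref{prop:smooth-windows}. Fix any slot $s$ with $s.\fdeadline - \Delta \geq \mathrm{GST} + 2W\tau$. We must show two things: every correct validator opens $s$, and it does so at exactly $s.\fdeadline - \Delta$. Let $\omega^{\star}$ be the smallest post-$\mathrm{GST}$ window. By \Cref{prop:first-post-gst-window-time}, $\mathcal{T}_1(\omega^{\star}) \leq \mathrm{GST} + W\tau$. Hence window $\omega^{\star}$ spans starting times up to $\mathcal{T}_W(\omega^{\star}) = \mathcal{T}_1(\omega^{\star}) + (W-1)\tau < \mathrm{GST} + 2W\tau$, and every slot with starting time at least $\mathrm{GST} + 2W\tau$ has a larger number than every slot of $\omega^{\star}$.

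Next we show that $s$ belongs to some window $\omega > \omega^{\star}$. By \Cref{prop:smooth-windows}, for every $\omega > \omega^{\star}$ we have $\mathsf{slot}(\omega,1).\fnumber = \mathsf{slot}(\omega-1,W).\fnumber + 1$. So the windows $\omega^{\star}, \omega^{\star}+1, \ldots$ cover a contiguous, unbounded range of slot numbers starting at $\mathsf{slot}(\omega^{\star},1).\fnumber$. This uses \Cref{prop:acs-fate-range}, which says each window is a block of $W$ consecutive numbers, together with \Cref{prop:enters-every-window}, which says every window is entered. Since $s.\fnumber > \mathsf{slot}(\omega^{\star},W).\fnumber$, an induction on the window index (or simply the fact that the union is contiguous and unbounded) gives a unique $\omega > \omega^{\star}$ with $s \in \mathsf{slots}(\omega)$.

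The second bullet of \Cref{prop:smooth-windows} then says every correct validator opens $s$ at time $s.\fdeadline - \Delta$, which is exactly $\mathcal{R}$-recovery with $\mathcal{R} = 2W\tau$. It also helps to note that $\mathsf{slots}(\omega)$ is the common slot set, agreed on by all validators through ACS agreement, so "every correct validator opens $s$" is meaningful.

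The main obstacle is the boundary arithmetic. We need to exclude $s \in \mathsf{slots}(\omega^{\star})$, and this needs the strict inequality above, which holds with room to spare since $(W-1)\tau < 2W\tau - W\tau$. We also need to handle the degenerate case $\omega^{\star} = 1$, where $\mathcal{T}_1 = 0 = \mathrm{GST}$ and the same bound applies. Finally, we should check that the pre-$\mathrm{GST}$ windows, whose first slots start before $\mathrm{GST}$ and therefore have numbers below $\mathsf{slot}(\omega^{\star},1)$ by \Cref{prop:acs-fate-range}, cannot contain $s$. This follows from the monotone ordering of windows established in \Cref{prop:acs-fate-range}.
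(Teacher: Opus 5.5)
Your proposal is correct and follows essentially the same route as the paper. Both arguments bound $\mathcal{T}_1(\omega^{\star}) \le \mathrm{GST} + W\tau$ via \Cref{prop:first-post-gst-window-time}, use the gap-free first point of \Cref{prop:smooth-windows} (together with \Cref{prop:acs-fate-range} and \Cref{prop:enters-every-window}) to place $s$ in some window beyond $\omega^{\star}$, and conclude with the second point of \Cref{prop:smooth-windows}. The only cosmetic difference is the boundary step: the paper writes $\mathcal{T}_1(\omega^{\star}+1) = \mathcal{T}_1(\omega^{\star}) + W\tau \le \mathrm{GST} + 2W\tau \le s.\fdeadline - \Delta$, whereas you use an equivalent strict bound on $\mathcal{T}_W(\omega^{\star})$.
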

\begin{proof}
Fix a correct validator $p_i$ and any slot $s$ with $s.\fdeadline - \Delta \geq \mathrm{GST} + 2W\tau$; we show that $p_i$ opens $s$ and it does so at time $s.\fdeadline - \Delta$.
Let $\omega$ denote the first post-$\mathrm{GST}$ window.
By \Cref{prop:first-post-gst-window-time}, $\mathcal{T}_1(\omega) \leq \mathrm{GST} + W\tau$.
By the first point of \Cref{prop:smooth-windows}, $\mathsf{slot}(\omega + 1, 1).\fnumber = \mathsf{slot}(\omega, W).\fnumber + 1$, so, as consecutive slots start $\tau$ apart,
\[
\mathcal{T}_1(\omega + 1) = \mathcal{T}_W(\omega) + \tau = \mathcal{T}_1(\omega) + W\tau \leq \mathrm{GST} + 2W\tau \leq s.\fdeadline - \Delta.
\]
Moreover, applying the same point to every window beyond $\omega$, each set $\mathsf{slots}(\omega'')$ with $\omega'' \geq \omega + 1$ consists of $W$ consecutively-numbered slots (\Cref{prop:acs-fate-range}) beginning immediately after $\mathsf{slots}(\omega'' - 1)$; hence the sets $\mathsf{slots}(\omega + 1), \mathsf{slots}(\omega + 2), \ldots$ cover every slot with number at least $\mathsf{slot}(\omega + 1, 1).\fnumber$.
Since starting times strictly increase with slot number and $s.\fdeadline - \Delta \geq \mathcal{T}_1(\omega + 1)$, slot $s$ is among them: $s \in \mathsf{slots}(\omega')$ for some $\omega' \geq \omega + 1$.
We first show that $p_i$ opens $s$: by \Cref{prop:enters-every-window}, $p_i$ enters window $\omega'$, thereby scheduling the opening of every slot in $\mathsf{slots}(\omega')$ (line~\ref{line:acs-decide-open}) --- in particular of $s$, which it thus opens (line~\ref{line:trigger-open}).
Finally, by the second point of \Cref{prop:smooth-windows}, this opening occurs exactly at $s$'s starting time: $p_i$ opens $s$ at time $s.\fdeadline - \Delta$. \qed
\end{proof}

\paragraph{Epilogue.}
We conclude by consolidating the results of this subsection into a single statement.

\begin{theorem}[Correctness of \conductor]
\label{thm:conductor-correctness}
When run within \cadence, \conductor is a correct implementation of the orchestrator primitive, with $\mathcal{B} = 2W - p$ and $\mathcal{R} = 2W\tau$.
\end{theorem}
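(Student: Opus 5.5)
This theorem consolidates the lemmas already proved in this subsection, so the plan is to check each clause of \Cref{mod:orchestrator_2} against the matching lemma. No new argument should be needed.

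\begin{itemize}
\item \emph{Integrity} is \Cref{lemma:conductor-integrity}.
\item \emph{Monotonicity} is \Cref{lemma:conductor-monotonicity}.
\item \emph{$\mathcal{B}$-boundedness} with $\mathcal{B} = 2W - p$ is \Cref{lem:boundedness}.
\end{itemize}

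These three hold unconditionally, for \conductor in isolation. In particular, they do not depend on running within \cadence. The remaining two properties hold only in the composed system.

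\begin{itemize}
\item \emph{Totality} is \Cref{lemma:conductor-totality}, which gives the stronger $\dtot$-bounded form. It implies the eventual form that the specification asks for.
\item \emph{$\mathcal{R}$-recovery} with $\mathcal{R} = 2W\tau$ is \Cref{lemma:conductor-recovery}, which assumes that $W$ and $p$ satisfy the four parameter assumptions on lines~\ref{line:assumption-one}--\ref{line:assumption-four}.
\end{itemize}

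Two points need care, and I would spell both out. The first is that the specification requires $\mathcal{R}$ to be finite and allows $\mathcal{B}$ to be any element of $\mathbb{N}_{\geq 0} \cup \{\infty\}$. Both conditions hold, since $2W\tau$ is a finite duration and $2W - p \ge W+1$ is a positive integer.

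The second point is the one I expect to be the main obstacle: ruling out circularity in the conditional lemmas. Totality and recovery use \chorus's termination and totality, and those in turn assume $\Delta$-synchronized participation. That assumption is discharged by \conductor's own totality. The strong induction of \Cref{prop:window-synchronization} already resolves this loop, and \Cref{lemma:chorus-termination} relies on \conductor's integrity, which is unconditional. So I would only note that the lemmas invoked were proved in a well-founded order, and that the hypothesis ``run within \cadence'' in the theorem is exactly the hypothesis of those lemmas. With that observation, the proof is a collection of these citations.
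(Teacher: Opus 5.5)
Your proposal is correct and matches the paper's proof, which likewise cites integrity, monotonicity, and $(2W-p)$-boundedness as holding unconditionally, and totality and $(2W\tau)$-recovery as holding when run within \cadence. Your extra remarks are accurate but more than the paper spells out at this point: $2W-p \geq W+1$ and $2W\tau$ lie in the permitted ranges, and the apparent circularity is resolved by the well-founded order of the earlier lemmas, which the paper addresses separately after the theorem.
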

\begin{proof}
Integrity (\Cref{lemma:conductor-integrity}), monotonicity (\Cref{lemma:conductor-monotonicity}), and $(2W - p)$-boundedness (\Cref{lem:boundedness}) are satisfied unconditionally.
Both totality (\Cref{lemma:conductor-totality}) and $(2W\tau)$-recovery (\Cref{lemma:conductor-recovery}) are satisfied when \conductor is run within \cadence. \qed
\end{proof}

\Cref{thm:conductor-correctness} also closes the one loop left open by \chorus's correctness theorem (\Cref{thm:chorus-correctness}), which is conditioned on participation being $\Delta$-synchronized (\Cref{def:delta-synchronized-participation}): within \cadence, that condition is exactly \conductor's totality, so it now comes for free.

\begin{corollary}[Correctness of \chorus within \cadence]
\label{cor:chorus-correctness-within-cadence}
When run within \cadence, \chorus is a correct implementation of the slot consensus primitive.
\end{corollary}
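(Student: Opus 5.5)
The plan is to obtain this corollary by discharging the single hypothesis of \Cref{thm:chorus-correctness} that is not automatic. That theorem says that \chorus, run within \cadence, correctly implements the slot consensus primitive, provided participation is $\Delta$-synchronized (\Cref{def:delta-synchronized-participation}). It therefore suffices to show that, within \cadence, participation in every instance $\mathcal{S}[s]$ is $\Delta$-synchronized. Once that holds, all six properties of \Cref{mod:slotconsensus} follow.

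\textbf{Step 1: participation is exactly opening.} Within \cadence, a correct validator invokes $\mathcal{S}[s].\mathsf{participate}()$ only in the handler for $\mathcal{O}.\mathsf{open}(s)$ (\Cref{algorithm:cadence}, line~\ref{line:participate}). By the integrity of \conductor (\Cref{lemma:conductor-integrity}), it does so at most once. So "starting to participate in $\mathcal{S}[s]$ at time $T$" coincides with "opening $s$ at time $T$".

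\textbf{Step 2: opening is synchronized.} By \Cref{lemma:conductor-totality}, if a correct validator opens $s$ at time $T$, every correct validator opens $s$ by $\max(T,\mathrm{GST})+\dtot$. Since $\dtot=\Delta$, this is precisely \Cref{def:delta-synchronized-participation}.

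\textbf{Step 3: rule out circularity.} I expect this to be the only real obstacle. \Cref{lemma:conductor-totality} was itself proved via \Cref{prop:window-synchronization}, which invokes \chorus's totality (\Cref{prop:chorus-totality}), and that proposition is again conditioned on $\Delta$-synchronized participation. So citing \conductor's totality to establish \chorus's precondition may look circular. I would resolve this by pointing out that \Cref{prop:window-synchronization} is a strong induction over windows. At each window, it obtains $\Delta$-synchronized participation for that window's slots from the opening condition, which is derived from earlier windows and entry. It then applies \Cref{prop:chorus-totality} only to those slots. So the conditional \chorus guarantee is only ever used where its hypothesis has already been established. \Cref{prop:chorus-totality} and \Cref{lemma:chorus-termination} use only \conductor's integrity (which is unconditional) and the $\Delta$-synchronization hypothesis, so no property is assumed in order to prove itself. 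Hence \Cref{lemma:conductor-totality} holds outright for every slot, and so does its consequence, $\Delta$-synchronized participation in every $\mathcal{S}[s]$.

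\textbf{Step 4: conclude.} Apply \Cref{thm:chorus-correctness}. Quiescence, hiding, slot safety, agreement, and proposal inclusion hold unconditionally, and termination now holds by \Cref{lemma:chorus-termination}. If every correct validator starts participating by some time $t$, all finalize by $\max(t,\mathrm{GST})+5\Delta+\ell_{\mathsf{MVBA}}$, which is in particular "eventually".
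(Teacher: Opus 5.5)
Your proposal is correct and follows the paper's proof. You discharge the $\Delta$-synchronized-participation hypothesis of \Cref{thm:chorus-correctness} by noting that starting to participate in $\mathcal{S}[s]$ is exactly opening $s$, so \conductor's $\dtot$-totality with $\dtot = \Delta$ (\Cref{lemma:conductor-totality}) is precisely that hypothesis; your Step~3 likewise matches the paper's closing remark that the strong induction of \Cref{prop:window-synchronization}, anchored at window~$1$ being entered by everyone at time~$0$, only ever applies \chorus's conditional totality to slots whose openings have already been shown synchronized.
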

\begin{proof}
By \Cref{thm:chorus-correctness}, it remains to discharge the assumption of $\Delta$-synchronized participation (\Cref{def:delta-synchronized-participation}) for every slot consensus instance $\mathcal{S}[s]$.
Within \cadence, a correct validator starts participating in $\mathcal{S}[s]$ exactly upon opening slot $s$ (line~\ref{line:participate} of \Cref{algorithm:cadence}); hence, the $\dtot$-totality of \conductor (\Cref{lemma:conductor-totality}), with $\dtot = \Delta$, states precisely that participation in $\mathcal{S}[s]$ is $\Delta$-synchronized. \qed
\end{proof}

\noindent
The discharge above completes a seemingly circular story, so let us spell out why it is not.
On the surface, each protocol needs the other's guarantee: \chorus promises synchronized finalizations for a slot \emph{only if} participation in it is already synchronized (\Cref{prop:chorus-totality}), while \conductor synchronizes participation --- the openings of later slots --- \emph{only because} the finalizations of earlier slots were synchronized.
The two needs, however, are never simultaneous: they alternate across windows, and the chain is anchored.
First, \chorus's totality is proven on its own, as a conditional statement, without invoking any guarantee of \conductor.
The induction of \Cref{prop:window-synchronization} then bootstraps from an unconditional anchor: every correct validator enters window $1$ at time $0$, so the openings of window $1$ are synchronized outright; \chorus's conditional totality --- its condition now met --- yields synchronized completions for window $1$; these, in turn, synchronize the entry into (and hence the openings of) window $2$; and so on, window by window.
At every step, window $\omega$ consumes only guarantees already established for windows $1, \ldots, \omega - 1$, together with the conditions of $\omega$ established earlier in the same step --- never the guarantee currently being proven.
(The conditional timing guarantees of the ACS instances are discharged by the same staggering, as seen in the proof of \Cref{prop:window-synchronization}.)
The corollary above is then the final, one-way application: with the participation condition established once and for all (\Cref{lemma:conductor-totality}), \Cref{thm:chorus-correctness} applies --- one finished result applied to another.